\documentclass{article}
\usepackage{graphicx} 

\usepackage[english]{babel}
\usepackage[utf8]{inputenc}
\usepackage[T1]{fontenc}
\usepackage{lmodern}
\usepackage[a4paper, margin=1in]{geometry}
\usepackage{graphicx}
\usepackage{framed}
\usepackage[framemethod=tikz]{mdframed}
\usepackage{xcolor}
\usepackage[most]{tcolorbox}
\usepackage{todonotes}
\usepackage{color}

\definecolor{darkgreen}{rgb}{0,0.5,0}
\definecolor{darkgray}{rgb}{0.2,0.2,0.2}
\usepackage{hyperref}
\hypersetup{
    unicode=false,          
    colorlinks=true,        
    linkcolor=blue,         
    citecolor=purple,       
    filecolor=magenta,      
    urlcolor=cyan           
}

\usepackage{amsthm}
\usepackage{amsmath}
\usepackage{amssymb}
\usepackage{amsfonts}
\usepackage{mathrsfs}
\usepackage{mathtools}
\usepackage{verbatim}
\usepackage{footnote}

\usepackage{algorithm}
\usepackage{algorithmicx}
\usepackage[noend]{algpseudocode}

\usepackage{lineno}
\usepackage{caption}
\usepackage{framed}
\usepackage{enumerate}
\usepackage{hyperref}

\usepackage{tikzsymbols}
\usepackage{thmtools,thm-restate}
\usepackage{nicefrac}

\usepackage{subcaption}
\usepackage{makecell}
\usepackage{pdfpages}
\usepackage{multirow}
\usepackage{tabularx}
\usepackage{float}
\usepackage{booktabs}

\usepackage[capitalize, nameinlink]{cleveref}
\usepackage[section]{placeins}
\crefname{theorem}{Theorem}{Theorems}
\Crefname{lemma}{Lemma}{Lemmas}
\Crefname{invariant}{Invariant}{Invariants}
\Crefname{claim}{Claim}{Claims}
\Crefname{observation}{Observation}{Observations}
\Crefname{@algorithm}{Algorithm}{Algorithms}
\Crefname{figure}{Figure}{Figures}
\crefname{appendix}{Appendix}{Appendices}
\Crefname{appendix}{Appendix}{Appendices}

\newtheorem{theorem}{Theorem}[section]
\newtheorem{lemma}[theorem]{Lemma}
\newtheorem{corollary}[theorem]{Corollary}
\newtheorem{definition}[theorem]{Definition}

\newtheorem{observation}[theorem]{Observation}
\newtheorem{claim}[theorem]{Claim}
\newtheorem{remark}{Remark}
\newtheorem*{remark*}{Remark}

\newcounter{resultctr}
\renewcommand{\theresultctr}{\arabic{resultctr}}

\newtcolorbox{resultbox}[1][]{%
  enhanced,
  colback=gray!10,
  colframe=gray!60,
  coltext=black,
  boxrule=0.6pt,
  arc=2mm,
  left=1.2mm,
  right=1.2mm,
  top=1.0mm,
  bottom=1.0mm,
  before title={\refstepcounter{resultctr}}, 
  title={Result~\theresultctr\if\relax\detokenize{#1}\relax\else:~(#1)\fi},
  fonttitle=\bfseries,
  coltitle=black,
  attach title to upper,
  after title=\par\smallskip,
}

\DeclareMathOperator{\poly}{poly}

\def\LOCAL{\ensuremath{\mathsf{LOCAL}}\xspace}

\newcommand{\prob}[1]{\Pr \left[ #1 \right]}
\newcommand{\eqdef}{\stackrel{\text{\tiny\rm def}}{=}}

\newcommand{\E}[1]{\mathbb{E}\left[#1\right]}
\newcommand{\tO}{\tilde{O}}
\newcommand{\rb}[1]{\left( #1 \right)}

\newcommand{\cP}{\mathcal{P}}

\newcommand{\eps}{\epsilon}
\newcommand*{\defeq}{\stackrel{\text{def}}{=}}
\newcommand{\LocalPrune}{\textsc{Local-Prune}\xspace}
\newcommand{\NumPathsIn}{\mathsf{NumPathsIn}}
\newcommand{\NumPathsOut}{\mathsf{NumPathsOut}}
\newcommand{\Expo}{\textsc{Exponentiate-and-Prune}\xspace}
\newcommand{\PartialLayer}{\textsc{Partial-Layer-Assignment}\xspace}
\newcommand{\PartialLayerDS}{\textsc{Partial-Layer-Assignment-DS}\xspace}
\newcommand{\PartialLayerTree}{\textsc{Partial-Layer-Tree-Assignment}\xspace}
\newcommand{\kcorePeeling}{\textsc{K-Core-Peeling}\xspace}
\newcommand{\dsPeeling}{\textsc{DS-Peeling}\xspace}
\newcommand{\kcoreSimulation}{\textsc{K-Core-Simulation}\xspace}
\newcommand{\dsSqrtPeeling}{\textsc{DS-Sqrt-Peeling}\xspace}
\newcommand{\Missing}{\mathsf{Missing}}
\newcommand{\map}{\mathsf{map}}
\newcommand{\children}{\mathsf{children}}

\newcommand{\cAMPC}{\mathcal{A}_{\textsc{MPC}}}

\newcommand{\LazyPathsIn}{\NumPathsIn_{G,\ell_G,L}}

\newcommand{\pruned}{\text{pruned}}
\newcommand{\polyloglog}{\poly\log\log}

\title{Fixed-Threshold Peeling in Sublinear MPC: Round-Approximation Tradeoffs and Applications}
\author{Slobodan Mitrović\thanks{The authors are supported by NSF Faculty Early Career Development Program No.~2340048. e-mail: \texttt{\{smitrovic, thjpan, wsheu\}@ucdavis.edu}} \\ UC Davis
\and Theodore Pan\footnotemark[1] \\ UC Davis
\and Wen-Horng Sheu\footnotemark[1] \\ UC Davis}

\date{ }

\begin{document}

\maketitle

\begin{abstract}
A number of fundamental graph problems admit simple algorithms based on iterative peeling: repeatedly remove all vertices whose current degree is below a fixed threshold. 
This paradigm underlies algorithms for density-dependent edge orientation, density-dependent coloring, densest subgraph, and $k$-core decomposition. 
In this paper, we study these problems in the sub-linear MPC model.

We achieve the following round-approximation tradeoffs for this family of problems.
For any constant $\delta \in (0,1)$, our algorithms use
$O(n^\delta)$ local memory and $\widetilde O(m+n^{1+\delta})$ global memory.

\begin{itemize}
    \item \textbf{Density-dependent edge orientation.}
    For any integer $t > 0$, we compute an orientation with maximum out-degree at most
    $(2+\epsilon)(t+1)\alpha(G)$ in $O(\lg^{1/(t+2)} n \cdot \operatorname{poly}(\lg \lg n))$ rounds, where $\alpha(G)$ denotes the minimum possible maximum out-degree of an orientation of $G$. 
    In the $\operatorname{poly}(\lg\lg n)$-round regime, obtained by setting $t=\lceil \lg\lg n/\lg\lg\lg n\rceil$, this gives an $O(\lg\lg n/\lg\lg\lg n)$-approximation, improving the approximation factor of the recent work by Ghaffari and Grunau~[PODC 2025].

    \item \textbf{Densest subgraph.}
    We obtain a $(4+\epsilon)$-approximation in $\widetilde O(\lg^{1/3} n)$ MPC rounds and a $(6+\epsilon)$-approximation in $\widetilde O(\lg^{1/4} n)$ MPC rounds.
    This improves the $\widetilde O(\sqrt{\lg n})$ round complexity of Ghaffari, Lattanzi, and Mitrović~[ICML 2019] with a slightly larger approximation factor.
    This is the first $O(1)$-approximate algorithm for densest subgraph to break the $\Theta(\sqrt{\lg n})$ round-complexity barrier in the sub-linear MPC model.

    \item \textbf{$k$-core decomposition.}
    For any integer $t > 0$, we compute approximate coreness values within a factor of $(2+\epsilon)(t+1)$ in
    $O(\lg^{1/(t+2)} n \cdot \operatorname{poly}(\lg \lg n))$ MPC rounds.
    This improves the $\widetilde O(\sqrt{\lg n})$ round complexity of Ghaffari, Lattanzi, and Mitrović~[ICML 2019], again giving a round-approximation tradeoff.

    \item \textbf{Density-dependent coloring.}
    By combining our orientation algorithm with standard coloring reductions, we obtain a coloring with $(2+\epsilon)(t+1)\alpha(G)+1$ colors in $O(\lg^{1/(t+2)} n \cdot \operatorname{poly}(\lg \lg n))$ MPC rounds. 
    In particular, this improves the approximation obtainable in $\operatorname{poly}(\lg\lg n)$ rounds over Ghaffari and Grunau~[PODC 2025].
\end{itemize}
\end{abstract}
\newpage

\tableofcontents
\newpage

\section{Introduction}
The Massively Parallel Computation (MPC) model was defined in a series of works~\cite{dean2008mapreduce,karloff2010model,goodrich2011sorting,andoni2014parallel}, and by now has become a standard abstraction for modern large-scale computation frameworks, such as MapReduce, Hadoop, Spark, Dryad, and Flume.
In this model, the system consists of a number of machines, each with local memory $S$. 
The machines perform computation in synchronous rounds. Between two consecutive rounds, machines exchange messages: each machine can communicate with any other machine, subject to the constraint that the total size of messages sent and received by each machine is at most $S$.
Early results established that, under standard memory assumptions, one PRAM round can be simulated in $O(1)$ MPC rounds, implying that the vast literature of algorithms in PRAM can be almost directly ported to MPC with only constant overhead in the round complexity.
Thus, direct PRAM simulation gives a general baseline for MPC algorithms, but a central goal in the area has been to design MPC algorithms with substantially smaller round complexity than this baseline.
The extent to which such speed-ups are possible depends strongly on the local memory $S$, leading to three standard regimes.
For graph problems on $n$ vertices and $\delta \in (0, 1)$, those regimes are: the super-linear regime, where $S = n^{1+\delta}$; the near-linear regime, where $S = n \cdot \poly(\log n)$; and the sub-linear regime, where $S = n^\delta$.

Lattanzi, Moseley, Suri, and Vassilvitskii~\cite{lattanzi2011filtering} demonstrated that a number of fundamental problems can be solved in only $O(1)$ rounds in the super-linear MPC regime.
These problems include computing connected components, minimum spanning tree, maximal matching, approximate maximum weighted matching, edge covers, and min cut, for which the state-of-the-art PRAM algorithms require at least $O(\log n)$ rounds.
Thus, the super-linear regime admits dramatic round-complexity improvements over PRAM-style bounds, but these techniques do not yield analogous speed-ups once the local memory is reduced to near-linear.

Andoni, Nikolov, Onak, and Yaroslavtsev~\cite{andoni2014parallel} initiated the study of such speed-ups in the near-linear regime.
They showed that certain geometric graph problems, e.g., minimum spanning tree and earth-mover distance, can be approximated in $O(1)$ MPC rounds.
A few years later, Czumaj, Łącki, Mądry, Mitrović, Onak, and Sankowski~\cite{czumaj2018round} introduced a round-compression technique, that enabled them to solve approximate maximum matching in $\poly(\log \log n)$ rounds in the near-linear regime, using exponentially fewer MPC rounds than what follows from direct PRAM simulation.
Variants of this approach have since been applied to several other problems, including vertex cover~\cite{Assadi17,assadi2019coresets,ghaffari2020massively}, maximal matching~\cite{behnezhad2019exponentially}, $b$-matching~\cite{ghaffari2022massively}, correlation clustering~\cite{cambus2021massively}, maximal independent set~\cite{ghaffari2018improved}, and set cover~\cite{dhulipala2024parallel}.
Thus, in the near-linear regime, round compression has led to a broad class of algorithms whose MPC round complexity is exponentially smaller than what is obtained by direct PRAM simulation. 

The situation is much less understood in the most restrictive, fully scalable sub-linear memory regime.
In this regime, the local memory is too small to directly implement the round-compression techniques that have been successful in the near-linear setting.
Ghaffari and Uitto~\cite{ghaffari2019sparsifying} introduced a combination of graph sparsification and exponentiation that improves the round complexity from $O(\log n)$ to $\tO(\sqrt{\log n})$ for maximal independent set, maximal matching, and $2$-approximate vertex cover.
Follow-up works obtained qualitatively similar improvements, namely roughly $\sqrt{\log n}$-factor speed-ups, for densest subgraph, graph orientation, $k$-core clustering, and set cover~\cite{ghaffari2019improved,dhulipala2024parallel,mitrovic2026new}, while the authors of \cite{chang2019complexity} obtained a much faster algorithm for $(\Delta+1)$-list coloring.
Recently, Ghaffari and Grunau~\cite{ghaffari2025density} showed that $O(\log \log n)$ approximate graph orientation can be computed in only $\poly(\log \log n)$ MPC rounds, exponentially faster than previously known. A direct application of this result also implies a faster algorithm for density-dependent graph coloring.
However, beyond these specific settings, it remains unclear which structural properties of graph problems can be exploited to obtain substantially faster algorithms in the sub-linear MPC regime.
This leaves open the following central algorithmic question:
\begin{center}
    \emph{What techniques yield even faster algorithms in the sub-linear MPC regime?}
\end{center}
We make progress on this question for fixed-threshold peeling-based graph problems, obtaining faster sub-linear MPC algorithms and, for some of these problems, exponential improvements over the previously known round complexities.

\subsection{Our results}

\label{sec:results}
\begin{table}[]
    \renewcommand{\arraystretch}{1.2}
    \centering
    \begin{tabular}{|l|l|l|l|}
    \hline
    Reference & Approximation & Round complexity & Global memory\\
    \hline
    
    \multicolumn{4}{|l|}{\textbf{Density-dependent edge orientation}} \\
    \hline
    \cite{ghaffari2019improved} & $(2+\eps)$ & $\tO(\sqrt{\log n})$  & $\tO(\lambda n)$ \\
    \hline
    \cite{ghaffari2025density} & $O(\log \log n)$ & $O(\poly(\log \log n))$  & $\tO(n + m)$ \\
    \hline
    Our work & $(2+\eps)\cdot(t+1)$ for $t \in \mathbb{Z}_{\geq 0}$ & $O\rb{\log^{1/(t+2)} (n) \cdot \polyloglog(n)}$ & $\tO(n + m)$ \\
    \hline
    
    \multicolumn{4}{|l|}{\textbf{Density-dependent coloring}} \\
    \hline
    \cite{ghaffari2025density} & $O(\log \log n)$ & $O(\poly(\log \log n))$  & $\tO(n + m)$ \\
    \hline
    Our work & $(2+\eps)\cdot(t+1)$ for $t \in \mathbb{Z}_{\geq 0}$ & $O\rb{\log^{1/(t+2)} (n) \cdot \polyloglog(n)}$ & $\tO(n + m)$ \\
    \hline
    
    \multicolumn{4}{|l|}{\textbf{Densest subgraph}} \\
    \hline
    \cite{ghaffari2019improved} & $(1+\eps)$ & $\tO(\sqrt{\log n})$  & $\tO(n^{1+\delta} + m)$ \\
    \hline
    Our work & $(4+\eps)$ & $\tO\rb{\log^{1/3} n}$  & $\tO(n^{1+\delta} + m)$ \\
    \hline
    Our work & $(6+\eps)$ & $\tO\rb{\log^{1/4} n}$  & $\tO(n^{1+\delta} + m)$ \\
    \hline
    
    \multicolumn{4}{|l|}{\textbf{$k$-core decomposition}} \\
    \hline
    \cite{ghaffari2019improved} & $(2+\eps)$ & $\tO(\sqrt{\log n})$  & $\tO(n^{1+\delta} + m)$\\
    \hline
    Our work & $(2+\eps)\cdot(t+1)$ for $t \in \mathbb{Z}_{\geq 0}$ & $O\rb{\log^{1/(t+2)} (n) \cdot \polyloglog(n)}$  & $\tO(n^{1+\delta} + m)$\\
    \hline

    \end{tabular}
    \caption{A summary of our results.}
    \label{tab:summary}
\end{table}
\


Our main result is obtained by improving the exponentiate-and-prune algorithm of \cite{ghaffari2025density}.
In particular, our technique achieves an improved approximation factor in $\poly\log\log(n)$ rounds for edge orientation and density-dependent coloring.
Also, it yields new tradeoffs between the approximation factor and round complexity.
Additionally, our approach can be extended to solve the densest subgraph and $k$-core decomposition problems.
All these problems admit fixed-degree threshold peeling algorithms that attain constant approximation.
Consequently, our technique can be viewed as a general approach for simulating such algorithms.
We refer the reader to \cref{sec:basic-definition} for formal definitions of these problems.
\cref{tab:summary} provides a summary of our results.

Our new tradeoff leads to several improvements on the state-of-the-art.
Given any non-negative integer $t$, our algorithms compute a $(2+\eps) \cdot (t+1)$-approximation in $O(\log^{1/(t+2)}(n) \cdot \polyloglog(n))$ rounds (except for the densest subgraph problem, where we require $t \leq 2$).
At one extreme, when $t$ is a constant, we obtain the first $o(\sqrt{\log n})$-round constant-approximation algorithms in the sublinear MPC model for the problems of edge orientation, density-dependent coloring, densest subgraph, and $k$-core decomposition.

\begin{resultbox}
There exist $o(\sqrt{\log n})$-round $O(1)$-approximate algorithms in the MPC sub-linear memory regime for the following problems: edge orientation, density-dependent coloring, densest subgraph, and $k$-core decomposition.
For constant $t \geq 1$, the algorithms compute a $(2+\eps)\cdot(t+1)$-approximation using $\tO\rb{\log^{1/(t+2)} n}$ rounds.
    For densest subgraph, we require an additional constraint of $t\leq 2$.
\end{resultbox}

At the other extreme, when $t = \lceil\log\log n/\log \log \log n\rceil$, our result gives $O(\poly\log \log n)$-round $O(\log\log n/\log \log\log n)$-approximation algorithms.
This improves the best-known approximation factor for $O(\poly\log \log n)$-round algorithms by a $\Theta(\log\log\log n)$ factor.
\begin{resultbox}
There exist $O(\poly\log \log n)$-round $O(\log\log n/\log \log\log n)$-approximate algorithms in the MPC sub-linear memory regime for the following problems: edge orientation, density-dependent coloring, and $k$-core decomposition.
\end{resultbox}

\noindent \textbf{Remark on global memory.}
The global memory complexity of our algorithms matches the state-of-the-art bounds in \cite{ghaffari2019improved,ghaffari2025density}:
Our edge orientation and density-dependent coloring algorithms achieve the optimal $\tO(m + n)$ global memory, while the densest subgraph and $k$-core decomposition algorithms require $\tO(m + n^{1+\delta})$ global space.
This distinction is due to additional technical challenges posed by the latter two problems, making it difficult to apply a previous technique called \emph{budget boosting} for reducing memory usage.
We provide more details in \cref{sec:outline}.

\subsection{An outline of our approach}
\label{sec:outline}

Our starting point is inspired by the recent result of Ghaffari and Grunau~\cite{ghaffari2025density} for density-dependent edge orientation in the sub-linear MPC regime.
We first recall the high-level idea behind their approach, as it will also serve as the point of comparison for our algorithms.

\subsubsection{Outline of \cite{ghaffari2025density}}
The work of Ghaffari and Grunau~\cite{ghaffari2025density} studies approximate edge orientation. In this problem, the goal is to orient the edges of an input graph so as to minimize the maximum outdegree. Equivalently, one seeks an orientation whose maximum outdegree is close to the arboricity $\lambda$, or to the density of the densest subgraph, of the input graph.

The classical peeling algorithm for this problem repeatedly performs the following step. 
Let $S$ be the set of vertices whose degree in the current graph is $O(\lambda)$. 
For each vertex $v \in S$, orient all edges incident to $v$ away from $v$, and then remove $v$ from the graph. We say that $v$ is peeled in this iteration.
It is not difficult to show that this process gives an $O(1)$-approximate orientation and terminates after $O(\log n)$ peeling iterations.

Thus, the peeling iteration in which a vertex $v$ is removed is determined by the $O(\log n)$-hop neighborhood of $v$.
We represent the peeling times by a labeling function $\ell: V \rightarrow \{1, 2, \dots, O(\log n)\}$ such that $\ell(v)$ is the iteration in which $v$ is peeled.
Such a labeling function is called a \emph{(partial) layer assignment}.
If these labels were known for all vertices, then each edge $\{u,v\}$ could be oriented from the endpoint peeled earlier to the endpoint peeled later, with ties broken arbitrarily.

This suggests a natural MPC strategy. For each vertex $v$, gather its $O(\log n)$-hop neighborhood, simulate the peeling process locally inside this neighborhood, and thereby determine the label of $v$. If memory were not a constraint, such neighborhoods could be gathered in only $O(\log \log n)$ MPC rounds by graph exponentiation. The obstruction is that some $O(\log n)$-hop neighborhoods may be too large to fit in the local memory of a single machine.

The main idea of Ghaffari and Grunau is to perform graph exponentiation on \emph{pruned} neighborhood views. 
That is, instead of gathering the full $O(\log n)$-hop neighborhood of a vertex, the algorithm constructs a reduced local tree-like view during the exponentiation process. 
Whenever such a view becomes too large, the algorithm prunes $O(\lambda)$ of its heaviest subtrees. 
The edges corresponding to these pruned parts are no longer used in the local simulation; instead, they are charged directly to the approximation guarantee and can be oriented later arbitrarily without affecting the correctness of the simulated peeling step. 
In each exponentiation iteration, each vertex is charged for only $O(\lambda)$ such discarded edges. Since the exponentiation process has only $O(\log \log n)$ iterations, the total additional outdegree incurred by these discarded edges is $O(\lambda \cdot \log \log n)$.

Using this idea, \cite{ghaffari2025density} showed that the pruned $T$-hop neighborhoods of \emph{most} vertices have $\tO(\lambda^T)$ size.
However, when $\lambda$ is large, the pruned neighborhood views can still be too large to fit into the local memory of one machine.
To address this issue, they proposed a preprocessing procedure, which partitions the edge set into subsets of $\Theta(\log n)$ arboricity.
An approximate edge orientation is then computed for each edge subset independently and in parallel.
The resulting orientations are combined to obtain the final output.

This preprocessing allows them to assume $\lambda = \Theta(\log n)$, thereby reducing the problem to uniformly sparse graphs.
Consequently, setting $T = \Theta(\delta\log n/\log \log n)$ yields an $\tO(\lambda^T) = \tO(n^\delta)$ upper bound on the size of the $T$-hop neighborhood, which is sufficiently small to fit on a single machine in the sublinear MPC setting.
Importantly, this size upper bound only holds for most vertices; some vertices may still have neighborhoods that are too large even after pruning.
These vertices are marked as \emph{inactive}. 
It can be shown that the number of inactive vertices is small enough to be ignored.


\subsubsection{Improved exponentiate-and-prune algorithm}
The main source of the super-constant approximation factor in the approach above is the pruning step. 
Thus, reducing the number of pruning steps is a natural way to improve the approximation factor.
However, the challenge is that pruning is also what keeps the local tree views small enough to fit within the memory of one machine. 
If one simply performs more graph exponentiation steps without pruning, the corresponding tree views may become too large. 
Our approach is to balance these two effects. Instead of pruning after every exponentiation step, we first keep the tree views sufficiently below the local memory budget, then perform several graph exponentiation steps without pruning, and only then apply the pruning routine. We repeat this process several times.

Our improved exponentiate-and-prune algorithm introduces the round-approximation tradeoffs we describe in \cref{sec:results} as well as a framework that allows us to attain algorithms for fixed-threshold peeling-based graph problems such as densest subgraph and $k$-core decomposition.
These problems all have a peeling algorithm that computes a constant approximation by iteratively removing all vertices with degree less than some fixed parameter $k$.
Then, we use the following structure for all of our algorithms:
\begin{enumerate}
    \item Perform a preprocessing routine on the graph so that the fixed peeling parameter $k$ is $O(\log n)$.

    \item Compute a pruned tree view for all active vertices using our exponentiate-and-prune algorithm.

    \item Compute vertex layers for each pruned tree view using the fixed-threshold peeling algorithm.

    \item Combine the vertex layers across all the pruned tree views of different vertices into a single vertex layer.

    \item Extract the final answer from the computed vertex layer.
\end{enumerate}
All the steps, except for using our exponentiate-and-prune algorithm to compute pruned tree views, are problem-specific.
Specifically, handling inactive vertices properly in the applications to densest subgraph and $k$-core decomposition is a major challenge in comparison to edge orientation.
We discuss these details next.

\subsubsection{Application to densest subgraph}
\label{sec:application-densest-subgraph}
To apply our exponentiate-and-prune algorithm to the densest subgraph problem, we first start with the fixed-threshold peeling algorithm from \cite{mitrovic2026new}.
Let $\rho^*(G)$ be the density of the densest subgraph of graph $G$.
The peeling algorithm iteratively removes all vertices with degree less than $\rho^*(G)$, ending with a $\rho^*(G)$-core which is a $2$-approximation of the densest subgraph.
However, this can require $\Omega(n)$ iterations, so \cite{mitrovic2026new} introduces a new stopping condition by comparing the sizes of vertex sets.
Specifically, if $V_i$ is the current vertex set of the graph and peeling is applied to get the new vertex set $V_{i+1}$, then the algorithm checks if $|V_{i+1}| \geq \frac{|V_i|}{1+\eps}$ for some $\eps\in(0,1)$.
In other words, the stopping condition checks if a large fraction of the vertices in $V_i$ have high degree.
If this stopping condition is satisfied, then it can be shown that $V_i$ is a $(2+\eps)$-approximate densest subgraph.
Otherwise, the condition guarantees a drop in the vertex size by a factor of at least $1+\eps$.
This gives us an algorithm using $O(\log_{1+\eps} n)$ iterations to compute a $(2+\eps)$-approximate densest subgraph.

Now, we want to simulate this peeling algorithm on the local pruned tree views produced by our exponentiate-and-prune approach.
When it comes to producing partial layer assignments, we can interpret our peeling as aiming to find the vertices in the partial layer assignment that have not been assigned a layer yet.
We say that such vertices are in the $\infty$ layer.
Unlike edge orientation which aims to compute a full layer assignment, our goal is to find the largest partial layer assignment possible and use the $\infty$ layer as our approximate densest subgraph.
Additionally, with intermediate layers, we want to try to apply our stopping condition described above to compute an approximate densest subgraph.

So, how do we produce our layer assignment from the different local prune tree views? In edge orientation, we can combine partial layer assignments from different trees into one partial layer assignment by taking the minimum layer assignment for a vertex over all the trees it appears in.
However, this does not work for densest subgraph.
Using the minimum gives us the guarantee that the out-degree of any vertex is bounded.
For densest subgraph though, we want the guarantee that the degree of any vertex that is not removed in a specific layer must have had a large degree in that layer.
Therefore, we can fix this by taking the maximum over all partial layer assignments from the different trees instead of the minimum.
We also have to be careful that vertices are not all assigned layer $\infty$ from taking the maximum, and we deal with this by leveraging the heights of the trees.
More specific details about how to take the maximum properly can be found in \cref{sec:ds:combining}.

With this layer assignment, we want to analyze the vertex set sizes by using the stopping condition of our peeling algorithm.
As described before, our goal is to find the $\infty$ layer vertices or some larger set that contains them while having a large density.
However, the $\infty$ layer vertices cause problems when we try to analyze the layer assignment.
Specifically, when the $\infty$ layer is large, it can cause the number of inactive vertices to be large as well.
We do not have any guaranteed information on whether inactive vertices should be peeled or should not be peeled, so they dilute the stopping condition we want to use from \cite{mitrovic2026new} by comparing vertex set sizes.
When the stopping condition is diluted, we no longer can guarantee that the vertex set size will drop by a constant factor after every iteration of peeling, and that could result in $\omega(\log n)$ layers.
In general, the $\infty$ layer can be just large enough to disrupt our stopping condition, but it can also be just small enough that most larger sets that contain it are not good approximations of the densest subgraph.

Then, how do we deal with the $\infty$ layer?
We introduce a subroutine that performs the peeling algorithm from \cite{mitrovic2026new} using $\tO(L/\sqrt{\log n})$ sublinear MPC rounds, where $L$ is the number of iterations of peeling we want to simulate.
The subroutine uses the framework from \cite{ghaffari2019sparsifying} that shaves off a $\sqrt{\log n}$ factor from peeling algorithms that take $O(\log n)$ iterations in the sublinear MPC model.
Normally, it would take $\tO(\sqrt{\log n})$ rounds, but by using $L = \Theta(\log^{2/3} n)$ or $L = \Theta(\log^{3/4} n)$, it takes $\tO(\log^{1/3} n)$ and $\tO(\log^{1/4} n)$ rounds, respectively. 
We are able to show that there is a perfect middle point for the size of the $\infty$ layer where if it is bigger than it, we can find an approximate densest subgraph in $\tO(L/\sqrt{\log n})$ sublinear MPC rounds using our subroutine, and if it is smaller than it, our stopping condition is not disrupted and allows us to observe a significant size reduction in the vertex set.
Therefore, we attain a MPC algorithm that uses $\tO(\log^{1/(t+2)} n)$ rounds for $t \leq 2$, where the $t\leq 2$ constraint comes from the round complexity of our subroutine.

\subsubsection{Application to $k$-core decomposition}
The starting point of our $k$-core result is a folklore peeling algorithm.
Given a degree parameter $k$, the algorithm iteratively removes the set of vertices whose current degree is smaller than $(2+\eps)k$, where $\eps > 0$ is a parameter.
It is known that the following holds after $O_\eps(\log n)$ peeling iterations.
\begin{itemize}
    \item All vertices of coreness $< k$ are peeled.
    \item No vertex with coreness $\geq (2+\eps)k$ is peeled.
\end{itemize}
A $(2+\eps)$-approximate core labeling can be obtained by running this procedure in parallel for all $k_i = (1+\eps/10)^i$, where $i = 1, 2, \dots, \log_{1+\eps/10} n$ \cite{ghaffari2019improved}.

We achieve our result by simulating this peeling algorithm.
First, we develop a preprocessing routine based on random sampling.
Consider a fixed $k$.
Let $H$ be a subgraph of $G$ obtained by sampling each edge with probability $\Theta_\eps(\frac{\log n}{k})$.
We show that running the peeling algorithm on $G$ with degree parameter $k$ can be simulated by running peeling with parameter $\Theta_\eps(\log n)$ on $H$.
(See \cref{lemma:kcore-sampling} for a more formal reduction.)
Based on this observation, we can assume without loss of generality that $k = O_\eps(\log n)$.
A similar random sampling approach appeared in \cite{ghaffari2019improved}.
However, their approach generates multiple edge samples for the simulation.
Our new analysis shows that one sample is enough for all iterations.

Our $k$-core algorithm is similar to the edge orientation algorithm.
It repeats the following process.
First, compute the pruned $L$-hop neighborhood.
Using this information, every vertex attempts to simulate $L$ iterations of peeling locally.
Each repetition is called a \emph{phase}, and can be implemented in $\polyloglog n$ MPC rounds.

The analysis for this process is challenging for the following reason.
In edge orientation, the random sampling procedure reduces both the degree parameter $k$ \emph{and} arboricity to $O(\log n)$; in addition, $k \approx 2\lambda$.
This ensures that every peeling iteration reduces the number of vertices by half.
However, this property does not hold in the $k$-core setting.
A layer can be larger than all previous ones, and furthermore, $\Theta_\eps(\log n)$ iterations of peeling may not remove all vertices.

To overcome this challenge, we strengthen the argument for edge orientation as follows.
Denote by $A_i$ the set of vertices peeled in iteration $i$.
Let $L$ be a parameter and divide the layers into \emph{batches} of $L$ layers.
Denote the $i$-th batch by $U_i = \bigcup_{j=(i-1)\cdot L+1}^{i \cdot L} A_i$.
Let $U_{\leq i}$ be $\bigcup_{j\leq i} U_j$.
Define $U_{<i}$ and $U_{> i}$ similarly.

We now outline the analysis.
By adapting the argument for edge orientation, we show that all but $O(\frac{|U_1|}{2^L})$ vertices in $U_1$ will be peeled in the first phase.
Hence, one phase suffices to reduce the number of vertices to $|U_{>1}| + \frac{|U_1|}{k^L}$.
However, in the worst case, $|U_{>1}|$ could be almost as large as $n$, and hence one phase may not even reduce the graph size by a constant factor.
In addition, it is unclear whether vertices in $U_{>1}$ can 
simulate $L$ steps of peeling in the next phase, because the simulation may depend on the remaining vertices in $U_1$, which may be outside of the $L$-hop neighborhood of $U_{>1}$.

To address this issue, we observe that some vertices in $U_{>1}$ can still simulate $L$ steps of peeling locally even though the simulation of $U_1$ is incomplete.
Intuitively, a vertex $v$ in batch $U_j$ can simulate peeling locally if there is no ``chain of dependency'' ending at $v$ that comes from $U_{<j}$.
We formulate such dependency in \cref{def:kcore-critical} and call these vertices \emph{critical vertices}.
The observation allows us to relate the size of a batch $U_j$ with the size of all previous batches, i.e., $|U_{<j}|$.
More precisely, we show that the size of $U_j$ after $i$ phases is at most $f(i, j)$, where $f(i, j)$ is defined recursively as
\[
    f(i+1, j) = f(i,j) / q^{9} + \sum_{j' < j} f(i,j') \cdot q^{j-j'+1},
\]
where we use $q \defeq k^L$.
A potential-based argument shows that $O_\eps(\log n / L)$ phases suffice to reduce the size of all batches to zero.
Hence, the simulation completes in $O_\eps(\log n / L)$ phases.
Each phase can be implemented in $O(\log\log n)$ MPC rounds.
By choosing $L$ depending on the parameter $t$, we obtain the result.

\subsection{Related work}
The most closely related work is the recent result of Ghaffari and Grunau~\cite{ghaffari2025density}, which we discuss in detail in the previous sections.
Another closely related result is due to Ghaffari, Lattanzi, and Mitrovi\'c~\cite{ghaffari2019improved}, who showed how to compute a $(1+\eps)$-approximate densest subgraph, a $(2+\eps)$-approximate $k$-core decomposition, and a $(2+\eps)$-approximate graph orientation in $\tO(\sqrt{\log n})$ rounds in the sub-linear memory regime.
Their result for densest subgraph builds on the multiplicative-weights-update $O(\log n)$-round approach of Bahmani, Goel, and Munagala~\cite{bahmani2014efficient}, while their result for $k$-core decomposition was inspired by the techniques of Esfandiari, Lattanzi, and Mirrokni~\cite{esfandiari2018parallel}.
The work of Mitrovi\'c, Pan, Qaempanah, and Raeisi~\cite{mitrovic2026new} focuses on computing a $(2+\eps)$-approximate \emph{directed} densest subgraph in $\tO(\sqrt{\log n})$ rounds in the sub-linear memory regime.
As mentioned in \cref{sec:application-densest-subgraph}, our densest-subgraph application builds on their fixed-threshold peeling algorithm.

In the introduction, we gave an overview of MPC algorithms whose round complexity substantially improves over what follows from direct PRAM simulation.
An influential line of work was initiated by Andoni, Song, Stein, Wang, and Zhong~\cite{andoni2018parallel} on computing connected components in MPC with bounds that depend on the graph diameter.
They and follow-up works~\cite{behnezhad2019near,liu2020connected,coy2022deterministic,fischer2022improved,balliu2023optimal} obtain algorithms running in $O(\log D + \log_{m/n} n)$ rounds in the sub-linear MPC regime, where $D$ is the diameter of the input graph.
This line of work is different in nature from our aim: its round complexity is parameterized by graph diameter and average degree, and in the worst case the $\log D$ term can still be $\Theta(\log n)$.
In contrast, our focus is on obtaining worst-case round-complexity improvements over the standard $O(\log n)$ bounds.

\subsection{Open problems}
Attaining a constant approximation in $O(\poly(\log\log n))$ rounds in the sub-linear MPC regime for any of the problems we studied remains a challenging open problem. 
Broadly speaking, we make progress for a family of peeling algorithms which use fixed peeling threshold.
It would be a major result to make comparable progress for peeling algorithms whose threshold is -- in some controlled way -- adjusted from one to another peeling step. Such approaches are used to solve $O(1)$-approximate maximum matching and $O(\log n)$-approximate minimum set cover.

\section{Preliminaries}

In this section, we provide a review of known results. 
Let $G = (V, E)$ be an undirected graph of $n$ vertices and $m$ edges.

\subsection{Basic notation} \label{sec:basic-definition}
Let $G = (V, E)$ be an undirected graph.
For a vertex $v \in V$, denote by $\deg_G(v)$ the degree of $v$ in $G$.
Let $N_G(v)$ denote the set of neighbors of $v$.

\begin{definition}[Edge orientation]
An \emph{orientation} of $G$ is a directed graph obtained by assigning each edge in $G$ a direction.
The \emph{out-degree} of an orientation $H$ is the maximum out-degree of a vertex in $H$.
We use $\alpha(G)$ to denote the minimum out-degree over all orientations of $G$.
The \emph{edge orientation problem} asks to find an edge orientation of out-degree $\alpha(G)$.
\end{definition}

\begin{definition}[Densest subgraph]
For a vertex set $S \subseteq V(G)$, we define the \emph{density} of $S$ to be $\rho(S) = |E(S)|/|S|$ where $E(S)$ is the set of edges in the induced subgraph of $S$.
Then, a \emph{densest subgraph} is a set $S^*$ such that $S^* \in \arg\max_{S\subseteq V(G)} \rho(S)$.
We let $\rho^*(G)$ be the density of the densest subgraph of $G$.
\end{definition}

\begin{definition}[Arboricity]
    The arboricity of $G$ is the minimum number of forests into which $E(G)$ can be partitioned.
    The arboricity is denoted by $\lambda(G)$, or simply $\lambda$ when the context is clear.
    It is known that $\lambda(G)$, $\alpha(G)$, and the density of the densest subgraph differ by at most 2.
\end{definition}

\begin{definition}[$k$-core decomposition]
Let $k \geq 1$ be an integer.
The \emph{$k$-core} of $G$ is the (unique) maximal subgraph $H$ of $G$ such that each vertex $v \in H$ has $\deg_H(v) \geq k$.
The \emph{coreness} of a vertex $v$, denoted by $C(v)$, is the largest integer $k$ such that $v$ is in the $k$-core.
It is known that the $k$-core is exactly the subgraph of $G$ induced by all vertices with coreness at least $k$.
An \emph{$\alpha$-approximate $k$-core decomposition} is a labeling $\tilde{C}: V(G) \rightarrow \{0, 1, \dots, n\}$ such that $\tilde{C}(v) \leq C(v) \leq \alpha \cdot \tilde{C}(v)$ holds for all $v \in V(G)$.
\end{definition}

For two functions $f(n), g(n)$, we write $f(n) = \tO(g(n))$ if $f(n) = O(g(n) \cdot \poly\log(g(n)))$. 
We denote $\log_2 n$ by $\lg n$ and define $\exp_2(x) = 2^x$.

\subsection{Partial layer assignments}
\begin{definition}[Layer Assignment and Vertex Set]
Let $G$ be a graph, and let $L$, $d$ be positive integers. A (partial) layer assignment of $G$ with $L$ layers and out-degree $d$ is a function $\ell_G: V(G) \rightarrow [L] \cup \{\infty\}$ such that for every vertex $v$ with $\ell_G(v)\neq \infty$, we have that
\[|u\in N_G(v) : \ell_G(u) \geq \ell_G(v)| < d.\]
The layer assignment is said to be \emph{full} if no vertex $v$ has $\ell_G(v) = \infty$.

We call the sets $V_i = \{v\in V(G) : \ell_G(v) \geq i\}$, for all $i \geq 1$, the vertex sets of the partial layer assignment $\ell_G$.
Denote by $V_{<i}$ the set $V(G) \setminus V_i$.
\end{definition}

\begin{definition}[Strictly Increasing Paths and Path Counts]
Let $G$ be a graph and $\ell_G$ be any partial layer assignment of $G$.
A path $P = (v_1, v_2, \ldots,v_{k})$ of length $k$ in $G$ is called strictly increasing (with respect to $\ell_G(v)$) if
\[\ell_G(v_1) < \ell_G(v_2) < \cdots < \ell_G(v_{k})\]
For each vertex $v \in V(G)$, define $\NumPathsIn_{G,\ell_G,k} (v)$ to be the number of distinct strictly increasing paths of length at most $k$ in $G$ that \textbf{end} at $v$, and $\NumPathsOut_{G,\ell_G, k} (v)$ to be the number of distinct strictly increasing paths of length at most $k$ in $G$ that \textbf{start} at $v$.
\end{definition}

In what follows, we leverage $\NumPathsIn_{G,\ell_G,k} (v)$ in bounding the size of vertex $v$'s neighborhood.
Compared to \cite{ghaffari2025density}'s definition, our $\NumPathsIn_{G,\ell_G,k} (v)$ counts only the paths of length at most $k$, instead of paths of an arbitrary length.
This allows us to tighten the analysis of our algorithms.


\begin{claim}
\label{claim:strictly-increasing-paths}
Let $G$ be a graph, and $d$, $k$ be positive integers. Let $\ell_G$ be any partial layer assignment with out-degree $d \geq 2$.
For any vertex $v \in V(G)$, the number of strictly increasing paths of length $k$ starting from $v$ is at most $d^{k-1}$.
\end{claim}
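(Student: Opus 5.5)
Throughout, a path of length $k$ means a path with $k$ vertices $(v_1,\dots,v_k)$, as in the definition of strictly increasing paths. The argument is an induction on $k$ that extends a path one vertex at a time, bounding the number of choices at each extension by the out-degree condition of the partial layer assignment.

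Fix $v$. For $k=1$ the only path is $(v)$, and $1 = d^0$.

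For the inductive step, let $P=(v_1=v,\dots,v_k)$ be a strictly increasing path of length $k \ge 2$ starting at $v$. Its prefix $(v_1,\dots,v_{k-1})$ is a strictly increasing path of length $k-1$ starting at $v$, and $P$ is determined by this prefix together with the choice of $v_k$. It therefore suffices to show that each prefix has at most $d$ valid extensions. Then the count for length $k$ is at most $d$ times the count for length $k-1$, which by induction gives at most $d\cdot d^{k-2}=d^{k-1}$.

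To bound the extensions, let $u = v_{k-1}$. Any valid $v_k$ satisfies $v_k \in N_G(u)$ and $\ell_G(v_k) > \ell_G(u)$. Consider two cases.
\begin{enumerate}
\item If $\ell_G(u) = \infty$, no vertex has a strictly larger label, so $u$ has no valid extension.
\item If $\ell_G(u) \neq \infty$, the out-degree condition of the layer assignment gives
\[
\bigl|\{w \in N_G(u) : \ell_G(w) \ge \ell_G(u)\}\bigr| < d .
\]
The set of valid $v_k$ is contained in this set, so there are at most $d-1 \le d$ extensions.
\end{enumerate}
This completes the induction.

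No step here is a serious obstacle. The one point that needs care is the treatment of the label $\infty$, since the out-degree condition is only stated for finite labels. This is handled by noting that $\infty$ can only occur as the last vertex of a strictly increasing path.

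The argument in fact gives the stronger bound $(d-1)^{k-1}$. We do not need the hypothesis $d \ge 2$ beyond what is stated.
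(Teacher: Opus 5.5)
Your proof is correct and is essentially the paper's argument: an induction on $k$ in which each extension step has at most $d$ choices by the out-degree condition. The paper strips off the first vertex and applies the hypothesis at the chosen neighbor $v'$, while you strip off the last vertex; this difference is immaterial, and your explicit handling of the label $\infty$ and your sharper bound $(d-1)^{k-1}$ (from the strict inequality in the out-degree condition) are both correct.
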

\begin{proof}
We prove this claim by an induction on $k$.
\begin{itemize}
    \item \textbf{Basis: $k = 1$.} Let $v$ be a vertex. There is only one path of length 1 starting from $v$, that is, the path consisting of $v$ itself. Hence, the claim holds for $k = 1$.
    
    \item \textbf{Inductive case:} Assume that the claim holds for $k = k' - 1$, where $k' > 1$.
    Fix a vertex $v$.
    A strictly increasing path from $v$ must first extend to a neighbor $v' \in N_G(v)$ with $\ell_G(v') > \ell_G(v)$; in addition, the rest of the vertices must form a strictly increasing path from $v'$.
    Since $\ell_G$ has out-degree $d$, there are only $d$ possible ways to choose $v'$.
    By the induction hypothesis, for a fixed $v'$, the number of strictly increasing paths from $v'$ of length $k'-1$ is at most $d^{k'-2}$.
    Hence, the number of strictly increasing paths of length $k'$ starting from $v$ is $d \cdot d^{k'-2} = d^{k' - 1}$.
    This proves the claim for $k = k'$.
\end{itemize}
\noindent By induction, the claim holds.
\end{proof}

\begin{lemma}
\label{lemma:mpc-numpaths}
Let $G$ be a graph, and $d$, $k$ be positive integers. Let $\ell_G$ be any partial layer assignment with out-degree $d \geq 2$.
Then, it holds that
\[\sum_{v\in V(G)}\NumPathsIn_{G, \ell_G, k}(v) \leq |V(G)|\cdot d^k.\]
\end{lemma}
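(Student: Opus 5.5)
The plan is a double-counting argument: every strictly increasing path is counted once at its endpoint in $\NumPathsIn$ and once at its starting point in $\NumPathsOut$. So I will switch from counting by endpoint to counting by starting point, and then apply \cref{claim:strictly-increasing-paths} at each start vertex.

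First, I will observe that each strictly increasing path of length at most $k$ has a unique last vertex and a unique first vertex. Grouping the same set of paths either way gives
\[\sum_{v\in V(G)}\NumPathsIn_{G,\ell_G,k}(v) = \sum_{v\in V(G)}\NumPathsOut_{G,\ell_G,k}(v).\]
Second, I will bound $\NumPathsOut_{G,\ell_G,k}(v)$ for a fixed $v$. Split the paths by their length $j \in \{1,\dots,k\}$. By \cref{claim:strictly-increasing-paths}, at most $d^{j-1}$ of them have length exactly $j$. Summing the geometric series gives
\[\NumPathsOut_{G,\ell_G,k}(v) \le \sum_{j=1}^{k} d^{j-1} = \frac{d^k-1}{d-1} \le d^k - 1 < d^k.\]
The middle inequality uses $d \ge 2$, so that $d-1 \ge 1$.

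Third, summing this bound over all $v \in V(G)$ yields the claimed $|V(G)|\cdot d^k$.

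The only delicate point is the counting convention in \cref{claim:strictly-increasing-paths}. There, length counts vertices, so a single vertex is a path of length $1$. I need to check this matches the definition of $\NumPathsIn$, and in particular whether the trivial length-$1$ path is included; either way the bound only improves. I also need the claim's usage of out-degree to apply to the $\infty$-labelled vertices. Strictly increasing paths never leave a vertex labelled $\infty$, since no label exceeds $\infty$. So such a vertex contributes only its trivial path, and the bound is unaffected. This bookkeeping is the main thing to verify; otherwise the argument is routine.
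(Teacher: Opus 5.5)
Your proposal is correct and takes the same route as the paper's proof. Both switch to $\sum_{v}\NumPathsOut_{G,\ell_G,k}(v)$ by double counting, bound each term by $\sum_{j=1}^{k} d^{j-1} \le d^k$ via \cref{claim:strictly-increasing-paths}, and sum over vertices; your extra checks are sound bookkeeping that the paper leaves implicit, namely the convention that length counts vertices and the fact that $\infty$-labelled vertices start only the trivial path.
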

\begin{proof}
Note that
\[
\sum_{v\in V(G)}\NumPathsIn_{G, \ell_G, k}(v) = \sum_{v\in V(G)}\NumPathsOut_{G, \ell_G, k}(v).
\]
For each vertex $v$, \cref{claim:strictly-increasing-paths} implies that $\NumPathsOut_{G, \ell_G, k}(v) \leq \sum_{i=1}^{k} d^{i-1} \leq d^{k}$.
Hence, we have $\sum_{v\in V(G)}\NumPathsIn_{G, \ell_G, k}(v) \leq \sum_{v \in V(G)} d^k \leq |V(G)| \cdot d^k$.
\end{proof}

\subsection{Pruned tree views}
On a high level, given a graph $G$, our algorithm repeatedly performs two operations: building a tree-like view within $G$ from a vertex, and pruning branches of a tree-like view.
As an illustration, a tree-like view from a vertex $v$ is performed by (1) creating a rooted tree consisting of $v$ only, and then (2) to each vertex $w$ in this tree attaching as its children all the neighbors of $w$. Step (2) is performed multiple times.
Observe that in this tree a vertex from $G$ might appear several times.
We now formally define when a tree-like view is valid, and we also define the notation of $\Missing$ neighbors, i.e., pruned branches.
\begin{definition}[Trees with Valid Mappings]
Let $G$ be a graph, $T$ be a rooted tree, and $\map : V(T)\rightarrow V(G)$ be a mapping from $T$ to $G$. 
We say that $\map$ is a valid mapping if
\begin{enumerate}
    \item for each edge $(x, y)$ in tree $T$, $\{\map(x),\map(y)\}$ is an edge in $G$, and
    \item for each vertex $x\in V(T)$, and any two distinct children $c_1$ and $c_2$ of $x$, it holds that $\map(c_1)\neq \map(c_2)$.
\end{enumerate}
\end{definition}
\begin{definition}[Missing Neighbors]
Let $G$ be a graph, $T$ be a rooted tree, and $\map : V(T) \rightarrow V(G)$ be a valid mapping. For a vertex $x\in V(T)$, we define
\[\Missing_{G, T, \map}(x) := |N_G(\map(x)) \backslash \{\map(c) : c\in \text{children}_T(x)\}|.\]
\end{definition}
\begin{definition}[Depth, Height]
    Let $T$ be a tree rooted at $r$. The \emph{depth} of a node $x \in V(T)$ is the number of edges in the unique path from $r$ to $x$. The \emph{depth} of $T$ is the largest depth over all vertices. The \emph{height} of $x$ is the number of vertices on the unique path from $x$ to the deepest node in its subtree. The \emph{height} of $T$ is defined as the height of its root.
\end{definition}

\begin{definition}[Monotonically Reachable]
    Let $G$ be a graph and let $\ell_G$ be a partial layer assignment. Further, let $T$ be a rooted tree with root $r$, and let $\map: V(T) \rightarrow V(G)$ be a valid mapping. A vertex $x \in V(T)$ is said to be \emph{monotonically reachable} with respect to $\ell_G$ if, when we denote by $x = x_1, x_2, \dots , x_k = r$ the unique path from $x$ to the root $r$ in $T$, it holds that
    \[
        \ell_G(\map(x_1)) < \ell_G(\map(x_2)) < \dots < \ell_G(\map(x_k)).
    \]
\end{definition}

We present the local pruning routine from \cite{ghaffari2025density} as \cref{alg:mpc:localprune}.
The idea of this routine is to use it while applying graph exponentiation to collect some $h$-hop neighborhood.
Normally, an $h$-hop neighborhood would be too large to store on a single machine in sub-linear MPC. 
However, performing graph exponentiation via \LocalPrune guarantees that -- for a large amount of vertices -- the desired $T$-hop neighborhoods fit within machine memory.
Specifically, the following lemmas hold.

\begin{algorithm}[h]
\caption{\LocalPrune: Pruning routine for rooted tree}
\label{alg:mpc:localprune}
\begin{algorithmic}[1]
\medskip
\Statex \textbf{Input:} rooted tree $T$, pruning parameter $k > 0$
\Statex \textbf{Output:} rooted tree $T_{\text{pruned}}$
\medskip
\hrule
\smallskip
\State $r \gets$ the root of $T$
\If{$r$ has at most $k$ children}
    \State\Return single-node tree with root $r$
\EndIf
\For{each child $c$ of $r$}
    \State $T_c\gets$ the subtree of $T$ rooted at $c$
    \State $T_{c,\text{pruned}} \gets \LocalPrune(T_c, k)$
\EndFor
\State $\mathcal{C} \gets \{T_{c,\text{pruned}} : c\text{ is a child of }r\}$
\State Remove the $k$ largest subtrees from $\mathcal{C}$
\State $T_{\text{pruned}}\gets$ single-node tree with root $r$
\For{each $T_{c,\text{pruned}} \in \mathcal{C}$}
    \State Attach $T_{c,\text{pruned}}$ as a subtree of $r$ in $T_{\text{pruned}}$
\EndFor
\State \Return $T_{\text{pruned}}$
\end{algorithmic}
\end{algorithm}

\begin{lemma}[Modified Lemma 3.2 from \cite{ghaffari2025density}]
\label{lemma:mpc-prunesize}
Let $G$ be a graph, and let $T$ be a rooted tree with root $r$ and height $L$. Suppose we have a valid mapping $\map:V(T) \rightarrow V(G)$. Let $d$ be a positive integer, and let $\ell_G$ be a partial layer assignment of $G$ with out-degree $d$ satisfying that $\ell_G(\map(r)) < \infty$. 
Furthermore, let $k\geq d$ be a positive integer, and obtain $T_\pruned$ by calling $\LocalPrune(T, \map, k)$. Then,
\[|V(T_\pruned)| \leq \NumPathsIn_{G,\ell_G, L}(\map(r)).\]
\end{lemma}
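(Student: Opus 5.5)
Note first that \LocalPrune ignores $\map$ and $\ell_G$. Its output depends only on the shape of $T$ and on $k$, while $\map$ and $\ell_G$ enter only through the bound.

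The proof is by strong induction on the height of $T$. The key claim is that every node of $T_\pruned$ is monotonically reachable with respect to $\ell_G$. Granting this, fix a node $x$ of $T_\pruned$ and let $x=x_1,\dots,x_j=r$ be its path to the root. Map this path to $(\map(x_1),\dots,\map(x_j))$ in $G$. Condition~1 of a valid mapping makes this a walk in $G$, and monotone reachability makes the labels strictly increasing, so all vertices are distinct and it is a strictly increasing path ending at $\map(r)$. Its length is at most the height $L$ of $T$, since $T_\pruned$ is a subtree of $T$ containing $r$.

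This assignment from nodes to paths is injective. Suppose two distinct nodes give the same path. Walking down from $r$, the two root-paths first diverge at two distinct children of a common node, and condition~2 of a valid mapping says these children have different images. Hence $|V(T_\pruned)|\le \NumPathsIn_{G,\ell_G,L}(\map(r))$.

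The main obstacle is monotone reachability, since pruning uses subtree sizes and not layers. I would prove the following stronger statement by induction on height: for any tree $T'$ with root $r'$, if $\ell_G(\map(r'))<\infty$ then every node of $\LocalPrune(T',k)$ is monotonically reachable from $r'$. The nodes of the output are $r'$ together with nodes in kept pruned child subtrees. Those children of $r'$ survived the removal of the $k$ largest subtrees.

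The point to establish is that every child $c$ with $\ell_G(\map(c))\le \ell_G(\map(r'))$ is removed. By condition~2 the children have distinct images, and by the out-degree property $\map(r')$ has fewer than $d\le k$ neighbors $u$ with $\ell_G(u)\ge\ell_G(\map(r'))$. So the children with $\ell_G(\map(c))\le\ell_G(\map(r'))$ are all but fewer than $d$ of the children. That makes them the many, not the few, which is the wrong direction.

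So I would reread the inequality as follows. The "bad" children are those that would break monotonicity. Under the paper's convention, a path from $x$ up to $r$ has increasing labels, so children must have smaller labels than their parent. The bad children are therefore those with $\ell_G(\map(c))\ge \ell_G(\map(r'))$, and there are fewer than $d\le k$ of them; note that this count also covers any children with label $\infty$.

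The remaining step is to show that these bad children fall among the $k$ removed largest pruned subtrees. Size alone does not guarantee this. I would first try to make it a tie-breaking convention that bad subtrees are removed first. Failing that, I would argue via the counting bound directly: a bad child's pruned subtree can be charged to paths, and the per-child size bound follows inductively from the kept children. Kept good children $c$ have finite labels, so the inductive hypothesis applies to them.

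Finally, combine as above. The depth/length accounting gives paths of at most $L$ vertices, which matches the "length" convention of \cref{claim:strictly-increasing-paths}. I expect the handling of bad children versus the size-based removal to be the delicate step.
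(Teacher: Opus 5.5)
Your argument depends on the claim that every node of $T_\pruned$ is monotonically reachable. That claim is false, so the proof has a genuine gap.

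\textsc{Local-Prune} discards the $k$ \emph{largest} pruned child subtrees without looking at $\ell_G$. A ``bad'' child, one whose label is at least that of $\map(r)$, therefore survives whenever its pruned subtree is small. Here is a concrete instance with $d=k=2$:
\begin{itemize}
\item In $G$, let $\ell_G(v)=3$. Let $v$ be adjacent to $b$ with $\ell_G(b)=4$ and to $g_1,g_2$ with label $2$. Let each $g_j$ also be adjacent to two private leaves of label $1$. This is a layer assignment of out-degree $2$.
\item Let $T$ have root $r\mapsto v$ and children $x_b\mapsto b$ (a leaf), $x_1\mapsto g_1$, $x_2\mapsto g_2$. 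Each $x_j$ has three leaf children, mapped to the two label-$1$ neighbors of $g_j$ and to $v$.
\item Each $T_{x_j}$ prunes to size $2$ and $T_{x_b}$ prunes to size $1$. At the root the two strictly larger subtrees are removed, so $V(T_\pruned)=\{r,x_b\}$, and $x_b$ is not monotonically reachable.
\end{itemize}
No ties occur here, so a tie-breaking convention cannot rescue the claim. In any case \textsc{Local-Prune} never sees $\ell_G$, and (P2) needs the bound for every $\ell_G$ at once. Your node-to-path injection is therefore undefined on $x_b$.

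Your fallback of charging a bad child's subtree ``to paths'' also fails as stated. Strictly increasing paths ending at a bad child do not extend to strictly increasing paths ending at $\map(r)$.

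The missing idea is an exchange argument. Give up on controlling \emph{which} subtrees survive, and bound only their total size, by induction on the height.
\begin{itemize}
\item If $r$ has at most $k$ children, the output is a single node and the bound is trivial.
\item Otherwise $r$ has $p>k$ children. The kept pruned subtrees are the $p-k$ smallest, so their total size is at most that of \emph{any} $p-k$ pruned child subtrees.
\item Fewer than $d\le k$ children are bad, so at least $p-k$ children $c$ are good, meaning $\ell_G(\map(c))<\ell_G(\map(r))$. These have finite labels, so the induction hypothesis gives $|V(T_{c,\pruned})|\le\NumPathsIn_{G,\ell_G,L-1}(\map(c))$.
\item Hence $|V(T_\pruned)|\le 1+\sum_{c\text{ good}}\NumPathsIn_{G,\ell_G,L-1}(\map(c))$. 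In words, a surviving bad subtree is paid for by a discarded good subtree that is at least as large.
\item Finally, append $\map(r)$ to each strictly increasing path ending at $\map(c)$ for good $c$. This gives strictly increasing paths of length at most $L$ ending at $\map(r)$. They are pairwise distinct by condition~2 of the valid mapping, so the right-hand side is at most $\NumPathsIn_{G,\ell_G,L}(\map(r))$.
\end{itemize}
This is the paper's proof. The paper writes the last step as an equality, but only $\le$ holds, and only $\le$ is needed.
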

\begin{proof}
    We prove this lemma by induction on the height $L$ of $T$.
    \begin{itemize}
        \item \textbf{Basis: $L = 1$.}
        In this case, $T$ consists of a single node with no children.
        Thus, $\LocalPrune(T, k)$ returns $T$ itself in line 3.
        That is, $|T'| = |T| = 1$.
        On the other hand, since there is only one possible strictly increasing path with length one ending at $\map(r)$, we have
        \[
            \NumPathsIn_{G,\ell_G,1}(\map(r)) = 1 = |T'|.
        \]
        Hence, the lemma holds for $h_T = 1$.
        
        \item \textbf{Inductive case:}
        Assume that the lemma holds for all trees of height at most $L-1$, where $L \geq 2$.
        Let $c_1, c_2, \dots, c_p$ be the children of $r$.
        Denote by $T_{c_i}$ the subtree rooted at $c_i$.
        Since $\ell_G$ has out-degree $\leq k$, at most $k$ children $c_i$ have $\ell_G(\map(c_i)) \geq \ell_G(\map(r))$.
        Assume without loss of generality that $c_1, c_2, \dots, c_{k'}$ are these children, where $0 \leq k' \leq k$.
        
        In the for-loop of line 4, $\LocalPrune(T,k)$ computes $T_{c_i,\pruned} \gets \LocalPrune(T_{c_i},k)$ for $i \in [1,p]$.
        By the induction hypothesis,
        \[
            |T_{c_i,\pruned}| \leq \NumPathsIn_{G, \ell_G, L-1}(\map(c_i))
        \]
        holds for all $i \in [1, p]$.
        The largest $k$ trees among them are removed, and the remaining ones are attached to $r$ as a subtree.
        Therefore, we have
        \[
            |V(T_\pruned)| \leq 1 + \sum_{i=k'+1}^p \NumPathsIn_{G, \ell_G, L-1}(\map(c_i)).
        \]
        We claim that the term
        \[
            1 + \sum_{i=k'+1}^p \NumPathsIn_{G, \ell_G, L-1}(\map(c_i)) = \NumPathsIn_{G, \ell_G, L}(\map(r)).
        \]
        To see this, note that every strictly increasing path of length $L' \in [2, L]$ ending at $\map(r)$ can be written as $P \circ (c_i, v)$, where $P$ is a strictly increasing path of length $L' - 1$ ending at $c_i$ and $\circ$ denotes path concatenation. 
        Hence, $\sum_{i=k'+1}^p \NumPathsIn_{G, \ell_G, L-1}(\map(c_i))$ equals the number of strictly increasing paths ending at $\map(r)$.
        Since there is only one path of length 1 ending at $\map(r)$, the claim holds.
        This implies that
        \[
            |V(T_\pruned)| \leq \NumPathsIn_{G, \ell_G, L}(\map(r)).
        \]
        Consequently, the lemma also holds for trees of height $L$.
    \end{itemize}
    By induction, the lemma holds for trees of any height.
\end{proof}

\begin{lemma}[Claim 3.1 from \cite{ghaffari2025density}]
\label{lemma:mpc-prunemissing}
Let $G$ be a graph, $T$ be a rooted tree, and let $\map: V(T) \rightarrow V(G)$ be a valid mapping.
Fix any integer $k > 0$ and consider $T_\text{pruned} \gets \LocalPrune(T, k)$ with the induced mapping
\[
    \map_\text{pruned}(x) \defeq \map(x) \text{ for every } x \in V(T_\text{pruned}).
\]

Then, for every $x \in V(T_\text{pruned})$, it holds that
\[
    \Missing_{G,T_\text{pruned},\map_\text{pruned}} (x) \leq \Missing_{G,T,\map}(x) + k.
\]

\end{lemma}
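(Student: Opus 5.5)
We prove the claim by induction on the height of $T$, following the recursive structure of \LocalPrune. Throughout, we use that a node $x$ of $T_\text{pruned}$ is the same node as in $T$, since the output is built only from the root and recursively pruned subtrees. Its children in $T_\text{pruned}$ are therefore a subset of its children in $T$, and $\map_\text{pruned}$ agrees with $\map$ on them. It follows immediately that $\map_\text{pruned}$ is a valid mapping: every tree edge of $T_\text{pruned}$ is a tree edge of $T$, and distinct children still map to distinct vertices.

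The key local statement concerns how many children $x$ loses. For each node $x \in V(T_\text{pruned})$, at most $k$ of its children in $T$ are absent among its children in $T_\text{pruned}$. We check this from the point at which $x$ acts as the root of a recursive call $\LocalPrune(T_x,k)$; such a call is made for every surviving node, because the output subtree at $x$ is exactly the output of the call on $T_x$. There are two cases.
\begin{enumerate}
\item If $x$ has at most $k$ children in $T$, the call returns the single node $x$. Then $x$ loses all of its children, which number at most $k$.
\item Otherwise, all child subtrees are pruned recursively, and exactly the $k$ largest pruned subtrees are removed. The remaining children all stay attached, since each pruned subtree keeps its own root.
\end{enumerate}
In both cases $x$ loses at most $k$ children. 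One must also confirm that no later operation at an ancestor removes children of $x$. This holds because ancestors only detach whole subtrees rooted at their own children. If such a subtree contains $x$, then $x \notin V(T_\text{pruned})$; otherwise $x$'s child list is untouched.

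Finally we translate lost children into missing neighbors. By validity, the children of $x$ map injectively, so
\[
\{\map(c): c\in\children_{T_\text{pruned}}(x)\} \supseteq \{\map(c): c\in\children_T(x)\}\setminus D,
\]
where $D$ is the set of images of the removed children and $|D|\le k$. Hence
\[
\Missing_{G,T_\text{pruned},\map_\text{pruned}}(x) = \bigl|N_G(\map(x))\setminus \{\map(c): c\in\children_{T_\text{pruned}}(x)\}\bigr| \le \Missing_{G,T,\map}(x) + |D| \le \Missing_{G,T,\map}(x)+k .
\]

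The main point needing care is the bookkeeping in the second paragraph. One must argue that node identities are preserved across the recursion, so that $\children_{T_\text{pruned}}(x)\subseteq \children_T(x)$, and that each surviving node is charged for removals only once, namely in its own call. This is where the induction on height is used. The inductive hypothesis is that for every subtree $T_c$, each node $y$ of $\LocalPrune(T_c,k)$ loses at most $k$ children relative to $T_c$. Attaching these pruned subtrees under the root does not change any child list inside them, so the statement for $T$ follows. The final inequality is then a routine set-difference count.
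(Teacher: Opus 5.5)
Your proof is correct. Every surviving node $x$ loses at most $k$ children, and all of them inside its own recursive call of \textsc{Local-Prune}: either all of its at most $k$ children, or exactly the $k$ roots of the largest pruned subtrees. Each lost child uncovers at most one vertex of $N_G(\mathsf{map}(x))$. The paper gives no proof here and simply imports the statement as Claim~3.1 of \cite{ghaffari2025density}; your argument is essentially the standard one behind that claim, and the injectivity of the child mapping you invoke is not actually needed for the final count.
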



\subsection{Concentration inequality}
We need the following standard results in probability.
\begin{lemma}[Chernoff Bound]\label{lemma:chernoff}
	Let $X_1, \ldots, X_k$ be independent random variables taking values in $[0, 1]$. Let $X \eqdef \sum_{i = 1}^k X_i$ and $\mu \eqdef \E{X}$. Then,
	\begin{enumerate}
		\item\label{item:delta-at-most-1} For any $\delta \in [0, 1]$ it holds $\prob{|X - \mu| \ge \delta \mu} \le 2 \exp\rb{- \delta^2 \mu / 3}$.
		\item\label{item:delta-at-most-1-ge} For any $\delta \in [0, 1]$ it holds $\prob{X \ge (1 + \delta) \mu} \le \exp\rb{- \delta^2 \mu / 3}$.
	\end{enumerate}
\end{lemma}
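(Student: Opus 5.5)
The statement to prove is the Chernoff bound (\cref{lemma:chernoff}); I would use the standard moment generating function method.

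\textbf{Step 1: the upper tail.} Fix $s>0$ and apply Markov's inequality to $e^{sX}$:
\[
\prob{X\ge(1+\delta)\mu}\le e^{-s(1+\delta)\mu}\,\E{e^{sX}}.
\]
By independence, $\E{e^{sX}}=\prod_i \E{e^{sX_i}}$. Since $x\mapsto e^{sx}$ is convex on $[0,1]$, we have $e^{sx}\le 1+x(e^s-1)$ for $x\in[0,1]$. Taking expectations and using $1+y\le e^y$ gives
\[
\E{e^{sX_i}}\le \exp\bigl(\E{X_i}(e^s-1)\bigr),
\]
so $\E{e^{sX}}\le \exp\bigl(\mu(e^s-1)\bigr)$.

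\textbf{Step 2: optimize $s$.} Choose $s=\ln(1+\delta)$. This yields
\[
\prob{X\ge(1+\delta)\mu}\le \left(\frac{e^\delta}{(1+\delta)^{1+\delta}}\right)^\mu .
\]
It remains to show that $\delta-(1+\delta)\ln(1+\delta)\le -\delta^2/3$ for $\delta\in[0,1]$. Let $g(\delta)=\delta-(1+\delta)\ln(1+\delta)+\delta^2/3$. Then $g(0)=0$ and $g'(\delta)=-\ln(1+\delta)+2\delta/3$, with $g'(0)=0$. Also $g''(\delta)=-1/(1+\delta)+2/3$, which is negative on $[0,1/2)$ and positive on $(1/2,1]$. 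Hence $g'$ first decreases and then increases. Since $g'(1)=2/3-\ln 2<0$, we get $g'\le 0$ on all of $[0,1]$, so $g\le 0$ there. This proves item~\ref{item:delta-at-most-1-ge}.

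\textbf{Step 3: the lower tail.} Apply Markov's inequality to $e^{-sX}$. The same convexity bound gives $\E{e^{-sX}}\le \exp\bigl(\mu(e^{-s}-1)\bigr)$. Choose $s=-\ln(1-\delta)$ to get
\[
\prob{X\le(1-\delta)\mu}\le \left(\frac{e^{-\delta}}{(1-\delta)^{1-\delta}}\right)^\mu .
\]
Next, use the series bound $(1-\delta)\ln(1-\delta)\ge -\delta+\delta^2/2$, which holds for $\delta\in[0,1)$ because every higher-order term in the expansion is nonnegative. This gives the stronger bound $\exp(-\delta^2\mu/2)\le\exp(-\delta^2\mu/3)$. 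The endpoint $\delta=1$ follows by continuity, or directly from the formula $\prod_i(1-\E{X_i})\le e^{-\mu}$.

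\textbf{Step 4: combine.} A union bound over the two tails gives item~\ref{item:delta-at-most-1}, with the factor $2$ in front.

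\textbf{Main obstacle.} The only nonroutine step is the elementary inequality in Step~2, that is, verifying the constant $1/3$ over the whole interval $[0,1]$. The sign analysis of $g'$ handles it. All other steps are mechanical.
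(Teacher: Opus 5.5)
Your proof is correct. It is the standard moment-generating-function argument: Markov's inequality on $e^{sX}$, the chord bound $e^{sx}\le 1+x(e^s-1)$ on $[0,1]$, the choices $s=\ln(1+\delta)$ and $s=-\ln(1-\delta)$, and the two elementary inequalities, which you verify correctly. The endpoint $\delta=1$ is handled properly, and $\delta=0$ is trivial.

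The paper gives no proof of this lemma; it is quoted as a standard fact. So there is nothing to compare against, except that your lower-tail bound is actually the sharper $\exp(-\delta^2\mu/2)$.

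One remark worth adding. Since $e^s-1>0$ for $s>0$, your Step 1 also gives $\E{e^{sX}}\le \exp(\mu'(e^s-1))$ for any $\mu'\ge\mu$. The same optimization then yields the upper-tail bound with $\mu$ replaced by any upper bound $\mu'$ on the mean. This is the form the paper actually invokes in places such as \cref{lemma:edge-partition} and \cref{lemma:kcore-sampling}, where only an upper bound on the expectation is available.
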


\section{The exponentiate-and-prune meta algorithm}
\subsection{Algorithm overview}
In this section, we present our exponentiate-and-prune meta method; see \cref{alg:mpc:expo} for details.
Given a graph $G$ and parameters $B, k, s, t$, the algorithm computes a rooted tree $T_v$ for each vertex $v$, called the \emph{pruned tree view} of $v$.
In addition, the algorithm marks some vertices as \emph{inactive}.
Intuitively, the pruned tree view of $v$ represents a neighborhood of $v$ that can be used to simulate peeling algorithms, and inactive vertices represent a small number of vertices whose pruned tree view is too large to be computed.
Hence, to implement this method, two components are required: (1) simulating peeling on the pruned tree views, and (2) handling inactive vertices.
These components are application dependent, and some of their instantiations for concrete problems are presented in detail in \cref{sec:orientation,sec:coloring,sec:densest-subgraph,sec:k-core}.

\begin{algorithm}
\caption{\Expo}
\label{alg:mpc:expo}
\begin{algorithmic}[1]
\medskip
\Statex \textbf{Input:} graph $G$, pruning parameter $k > 0$, step parameters $s > 0$ and $t \geq 0$, and budget $B > 0$
\Statex \textbf{Output:} for each $v\in V(G)$, a rooted tree $T_v$ with a valid mapping $\map_v : V(T_v)\rightarrow V(G)$
\medskip
\Statex \hrule
\State \textbf{Initialization:}
\For{each $v\in V(G)$ with $|N_G(v)| < B^{1/2^s} + k$} \Comment{initialize active vertices}
    \State Mark $v$ as active
    \State $T_v^{(0,0)} \gets$ a rooted tree such that the root has $|N_G(v)|$ leaf children \label{line:init-active-tree}
    \State Construct $\map_v^{(0,0)} : V(T_v^{(0,0)}) \rightarrow V(G)$ so that root of $T_v^{(0,0)}$ maps to $v$, and each child maps to a distinct neighbor in $N_G(v)$ \label{line:init-active-map}
\EndFor
\For{each $v\in V(G)$ with $|N_G(v)| \geq B^{1/2^s} + k$} \label{line:init-inactive} \Comment{initialize inactive vertices}
    \State Mark $v$ as inactive \label{line:mark-inactive-1}
    \State $T_v^{(0,0)} \gets$ rooted tree with a single node
    \State Define $\map_v^{(0,0)} : V(T_v^{(0,0)}) \rightarrow V(G)$ so that the single node maps to $v$
\EndFor

\Statex

\For{$i = 0, 1, \dots, t$} \Comment{each iteration exponentiate $s$ times and performs a pruning step} \label{line:i-loop}
    \State \textbf{Exponentiation step:}
    \For{$j = 1, 2, \dots, s$} \label{line:j-loop} \Comment{$s$ exponentiation steps}
        \State $T_v^{(i,j)} \gets T_v^{(i,j-1)}$ for each $v \in V(G)$
        \State $\map_v^{(i,j)} \gets \map_v^{(i,j-1)}$ for each $v \in V(G)$
        \If{$v$ is active}
            \State Let $x_1, x_2, \dots, x_{\eta}$ be the leaves of $T_v^{(i,j)}$ that are mapped to active vertices and have depth exactly $2^{i\cdot s+j-1}$
            \For {each $x_b$} \label{line:exponentiate}
                \State $u_b \gets \map_{v}^{(i,j-1)}(x_b)$
                \State Attach $T_{u_b}^{(i,j-1)}$ at the leaf $x_b$ of $T_{v}^{(i,j)}$ \label{line:attach-tree}
                \State Expand $\map_{v}^{(i,j)}$ so that it maps each newly attached node $x$ to $\map_{u_b}^{(i,j-1)}(x)$
            \EndFor
        \EndIf
    \EndFor
    \State \textbf{Pruning step:}
    \If {$i \neq t$} \Comment{perform the pruning except in the last iteration}
        \For{each $v\in V(G)$} 
            \State $T_{v}^{(i+1,0)}\gets \LocalPrune\rb{T_v^{(i,s)}, k}$ \label{line:invoke-prune}
            \State Define $\map_{v}^{(i+1,0)} : V\rb{T_{v}^{(i+1,0)}} \rightarrow V(G)$ to be $\map_{v}^{(i,s)}$ restricted to $T_{v}^{(i+1,0)}$
            \If{$|V\rb{T_{v}^{(i+1,0)}}| \geq B^{1/2^s}$}
                \State mark $v$ as inactive \label{line:mark-inactive-2}
            \EndIf
        \EndFor
    \EndIf
    
\EndFor
\State \Return $T_v^{(t,s)}$ and $\map_v^{(t,s)}$ for every $v\in V(G)$
\end{algorithmic}
\end{algorithm}

The input parameters $B, k, s, t$ to \Expo are application dependent, and are detailed as follows.
\begin{itemize}
    \item $B$ is the \textit{budget parameter} that controls the maximum size of pruned trees. 
    It affects memory consumption and the number of inactive vertices.
    \item $s$ and $t$ are the \emph{step parameters}. On a high level, \Expo performs $s$ graph exponentiation steps, and after calls \LocalPrune to reduce the tree size.
    This process is repeated $t + 1$ times with an exception that the last repetition does not call \LocalPrune.
    Hence, the total number of exponentiation steps is $s(t+1)$ and the number of \LocalPrune calls is $t$.
    \item $k > 0$ is the \emph{pruning parameter} passed to \LocalPrune. For technical reasons, we require that $B > k^{2^s}$. See \cref{rmk:local-space} for details.
\end{itemize}

\noindent 
The properties of the algorithm are summarized in the following theorem.

\begin{theorem}
\label{theorem:missing}
Let $G$ be a graph, and let $k > 0$, $s > 0$, $t \geq 0$, and $B > k^{2^s}$ be integers.
Let $T_v$ and $\map_v$ be the rooted tree and mapping obtained by invoking $\Expo$ on $G$ with parameters $k, s, t$, and $B$.
All vertices $v \in V(G)$ satisfy the following.
\begin{itemize}
    \item[] (P1) For every node $x\in V(T_v)$ with depth $< 2^{s(t+1)}$ that maps to an active vertex, it holds that
        \[|\Missing_{G, T_v,\map_v}(x)| \leq t\cdot k.\]
    \item[] (P2) Let $\ell_G$ be any partial layer assignment of out-degree $k$. If $v$ is inactive and $\ell_G(v) \neq \infty$, then $\LazyPathsIn(v) \geq B^{1/2^{s}}$, where $L = 2^{s(t+1)}$.
\end{itemize}
The algorithm runs in $O(s \cdot (t + 1))$ rounds in the sublinear MPC model using $O(B \cdot 2^{2^s}) = B^{1+o(1)}$ local space and $O(n\cdot B \cdot 2^{2^s} + m)$ global space.
\end{theorem}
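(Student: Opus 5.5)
We prove the three parts of \cref{theorem:missing} separately: (P1), (P2), and the resource bounds.

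\textbf{Property (P1).} We track $\Missing$ through the algorithm by induction over the iterations $i$ and exponentiation steps $j$. The invariant for $T_v^{(i,j)}$ is as follows: every node $x$ that maps to an active vertex, and whose depth is below the current "frontier" $2^{is+j}$, satisfies $\Missing(x)\le i\cdot k$.

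At initialization, the root of an active vertex has all $|N_G(v)|$ neighbors as children. So $\Missing(\text{root})=0$, and the only node below the frontier $2^0=1$ is the root.

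In an exponentiation step, consider a leaf $x_b$ at depth exactly $2^{is+j-1}$ that maps to an active $u_b$. It is replaced by the root of $T_{u_b}^{(i,j-1)}$, which inherits the children, and hence the $\Missing$ value, of that root. Every attached node at depth $d'$ inside $T_{u_b}$ lands at depth $2^{is+j-1}+d'$. Applying the invariant for $T_{u_b}$ therefore covers all new nodes of depth $<2^{is+j}$.

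Nodes of $T_v$ that were already interior keep their children, so their $\Missing$ is unchanged. A subtlety is leaves mapping to inactive vertices, which are never expanded. This is why (P1) is stated only for nodes mapping to active vertices. I would also verify that no active-mapped node sits as a leaf at shallower depth. This should follow because an active vertex's tree reaches full depth, apart from pruning, which only adds to $\Missing$ at the parent.

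At a pruning step, \cref{lemma:mpc-prunemissing} increases $\Missing$ by at most $k$ at each surviving node. This gives $(i+1)k$ for $T^{(i+1,0)}$. The frontier depth is unchanged, and pruning removes whole subtrees, so the depth condition is preserved.

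After the final iteration there are $t$ prunes, which gives $\Missing(x)\le tk$ for depth $<2^{s(t+1)}$. One more point needs care: a vertex $u$ marked inactive later still has its tree used by others within that iteration. The invariant should therefore be stated for trees of vertices active at the time they are used.

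\textbf{Property (P2).} Suppose $v$ became inactive. There are two cases.

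In the first case, $v$ was marked inactive at initialization, so $\deg(v)\ge B^{1/2^s}+k$. At most $k$ neighbors have layer $\ge\ell_G(v)$, since the out-degree is $k$ and $\ell_G(v)\ne\infty$. Hence at least $B^{1/2^s}$ neighbors have smaller layers, and each gives a strictly increasing path of length 2 ending at $v$. This yields $\NumPathsIn_{L}(v)\ge B^{1/2^s}$.

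In the second case, $v$ was marked inactive after the prune at iteration $i$, with $|T_v^{(i+1,0)}|\ge B^{1/2^s}$. The tree $T_v^{(i,s)}$ has height at most $2^{(i+1)s}+1$. By \cref{lemma:mpc-prunesize}, applied with $k\ge d=k$, we get $|T_\pruned|\le \NumPathsIn_{G,\ell_G,2^{(i+1)s}+1}(v)$. I expect a height off-by-one here: depth $2^{(i+1)s}$ corresponds to height $2^{(i+1)s}+1$. This would be reconciled with $L=2^{s(t+1)}$ using $i<t$, since then $2^{(i+1)s}+1\le 2^{s(t+1)}$. Monotonicity of $\NumPathsIn$ in the length parameter then finishes the case.

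\textbf{Complexity.} Each exponentiation step is one graph-exponentiation round: each active $v$ requests the trees of its active frontier leaves. I would bound tree sizes as follows. After a prune, an active vertex's tree has fewer than $B^{1/2^s}$ nodes; at initialization it has fewer than $B^{1/2^s}+k$ nodes. Each exponentiation step at most squares the size, up to constant factors. So after $s$ steps the size is at most $(B^{1/2^s}+k)^{2^s}\le (2B^{1/2^s})^{2^s}=2^{2^s}B$, using $k< B^{1/2^s}$ (which is why $B>k^{2^s}$ is required).

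\LocalPrune is a local computation. Each of the $s(t+1)$ steps costs $O(1)$ MPC rounds via standard sorting and aggregation, as in \cite{ghaffari2025density}. Global space is $n$ trees of size $O(2^{2^s}B)$, plus $m$ for the input.

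The main obstacle is the (P1) invariant bookkeeping, namely the depth thresholds and the interplay with inactive vertices. I would first write out the precise invariant, including an exact statement of which nodes are "complete" at each frontier, and then check that pruning and attachment preserve it.
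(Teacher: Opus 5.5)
Your proposal is correct and follows the paper's proof essentially step for step. For (P1) it uses the frontier invariant that doubles with each exponentiation step and whose $\Missing$ bound grows by $k$ per prune via \cref{lemma:mpc-prunemissing} (\cref{clm:missing-exponentiate,lemma:missing}); for (P2) it uses the two inactivation cases with \cref{lemma:mpc-prunesize}, where your check $2^{(i+1)s}+1\le 2^{s(t+1)}$ supplies height bookkeeping that \cref{lemma:expo-pathinbound} leaves implicit; and for space it uses the squaring bound of \cref{lemma:tree-size}.

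Restricting the invariant to trees of currently active vertices is in fact necessary, and is sharper than the paper's ``for all $v$''. An inactive vertex's tree stops growing, so it may keep leaves at depth $2^{(i+1)s}<2^{s(t+1)}$ that map to active vertices with $\Missing$ equal to their full degree, possibly far above $tk$. Hence (P1) genuinely holds only for active $v$, which is all the applications use.

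Your planned check that no active-mapped node is a shallow leaf is unnecessary, and as stated it is false. Pruning does create such leaves: a node with at most $k$ children is cut down to a leaf, and the increase is charged to that node itself rather than its parent. But \cref{lemma:mpc-prunemissing} already bounds their $\Missing$, so your invariant goes through without that check.
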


\begin{remark} \label{rmk:local-space}
In \cref{theorem:missing}, we assume that $B > k^{2^s}$. We remark that the $k^{2^s}$ term is at most $n^{\delta'}$ for some small constant $\delta' \in (0, 1)$ in all our settings. The $2^{2^s}$ term in the local space is $n^{o(1)}$ in all our settings.
\end{remark}

We now discuss the implications and meaning of the two properties in \cref{theorem:missing}.
It is instructive to think of $T_v$ as a pruned $2^{s(t+1)}$-hop neighborhood of $v$.
In this context, ``pruning'' means that some neighbors of a vertex are ignored.
Property~P1 is saying that at most $t \cdot k$ neighbors are ignored.
Later, we convert this into an approximation guarantee.
In the process of constructing $T_v$ for all vertices $v$, some of those trees might become ``too large'' to be efficiently handled by the algorithm: a vertex $v$ whose $T_v$ is too large is called inactive.
Property~P2 lower-bounds size of trees of inactive vertices, which we use to upper-bound the number of inactive vertices.

Implementation details are as follows.
The pruned trees $\{T_v\}_{v \in V(G)}$ are computed by combining the standard graph exponentiation with the pruning procedure (\LocalPrune).
We use the notation $T_v^{(i, j)}$ to denote the tree of $v$ obtained after $i \cdot s + j$ exponentiation steps and $i$ pruning steps.
Intuitively, $T_v^{(i, j)}$ represents the $2^{i\cdot s+j}$-hop pruned neighborhood of $v$.

Lines 1-9 initialize these trees.
For vertices with degree $< B^{1/2^s} + k$, the for-loop in line 2 initializes $T_v^{(0, 0)}$ to represent the 1-hop neighborhood of $v$.
All vertices with degree $\geq B^{1/2^s} + k$ are marked as inactive in line 7.
Once a vertex $v$ is marked as inactive, the algorithm stops expanding its tree.

Consider the for-loop in line 10.
Each iteration $i$ performs $s$ graph exponentiation steps on the trees of active vertices (see lines 11-20).
In addition, if $i \neq t$, a pruning step is performed on each tree (lines 21-27).
Fix an iteration $i$.
The for-loop in line 12 implements the graph exponentiation procedure.
More precisely, the goal of an iteration $j$ is to compute the $2^{i\cdot s + j}$-hop pruned neighborhoods (trees) $T_v^{(i, j)}$ from $2^{i\cdot s + j - 1}$-hop pruned neighborhoods.
See lines 12-20 for details of this computation.
The pruning step (lines 22-27) is done by invoking \LocalPrune on each tree.
If the tree of a vertex $v$ has more than $B^{1/2^s}$ vertices after pruning, line 27 marks it as inactive.

\subsection{Proof of \cref{theorem:missing}}

Consider an execution of \Expo$(G, k, s, t, B)$ for a fixed set of parameters $k, s, t,$ and $B$.
The proof of \cref{theorem:missing} consists of three parts.
\begin{enumerate}
    \item \textbf{The proof of (P1):} This part analyzes the number of missing neighbors of a tree node that maps to an active vertex. See \cref{sec:missing-bound} for details.
    \item \textbf{The proof of (P2):} This part lower bounds $\LazyPathsIn(v)$ for each inactive vertex $v$ with respect to a partial layer assignment $\ell_G$. See \cref{sec:pathin-bound}.
    \item \textbf{Space and round complexities:} This part is presented in \cref{sec:complexity-bound}.
\end{enumerate}

\noindent The proof is summarized in \cref{sec:missing-theorem-summary}.

\subsubsection{Missing bounds} \label{sec:missing-bound}

\begin{definition}[Missing Bound] \label{def:missing-bound}
Let $T$ be a rooted tree associated with a valid mapping $\map$.
We say $T$ has a \emph{missing bound} of $c$ for $d$ layers, if
\[
    \Missing_{G,T,\map}(x) \leq c
\]
for every $x \in V(T)$ which maps to an active vertex and whose depth is smaller than $d$.
\end{definition}

The following lemma shows a missing bound for each $T_v^{(i,j)}$.

\begin{lemma} \label{lemma:missing}
    The following holds for all $v \in V(G)$. For all $i \in [0, t]$ and $j \in [0, s]$, $T_v^{(i,j)}$ has a missing bound of $i \cdot k$ for $2^{i\cdot s + j}$ layers.
\end{lemma}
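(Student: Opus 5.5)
We argue by induction over the pairs $(i,j)$ in the order the algorithm produces them: $(0,0),(0,1),\dots,(0,s),(1,0),\dots,(t,s)$. At every step the claim concerns all vertices $v$ at once. This is needed because the exponentiation step attaches trees of other vertices $u_b$, and for those trees we must invoke the hypothesis from the previous step.

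\textbf{Base case $(0,0)$.} Take an active $v$. Its tree $T_v^{(0,0)}$ has a root whose children map bijectively onto $N_G(v)$, so the root has $\Missing = 0$. Every other node has depth $1$, which is not smaller than $2^0 = 1$, so there is nothing to check. Now take an inactive $v$. Its tree is a single node mapped to $v$. Since $v$ is inactive, the missing bound constrains only nodes mapping to active vertices, and it imposes no condition here. Hence the bound $0$ holds for $1$ layer.

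\textbf{Exponentiation step $(i,j-1)\to(i,j)$.} Suppose that for every $u$ the tree $T_u^{(i,j-1)}$ has missing bound $ik$ for $D := 2^{is+j-1}$ layers. If $v$ is inactive, its tree is unchanged. Does the bound then extend from $D$ to $2D$ layers? We expect that the inactive case needs a separate argument, and two routes are available.
\begin{itemize}
\item If $v$ was inactive from initialization, its tree is a single node mapped to the inactive vertex $v$, so the bound is vacuous.
\item If $v$ became inactive at line~\ref{line:mark-inactive-2}, this happened only at a pruning step. Alternatively, we may read the lemma as applying to whatever tree the algorithm actually stores at that point.
\end{itemize}
We plan to handle this by noting that a leaf mapped to an active vertex at depth $<2D$ would have to exist, which we will check against how trees are frozen. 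Indeed, this is the main obstacle: we must make precise that frozen trees do not contain shallow active leaves. Here we would use that once a vertex is inactive it remains inactive, together with the order of the updates, and we would treat this case carefully.

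Now let $v$ be active, and take a node $x$ of depth $<2D$ mapping to an active vertex.
\begin{itemize}
\item \textbf{Depth $<D$.} The children of $x$ are unchanged, except when $x$ is a leaf at depth exactly $D$, which is excluded. So the hypothesis gives $\Missing\le ik$.
\item \textbf{Depth exactly $D$.} Then $x$ was a leaf mapped to an active $u_b$, and $T_{u_b}^{(i,j-1)}$ was attached at $x$. So $x$ plays the role of the root of $T_{u_b}^{(i,j-1)}$, and its children are exactly the root's children there. The root has depth $0<D$, so the hypothesis for $u_b$ gives $\Missing\le ik$.
\item \textbf{Depth $D+d'$ with $0<d'<D$.} Then $x$ lies in an attached copy at depth $d'$ within $T_{u_b}^{(i,j-1)}$, and its children and map are copied. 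So again $\Missing\le ik$ by the hypothesis for $u_b$.
\end{itemize}
Leaves at depth $D$ that map to inactive vertices are not constrained. We also note that validity of the mapping is preserved, since the attached children keep distinct images.

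\textbf{Pruning step $(i,s)\to(i+1,0)$.} The depth range $2^{(i+1)s}$ is the same before and after. $\LocalPrune$ only removes subtrees, so depths of surviving nodes are unchanged. Applying \cref{lemma:mpc-prunemissing} with the same induced map gives $\Missing\le ik+k=(i+1)k$ for every surviving node at depth $<2^{(i+1)s}$ that maps to an active vertex. Whether $v$ becomes inactive afterwards does not change the tree. For $i=t$ no pruning occurs, and the induction ends at $(t,s)$. This completes the plan.
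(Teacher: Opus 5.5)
\textbf{The gap is the inactive-root case you leave open, and it cannot be closed as planned.} For active roots, your base case, exponentiation step and pruning step match the paper's argument: induction on $i$, the doubling claim \cref{clm:missing-exponentiate} on $j$, and the extra $k$ from \cref{lemma:mpc-prunemissing}. The open case is a root that is inactive during an exponentiation step. Your plan is to show that frozen trees contain no shallow leaves mapped to active vertices. That cannot succeed, because the statement is false for such roots.

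Suppose $v$ is marked inactive at line~\ref{line:mark-inactive-2} after iteration $i$. Its tree then stays $T_v^{(i+1,0)}$ for the rest of the run. This pruned tree generally keeps leaves at depth exactly $2^{(i+1)s}$ whose images stay active. Such a leaf has $\Missing$ equal to the full degree of its image, which can exceed $(i+1)k$, while its depth is below $2^{(i+1)s+1}$.

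A concrete instance: take $s=t=k=1$ and $B=16$, so the activity threshold is $B^{1/2}=4$. Let $G$ be the tree in which $v$ has three neighbors of degree $4$, and each of their three other neighbors has exactly one further, pendant neighbor.
\begin{itemize}
\item Then $|T_v^{(1,0)}|=9\ge 4$, so $v$ is frozen.
\item A surviving depth-$2$ leaf maps to a degree-$2$ vertex whose own pruned view has size $2$, so that vertex stays active.
\item Hence the final tree has $\Missing=2>tk$ at depth $2<2^{s(t+1)}=4$.
\end{itemize}
The paper's proof has the same blind spot. In \cref{clm:missing-exponentiate} it asserts that every node at depth in $[2^{is+j'-1},2^{is+j'})$ lies below an attached leaf $x_b$, and that silently assumes $v$ is active.

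\textbf{The fix is to restrict the statement to roots $v$ that are active during iteration $i$.} Measure the activity of node images at that same iteration. Your induction then closes with no frozen-tree analysis at all:
\begin{itemize}
\item The exponentiation step only invokes the hypothesis on trees $T_{u_b}^{(i,j-1)}$ with $u_b$ active in iteration $i$.
\item The active set only shrinks. So a root active in iteration $i+1$ was active in iteration $i$. Likewise, every node mapping to a vertex active in iteration $i+1$ was already constrained before pruning, and the pruning lemma adds only $k$.
\end{itemize}
This restricted form is all that is used later, because (P1) is only applied to trees of active roots:
\begin{itemize}
\item roots with small $\NumPathsIn$, which are active by \cref{lemma:expo-pathinbound};
\item critical vertices in \cref{lemma:kcore-monotone-node};
\item the active roots over which \cref{alg:mpc:fulllayer-DS} takes its maximum.
\end{itemize}
With that restriction, drop the inactive-root discussion and your argument is a complete proof.
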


To prove the lemma, we need the following claim showing that each exponentiation step doubles the layer guarantee in the missing bound.

\begin{claim} \label{clm:missing-exponentiate}
    Consider a fixed $i \in [0, t]$ and let $c \geq 0$ be a non-negative integer.
    Suppose that, for all vertices $v \in V(G)$, $T_v^{(i,0)}$ has a missing bound of $c$ for $2^{i\cdot s}$ layers.
    Then, for all vertices $v \in V(G)$ and $j \in [0,s]$, $T_v^{(i, j)}$ has a missing bound of $c$ for $2^{i\cdot s + j}$ layers.
\end{claim}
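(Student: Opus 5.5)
We prove the claim by induction on $j$. The base case $j=0$ is exactly the hypothesis. For the inductive step, fix $j \in [1,s]$ and assume that for every vertex $u \in V(G)$, the tree $T_u^{(i,j-1)}$ has a missing bound of $c$ for $2^{i\cdot s+j-1}$ layers. Fix $v$ and a node $x \in V(T_v^{(i,j)})$ that maps to an active vertex and has depth $< 2^{i\cdot s+j}$.

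First, if $v$ is inactive, then $T_v^{(i,j)} = T_v^{(i,j-1)}$. We need a small auxiliary observation. Activity can only be lost: once a vertex is marked inactive it stays inactive, and its tree stops growing. So an inactive $v$ became inactive either at initialization or at some earlier pruning step. In both cases its tree has no nodes deeper than the layer guarantee available at that time. We should verify that the missing bound still covers the nodes of such a tree. The cleanest route is to note that the condition in \cref{def:missing-bound} only constrains nodes that actually exist, and that nodes not yet expanded are exactly the leaves excluded by depth. If this bookkeeping gets delicate, we would carry a strengthened hypothesis instead: every node $x$ mapping to an active vertex with $\Missing(x) > c$ is a leaf at depth $\ge$ the current guarantee.

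The main case is $v$ active, and we split on the depth of $x$. If $x$ has depth $< 2^{i\cdot s+j-1}$, then $x$ already existed in $T_v^{(i,j-1)}$. Its children are unchanged by the exponentiation step, because attachments only happen at leaves of depth exactly $2^{i\cdot s+j-1}$. Hence $\Missing(x)$ is unchanged and the induction hypothesis gives the bound $c$.

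Otherwise $x$ has depth in $[2^{i\cdot s+j-1}, 2^{i\cdot s+j})$. Since $x$ maps to an active vertex, it lies in a subtree attached at some leaf $x_b$ of depth exactly $2^{i\cdot s+j-1}$, where $u_b = \map(x_b)$ is active. Here we need that every such leaf mapping to an active vertex does get expanded: the algorithm expands all leaves at that exact depth that map to active vertices. Inside the copy of $T_{u_b}^{(i,j-1)}$, the node $x$ corresponds to a node $x'$ of depth $\operatorname{depth}(x) - 2^{i\cdot s+j-1} < 2^{i\cdot s+j-1}$. The copy preserves both the children and the mapping, so $\Missing_{G,T_v^{(i,j)},\map_v^{(i,j)}}(x) = \Missing_{G,T_{u_b}^{(i,j-1)},\map_{u_b}^{(i,j-1)}}(x')$. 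The latter is at most $c$ by the induction hypothesis applied to $u_b$.

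Two remaining points need care. The first is the identification of $x_b$ with the root of $T_{u_b}^{(i,j-1)}$: the attached root merges with $x_b$, and the root's own $\Missing$ is bounded via depth $0 < 2^{i\cdot s+j-1}$. The second is that activity is read consistently, since the algorithm uses the active status at the time of step $j$. We expect the main obstacle to be this consistency issue around vertices that were marked inactive at an earlier pruning step, whose trees appear as unexpanded leaves. The key point is that they are excluded precisely because \cref{def:missing-bound} only concerns nodes mapping to active vertices.
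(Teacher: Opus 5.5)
Your argument for an \emph{active} root $v$ is the paper's proof. Nodes of depth $<2^{is+j-1}$ keep their child sets, and every node of depth in $[2^{is+j-1},2^{is+j})$ lies in a verbatim copy of some $T_{u_b}^{(i,j-1)}$ with $u_b$ active, so the hypothesis for $u_b$ transfers. Your remark identifying $x_b$ with the copied root is a detail the paper leaves implicit. Both versions also tacitly use that $T_v^{(i,j-1)}$ has depth at most $2^{is+j-1}$; this is an easy induction, since pruning never increases depth.

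Your paragraph on inactive roots is wrong, however, and neither fallback can be repaired.

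\textbf{What fails.} Suppose $v$ was marked inactive at the pruning step that produced $T_v^{(i,0)}=\textsc{Local-Prune}(T_v^{(i-1,s)},k)$, with $i\ge 1$. This tree generally keeps leaves $y$ at depth exactly $2^{is}$ that map to active vertices, and such a leaf has $\Missing(y)=\deg_G(\map_v(y))$. Because $v$ is inactive, $T_v^{(i,j)}=T_v^{(i,0)}$ for every $j$. For $j\ge 1$ the depth $2^{is}$ is strictly below $2^{is+j}$, so these leaves are \emph{not} ``excluded by depth''. They also violate your strengthened invariant. The trouble is not nodes deeper than the old guarantee but nodes sitting exactly at it. Your closing remark only handles unexpanded leaves mapped to \emph{inactive} vertices inside an active root's tree. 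Here the leaves map to active vertices and stay unexpanded only because the root is inactive.

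\textbf{A counterexample.} Read literally for all $v$, the claim is false. Take $s=i=1$, $t\ge 1$, $c=k$ and $B$ large. Let $v$ have degree $B^{1/2}$ and every other vertex have degree $k+2$. Then $|T_v^{(1,0)}|=1+3(B^{1/2}-k)\ge B^{1/2}$, so $v$ is marked inactive, while every other vertex keeps a pruned tree of size $7$ and stays active. The hypothesis at $(1,0)$ holds with $c=k$. Yet $T_v^{(1,1)}=T_v^{(1,0)}$ keeps depth-$2$ leaves mapped to active vertices with $\Missing=k+2>k$, even though $2<2^{2}$.

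\textbf{The fix.} State and prove the claim only for roots that are active during iteration $i$. This is what the paper's proof implicitly does: its step ``by construction, $x$ lies below some $x_b$'' presupposes that $v$ was expanded. The restricted version suffices downstream, for two reasons.
\begin{itemize}
\item The inductive step only invokes the hypothesis for the attached trees $T_{u_b}^{(i,j-1)}$, whose roots $u_b$ are active.
\item In \cref{lemma:missing}, a root active in iteration $i$ was already active in iteration $i-1$. A missing bound over nodes mapped to the earlier, larger active set implies the bound for the later set.
\end{itemize}
Roots marked inactive at initialization are single nodes mapped to an inactive vertex, so they satisfy the bound vacuously. The remaining inactive roots should simply be excluded rather than argued for.
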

\begin{proof}
    Consider a fixed $i$. We prove this claim by induction on $j$.
    \begin{itemize}
        \item \textbf{Basis: $j = 0$.}
        The basis trivially holds because the claim assumes that $T_v^{(i,0)}$ has a missing bound of $c$ for $2^{i \cdot s} = 2^{i \cdot s + j}$ layers.
        \item \textbf{Inductive case:}
        Suppose that the claim holds for $j = j' - 1$, where $j' \geq 1$.
        To ease the notation, we denote $T_v^{(i,j')}$ by $T$ and $\map_v^{(i,j')}$ by $\map$.
        \cref{alg:mpc:expo} computes $T$ as follows.
        (See the for-loop in line~\ref*{line:exponentiate}.)
        First, $T$ is initialized as a copy of $T^{(i,j'-1)}$. Next, we obtain the leaves $x_1, x_2, \dots, x_\eta$ of $T$ that are mapped to active vertices and have distance exactly $2^{i\cdot s + j' - 1}$.
        Then, we compute $u_b = \map_v^{(i,j'-1)}(x_b)$ and attach the tree $T_{u_b}^{(i,j'-1)}$ on $x_b$.
        The mapping $\map$ is obtained by combining $\map_v^{(i,j'-1)}$ and $\{\map_{u_b}^{(i,j'-1)}\}_{b\in [\eta]}$.

        Since $T$ is initially a copy of $T^{(i,j'-1)}$, each node $x \in V(T)$ with depth $< 2^{i \cdot s + j'-1}$ corresponds to a node in $T^{(i,j'-1)}$.
        As the child sets of $x$ and $x'$ are the same, they have the same number of missing neighbors.
        Since the claim holds for $j'-1$, $T_v^{(i,j'-1)}$ has a missing bound of $c$ for $2^{i \cdot s + j' - 1}$ layers.
        Therefore, we have
        \begin{itemize}
            \item (R1) $\Missing_{G,T,\map}(x) \leq c$ for every $x \in V(T)$ whose depth is smaller than $2^{i\cdot s + j'-1}$ that maps to an active vertex.
        \end{itemize}

        Consider now a node $x \in V(T)$ whose depth is within $[2^{i\cdot s + j'-1}, 2^{i\cdot s + j'})$.
        By construction, $x$ is contained in the subtree of some node $x_b$, $b \in [\eta]$.
        Note that this subtree is a copy of $T_{u_b}^{(i,j'-1)}$.
        Since $T_{u_b}^{(i,j'-1)}$ has a missing bound of $c$ for $2^{i \cdot s + j' - 1}$ layers, we have $\Missing_{G,T,\map}(x) \leq c$ if $x$ maps to an active vertex.
        Applying this argument on all $x$ whose depth from the root is within $[2^{i\cdot s + j'-1}, 2^{i\cdot s + j'})$, we know that

        \begin{itemize}
            \item (R2) $\Missing_{G,T,\map}(x) \leq c$ for every $x \in V(T)$ whose depth is in $[2^{i\cdot s + j'-1}, 2^{i\cdot s + j'})$ that maps to an active vertex.
        \end{itemize}
        
        By (R1) and (R2), $T = T_v^{(i,j)}$ has a missing bound of $c$ for $2^{i\cdot s + j'}$ layers.
        By induction, the claim holds for all $j$.
    \end{itemize}
\end{proof}

\noindent We now prove \cref{lemma:missing} using \cref{clm:missing-exponentiate}.

\begin{proof}[Proof of \cref{lemma:missing}]
    We prove the lemma by induction on $i$.
    \begin{itemize}
        \item \textbf{Basis: $i = 0$.} Consider a vertex $v$. We need to show that $T_v^{(0,0)}$ has a missing bound of $i \cdot k = 0$ for $2^{i \cdot s} = 1$ layers. Since $T_v^{(0,0)}$ is a rooted tree of height 2, the only vertex with depth $< 1$ is the root itself. If the root is mapped to an active vertex, then lines~\ref*{line:init-active-tree} and \ref*{line:init-active-map} of \cref{alg:mpc:expo} would add to $T^{(0,0)}_v$ all neighbors of $v$.
        Hence, $T_v^{(0,0)}$ has a missing bound of $0$ for $2^0$ layers.
        By applying \cref{clm:missing-exponentiate} with $i = 0$, we know that $T_v^{(0,j)}$ has a missing bound of $0$ for $2^j$ layers.
        This completes the proof for $i = 0$.
        
        \item \textbf{Inductive case:}
        Assume that the lemma holds for $i = i'-1$ for some $i' \geq 1$.
        Since $i' \leq t+1$, the tree $T_v^{(i',0)}$ is computed in line~\ref*{line:invoke-prune} as \LocalPrune$(T_v^{(i'-1,s)}, k)$.
        By the induction hypothesis, we know that $T_v^{(i' - 1, s)}$ has a missing bound of $(i'-1) \cdot k$ for $2^{(i'-1) \cdot s + s} = 2^{i' \cdot s}$ layers.
        By \cref{lemma:mpc-prunemissing}, the invocation of \LocalPrune only increases the missing set size of every node in $T_v^{(i'-1,s)}$ by $k$.
        Hence, $T_v^{(i',0)}$ has a missing bound of $i' \cdot k$ for $2^{i' \cdot s}$ layers.
        Applying \cref{clm:missing-exponentiate} with $i = i'$, we know that $T_v^{(i',j)}$ has a missing bound of $i' \cdot k$ for $2^{i' \cdot s + j}$ layers.
        Hence, the lemma also holds for $i = i'$.
    \end{itemize}

    By induction, the lemma holds.
\end{proof}

\subsubsection{Bounds on $\NumPathsIn$} \label{sec:pathin-bound}

\begin{lemma}
\label{lemma:expo-pathinbound}
    Let $L = 2^{s(t+1)}$.
    Consider any partial layer assignment $\ell_G$ of out-degree at most $k$. At the end of \Expo, every inactive vertex with $\ell_G(v) \neq \infty$ has $\LazyPathsIn(v) \geq B^{1/2^s}$.
\end{lemma}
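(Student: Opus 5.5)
There are two ways a vertex $v$ becomes inactive in \Expo, and the plan is to handle each separately. In both cases the target is the same: exhibit a pruned tree rooted at (a node mapped to) $v$ of height at most $L = 2^{s(t+1)}$ with at least $B^{1/2^s}$ nodes, and then invoke \cref{lemma:mpc-prunesize}. Throughout we use that $\NumPathsIn_{G,\ell_G,h}(v)$ is monotone non-decreasing in $h$.

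\textbf{Case 1: $v$ marked at initialization (line~\ref*{line:init-inactive}).} Here $\deg_G(v) \geq B^{1/2^s} + k$. Consider the depth-1 star with root mapped to $v$ and one leaf child per neighbor. This map is valid, and the star has height $2 \leq L$. Since $\ell_G(v) \neq \infty$ and $\ell_G$ has out-degree at most $k$, fewer than $k$ neighbors $u$ satisfy $\ell_G(u) \geq \ell_G(v)$. Every other neighbor gives a strictly increasing path $(u,v)$ of length $2$. Adding the trivial path gives $\NumPathsIn_{G,\ell_G,2}(v) \geq 1 + \deg_G(v) - k \geq B^{1/2^s}$. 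Monotonicity in the length parameter then yields the bound for length $L$. This step is a direct count and does not even need \cref{lemma:mpc-prunesize}.

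\textbf{Case 2: $v$ marked in line~\ref*{line:mark-inactive-2}.} Here $v$ is marked after some pruning step $i < t$, with $|V(T_v^{(i+1,0)})| \geq B^{1/2^s}$, where $T_v^{(i+1,0)} = \LocalPrune(T_v^{(i,s)}, k)$. The key step is to bound the height of $T_v^{(i,s)}$ and check the hypotheses of \cref{lemma:mpc-prunesize}.

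For the height, we argue by induction on $(i,j)$ that every $T_u^{(i,j)}$ has depth at most $2^{is+j}$, hence height at most $2^{is+j}+1$. Pruning never increases depth. The induction step is that an exponentiation attaches trees of depth at most $2^{is+j-1}$ only at leaves of depth exactly $2^{is+j-1}$.

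The off-by-one between depth and height is the first delicate point. If the height is $2^{(i+1)s}+1$, we need $2^{(i+1)s}+1 \leq 2^{s(t+1)}$, which holds since $i+1 \leq t$ and $s \geq 1$. Together with $\ell_G(v) < \infty$, the validity of $\map_v^{(i,s)}$ (preserved by attaching trees at leaves with matching map values), and $k \geq k$, \cref{lemma:mpc-prunesize} applies. It gives $B^{1/2^s} \leq |V(T_v^{(i+1,0)})| \leq \NumPathsIn_{G,\ell_G,2^{(i+1)s}+1}(v) \leq \LazyPathsIn(v)$.

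\textbf{Main obstacle.} The step I expect to need care is the validity of the mapping after attachments. Condition 2 requires distinct children to have distinct images. A leaf $x_b$ has no children before attachment, so after attachment the children of $x_b$ are exactly the root's children in $T_{u_b}^{(i,j-1)}$. Those are distinct by induction, and $\map(x_b) = u_b$ matches that root. Condition 1 is inherited edge by edge in the same way.

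A second point to check is that a tree frozen at inactivation doesn't matter. Once the lower bound is established at the time of marking, it concerns only $\ell_G$ and $G$. So it persists to the end of \Expo regardless of later iterations.
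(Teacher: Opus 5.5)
Your proposal is correct and uses the same two-case argument as the paper: a direct count of lower-layer neighbors for vertices marked at initialization, and \cref{lemma:mpc-prunesize} for vertices marked after a pruning step. Your explicit checks that $T_v^{(i,s)}$ has height at most $2^{(i+1)s}+1 \le 2^{s(t+1)}$ (using $i<t$), that the mappings stay valid, and that $\NumPathsIn$ is monotone in the length parameter fill in details the paper's proof leaves implicit.
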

\begin{proof}
    During the execution, \cref{alg:mpc:expo} only marks a vertex in the following two lines.
    \begin{itemize}
        \item During initialization, a vertex $v$ is marked as inactive in line~\ref*{line:mark-inactive-1} when it has at least $B^{1/2^s} + k$ neighbors. 
        Since $\ell_G$ has out-degree $\leq k$, we know that at least $B^{1/2^s}$ of these neighbors $u$ satisfy $\ell_G(u) < \ell_G(v)$. This directly implies that $\LazyPathsIn(v) \geq B^{1/2s}$.
        
        \item Consider an iteration $i$ of the for-loop in line~\ref*{line:i-loop}. After invoking \LocalPrune, line~\ref*{line:mark-inactive-2} marks a vertex $v$ as inactive if the size of its tree, $|T_v^{(i+1,0)}|$, is still at least $B^{1/2^s}$.
        Since \LocalPrune is invoked with parameter $k$, the out-degree of $\ell_G$ is at most $k$, and $\ell_G(v) \neq \infty$, \cref{lemma:mpc-prunesize} ensures that the tree returned by \LocalPrune has size at most $\LazyPathsIn(v)$.
        Hence, only vertices $v$ with $\LazyPathsIn(v) \geq B^{1/2^s}$ are marked as inactive in this line.
    \end{itemize}
    \noindent Since in both cases a vertex $v$ is only marked as inactive if $\LazyPathsIn(v) \geq B^{1/2^s}$, the lemma holds.
\end{proof}

\subsubsection{Space and round complexities} \label{sec:complexity-bound}

\begin{lemma} \label{lemma:tree-size}
    Consider an iteration $(i, j)$ of the for-loop in line~\ref{line:j-loop}.
    For every active vertex $v$, we have $|T_v^{(i,j)}| \leq 2^{2^j} \cdot B^{2^j/2^s}$ for all $v \in V(G), i \in [0, t],$ and $j \in [0,s]$.
\end{lemma}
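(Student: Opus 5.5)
The plan is a double induction: an outer induction on $i$ and an inner induction on $j$. The quantity I will carry along is $S_j \defeq 2^{2^j}\cdot B^{2^j/2^s}$, and the first thing to check is that it satisfies $S_j = S_{j-1}^2$. Indeed, $S_{j-1}^2 = 2^{2\cdot 2^{j-1}}\cdot B^{2\cdot 2^{j-1}/2^s} = S_j$. So the target bound exactly squares at each exponentiation step. The whole argument then reduces to two facts: (a) at $j=0$ every active tree has size at most $S_0 = 2B^{1/2^s}$; and (b) one exponentiation step at most squares the size of an active tree.

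For the base case $j=0$, I split on $i$. If $i=0$, then an active $v$ has $|N_G(v)| < B^{1/2^s}+k$. By \cref{line:init-active-tree}, the tree $T_v^{(0,0)}$ has $1+|N_G(v)| \le B^{1/2^s}+k$ nodes. The assumption $B>k^{2^s}$ gives $k < B^{1/2^s}$, so this is at most $2B^{1/2^s} = S_0$. If $i\ge 1$, then $T_v^{(i,0)}$ is the output of \LocalPrune in \cref{line:invoke-prune}. If its size were at least $B^{1/2^s}$, then $v$ would have been marked inactive in \cref{line:mark-inactive-2}. Since $v$ is still active, $|T_v^{(i,0)}| < B^{1/2^s}\le S_0$. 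I also need to note that activity is monotone: once a vertex is inactive it never becomes active again. Hence "active at $(i,j)$" implies it was active at every earlier stage, and this justifies using the inductive hypothesis for the trees that get attached.

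For the inductive step $j-1\to j$, fix an active $v$. The tree $T_v^{(i,j)}$ is built from $T_v^{(i,j-1)}$ by attaching, at each leaf $x_b$ of depth exactly $2^{is+j-1}$ that maps to an active vertex $u_b$, a copy of $T_{u_b}^{(i,j-1)}$. Attaching at a leaf identifies the leaf with the root of the attached tree, so each attachment adds at most $|T_{u_b}^{(i,j-1)}|-1$ new nodes. The number of such leaves is at most $|T_v^{(i,j-1)}|$. Each $u_b$ is active, so the inductive hypothesis applies to it as well as to $v$. This gives
\[
|T_v^{(i,j)}| \le |T_v^{(i,j-1)}| + |T_v^{(i,j-1)}|\cdot\bigl(S_{j-1}-1\bigr) \le S_{j-1}\cdot S_{j-1} = S_j .
\]

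The main subtlety is the timing of activity. Vertices $u_b$ may have been marked inactive during an earlier pruning step. The algorithm only attaches trees of leaves mapped to active vertices, so only trees of currently active vertices are attached, and the inductive hypothesis (stated for active vertices at stage $(i,j-1)$) is sufficient. Inactive vertices have frozen trees and are not claimed to satisfy anything here. The other point to double-check is the base case constant, i.e.\ that $k<B^{1/2^s}$ follows from $B>k^{2^s}$. This is exactly why the factor $2^{2^j}$ appears rather than $1$.
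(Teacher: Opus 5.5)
Your proof is essentially the paper's: fix $i$, induct on $j$, get the base case from the degree threshold when $i=0$ and from the inactivity test after pruning when $i\geq 1$, then observe that one exponentiation step at most squares the size. Your count, which identifies each attached root with its leaf, is slightly more careful than the paper's. One caveat, which the paper's own basis shares: from $|N_G(v)|<B^{1/2^s}+k$ you only get $1+|N_G(v)|\le B^{1/2^s}+k$ when $B^{1/2^s}$ is an integer. For example, $s=1$, $B=11$, $k=3$ and $\deg_G(v)=6$ give a $7$-node tree $T_v^{(0,0)}$, while $2\sqrt{11}<7$. So strictly the factor $2^{2^j}$ should be relaxed to, say, $3^{2^j}$ (using $1+|N_G(v)|<2B^{1/2^s}+1\le 3B^{1/2^s}$), which changes nothing in the $B^{1+o(1)}$ memory bounds.
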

\begin{proof}
    Consider a fixed $i$.
    We prove this lemma by induction on $j$.
    \begin{itemize}
        \item \textbf{Basis: $j = 0$.}
        Consider a vertex $v$.
        If $i = 0$, $T^{(i,0)}_v$ is active only if $v$ has $< B^{1/2^s} + k \leq 2B^{1/2^s}$ neighbors, where the last inequality is due to $B > k^{2^s}$.
        If $i > 0$, $T^{(i,0)}_v$ is active only if its size is at most $B^{1/2^s}$. (See line~\ref*{line:mark-inactive-2}.)
        Hence, $T_v^{(i,0)} \leq 2B^{1/2^s} = 2^{2^j} \cdot B^{2^j/2^s}$ holds.
        
        \item \textbf{Inductive case:} 
        Suppose that the lemma holds for $j = j' - 1$ for some $j' \geq 1$.
        In iteration $(i, j')$, an active vertex $v$ computes its tree $T_v^{(i,j')}$ by attaching at most $|T_v^{(i,j'-1)}|$ trees to $T_v^{(i,j'-1)}$, where each attached tree is $T_u^{(i,j'-1)}$ for some $u \in V(G)$. Let $B' = 2^{2^{j'-1}} \cdot B^{2^{j'-1}/2^s}$. By the induction hypothesis, at most $|T_v^{(i,j'-1)}| \leq B'$ trees are attached, and each attached tree is of size $\leq B'$. Consequently, the resulting tree has size at most $(B')^2 \leq 2^{2^{j'}} \cdot B^{2^{j'}/2^s}$. Hence, the lemma also holds for $j = j'$.
    \end{itemize}
    By induction, the lemma holds.
\end{proof}

\cref{lemma:tree-size} implies that the tree size of an active vertex is always upper bounded by $2^{2^s} \cdot B$.
In our algorithm, a vertex stops growing its tree once it is marked inactive. Hence, we have the following.

\begin{corollary} \label{corollary:tree-size}
    For all $v \in V(G), i \in [0, t]$, and $j \in [0, s]$, $|T_v^{(i,j)}| \leq 2^{2^s} \cdot B$.
\end{corollary}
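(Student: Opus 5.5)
The plan is a case split on whether $v$ is active at the moment $T_v^{(i,j)}$ is defined, with \cref{lemma:tree-size} handling the active case and a short induction over the execution handling the inactive case.

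\emph{Active case.} If $v$ is still active when iteration $(i,j)$ executes, \cref{lemma:tree-size} gives $|T_v^{(i,j)}| \leq 2^{2^j} \cdot B^{2^j/2^s}$. Since $j \leq s$, we have $2^{2^j} \leq 2^{2^s}$ and $B^{2^j/2^s} \leq B$. The latter uses $B \geq 1$, which holds because $B > k^{2^s} \geq 1$. So the bound $2^{2^s}\cdot B$ follows immediately.

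\emph{Inactive case.} Here I would first observe from \cref{alg:mpc:expo} that a vertex marked inactive never has its tree expanded again. The exponentiation body is guarded by ``if $v$ is active'', so $T_v^{(i,j)}$ is just a copy of $T_v^{(i,j-1)}$. The only remaining operation on its tree is \LocalPrune in line~\ref{line:invoke-prune}, which only deletes subtrees and therefore cannot increase the size. It follows that $|T_v^{(i,j)}|$ is at most the size of $v$'s tree at the moment it was marked inactive. It then suffices to bound the tree at that moment, and there are two ways $v$ can have been marked.
\begin{itemize}
\item If $v$ was marked at initialization (line~\ref{line:mark-inactive-1}), its tree is a single node, of size $1 \leq 2^{2^s} B$.
\item If $v$ was marked at line~\ref{line:mark-inactive-2} in iteration $i'$, its tree at that point is $T_v^{(i'+1,0)} = \LocalPrune(T_v^{(i',s)},k)$. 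The vertex $v$ was active throughout iteration $i'$, so \cref{lemma:tree-size} applies to $T_v^{(i',s)}$, and pruning only shrinks it. Hence $|T_v^{(i'+1,0)}| \leq |T_v^{(i',s)}| \leq 2^{2^s}\cdot B$.
\end{itemize}

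\emph{Formalization.} I would write this as an induction over the pairs $(i,j)$ in execution order, carrying the invariant that every tree has size at most $2^{2^s}B$. In an exponentiation step, active trees are covered by \cref{lemma:tree-size} and inactive trees are unchanged. In a pruning step, sizes do not increase.

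\emph{Main obstacle.} The main subtlety is making sure \cref{lemma:tree-size} is applicable at the exact moment $v$ becomes inactive. A vertex marked inactive after pruning in iteration $i'$ was active when $T_v^{(i',s)}$ was built, so the lemma does cover that tree, and the pruned tree is no larger. Beyond this check, the argument is bookkeeping.
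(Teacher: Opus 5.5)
Your proof is correct and follows the same approach as the paper. The paper's justification is just two lines: \cref{lemma:tree-size} bounds the trees of active vertices, and a vertex stops growing its tree once it is marked inactive. Your inactive-case bookkeeping makes explicit what the paper leaves implicit: the tree at the moment of deactivation is either a single node or \textsc{Local-Prune} applied to an active tree $T_v^{(i',s)}$, and pruning never increases size.
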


\begin{lemma} \label{lemma:implementation}
\Expo can be implemented in $O(s(t+1))$ MPC rounds with $O(2^{2^s} \cdot B)$ words of local memory and $O(2^{2^s} \cdot n \cdot B + m)$ words of global memory, where $n$ and $m$ are upper bounds on the number of vertices and edges, respectively.
\end{lemma}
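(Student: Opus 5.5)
The plan is to implement each line of \Expo with $O(1)$ sorting and aggregation primitives of sublinear MPC, and to use \cref{corollary:tree-size} to certify that every object we move fits on one machine. We represent each tree $T_v^{(i,j)}$, together with its mapping $\map_v^{(i,j)}$, as a list of $|T_v^{(i,j)}|$ records. Each record stores a node identifier, its parent identifier, its depth, and its image in $V(G)$. By \cref{corollary:tree-size}, each tree has at most $2^{2^s}\cdot B$ nodes, so an entire tree can be stored on a single machine of local memory $O(2^{2^s}B)$. Summing over all $n$ vertices gives the global bound $O(2^{2^s}nB)$, plus $O(m)$ for storing $G$.

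\textbf{Initialization.} Computing degrees takes $O(1)$ rounds by sorting edges by endpoint and prefix-summing. A vertex $v$ with $|N_G(v)| < B^{1/2^s}+k \le 2B^{1/2^s}$ has a neighbor list that fits on one machine, so $T_v^{(0,0)}$ can be written down directly. High-degree vertices are marked inactive and receive single-node trees.

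\textbf{Exponentiation step $(i,j)$.} The machine holding $T_v^{(i,j-1)}$ lists the leaves $x_b$ at depth $2^{is+j-1}$ that map to active vertices, and issues one request $(v,x_b,u_b)$ per such leaf. The attached tree must be $T_{u_b}^{(i,j-1)}$, the snapshot taken before this step, so all requests in the step read only from the previous iteration. There are two concerns.

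First, $u_b$ may be requested by many trees, so its tree has to be replicated. The total number of copies requested is at most $\sum_v |T_v^{(i,j-1)}| \le n\cdot 2^{2^s}B$, and each copy has size at most $2^{2^{j-1}}B^{2^{j-1}/2^s}$. Their product is at most the final size of $T_v^{(i,j)}$, which is bounded by \cref{lemma:tree-size}. So the total volume of copies is $O(n\, 2^{2^s} B)$. We replicate with the standard $O(1/\delta)=O(1)$-round broadcast tree: sort the requests by $u_b$, compute the number of copies each tree needs, and duplicate along a tree of fan-out $n^{\delta}$.

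Second, each copy is routed to the machine of $v$ and relabelled with fresh node identifiers, for example the pair $(x_b,\text{old id})$. Depths are shifted by $2^{is+j-1}$ and images are left unchanged. This is $O(1)$ rounds per step. Inactive vertices skip the step.

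\textbf{Pruning step.} \LocalPrune is purely local: it is a bottom-up recursion on a tree that already resides on one machine. The tree size check at line~\ref{line:mark-inactive-2} is also local, so the pruning step costs $O(1)$ rounds (zero communication).

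\textbf{Totals and main obstacle.} Summing over the $s(t+1)$ exponentiation steps and $t$ pruning steps gives $O(s(t+1))$ rounds. The main obstacle is the replication step. A tree $T_u$ may be requested far more often than one machine can serve, so we rely on the global volume bound derived above together with the $O(1)$-round duplication primitive. We also need the intermediate tree sizes to stay within local memory at every $j$, which \cref{lemma:tree-size} gives us.
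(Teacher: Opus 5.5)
Your proposal is correct and follows the paper's route. The paper's proof uses \cref{corollary:tree-size} to fit every tree on one machine, says each attachment step costs $O(1)$ rounds, and defers the rest to standard MPC primitives; you spell those primitives out, with two small details to tighten. The volume bound should be argued per requesting vertex $v$ (at most $|T_v^{(i,j-1)}|$ copies, each of size at most $2^{2^{j-1}}B^{2^{j-1}/2^s}$, hence at most $2^{2^s}B$ per $v$ and $O(n\,2^{2^s}B)$ in total), and the broadcast fan-out should be the local memory divided by the copy size rather than $n^{\delta}$; this is still at least the square root of the local memory, since copied trees have size at most $\sqrt{2^{2^s}B}$, so neither change affects the conclusion.
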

\begin{proof}
\cref{corollary:tree-size} ensures that the tree of each vertex fits into the local memory, and therefore the exponentiation step (line~\ref*{line:attach-tree}) can be performed in $O(1)$ rounds of communication.
The rest of the algorithm can be implemented in a straightforward way using known MPC primitives developed in previous work.
See, for example, \cite[Section E]{andoni2018parallel} and the references therein.
\end{proof}

\subsubsection{Summary} \label{sec:missing-theorem-summary}

\begin{proof}[Proof of \cref{theorem:missing}]
    Consider an execution of \Expo$(G, k, s, t, B)$ for a given set of parameters $k, s, t$ and $B$.
    By \cref{lemma:missing}, for each vertex $v$, the returned tree $T_v^{(t,s)}$ has a missing bound (\cref{def:missing-bound}) of $t \cdot k$ for $2^{s(t+1)}$ layers, which proves (P1).
    Property (P2) is ensured by \cref{lemma:expo-pathinbound}.
    The round and memory complexities are shown in \cref{lemma:implementation}.
    Hence, the theorem holds.
\end{proof}

\section{Edge orientation} \label{sec:orientation}
We demonstrate how to use our new exponentiate-and-prune algorithm to compute a $(2+\eps)(t+1)$-approximate edge orientation for any integer parameter $t \geq 0$.

\subsection{Base algorithm}
The main idea of computing an approximate edge orientation is to simulate the peeling algorithm from \cite{barenboim2008sublogarithmic}.
We present it in \cref{alg:peeling-orientation}.

\begin{algorithm}
\caption{Finds $(2+\eps)$-approximate edge orientation}
\label{alg:peeling-orientation}
\textbf{Input:} graph $G$, parameters $\eps$, $k$
\begin{algorithmic}[1]
\For{$i = 1$ to $\lceil\log_{1+\eps} n\rceil$}

    \State $A\gets$ all vertices $v\in V(G)$ with $d_{G}(v) \leq (2+2\eps)k$
    \For{each $v\in A$}
        $\ell(v)\gets i$
    \EndFor
    \State Remove vertices in $A$ from $G$
\EndFor
\State \Return $\ell$
\end{algorithmic}
\end{algorithm}

The algorithm iteratively removes all vertices with degree less than $(2+2\eps)k$. If $k\geq \alpha(G)$, it is known that all vertices will be removed within $\log_{1+\eps} n$ iterations.
Additionally, it computes a full layer assignment $\ell$ with out-degree at most $(2+2\eps)k$.
By doing so, the layer assignment gives us an edge orientation with maximum out-degree $(2+2\eps)k$: we direct every edge from the vertex with a smaller layer to the vertex with a larger layer (if the layers are the same, it can be arbitrarily directed).
This bounds the out-degree of any vertex by $(2+2\eps)k$.
We present these properties in the following lemmas.

\begin{lemma}[Folklore, see, e.g., \cite{ghaffari2025density}]
\label{lemma:orientation:peeling}
Let $G$ be a graph, $\eps\in (0, 1)$, and $k\geq \alpha(G)$. Then, \cref{alg:peeling-orientation} removes all the vertices in $G$.
\end{lemma}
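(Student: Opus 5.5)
The plan is a counting argument: each iteration of \cref{alg:peeling-orientation} removes at least an $\eps/(1+\eps)$ fraction of the surviving vertices, so $\lceil\log_{1+\eps} n\rceil$ iterations exhaust the graph. The key steps, in order, are as follows.

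\textbf{Step 1: edge bound on induced subgraphs.} Fix an iteration and let $H$ be the graph induced on the vertices surviving at its start, with $n_H = |V(H)|$. Since $H$ is a subgraph of $G$, restricting an optimal orientation of $G$ to $H$ gives an orientation of $H$ with out-degree at most $\alpha(G)$. Each edge is counted once by the out-degree of its tail, so
\[
|E(H)| \le \alpha(G)\, n_H \le k\, n_H ,
\]
and the average degree of $H$ is at most $2k$.

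\textbf{Step 2: few high-degree vertices.} Let $X$ be the set of vertices of $H$ with degree exceeding $(2+2\eps)k$, i.e., the ones not placed in $A$. Then
\[
|X|\,(2+2\eps)k < \sum_{v\in V(H)} \deg_H(v) = 2|E(H)| \le 2k\, n_H ,
\]
which gives $|X| < n_H/(1+\eps)$. Hence the surviving vertex count shrinks by a factor of at least $1+\eps$ per iteration.

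\textbf{Step 3: iterate.} After $i$ iterations at most $n/(1+\eps)^i$ vertices remain. Once $i = \lceil \log_{1+\eps} n\rceil$, this bound is at most $1$. Because the inequality in Step 2 is strict, $|X| < 1$ whenever $n_H \le 1+\eps$, so the vertex count actually reaches $0$.

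The only delicate point is this endgame. I would handle it with the strict inequality as above: if $n_H=1$, the single vertex has degree $0 \le (2+2\eps)k$ and is removed. I would also note that $k\ge 1$ can be assumed; otherwise the graph has no edges and everything is removed in the first iteration. Both the endgame and this edge case are routine.
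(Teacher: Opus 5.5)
Your proof is correct and is the same counting argument the paper uses. The paper's remark that $\alpha(G)$ upper-bounds the density of every subgraph is exactly your bound $|E(H)|\le \alpha(G)\,n_H\le k\,n_H$, which leaves fewer than $n_H/(1+\eps)$ vertices of degree above $(2+2\eps)k$ after each iteration. Your use of the strict inequality to show that the last of the $\lceil\log_{1+\eps} n\rceil$ iterations empties the graph is more careful than the paper, which leaves that endgame implicit. Both arguments tacitly need $n\ge 2$, since for $n=1$ the loop runs zero times.
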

\begin{proof}
We know that $\alpha(G)$ is lower bounded by the density of the densest subgraph of $G$.
Then, the total number of vertices in $G$ that can have degree greater than $(2+2\eps)k$ is
\[\frac{2|E(G)|}{(2+2\eps)k} \leq \frac{|V(G)|}{1+\eps}.\]
Therefore, the vertex set size of the graph decreases by a factor of at least $(1+\eps)$ every iteration, resulting in $\lceil \log_{1+\eps} n\rceil$ iterations before removing all the vertices in the graph.
\end{proof}

\begin{lemma}[\cite{ghaffari2025density}]
Let $G$ be a graph, $\eps\in (0, 1)$, and $k\geq \alpha(G)$. Then $\ell$ returned by \cref{alg:peeling-orientation} is a partial layer assignment with out-degree $(2+2\eps)k$.
\end{lemma}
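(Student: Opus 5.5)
We need to show that for every vertex $v$ with $\ell(v) \neq \infty$ (here, every vertex, by \cref{lemma:orientation:peeling}), the number of neighbors $u$ with $\ell(u) \geq \ell(v)$ is less than the out-degree parameter. The plan is to use the fact that at the moment $v$ is peeled, the current graph consists exactly of the vertices not yet removed, and these are precisely the vertices whose layer is at least $\ell(v)$.

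The argument proceeds in three steps.

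\textbf{Step 1.} By \cref{lemma:orientation:peeling}, since $k \geq \alpha(G)$, every vertex is removed within the $\lceil \log_{1+\eps} n\rceil$ iterations. Hence $\ell(v) \in [L]$ for all $v$ and the assignment is in fact full; in particular, no vertex has $\ell(v) = \infty$ that would need special treatment.

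\textbf{Step 2.} Fix $v$ with $\ell(v) = i$, and let $G_i$ denote the graph at the start of iteration $i$. The vertices removed before iteration $i$ are exactly those with $\ell(u) < i$. Therefore $V(G_i) = \{u : \ell(u) \geq i\}$, and since $G_i$ is an induced subgraph of $G$,
\[
\{u \in N_G(v) : \ell(u) \geq \ell(v)\} = N_{G_i}(v).
\]

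\textbf{Step 3.} Since $v \in A$ in iteration $i$, we have $d_{G_i}(v) \leq (2+2\eps)k$. This gives the out-degree bound $(2+2\eps)k$.

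The only delicate point is that the layer-assignment definition uses a strict inequality, $< d$, while the peeling condition is $\leq (2+2\eps)k$. I would handle this by reading the out-degree in the statement as the non-strict bound $(2+2\eps)k$. Equivalently, one can set $d = \lfloor (2+2\eps)k \rfloor + 1$, which changes nothing asymptotically and matches the intended orientation bound: directing each edge from its lower layer to its higher layer gives every vertex out-degree at most $(2+2\eps)k$. Apart from this off-by-one in conventions, the argument is a direct unpacking of the definitions.
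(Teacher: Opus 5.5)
Your proof is correct and is essentially the paper's argument: a vertex peeled in iteration $i$ has current degree at most $(2+2\eps)k$, and its neighbors in the current graph are exactly its neighbors with layer at least $i$. Your treatment of the strict inequality in the definition, taking $d = \lfloor (2+2\eps)k \rfloor + 1$, fixes an off-by-one that the paper's proof passes over silently. Your Step~1 is not needed for the out-degree bound itself, since vertices with $\ell(v)=\infty$ impose no constraint.
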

\begin{proof}
Consider a vertex that is removed during iteration $i$.
It must have degree at most $(2+2\eps)k$ to be removed during iteration $i$.
Additionally, its neighbors during iteration $i$ all have layers at least $i$.
Therefore, $\ell$ is a partial layer assignment with out-degree $(2+2\eps)k$.
\end{proof}
So, when $k$ is set to $\alpha(G)$, we are able to compute a $(2+\eps)$-approximate edge orientation using \cref{alg:peeling-orientation}.

\subsection{Edge partitioning}
To use our exponentiate and prune algorithm, we want $\alpha(G)$ to be $O(\log n)$ for our graph $G$.
We accomplish this by partitioning the edges of the graph uniformly at random.
We have the following lemma.
\begin{lemma}[\cite{ghaffari2025density}]
\label{lemma:edge-partition}
Let $G$ be a graph, $k \geq \lceil \alpha(G) \rceil$, and $\eps\in(0, 1)$. Let $H = \lceil \frac{\eps^2 k}{100\log n} \rceil$ and $G_1, G_2,\ldots, G_H$ be the graphs from partitioning the edges into $H$ groups uniformly at random. Then, whp $\max_{i\in[H]} \alpha(G_i) \leq \frac{(1+\eps)k}{H} = O(\log n/\eps^2)$.
\end{lemma}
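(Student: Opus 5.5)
We prove \cref{lemma:edge-partition} by working with the density characterization of $\alpha$: by the Nash-Williams-type facts recalled in the definition of arboricity, $\alpha(G_i)$ is governed by $\max_{S\subseteq V} \lceil |E_i(S)|/(|S|-1)\rceil$ (equivalently, up to the additive slack, by $\max_S |E_i(S)|/|S|$). It therefore suffices to show that whp, for every vertex set $S$ and every part $i\in[H]$,
\[
|E_i(S)| \le \frac{(1+\eps)k}{H}\,(|S|-1),
\]
where $E_i(S)$ denotes the edges of $G_i$ with both endpoints in $S$. For a fixed pair $(S,i)$, each edge of $E(S)$ is placed in part $i$ independently with probability $1/H$. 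Hence $\E{|E_i(S)|} = |E(S)|/H \le \alpha(G)(|S|-1)/H \le k(|S|-1)/H$, using that $G$ itself has density at most $\alpha(G)\le k$ on every subset.

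Next we apply the Chernoff bound (\cref{lemma:chernoff}, item~\ref{item:delta-at-most-1-ge}). When the mean is smaller than $k(|S|-1)/H$, we first pad it with dummy Bernoulli variables up to $\mu = k(|S|-1)/H$; this only helps, since it stochastically dominates. With $\delta=\eps$ this yields
\[
\prob{|E_i(S)| \ge (1+\eps)\mu} \le \exp\!\rb{-\eps^2 k(|S|-1)/(3H)}.
\]
Since $H = \lceil \eps^2 k/(100\log n)\rceil$, we have $\eps^2 k/H \ge 50\log n$ whenever $H\ge 2$. In that regime the failure probability is at most $n^{-16(|S|-1)}$. If instead $H=1$, then $G_1=G$ and the claim is trivial.

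The main obstacle is the union bound over exponentially many sets $S$. We handle it by grouping sets by size $s=|S|$: there are at most $n^s$ sets of size $s$, so the total failure probability is
\[
\sum_{s\ge 2} H\, n^{s}\, n^{-16(s-1)} \le H\, n^{-14}\cdot O(1),
\]
which is $1/\poly(n)$ because $H\le n^2$. Sets with $|S|=1$ contain no edges and need no bound.

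Finally, it remains to turn the density bound into the orientation bound. Every subgraph of $G_i$ has density $|E_i(S)|/(|S|-1) \le (1+\eps)k/H$, so by Nash-Williams the arboricity of $G_i$, and hence $\alpha(G_i)$, is at most $\lceil (1+\eps)k/H\rceil$. Dropping the ceiling (or absorbing it into $\eps$ by slightly adjusting the constant $100$) gives the stated bound. Moreover, $(1+\eps)k/H \le (1+\eps)\cdot 100\log n/\eps^2 = O(\log n/\eps^2)$.
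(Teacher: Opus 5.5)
Your route differs from the paper's and can be made to work, but the expectation bound at its core is false as written. You use $|E(S)|\le \alpha(G)(|S|-1)$. An orientation of out-degree $\alpha(G)$ only gives $|E(S)|\le \alpha(G)|S|$. The $(|S|-1)$ form is the Nash-Williams bound for the arboricity $\lambda(G)$, which can be as large as $\alpha(G)+1$. For example, $K_3$ has $\alpha=1$ but $3>1\cdot 2$ edges. Likewise $K_{2k+1}$ has $\alpha=k$ (a regular tournament) but $k(2k+1)>k\cdot 2k$ edges. So when $k=\alpha(G)$, your $\mu=k(|S|-1)/H$ can lie \emph{below} the true mean, padding cannot lower a mean, and the Chernoff step does not apply.

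There are two ways to fix this:
\begin{itemize}
\item Use $|E(S)|\le\lambda(G)(|S|-1)\le (k+1)(|S|-1)$.
\item More cleanly, work with $|S|$ instead of $|S|-1$ and invoke Hakimi's theorem: $G_i$ has an orientation of out-degree at most $D$ iff $|E_i(S)|\le D|S|$ for all $S$.
\end{itemize}
Relatedly, ``dropping the ceiling'' is not a valid step. You should aim directly at the integer $D=\lfloor(1+\eps)k/H\rfloor$. Both repairs are cheap. When $H\ge 2$ you have $k/H>50\log n/\eps^2$, so the extra $+1$ and the rounding loss cost only a relative $O(\eps^2/\log n)$. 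Running Chernoff with $\delta=0.98\eps$ instead of $\eps$ still gives failure probability at most $n^{-16(|S|-1)}$ (or $n^{-16|S|}$ in the Hakimi version). Your union bound over set sizes then goes through with the constant $100$ unchanged.

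The paper avoids all of this with a purely local argument. It fixes one orientation of $G$ with out-degree at most $\alpha(G)\le k$. For each vertex $v$ and group $i$, the number of out-edges of $v$ landing in group $i$ is a sum of independent indicators with mean at most $k/H$. Chernoff plus a union bound over the $nH$ pairs $(v,i)$ puts all these counts below $(1+\eps)k/H$ whp. The restriction of that orientation to $G_i$ then certifies $\alpha(G_i)\le(1+\eps)k/H$ directly. That argument needs only polynomially many events, no Nash-Williams or Hakimi theorem, and no rounding, since a concrete orientation's out-degree is an integer below a real threshold. Your argument proves more: $|E_i(S)|$ concentrates for every $S$ and every part, which works because the Chernoff exponent grows linearly in $|S|$ and beats the $n^{|S|}$ count. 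That extra strength is not needed for this lemma.
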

\begin{proof}
If $H = 1$, $\alpha(G)$ is bounded by $\frac{100\log n}{\eps^2}$.
So, we assume that $H > 1$. 
Consider an orientation of the edges of $G$ with out-degree $k$.
For any vertex $v$, let us say it is in $G_i$ for some $i$.
Then, we have that
\[\E{\deg_{G_i}^{out}(v)} \leq \frac{k}{H}\]
which is the expected number of out-neighbors of $v$ that are also in $G_i$.
Using the Chernoff bound, we see that
\[\prob{\deg_{G_i}^{out}(v) \geq (1+\eps)k/H} \leq \exp\rb{-\eps^2k/(3H)} \leq n^{-6}.\]
Therefore, taking the union bound over all $v$, we have that each vertex will have at most $(1+\eps)k/H$ out-neighbors in the same group with probability at least $1 - n^{-5}$.
This gives us our desired result.
\end{proof}

\subsection{Setting parameters $L$ and $s$}
Our exponentiate and prune algorithm uses the parameters $B$, $k$, $s$, $t$ in order to produce a pruned tree view for every vertex of the graph.
Using these pruned tree views, we want to simulate as many iterations of peeling algorithms specific to each problem; we call the number of iterations $L$, representing the number of layers in the partial layer assignment that are peeled.
Therefore, we focus on the budget parameter ($B$), the out-degree/peeling parameter ($k$), and the step and approximation parameter ($t$).
Using these parameters, we can set the remaining variables $s$ and $L$ in terms of them.
Now, we define the following two properties we need for our applications:
\begin{enumerate}
    \item \textbf{(Upper bound on inactive vertices)} The total number of inactive vertices with $\ell_G(v) < \infty$ must be less than or equal to $n/2^L$ for any partial layer assignment $\ell_G$.

    \item \textbf{(Lower bound on exponentiation)} The number of exponentiation steps of method \Expo must be at least $\lg L$.
\end{enumerate}
The first property ensures that the inactive vertices do not disturb the applicable peeling algorithms much.
The second property guarantees that the pruned tree views have enough depth in order to properly simulate peeling the first $L$ layers of the partial layer assignment.
With these two properties in mind, we present the following lemma that specifies the values of $s$ and $L$.

\begin{lemma}
\label{lemma:parameters}
Let $G$ be a graph, $B,k,t\in\mathbb{N}$. We set
\[L = \left\lfloor\frac{\lg^{\frac{t+1}{t+2}}\rb{B}}{4\lg(2k)}\right\rfloor\text{ and } s = \left\lceil \frac{\lg L}{t + 1}\right\rceil.\]
Let $\ell_G$ be any partial layer assignment with out-degree $k$. Then, the following hold.
\begin{enumerate}
    \item The number of inactive vertices $v$ with $\ell_G(v) < \infty$ is at most $\frac{n}{2^L}$.
    \item $\Expo(G, k, s, t, B)$ has at least $\log L$ exponentiation steps.
\end{enumerate}
\end{lemma}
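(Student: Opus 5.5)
Property 2 is a direct computation; property 1 is a counting argument combining (P2) of \cref{theorem:missing} with \cref{lemma:mpc-numpaths}. Throughout, write $L' \defeq 2^{s(t+1)}$ for the path-length parameter in (P2). Note that \cref{theorem:missing} uses $L$ for this quantity, whereas here $L$ is the number of layers.

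\textbf{Property 2.} \Expo performs $s(t+1)$ exponentiation steps. Since $s = \lceil \lg L/(t+1)\rceil$, we have $s(t+1) \geq \lg L$, so there are at least $\lg L$ steps. As a consequence, $L' \geq L$.

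\textbf{Property 1.} By (P2), every inactive vertex $v$ with $\ell_G(v) \neq \infty$ satisfies $\NumPathsIn_{G,\ell_G,L'}(v) \geq B^{1/2^s}$. Summing over such vertices and applying \cref{lemma:mpc-numpaths} with $d = k$ and path length $L'$ gives
\[
\#\{\text{inactive } v : \ell_G(v)<\infty\} \cdot B^{1/2^s} \leq \sum_{v} \NumPathsIn_{G,\ell_G,L'}(v) \leq n \cdot k^{L'} .
\]
Hence the count is at most $n \cdot k^{L'}/B^{1/2^s}$. It therefore suffices to show
\[
k^{L'} \cdot 2^{L} \leq B^{1/2^s}, \qquad\text{equivalently}\qquad L'\lg k + L \leq \lg B / 2^s .
\]
If $\ell_G$ has out-degree $1$, the degenerate case of \cref{lemma:mpc-numpaths} needing $d\ge 2$, I would replace $d$ by $\max(2,k)$. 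This is why the parameter carries $\lg(2k)$, so I would run the argument with $\lg(2k)$ in place of $\lg k$.

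\textbf{The main computation.} This is the one delicate step: the ceiling in $s$ can make $2^s$ and $L'$ larger than $L^{1/(t+1)}$ and $L$. We have
\[
2^s \leq 2 L^{1/(t+1)}, \qquad L' = 2^{s(t+1)} \leq 2^{t+1} L .
\]
The first bound is the harmless one. The second loses a factor $2^{t+1}$, and I would first check whether it is absorbed.

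The required inequality becomes, up to constants,
\[
2^{t+1} L \cdot L^{1/(t+1)} \lg(2k) \lesssim \lg B .
\]
From $L \leq \lg^{(t+1)/(t+2)} B/(4\lg(2k))$ we get $L^{(t+2)/(t+1)} \leq \lg B/(4\lg(2k))^{(t+2)/(t+1)}$. Then
\[
L \cdot 2^s \cdot \lg(2k) \leq 2 L^{(t+2)/(t+1)} \lg(2k) \leq \lg B/2 ,
\]
which covers the term $L \cdot 2^s$.

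The danger is the factor $2^{t+1}$ in $L'$. To handle it, I would bound $L' \cdot 2^s$ directly:
\[
L' \cdot 2^s = 2^{s(t+2)} \leq \bigl(2L^{1/(t+1)}\bigr)^{t+2} = 2^{t+2} L^{(t+2)/(t+1)} .
\]
I would then try to absorb $2^{t+2}$ using slack in the floor and the factor $4$. If that fails, I would argue that in the intended regime $t \leq \lg\lg n$ these constants are absorbed, or refine the choice of $s$. I expect this absorption of the rounding loss to be the main obstacle.
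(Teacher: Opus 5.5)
Your reduction is the same as the paper's. Property~2 is the same one-line check. For property~1, both you and the paper combine (P2) of \cref{theorem:missing} with \cref{lemma:mpc-numpaths}, which reduces everything to $B^{1/2^s}\ge (2k)^{L'}$, i.e.\ $\lg(2k)\cdot L'\cdot 2^s\le \lg B$ with $L'=2^{s(t+1)}$.

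The gap is that you never establish this inequality: the proposal ends with a plan, and the plan does not work in general. Absorbing the rounding loss through your bound $L'2^s\le 2^{t+2}L^{(t+2)/(t+1)}$ requires $2^{t+2}\le 4^{(t+2)/(t+1)}\lg^{1/(t+1)}(2k)$, i.e.\ $\lg(2k)\ge 2^{(t+2)(t-1)}$.
\begin{itemize}
\item For $t\le 1$ this is fine.
\item For $t\ge 2$ it needs enormous $k$.
\item In the regime $t\approx\lg\lg n/\lg\lg\lg n$, the loss $2^{t+1}$ is not even $\poly(\lg\lg n)$, while the available slack is $O(1)$.
\end{itemize}
Worse, the target inequality itself is false for admissible parameters. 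Take $t=3$, $k=4$, $\lg B=800$: then $L=17$, $s=2$, $L'=256$, so $\lg(2k)L'2^s=3072>800$. Even your weaker form $L'\lg k+L\le \lg B/2^s$ would read $529\le 200$. With path length $L'$ in (P2), the bound $nk^{L'}/B^{1/2^s}$ is vacuous here, so no tuning of constants finishes this route.

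Your suspicion is well founded. The paper closes this step by asserting $L\le L'\le 2L$ and combining it with $2^s\le 2\lg^{1/(t+2)}B$. But the ceiling in $s$ only gives $L'<2^{t+1}L$, and $L'\le 2L$ already fails for $t=1$, $L=5$ ($s=2$, $L'=16$). So the step you could not complete is not correctly completed in the paper either. There are two ways to actually close it.

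\textbf{(i) Sharpen (P2).} A vertex is marked inactive only at initialization or right after a pruning step $i\le t-1$, when its tree has height at most $2^{ts}+1$. So \cref{lemma:mpc-prunesize} yields $\NumPathsIn_{G,\ell_G,2^{ts}+1}(v)\ge B^{1/2^s}$. The requirement then becomes $(2^{ts}+1)\lg k+L\le \lg B/2^s$. This holds whenever $2^{t+1}\le \lg^{1/(t+2)}B$, e.g.\ for fixed $t$ and large $B$; in the example above it reads $147\le 200$. It does not hold uniformly in $t$.

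\textbf{(ii) Change the parameters so that $(t+1)\mid\lg L$.} For example, replace $L$ by $2^{(t+1)\lfloor \lg L/(t+1)\rfloor}$. This makes $L'=L$ and $2^s=L^{1/(t+1)}\le\lg^{1/(t+2)}B$, after which the paper's computation gives $\lg(2k)L'2^s\le \lg B/4$. The cost is shrinking $L$ by a factor less than $2^{t+1}$. That is harmless for constant $t$, but it enters the round complexity when $t$ grows.
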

\begin{proof}
We first show that the number of inactive vertices $v$ with $\ell_G(v) < \infty$ is at most $\frac{n}{2^L}$.
Let $L' = 2^{s(t+1)}$.
Note that $L \leq L' \leq 2L$.
Recall that $V_{< \infty}$ is the set of vertices $v$ with $\ell_G(v) < \infty$
Using \cref{lemma:mpc-numpaths}, we have that 
\[
\sum_{v\in V_{<\infty}} \NumPathsIn_{G,\ell_G, L'}(v) \leq |V_{<\infty}|\cdot k^{L'}.
\]
Also, using \cref{theorem:missing}, we know that every inactive vertex $v$ with $\ell_G(v) < \infty$ has $\NumPathsIn_{G,\ell_G,L'}(v) \geq B^{1/2^s}$.
Hence, the number of inactive vertices in $V_{<\infty}$ is at most
\[
\frac{|V_{<\infty}|\cdot k^{L'}}{B^{1/2^s}} \leq \frac{nk^{L'}}{B^{1/2^s}}.
\]
To upper bound the right hand side by $\frac{n}{2^L}$, it suffices to verify that $B^{1/2^s} \geq (2k)^{L'}$, or equivalently,
\begin{equation}\label{eqn:verify}
    B \geq (2k)^{L' \cdot 2^s}.
\end{equation}
By our choice of $L$ and $s$, $2^s \leq 2\lg^{1/(t+2)}(B)$.
Recall that $\exp_2(x)$ denotes $2^x$.
Write
\begin{align*}
    (2k)^{L' \cdot 2^s} &= \exp_2\rb{\lg(2k) \cdot L' \cdot 2^s} \\
    & \leq \exp_2\rb{\lg(2k) \cdot \frac{1}{4\lg(2k)} \cdot 2\lg^{(t+1)/(t+2)}(B) \cdot 2\lg^{1/(t+2)}(B)} \\
    & = \exp_2(\lg B) = B.
\end{align*}
Consequently, \cref{eqn:verify} indeed holds.
Therefore, the number of inactive vertices with $\ell_G(v) < \infty$ is upper bounded by $\frac{n}{2^L}$.

For the number of exponentiation steps in $\Expo(G, k, s, t, B)$, it takes $s(t+1)$ exponentiation steps which is at least $\lg L$.
\end{proof}


\subsection{Local vertex layers}
Now, we perform peeling on the local pruned tree views and produce local vertex layers.
We present \PartialLayerTree from \cite{ghaffari2025density} in \cref{alg:mpc:partiallayertree}.

\begin{algorithm}
\caption{(\PartialLayerTree) Finds partial layer assignment for rooted tree}
\label{alg:mpc:partiallayertree}
\begin{algorithmic}[1]
\medskip
\Statex\textbf{Input:} graph $G$, rooted tree $T$, valid mapping $\map:V(T)\rightarrow V(G)$, and $a, L\in \mathbb{N}$

\Statex\textbf{Output:} partial layer assignment $\ell_T : V(T) \rightarrow [L] \cup \{\infty\}$
\medskip
\hrule
\medskip
\State $\Missing(x) \gets |N_G(\map(x))| - |\children_T(x)|$ for every $x\in V(T)$
\State $\ell_T(x)\gets \infty$ for all $x\in V(T)$
\For{$i = 1$ to $L$}
    \State $A\gets$ all vertices $x\in V(T)$ with $|\children_T(x)| + \Missing(x) < a$
    \For{each $x\in A$}
        $\ell_T(x)\gets i$
    \EndFor
    \State Remove vertices in $A$ from $V(T)$
\EndFor
\State \Return $\ell_T$
\end{algorithmic}
\end{algorithm}

In \cref{alg:mpc:partiallayertree}, we are using the parameter $a$ as our peeling threshold.
The main idea is to set $a = (t + 1)\cdot k$ after using our \Expo algorithm.
The intuition is that \cref{theorem:missing} tells us that $\Missing(x) \leq t\cdot k$ when $\map(x)$ is an active vertex.
Therefore, \PartialLayerTree will peel all active vertices with degree at most $k$, and it will never peel any vertex with degree at least $(t+1)\cdot k$.
This peeling property gives us the $(t+1)$-approximation factor in our results.
We now present the following lemma about \PartialLayerTree.

\begin{lemma}[Modified Lemma $3.9$ from \cite{ghaffari2025density}]
\label{lemma:orientation:partiallayer}
Let $G$ be a graph, $d, L\in\mathbb{N}$, and $\ell_G:V(G)\rightarrow [L]\cup \{\infty\}$ be a partial layer assignment with out-degree $d$.
Let $B, k, s, t$ be parameters that satisfy $k\geq d$ and $s(t+1) \geq \log L$.
Let $(T_v, \map_v)$ be the rooted tree and mapping attained from $\Expo(G, k, s, t, B)$ and $v$ is any $v\in V(G)$ with $\ell_G(v) \neq \infty$ and $\NumPathsIn_{G,\ell_G, L'}(v) < B^{1/2^s}$ where $L' = 2^{s(t+1)}$.
Also, let $\ell_v$ be the partial layer assignment returned by $\PartialLayerTree(G, T_v, \map_v, (t+1)\cdot k, L)$.
Then, for every $x\in V(T)$ that is monotonically reachable with respect to $\ell_G$, we have $\ell_{v}(x) \leq \ell_G(\map_v(x))$.
\end{lemma}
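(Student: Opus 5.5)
We argue by strong induction on the value $\ell_G(\map_v(x)) \in [L]$: for every $i \in [L]$, every monotonically reachable node $x$ with $\ell_G(\map_v(x)) = i$ gets $\ell_v(x) \le i$. Monotonically reachable nodes with $\ell_G(\map_v(x)) = \infty$ satisfy the claim trivially. We also note that along the root path of a monotonically reachable $x$ the labels strictly increase and all lie in $[L]$ (they are at most $\ell_G(v) \le L$), so the depth of $x$ is less than $L \le L' = 2^{s(t+1)}$.

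Fix such an $x$ with $\ell_G(\map_v(x)) = i$, and write $u = \map_v(x)$. The goal is to show that in iteration $i$ of \PartialLayerTree (if not earlier) $x$ satisfies $|\children_T(x)| + \Missing(x) < (t+1)k$, counting only the children that have not yet been removed.

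\textbf{Step 1: $u$ is active.} Because the labels along the path from $x$ to the root strictly increase, that path gives a strictly increasing path in $G$ ending at $v$, of length at most $L'$. Any strictly increasing path ending at $u$ of length at most $L'$ can be extended by this path, but the combined length could exceed $L'$. This is the main obstacle. I would try two routes. The first is to show directly that the trees for $u$ and for the pieces of $T_v$ near $x$ are subtrees of what $v$ sees, so that $u$'s inactivity would be reflected in $v$. The second, which is simpler, is to avoid activity altogether: where $u$ is inactive, its node has no (or few) children, and one uses that $\Missing(x)$ is computed as $|N_G(u)| - |\children(x)|$, so the count $|\children(x)| + \Missing(x)$ equals $\deg_G(u)$ exactly, regardless of the tree. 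The peeling test therefore depends only on how many of $x$'s children remain and on the missing count. If I cannot establish activity, I would instead prove the weaker statement restricted to nodes mapping to active vertices.

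\textbf{Step 2: counting remaining children.} Assuming $u$ is active and the depth of $x$ is below $L'$, (P1) of \cref{theorem:missing} gives $\Missing(x) \le tk$. Now split the children $c$ of $x$ into two groups.
\begin{itemize}
\item Children with $\ell_G(\map_v(c)) < i$ are monotonically reachable, since their label is less than $i$ and then the labels increase along the rest of the path. By the induction hypothesis, each has $\ell_v(c) \le \ell_G(\map_v(c)) \le i-1$, so it has already been removed before iteration $i$.
\item Children with $\ell_G(\map_v(c)) \ge i$ number fewer than $d \le k$, because children map to distinct neighbors of $u$ (valid mapping) and $\ell_G$ has out-degree $d$.
\end{itemize}
So at iteration $i$ the remaining children number fewer than $k$, and the peeling quantity is less than $k + tk = (t+1)k$. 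Hence $x$ is peeled at iteration $i$ or earlier, that is, $\ell_v(x) \le i$.

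For the case $i = 1$ the argument is the same, with no lower-labelled children. This completes the induction; Step 1 is the only delicate part.
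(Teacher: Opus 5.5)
\textbf{Verdict: there is a genuine gap.} Your Step 2 is the right induction; it is the argument of Lemma 3.9 in \cite{ghaffari2025density} that the paper defers to. But the proof is incomplete because Step 1 is left open, and neither of your fallbacks works.

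Route two fails. If $u=\map_v(x)$ is inactive, the quantity tested in the tree peeling starts at $\deg_G(u)$ and drops only as children are removed. If $x$ has few children, it stays near $\deg_G(u)$, which can be far above $(t+1)k$, and you have no missing bound to exploit. Restricting to nodes that map to active vertices does not close the induction either. A lower-labelled child of such a node could map to an inactive vertex, and then the induction hypothesis no longer guarantees it was removed before iteration $i$.

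The missing observation is that the length cap you worry about is vacuous. Because $\ell_G$ takes values in $[L]\cup\{\infty\}$, any strictly increasing path ending at a vertex $w$ with $\ell_G(w)<\infty$ has vertices with distinct labels in $[1,\ell_G(w)]$. So it has at most $\ell_G(w)\le L\le L'$ vertices; this is the same fact you already used to bound the depth of $x$.

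Now take the image of the tree path from $x$ to the root. By validity of $\map_v$, it is a strictly increasing path in $G$ from $u$ to $v$. Appending it to any strictly increasing path ending at $u$ gives a strictly increasing path ending at $v$, which is still within the length cap, and this map is injective. Hence $\NumPathsIn_{G,\ell_G,L'}(u)\le \NumPathsIn_{G,\ell_G,L'}(v)<B^{1/2^s}$.

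Since $\ell_G(u)<\ell_G(v)<\infty$ and $\ell_G$ has out-degree $d\le k$ (hence also out-degree $k$), property (P2) of \cref{theorem:missing} forces $u$ to be active. This is exactly the use of (P2) that the paper's proof cites. The same monotonicity argument appears as Part 3 of \cref{lemma:kcore-critical-transitivity}. With it, your Step 2 goes through as written, including the depth condition needed for (P1).
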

\begin{proof}
We refer the reader to the proof of Lemma $3.9$ in \cite{ghaffari2025density}, but we use \cref{theorem:missing} to say that $\Missing(x) \leq t\cdot k$ for any vertex $x\in V(T_v)$ with $\map_v(x)$ being an active vertex, and that inactive vertices $v$ with $\ell_G(v) \neq \infty$ have $\NumPathsIn_{G,\ell_G, L'}(v) \geq B^{1/2^s}$
\end{proof}

\subsection{Combining vertex layers}
Now, we show how to combine the vertex layers across different machines from \PartialLayerTree.
We take the minimum layer of a vertex across all the occurrences of that vertex in each of the local pruned trees.
By taking the minimum, the resulting layer assignment is a partial layer assignment.
We present \PartialLayer from \cite{ghaffari2025density} in \cref{alg:mpc:fulllayer}.

\begin{algorithm}
\caption{(\PartialLayer) Assigns layers for entire graph}
\label{alg:mpc:fulllayer}
\begin{algorithmic}[1]
\medskip
\Statex\textbf{Input:} graph $G$, $B, k, L, s, t\in\mathbb{N}$

\Statex\textbf{Output:} partial layer assignment $\ell : V(G) \rightarrow [L] \cup \{\infty\}$
\medskip
\hrule
\medskip
\State $(T_v, \map_v) \gets \Expo\rb{G, k, s, t, B}$ for all $v\in V(G)$
\For{each $v\in V(G)$}
    \State $\ell_v \gets \PartialLayerTree\rb{G, T_v, \map_v,(t + 1)k, L}$
\EndFor
\State $\ell(u) \gets \min\rb{\{\ell_v(x) : v\in V(G), x\in V\rb{T_v}, \map_v(x) = u\}}$ for all $u \in V(G)$
\State \Return $\ell$
\end{algorithmic}
\end{algorithm}

As described in the previous section, we set parameter $a = (t+1)\cdot k$ in \PartialLayerTree because of \cref{theorem:missing}. 
Then, we claim that the partial layer assignment produced by \PartialLayer has out-degree at most $(t+1)\cdot k$.

\begin{lemma}[Modified Claim 3.12 from \cite{ghaffari2025density}]
\label{lemma:orientation:out-degree}
Let $G$ be a graph, $B,k,L,s,t\in\mathbb{N}$, and let $\ell$ be the partial layer assignment returned by $\PartialLayer(G,B,k,L,s,t)$. Then, for every $v\in V(G)$ with $\ell(v) < \infty$, it holds that
\[|\{u\in N_G(v) : \ell(u) \geq \ell(v)\}| \leq (t+1)\cdot k.\]
\end{lemma}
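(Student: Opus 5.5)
Fix $v$ with $\ell(v) = i < \infty$. By the min-combination rule of \PartialLayer, there are a vertex $w$ and a node $x \in V(T_w)$ with $\map_w(x) = v$ and $\ell_w(x) = i$, where $\ell_w$ is the output of \PartialLayerTree on $(T_w,\map_w)$ with threshold $a = (t+1)k$. The plan is to bound the neighbors $u$ of $v$ with $\ell(u) \geq i$ by looking inside $T_w$ at the node $x$.

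\textbf{Step 1: use the peeling rule at $x$.} Node $x$ was peeled in iteration $i$ of \PartialLayerTree. At that moment $|\children_{T_w}(x) \text{ still present}| + \Missing(x) < (t+1)k$, where $\Missing(x) = |N_G(v)| - |\children_{T_w}(x)|$ is computed once at the start. Here ``children still present'' means the children $c$ with $\ell_w(c) \geq i$, i.e., those not removed in iterations $1,\dots,i-1$.

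\textbf{Step 2: cover the neighbors of $v$.} Validity of $\map_w$ means distinct children of $x$ map to distinct neighbors of $v$. So the neighbors of $v$ split into two groups: images of children of $x$, and the $\Missing(x)$ neighbors not represented among the children. For a neighbor $u = \map_w(c)$ with $c$ a child of $x$ and $\ell_w(c) < i$, we get $\ell(u) \leq \ell_w(c) < i$ because $\ell$ is a minimum over all occurrences, so $u$ is not counted. Every $u$ with $\ell(u) \geq i$ therefore lies among the children still present at iteration $i$ or among the missing neighbors. Hence
\[
|\{u \in N_G(v) : \ell(u) \geq i\}| \leq |\{c \in \children_{T_w}(x) : \ell_w(c) \geq i\}| + \Missing(x) < (t+1)k.
\]

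\textbf{Main obstacle.} The key point is the bookkeeping in Step 1. The threshold test in \PartialLayerTree uses the current child count, after earlier removals, plus the fixed initial $\Missing(x)$. I would make explicit that $\children_T(x)$ refers to the current tree, so previously peeled children have layer $< i$ and are exactly the ones excluded. No appeal to (P1) or to activity of vertices is needed: $\Missing$ is charged fully whatever its size, so the bound is deterministic, holds for every $v$, and is in fact strict.
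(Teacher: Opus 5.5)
Your argument is correct and is essentially the standard proof of Claim~3.12 in \cite{ghaffari2025density}, to which the paper simply defers: you take the occurrence $x \in V(T_w)$ realizing the minimum $\ell(v)=i$, use the peeling condition at iteration $i$ (current children plus $\Missing(x)$ below $(t+1)k$), and use the min rule to show that children peeled earlier map to vertices with combined label already $<i$. As you note, this direction is deterministic and needs neither (P1) nor activity, which are only used for the complementary Lemma~\ref{lemma:orientation:partiallayer}, and it yields the strict bound that matches the paper's definition of out-degree.
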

\begin{proof}
We refer the reader to the proof of Claim $3.12$ in \cite{ghaffari2025density}.
\end{proof}

Using \PartialLayer, we compute a partial layer assignment with $L$ layers and out-degree $(t+1)\cdot k$.
However, we also want to have some vertex set size reduction property similar to the analysis of \cref{alg:peeling-orientation}. We assume $k\geq (2+2\eps)\alpha(G)$ and we use the parameters for $L$ and $s$ from \cref{lemma:parameters}.
Let us consider the vertex sets $V_i = \{v \in V(G) : \ell(v) \geq i\}$ using the layers $\ell$ returned by \PartialLayer.
$V_\infty$ consists of all the vertices with layer $\infty$.
Then, we have the following lemma.

\begin{lemma}
\label{lemma:orientation:vertex-size}
Let $G$ be a graph, $B,k,L,s,t\in\mathbb{N}$, and let $\ell$ be the partial layer assignment returned by $\PartialLayer(G,B,k,L,s,t)$. Given
\[k\geq (2+2\eps)\alpha(G)\text{, }L = \left\lfloor\frac{\log^{\frac{t+1}{t+2}}\rb{B}}{4\log(2k)}\right\rfloor\text{, and } s = \left\lceil \frac{\log L}{t + 1}\right\rceil,\]
then $|V_{\infty}| \leq \frac{n}{(1+\eps/2)^L}$.
\end{lemma}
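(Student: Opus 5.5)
We prove $|V_{i+1}| \le |V_i|/(1+\eps/2)$ for every $i\in[L]$, up to the inactive vertices, which are few. Iterating this $L$ times gives the bound on $|V_\infty| = |V_{L+1}|$.

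\textbf{Step 1: a reference assignment $\ell_G$.} We run the sequential peeling of \cref{alg:peeling-orientation} with threshold $k$ (out-degree $k$) for $L$ rounds. This gives a partial layer assignment $\ell_G : V(G)\to[L]\cup\{\infty\}$ with out-degree $k$. Since $k \ge (2+2\eps)\alpha(G)$, the counting argument of \cref{lemma:orientation:peeling} applies to any remaining vertex set $U$: the number of vertices of degree at least $k$ in $G[U]$ is at most $2\alpha(G)|U|/k \le |U|/(1+\eps)$. So the vertex sets $W_i=\{v:\ell_G(v)\ge i\}$ of $\ell_G$ shrink by a factor $1+\eps$ per layer.

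\textbf{Step 2: compare $\ell$ with $\ell_G$.} We want to show that, apart from inactive-type vertices, $\ell(u)\le \ell_G(u)$ for every $u$ with $\ell_G(u)<\infty$. Take such a $u$ with $\NumPathsIn_{G,\ell_G,L'}(u) < B^{1/2^s}$. Then $u$ is active by \cref{theorem:missing} (P2), read contrapositively. The root of $T_u$ is trivially monotonically reachable. So \cref{lemma:orientation:partiallayer}, applied with $d=k$ and with $s(t+1)\ge \lg L$ from \cref{lemma:parameters}, gives $\ell_u(\mathrm{root})\le \ell_G(u)$. Because $\ell$ takes a minimum over all copies of $u$, this yields $\ell(u)\le\ell_G(u)$. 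The vertices with $\ell_G(u)<\infty$ that fail the hypothesis number at most $n k^{L'}/B^{1/2^s} \le n/2^L$, by the same computation as in \cref{lemma:parameters}.

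\textbf{Step 3: conclude.} From Step 2, $V_\infty \subseteq W_{L+1}\cup I$, where $I$ is the exceptional set with $|I|\le n/2^L$. Step 1 gives $|W_{L+1}|\le n/(1+\eps)^L$. Hence
\[
|V_\infty|\le \frac{n}{(1+\eps)^L}+\frac{n}{2^L}\le \frac{2n}{(1+\eps)^L}.
\]
This is at most $n/(1+\eps/2)^L$ once $((1+\eps)/(1+\eps/2))^L\ge 2$, which holds for $L\gtrsim 1/\eps$. For smaller $L$ the statement needs separate handling, which we address next.

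\textbf{Main obstacle.} The delicate point is the constant bookkeeping, in two places. First, $\ell_G$ must have out-degree at most $k$ so that \cref{lemma:orientation:partiallayer} and \cref{lemma:mpc-numpaths} apply. Second, peeling at threshold $k$ must still shrink by a factor $1+\eps$; this is exactly where $k\ge(2+2\eps)\alpha(G)$ enters. Absorbing the additive $n/2^L$ term into the slack between $(1+\eps)^L$ and $(1+\eps/2)^L$ is a further issue. If that slack is insufficient for small $L$, the fallback is a per-layer argument: count inactive vertices inside each $W_i$ and show $|V_{i+1}|\le |V_i|/(1+\eps/2)$ directly.
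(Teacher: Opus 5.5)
Your proposal is correct and is essentially the paper's proof. You take a reference peeling assignment $\ell_G$ of out-degree $k$ whose level sets shrink by a factor $1+\eps$. You apply \cref{lemma:orientation:partiallayer} at the root of $T_u$, together with the minimum over copies of $u$, to get $\ell(u)\le\ell_G(u)$ outside an exceptional set of size at most $n/2^L$. Then you conclude $|V_\infty|\le n/(1+\eps)^L+n/2^L$.

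The paper absorbs the additive $n/2^L$ term exactly as in your Step 3, asserting the final inequality only ``for sufficiently large $n$'' (i.e., for $L$ large relative to $1/\eps$), so it uses no fallback. The per-layer fallback you sketch would not work anyway, because the level sets of the min-combined $\ell$ need not shrink layer by layer. For example, inactive high-degree vertices can stay at $\infty$ after everything else is peeled in layer~1, giving $V_2=V_3=\dots=V_{L+1}\neq\emptyset$.
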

\begin{proof}
Let us consider the full layer assignment $\ell_G : V(G) \rightarrow \left[\lceil\log_{1+\eps} n \rceil\right]$ with out-degree $k$ produced by running \cref{alg:peeling-orientation} on $G$.
Let $L' = 2^{s(t+1)}$.
From the proof of \cref{lemma:parameters}, we have that the number of vertices $v$ with $\NumPathsIn_{G,\ell_G, L'}(v) \geq B^{1/2^s}$ is at most $\frac{n}{2^L}$.
The remaining vertices satisfy $\NumPathsIn_{G,\ell_G, L'}(v) < B^{1/2^s}$.
Using \cref{lemma:orientation:partiallayer}, we know that $\ell(v) \leq \ell_G(v)$ for all $v$ with $\NumPathsIn_{G,\ell_G, L'}(v) < B^{1/2^s}$ and $\ell_G(v) \leq L$.
Now, from the proof of \cref{lemma:orientation:peeling}, the number of vertices with $\ell_G(v) > L$ is at most $\frac{n}{(1+\eps)^L}$.
Therefore, we have that
\[|V_\infty| \leq \frac{n}{(1+\eps)^L} + \frac{n}{2^L} \leq \frac{n}{(1+\eps/2)^L}\]
for sufficiently large $n$.
\end{proof}

\subsection{The algorithm}
Finally, in this section we show how to extend the partial layer assignment produced by \PartialLayer to a full layer assignment where the out-degree of all vertices is bounded by $(t+1)\cdot k$.
We accomplish this by running \PartialLayer multiple times and analyzing how the vertex set size decreases.
We first present an algorithm that can compute $\Theta(\log B)$ layers of the partial layer assignment in \cref{alg:mpc:orientation}.

\begin{algorithm}
\caption{Finds $(2 + \eps)(t+1)$-approximate edge orientation}
\label{alg:mpc:orientation}

\begin{algorithmic}[1]
\medskip
\Statex\textbf{Input:} graph $G$, $\eps\in(0, 1)$, $k,t, B\in\mathbb{N}$

\Statex\textbf{Output:} partial layer assignment $\ell$
\medskip
\hrule
\medskip
\State $L \gets \left\lfloor\frac{\log^{\frac{t+1}{t+2}}\rb{B}}{4\log(2k)}\right\rfloor$, $s \gets\left\lceil \frac{\log L}{t + 1} \right\rceil$, $c \gets \left\lceil \frac{200\log_{1+\eps} B}{L}\right\rceil$
\State $\ell(v) \gets \infty$ for all $v\in V(G)$
\For{$i = 1$ to $c$}
\State $\ell_i \gets \PartialLayer\rb{G, B, k, L, s, t}$
\State $\ell(v) \gets \ell_i(v) + (i - 1)\cdot L$ for all $v\in V(G)$ with $\ell_i(v) \neq \infty$
\State Remove all vertices $v$ with $\ell_i(v) \neq \infty$ from $G$
\EndFor
\State \Return $\ell$
\end{algorithmic}
\end{algorithm}

\begin{lemma}
\label{lemma:orientation:subroutine}
Let $G$ be a graph, $\eps\in(0, 1)$, $k,t, B\in\mathbb{N}$, where $B > k^{2^s}$. Given $k \geq (2+2\eps)\alpha(G)$, \cref{alg:mpc:orientation} satisfies the following. \begin{enumerate}
    \item It returns a partial layer assignment $\ell$ with $L' \geq 200\log_{1+\eps} B$ layers and out-degree $(t+1)\cdot k$.
    \item Let $V_\infty$ be the vertices with layer $\infty$. Then, $|V_\infty| \leq \frac{n}{(1+\eps/2)^{L'}}$.
    \item The algorithm takes $O\rb{\log^{\frac{1}{t+2}}(B)\cdot \log k\cdot \log \log B}$ rounds, $B^{1+o(1)}$ local space, and $O(m) + nB^{1+o(1)}$ global space.
\end{enumerate}
\end{lemma}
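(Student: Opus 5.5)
We prove the three items in turn, viewing \cref{alg:mpc:orientation} as $c$ sequential \emph{phases}. In phase $i$, \PartialLayer runs on the current graph $G^{(i)}$, the subgraph of $G$ induced by the vertices still unassigned. Since $G^{(i)}$ is a subgraph of $G$, we have $\alpha(G^{(i)}) \le \alpha(G)$, so $k \ge (2+2\eps)\alpha(G^{(i)})$ holds throughout. Hence \cref{lemma:orientation:out-degree} and \cref{lemma:orientation:vertex-size} apply in every phase, with $n$ replaced by $n_i \defeq |V(G^{(i)})|$.

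\textbf{Item 1.} The number of layers is $L' = cL \ge 200\log_{1+\eps} B$ by the choice of $c$. For the out-degree, fix $v$ with $\ell(v) = (i-1)L + \ell_i(v) < \infty$. Every neighbor $u$ of $v$ in $G$ falls into one of two cases.
\begin{itemize}
\item If $u$ was removed in an earlier phase $i' < i$, then $\ell(u) \le (i'-1)L + L \le (i-1)L < \ell(v)$, so $u$ does not count toward the out-degree of $v$.
\item Otherwise $u \in V(G^{(i)})$. Either $\ell_i(u) = \infty$, and then $\ell(u) > \ell(v)$ (it is assigned in a later phase or remains $\infty$), or $\ell(u) \ge \ell(v)$ exactly when $\ell_i(u) \ge \ell_i(v)$.
\end{itemize}
So the neighbors counted by $v$ in $G$ are exactly its counted neighbors in $G^{(i)}$ under $\ell_i$. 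By \cref{lemma:orientation:out-degree} there are at most $(t+1)k$ of them. (The definition uses a strict inequality $<d$; taking the out-degree to be $(t+1)k+1$ would fix this if needed, but I would follow the paper's convention.)

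\textbf{Item 2.} \cref{lemma:orientation:vertex-size} applied to $G^{(i)}$ gives $n_{i+1} \le n_i/(1+\eps/2)^L$. Iterating over $c$ phases yields $|V_\infty| = n_{c+1} \le n/(1+\eps/2)^{cL} = n/(1+\eps/2)^{L'}$.

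\textbf{Main subtlety.} \cref{lemma:orientation:vertex-size} holds only ``for sufficiently large $n$'': it absorbs $n/2^L$ into $n/(1+\eps)^L$. When applied to a shrinking $G^{(i)}$, the inequality $(1+\eps)^{-L} + 2^{-L} \le (1+\eps/2)^{-L}$ must hold in terms of $L$, not $n$. I would check that it holds once $L$ exceeds a constant depending on $\eps$. The ratio $\bigl((1+\eps/2)/(1+\eps)\bigr)^L$ decays exponentially, and $\bigl((1+\eps/2)/2\bigr)^L$ does too. Since $L$ grows with $B$, the argument goes through for $B$ large enough, and I would state that assumption explicitly.

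\textbf{Item 3.} Each phase is one call of \Expo, plus local peeling and a min-aggregation that take $O(1)$ rounds with standard primitives. By \cref{theorem:missing} this costs $O(s(t+1)) = O(\log L) = O(\log\log B)$ rounds. The number of phases is $c = O(\log_{1+\eps} B / L)$. Since $L \approx \lg^{(t+1)/(t+2)} B / (4\lg(2k))$, this gives $c = O(\lg^{1/(t+2)} B \cdot \log k)$ for constant $\eps$. Multiplying the two bounds gives the claimed round count. For space, \cref{lemma:implementation} gives $O(2^{2^s} B)$ local and $O(2^{2^s} nB + m)$ global memory. We have $2^s \le 2\lg^{1/(t+2)} B$, so $2^{2^s} = B^{o(1)}$, which yields $B^{1+o(1)}$ local and $O(m) + nB^{1+o(1)}$ global space. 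The tree-view memory is reused across phases, so it is not multiplied by $c$.
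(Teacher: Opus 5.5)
Your proposal is correct and follows the same route as the paper: run \textsc{Partial-Layer-Assignment} for $c$ phases, take the out-degree bound from \cref{lemma:orientation:out-degree}, iterate \cref{lemma:orientation:vertex-size} phase by phase to get $|V_\infty|\le n/(1+\eps/2)^{cL}$, and charge $O(\log L)$ rounds per phase with memory from \cref{theorem:missing}. You are more careful than the paper in two places: the paper takes for granted that the shifted per-phase labels combine into one assignment of out-degree $(t+1)k$, and it writes ``for sufficiently large $n$'' where, as you note, $(1+\eps)^{-L}+2^{-L}\le(1+\eps/2)^{-L}$ actually requires $L$ to be large, i.e., $B$ large relative to $k$ and $\eps$.
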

\begin{proof}
Using \cref{lemma:orientation:out-degree}, we know that \PartialLayer computes a partial layer assignment with $L$ layers and out-degree $(t+1)\cdot k$.
Therefore, when applying it iteratively $c = \left\lceil \frac{200\log_{1+\eps} B}{L}\right\rceil$ times, we attain a partial layer assignment with $L' = cL \geq 200\log_{1+\eps} B$ layers.

For $V_\infty$, by applying \cref{lemma:orientation:vertex-size} for each iteration of \PartialLayer, we have that
\[|V_\infty| \leq n\cdot \rb{\frac{1}{(1+\eps/2)^L}}^c = \frac{n}{(1+\eps/2)^{L'}}.\]

Finally, the round complexity is $O(c\cdot \log L) = O\rb{\log^{\frac{1}{t+2}}(B)\cdot \log k\cdot \log \log B}$ from calling \PartialLayer $c$ times. The memory complexities come from \cref{theorem:missing}.
\end{proof}

Now, if we use $B = n^\delta$ and increase $c$ slightly, it is not hard to see that \cref{alg:mpc:orientation} can be used to attain a full layer assignment.
However, the global space would be $\tO(n^{1+\delta} + m)$. 
To improve this, we use budget boosting to bring the global space down to $\tO(n + m)$.
The main idea is that we can start with smaller $B$ and as we remove more vertices, we compute their layers and can increase $B$.
We now present the full algorithm.
\begin{theorem}
There is an MPC algorithm that, given a graph $G$, integer $t\geq 0$, and $\eps\in(0,1)$, computes a $(2+\eps)(t+1)$-approximate edge orientation with probability at least $1 - n^{-5}$.
The algorithm requires $O\rb{\log^{1/(t+2)} n \cdot \poly\log\log (n)}$ rounds, $O(n^\delta)$ local space, and $\tO(n + m)$ global space, where $\delta \in (0,1)$ is any fixed constant.
\end{theorem}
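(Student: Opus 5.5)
The proof has three parts: reduce to logarithmic arboricity, run \cref{alg:mpc:orientation} repeatedly while boosting the budget $B$, and orient edges from the resulting full layer assignment.

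\textbf{Reduction to logarithmic arboricity.} We first guess $\alpha(G)$ up to a $(1+\eps)$ factor by running, in parallel, one instance for each $k_0\in\{(1+\eps)^j\}$. That is $O(\log n/\eps)$ instances. We keep the smallest guess whose instance returns a full layer assignment; alternatively we can compute a constant-approximate density bound up front. This costs only an $O(\log n)$ factor in global memory, which is absorbed by $\tO$.

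For a guess $k_0\ge\lceil\alpha(G)\rceil$, apply \cref{lemma:edge-partition} to split $E$ into $H$ groups $G_1,\dots,G_H$. Whp each group has $\alpha(G_i)\le(1+\eps)k_0/H=O(\log n/\eps^2)$. On each $G_i$ independently we compute an orientation of out-degree $(t+1)k$ with $k=\lceil(2+2\eps)(1+\eps)k_0/H\rceil=O(\log n/\eps^2)$. The union of these orientations has out-degree at most $H(t+1)k\le(2+O(\eps))(t+1)k_0$, and rescaling $\eps$ gives the $(2+\eps)(t+1)$ approximation. The groups are edge-disjoint, so running them in parallel keeps the total memory at $\tO(n+m)$ as long as each instance uses $\tO(|V(G_i)|+|E(G_i)|)$.

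\textbf{Computing a full layer assignment on one group with budget boosting.} We work in phases $p=0,1,2,\dots$ with budgets $B_p$ and current vertex counts $n_p$. We choose $B_0=\poly\log n$, large enough that $B_0>k^{2^s}$ and $2^{2^s}=n^{o(1)}$ hold. In phase $p$ we run \cref{alg:mpc:orientation} on the remaining graph with budget $B_p$, offset its layers past all previously assigned layers, and delete the vertices it labels. By \cref{lemma:orientation:subroutine}, phase $p$ uses $n_pB_p^{1+o(1)}+O(m)$ global space and leaves
\[
n_{p+1}\le n_p(1+\eps/2)^{-200\log_{1+\eps}B_p}\le n_p/B_p^{50}.
\]
We then set $B_{p+1}=\min(B_p^{2},n^{\delta})$. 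Since $n_{p+1}B_{p+1}\le n_pB_p^{2-50}$, the global space never exceeds $\tO(n+m)$. The budget reaches $n^\delta$ after $O(\log\log n)$ phases, and a constant number of further phases (each shrinking $n_p$ by $n^{-50\delta}$) empty the graph.

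Each phase costs $O(\log^{1/(t+2)}B_p\cdot\log k\cdot\log\log B_p)\le O(\log^{1/(t+2)}n\cdot\poly\log\log n)$ rounds, using $\log k=O(\log\log n)$. Summing over $O(\log\log n)$ phases gives the claimed round bound. Local space is $B_p^{1+o(1)}\le n^{\delta+o(1)}$, which we make $O(n^\delta)$ by running with $\delta/2$ in place of $\delta$.

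\textbf{Correctness and expected obstacle.} Concatenating the shifted partial layer assignments gives a full layer assignment. Its out-degree is at most $(t+1)k$: each phase's assignment has that out-degree by \cref{lemma:orientation:out-degree} computed on the residual graph, and all neighbors deleted in earlier phases carry strictly smaller layers. Orienting every edge toward the larger layer, with ties broken arbitrarily, then bounds out-degrees. Failure probabilities come only from \cref{lemma:edge-partition} and the guessing, and a union bound gives $1-n^{-5}$ after adjusting constants.

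The main obstacle I expect is the budget-boosting bookkeeping. We must check that the hypothesis $k\ge(2+2\eps)\alpha$ of \cref{lemma:orientation:vertex-size} survives on induced residual subgraphs; it does, since $\alpha$ is monotone. We must also confirm that the $B_0>k^{2^s}$ requirement and the $2^{2^s}$ overhead remain $n^{o(1)}$ for every phase, including when $t$ grows to $\lceil\lg\lg n/\lg\lg\lg n\rceil$. Here $2^s\le 2\lg^{1/(t+2)}B$ keeps $k^{2^s}=2^{O(\log\log n\cdot\log^{1/(t+2)}n)}=n^{o(1)}$.
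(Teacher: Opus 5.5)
You follow the paper's route: guess $\alpha(G)$, split edges with \cref{lemma:edge-partition}, run \cref{alg:mpc:orientation} in phases with a growing budget, concatenate the shifted layers, and orient toward larger layers. The one deviation is that you drop the paper's initial direct peeling and start boosting at $B_0=\poly\log n$, absorbing the $n\cdot\poly\log n$ memory of phase $0$ into $\tO$. That starting point does not work.

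When $k=\Theta(\log n/\eps^2)$ (e.g.\ whenever $H>1$), the subroutine sets $L=\lfloor \lg^{(t+1)/(t+2)}(B_0)/(4\lg(2k))\rfloor$. For $B_0=(\log n)^{c}$ and constant $t$, the numerator is $(c\lg\lg n)^{(t+1)/(t+2)}=o(\lg\lg n)$, while the denominator is at least $4\lg\lg n$. So $L=0$ for large $n$, whatever the constant $c$. Then $s$ and the repetition count $\lceil 200\log_{1+\eps}B_0/L\rceil$ are undefined and no layers are assigned. The shrinkage $n_1\le n_0/B_0^{50}$, on which your budget squaring rests, fails. Hence the budget cannot be raised without breaking the $\tO(n+m)$ memory bound. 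In fact, the last inequality in the proof of \cref{lemma:orientation:vertex-size} needs $L$ above a constant depending on $\eps$, not just $L\ge 1$. Your checks of $B_0>k^{2^s}$ and $2^{2^s}=n^{o(1)}$ do not detect this. The shortcut is fine for $t\approx\lg\lg n/\lg\lg\lg n$, where $(\lg k)^{1/(t+1)}=O(1)$ and a polylogarithmic budget already gives $L=\Theta(1)$. It fails in the constant-$t$ regime that the theorem also covers.

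The repair needs exactly the ingredient you removed. Take $B_0=2^{C(\lg\lg n)^2}$ with $C$ a sufficiently large constant depending on $\eps$. Since $(t+1)/(t+2)\ge 1/2$, this gives $\lg^{(t+1)/(t+2)}B_0\ge\sqrt{C}\,\lg\lg n$, hence $L$ above any required constant for every $t\ge 0$. This $B_0$ is superpolylogarithmic, so $nB_0$ is no longer $\tO(n)$, and you must first shrink the vertex set by a factor $B_0$. Run $\lceil\log_{1+\eps}B_0\rceil=O((\log\log n)^2/\eps)$ iterations of direct peeling: remove all vertices of current degree at most $k$, as in \cref{alg:peeling-orientation} with parameter $k/(2+2\eps)\ge\alpha(G_i)$. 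Each iteration takes $O(1)$ MPC rounds, creates out-degree at most $k\le(t+1)k$, and shrinks the vertex set by a factor $1+\eps$ by \cref{lemma:orientation:peeling}. After this, $n_0B_0\le n$, and your squaring schedule and remaining bookkeeping go through.

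The paper's own initial peeling, $\lceil 100\log_{1+\eps}k\rceil$ iterations with $B_1=k^{100}$, also leaves a polylogarithmic budget and hits the same $L=0$ problem for constant $t$. The same fix repairs it: lengthen the initial peeling to $\poly\log\log n$ iterations.
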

\begin{proof}
Let $\cAMPC$ be our MPC algorithm. We first describe $\cAMPC$ and then provide its analysis. \\

\noindent \textbf{Algorithm description:} $\cAMPC$ considers all $1 \leq k' \leq n$ in parallel using powers of $(1+\eps)$. For each parallel instance of $k'$, we do the following.

\textbf{Edge partitioning:} We partition the graph into $H = \lceil \frac{\eps^2 k'}{100\log n} \rceil$ groups. Then, we let $k = \frac{2(1+\eps)^2 k'}{H}$ and calculate an edge orientation with maximum out-degree $k$ for each group.

\textbf{Initial peeling:} For each group in parallel, we first perform $\lceil 100\log_{1+\eps} k\rceil$ iterations of peeling following \cref{alg:peeling-orientation}. Then, we let $B_1 = k^{100}$ and $U_1$ be the remaining vertex set.

\textbf{Budget boosting:} We run $O(\log \log n)$ phases. In phase $i$, we run \cref{alg:mpc:orientation} on the induced subgraph of $U_i$ to attain a partial layer assignment $\ell_i$.
To combine the partial layer assignments across the phases, we add an offset $L_i$ to the layers where
\[L_i = \lceil 100\log_{1+\eps} k\rceil + \sum_{j = 1}^{i - 1} \lceil 400\log_{1+\eps} B_j\rceil.\]
After phase $i$, we boost the budget by setting $B_{i + 1} = \min\rb{B_i^{100}, n^{\delta/2}}$.
Additionally, $U_{i + 1}$ is updated to the new remaining vertex set (vertices that are still in layer $\infty$).

Finally, we combine the edge orientations across all the groups. We consider the smallest $k'$ where the algorithm is able to compute a full layer assignment using $O(\log \log n)$ phases and does not go over the local and global space limits. The edge orientation computed for this $k'$ is our final edge orientation.\\

\noindent \textbf{Algorithm round and memory complexity:} We consider $k'$ that satisfies $\alpha(G) \leq k' \leq (1+\eps)\alpha(G)$.
After edge partitioning, $k = O(\log n)$ and $k$ is at least a $(2+2\eps)$ factor more than the minimum out-degree over all edge orientations of the groups.
Therefore, the memory and round complexity in \cref{theorem:missing} holds.
The initial peeling can be simulated in $\lceil 100\log_{1+\eps} k\rceil = O(\log \log n)$ rounds.
Now,
\[|U_1| \leq \frac{n}{(1+\eps)^{\lceil 100\log_{1+\eps} k\rceil}} \leq \frac{n}{k^{100}}\]
following the proof of \cref{lemma:orientation:peeling}.
Therefore, when using $B_1 = k^{100}$, $|U_1|\cdot B_1 = O(n)$.
Now, for each phase $i$, we run \cref{alg:mpc:orientation} on the induced subgraph of $U_i$.
Using \cref{lemma:orientation:subroutine}, this gives us an additional $\lceil 200\log_{1+\eps} B_i\rceil$ layers.
From the vertex set size drop property in  \cref{lemma:orientation:subroutine}, we have that
\[|U_{i+1}| \leq \frac{|U_i|}{(1+\eps/2)^{\lceil 200\log_{1+\eps} B_i\rceil}} \leq \frac{|U_i|}{B_i^{100}}.\]
Therefore, by setting $B_{i + 1} = \min\rb{B_i^{100}, n^{\delta/2}}$, we maintain that $|U_i|\cdot B_i = O(n)$.
Now, from \cref{lemma:orientation:subroutine}, \cref{alg:mpc:orientation} will take $O\rb{\log^{\frac{1}{t+2}}(B_i)\cdot \log^2 \log n}$ rounds and uses $\tO(B_i)$ local space with $\tO(|U_i|\cdot B_i + m) = \tO(n + m)$ global space.
It is not hard to see that after $j = O(\log \log n)$ total phases, $U_j = \emptyset$ given the conditions on $k$.

Therefore, this gives us a total of $O\rb{\log^{\frac{1}{t+2}}(B_i)\cdot \log^3 \log n}$ rounds.
Additionally, the local space is $O(n^{\delta})$ and the global space is $\tO(n + m)$.\\

\noindent\textbf{Algorithm approximation:} For each group, we compute a layer assignment with out-degree $(t+1)k$ using \cref{lemma:orientation:out-degree}.
From \cref{lemma:edge-partition}, each group has an edge orientation with a maximum out-degree of $k$.
Therefore, when combining the edge orientations, we attain a full edge orientation with maximum out-degree $2(1+\eps)^2(t+1) k'$.
We know there exists $k'$ that satisfies $\alpha(G) \leq k' \leq (1+\eps)\alpha(G)$, so this gives us a $2(1+\eps)^3(t+1)$-approximate edge orientation.
\end{proof}
 
\section{Coloring} \label{sec:coloring}
In this section, we show how to use our edge orientation algorithm to also produce a density-dependent coloring of the graph. Our approach follows that of \cite{ghaffari2025density}, but we provide details for completeness.

Our previous edge orientation algorithm creates a level partition of the vertices of the graph $H_1 \sqcup H_2\sqcup \ldots \sqcup H_L$. If the out-degree of all vertices is bounded by $k$, then we can color all the vertices of the graph using $k + 1$ colors: going in order from the highest layer to the lowest layer, assign each vertex the color that is different from all of its out-neighbors.
The state of the art \LOCAL algorithms from \cite{halldorsson2022near, ghaffari2024near} are able to color the vertices in each layer in $O(\log^{1.67}\log n)$ \LOCAL rounds.

\subsection{Vertex partitioning}
As in \cref{sec:orientation}, we would like to reduce the coloring problem on the input graph to the coloring problem on a collection of graphs, each with arboricity of $O(\log n)$, so that we are able to use our exponentiate and prune algorithm.
Since we are coloring vertices, it is natural that a vertex appears in exactly one of the graphs in a collection.
To achieve this, we partition vertices, rather than edges as in \cref{sec:orientation}, and consider induced subgraph on each partition.
The following standard partitioning lemma, used also in \cite{ghaffari2025density}, gives the required reduction.
\begin{lemma}
\label{lemma:vertex-partition}
Let $G$ be a graph, $k \geq \lceil \alpha(G) \rceil$, and $\eps\in(0, 1)$. Let $H = \lceil \frac{\eps^2 k}{100\log n} \rceil$ and $G_1, G_2,\ldots, G_H$ be the vertex-induced subgraphs from partitioning the vertices into $H$ groups uniformly at random. Then, whp $\max_{i\in[H]} \alpha(G_i) \leq \frac{(1+\eps)k}{H} = O(\log n/\eps^2)$.
\end{lemma}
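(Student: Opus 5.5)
We follow the edge-partitioning argument of \cref{lemma:edge-partition}, with the random assignment moved from edges to vertices. If $H = 1$, then $k \le 100\log n/\eps^2$ and $G_1 = G$, so the bound holds trivially. Hence assume $H > 1$, i.e.\ $k > 100\log n/\eps^2$, which also gives $H \le \eps^2 k/(50\log n)$, and so $k/H \ge 50\log n/\eps^2$.

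Fix an orientation $D$ of $G$ with maximum out-degree $\alpha(G) \le k$. Each $G_i$ inherits an orientation $D_i$ by restricting $D$ to the edges inside the $i$-th vertex group. It suffices to show that, whp, every vertex has at most $(1+\eps)k/H$ out-neighbours (under $D$) in its own group, since then $\alpha(G_i) \le \max_v \deg^{out}_{D_i}(v)$ gives the claim.

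Fix a vertex $v$ and condition on the group $g(v)$ that $v$ lands in. The out-neighbours $u$ of $v$ are assigned groups independently of each other and of $v$, and each has $g(u) = g(v)$ with probability exactly $1/H$. Therefore $X_v = |\{u : (v,u)\in D,\ g(u)=g(v)\}|$ is a sum of independent indicators with mean at most $k/H$. To apply the Chernoff bound with a clean mean, pad the sum with dummy independent Bernoulli$(1/H)$ variables so that there are exactly $k$ terms; this only increases $X_v$ stochastically and makes $\mu = k/H$.

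\cref{lemma:chernoff}(\ref{item:delta-at-most-1-ge}) with $\delta = \eps$ then gives
\[
\prob{X_v \ge (1+\eps)k/H} \le \exp(-\eps^2 k/(3H)) \le \exp(-50\log n/3) \le n^{-6}.
\]
A union bound over all $n$ vertices yields failure probability at most $n^{-5}$. Finally, $(1+\eps)k/H \le (1+\eps)\cdot 100\log n/\eps^2 \cdot \tfrac{k}{k}$ up to the ceiling, which is $O(\log n/\eps^2)$ because $H \ge \eps^2 k/(100\log n)$.

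The only point needing care, and the one I expect to be the main obstacle, is the independence. Out-neighbour events are correlated through $v$'s own group, which is why we condition on $g(v)$ first. After that conditioning everything is routine.
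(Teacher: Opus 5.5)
Your proof is correct and follows the same route as the paper: fix an orientation with out-degree at most $k$, apply Chernoff with $\delta=\eps$ to the number of out-neighbours landing in $v$'s own group to get failure probability $n^{-6}$, and take a union bound over vertices. Conditioning on $g(v)$ to get independence and padding the sum so the mean is exactly $k/H$ are steps the paper leaves implicit, and you handle both correctly.
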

\begin{proof}
If $H = 1$, $\alpha(G)$ is bounded by $\frac{100\log n}{\eps^2}$.
So, we assume that $H > 1$. 
Fix an orientation $A$ of the edges of $G$, such that the maximum out-degree of $A$ is $k$.
For a vertex $v$, let $G_i$ be the graph containing $v$.
Then, we have that
\[\E{\deg_{G_i}^{out}(v)} \leq \frac{k}{H}\]
which is the expected number of out-neighbors of $v$ within $A$ that are also in $G_i$.
Since vertices are assigned to partitions uniformly at random, $\deg_{G_i}^{out}(v)$ is a sum of independent $0/1$ random variables.
Using the Chernoff bound, we see that
\[\prob{\deg_{G_i}^{out}(v) \geq (1+\eps)k/H} \leq \exp\rb{-\eps^2k/(3H)} \leq n^{-6}.\]
Therefore, taking the union bound over all $v$, we have that each vertex will have at most $(1+\eps)k/H$ out-neighbors in the same group with probability at least $1 - n^{-5}$.
This gives us our desired result.
\end{proof}

\subsection{From vertex partitioning to vertex coloring}
Our vertex coloring algorithm uses edge orientation, so we first discuss how that algorithm affects the coloring.
The approximation/round-complexity tradeoff our edge orientation algorithm achieves is parametrized by $t$. 
In the rest of this section, fix $t$ to be any valid parameter in $O(\log n)$ for which our edge orientation algorithm computes $(2+\eps) \cdot (t+1)$ approximation.
Our goal is to obtain an $O(t\cdot \alpha(G))$ coloring.  
We randomly partition the vertices into $H$ groups as in \cref{lemma:vertex-partition} and color the induced subgraphs using disjoint palettes.  
With high probability, each induced subgraph has arboricity $O(\log n)$.

We now describe how the coloring is performed.
Fix one graph $G_i$ from the vertex partition, and let
\[
    V(G_i) = H_1\sqcup H_2\sqcup \cdots \sqcup H_L
\]
be the layer partition produced by the orientation algorithm. 
Let $d_i$ be the resulting out-degree bound.
Since $\alpha(G_i)=O(\log n)$ with high probability, we have $d_i=O(t\log n)$.

We assign to $G_i$ a palette of $d_i+1$ colors that is disjoint from the palettes assigned to the other parts.
Over all parts, this gives $O(t \cdot \alpha(G))$ colors.
We color the layers of $G_i$ starting from the last layer $H_L$ and proceeding downwards to $H_1$.
Suppose that all layers above $H_j$ have already been colored.  
Then coloring $H_j$ reduces to a degree-plus-one list-coloring instance: each vertex removes from its list the colors already used by its neighbors in higher layers,
and the remaining task is to color the graph induced by $H_j$. 
We use the LOCAL list-coloring algorithms of \cite{halldorsson2022near, ghaffari2024near}, which run in $O(\log^{1.67}\log n)$ LOCAL rounds.

To simulate several layer-coloring steps at once, we use directed graph
exponentiation as in \cite{ghaffari2025density}.
We use the auxiliary digraph in which every cross-layer edge points upward and every same-layer edge is kept in both directions.
Its out-degree is at most $d_i$.
Assume that all layers above $H_j$ have already been colored, and consider a block of $b$ consecutive lower layers.
To color this block locally, it suffices to gather directed neighborhoods of radius
\[
    O(b\log^{1.67}\log n),
\]
since the $O(\log^{1.67}\log n)$-round LOCAL coloring routine is applied once per layer, starting from the highest layer of the block and moving downward.

As the directed out-degree is $d_i = O(t\log n)$ and $t = O(\log n)$, such a
neighborhood has size
\[
    d_i^{O(b\log^{1.67}\log n)} = 2^{O(b\log^{2.67}\log n)}.
\]
Thus we choose
\[
    b=\Theta\rb{\frac{\log n}{\log^{2.67}\log n}}
\]
with a sufficiently small constant, so that these neighborhoods fit in
local memory.
Processing the blocks from higher layers to lower layers colors all vertices with only a $\poly(\log\log n)$ MPC-round overhead.

Thus, the coloring step preserves the asymptotic round-complexity tradeoff
of the orientation algorithm and uses $(2+\eps)(t + 1)\alpha(G) + 1$ colors.

\section{Densest subgraph} \label{sec:densest-subgraph}
In this section, we demonstrate how to apply our exponentiate-and-prune algorithm to find an approximate densest subgraph.

\subsection{Base algorithm}
Our main goal is to simulate the fixed threshold peeling algorithm from \cite{mitrovic2026new}, which we present in \cref{alg:peeling-DS}.

\begin{algorithm}
\caption{(\dsPeeling) Finds $(2+\eps)$-approximate densest subgraph}
\label{alg:peeling-DS}
\begin{algorithmic}[1]
\medskip
\Statex\textbf{Input:} graph $G$, parameters $\eps$, $k$
\Statex\textbf{Output:} a subset of vertices of $G$
\medskip
\hrule
\medskip
\State $\ell(v) \gets \infty$ for all $v\in V(G)$
\For{$i = 1$ to $\lceil\log_{1+\eps} n\rceil$}

    \State $A\gets$ all vertices $v\in V(G)$ with $d_{G}(v) < k$

    \If{$|A| \leq \frac{\eps}{1+\eps}|V(G)|$}
        \Return $V(G)$
    \EndIf

    \For{each $v\in A$}
        $\ell(v)\gets i$
    \EndFor
    
    \State Remove vertices in $A$ from $G$
\EndFor
\end{algorithmic}
\end{algorithm}

\cref{alg:peeling-DS} assigns layers to vertices, and the layers have the following properties.
\begin{lemma}
\label{lemma:DS-1}
For all $v\in V(G)$ with $1 < \ell(v) < \infty$, it holds that
\[|\{u\in N_G(v) : \ell(u) \geq \ell(v) - 1\}| \geq k.\]
\end{lemma}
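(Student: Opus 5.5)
Fix $v$ with $1 < \ell(v) < \infty$, and let $i = \ell(v)$. The plan is to read off the degree of $v$ in the graph as it stood at the start of iteration $i-1$.

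First, observe which vertices are still present at that moment. By the definition of \dsPeeling, a vertex is removed in iteration $j$ exactly when it receives label $j$. So the current vertex set at the start of iteration $i-1$ is exactly $V_{i-1} = \{u : \ell(u) \geq i-1\}$. This set includes the vertices that end with label $\infty$, because a vertex that is never removed keeps $\ell(u) = \infty$. Since $\ell(v) = i \geq 2$, iteration $i$ was actually executed. In particular, iteration $i-1$ did not trigger the early return, and $v$ was present throughout iteration $i-1$ but was not placed in the set $A$ of that iteration.

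Second, apply the peeling rule. Because $v \notin A$ in iteration $i-1$, its current degree satisfies $d_{G_{i-1}}(v) \geq k$, where $G_{i-1}$ is the graph induced by $V_{i-1}$. Its neighbors in $G_{i-1}$ are precisely the $u \in N_G(v)$ with $\ell(u) \geq i-1 = \ell(v)-1$. This gives the claimed bound.

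The only delicate point is bookkeeping rather than a real obstacle. We need that vertices removed in iteration $j$ are exactly those assigned label $j$, and that the return statement in an iteration happens before any labels are assigned in that iteration. Both follow directly from the order of the lines in \cref{alg:peeling-DS}. The hypothesis $\ell(v) > 1$ is needed only so that iteration $\ell(v)-1$ exists; for $\ell(v) = 1$ the statement can fail.
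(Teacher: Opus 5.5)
Your proof is correct and is essentially the paper's argument. The paper phrases it by contradiction: if fewer than $k$ neighbors had $\ell(u) \geq \ell(v)-1$, then $v$ would have been peeled in iteration $\ell(v)-1$. Your direct version also makes explicit the bookkeeping the paper leaves implicit, namely that the vertex set at the start of iteration $\ell(v)-1$ is exactly $\{u : \ell(u) \geq \ell(v)-1\}$ and that the early-return check precedes label assignment.
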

\begin{proof}
Suppose that $|\{u\in N_G(v) : \ell(u) \geq \ell(v) - 1\}| < k$. This implies that at iteration $\ell(v) - 1$, the degree of $v$ was less than $k$.
Therefore, the algorithm would peel $v$ and its layer would be $\ell(v) - 1$ instead of $\ell(v)$.
This is a contradiction.
\end{proof}

\begin{lemma}
\label{lemma:DS-2}
Let $S^*$ be a densest subgraph of $G$. If $k \leq \rho^*(G)$, then $\ell(v) = \infty$ for all $v \in S^*$.
\end{lemma}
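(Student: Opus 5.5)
We argue by contradiction, using the standard fact that every vertex of a densest subgraph has large degree inside it. Suppose some vertex of $S^*$ receives a finite layer. Since \dsPeeling assigns layers in increasing order of iterations, let $i$ be the first iteration in which some vertex of $S^*$ is placed in the set $A$ and peeled. Let $v\in S^*$ be such a vertex, with $\ell(v)=i$. By the minimality of $i$, no vertex of $S^*$ was removed before iteration $i$. Hence at the start of iteration $i$ the current graph contains all of $S^*$, and $d_G(v)\geq \deg_{S^*}(v)$, where $G$ denotes the current graph.

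The key step is the observation that every vertex of a densest subgraph has internal degree at least $\rho^*(G)$. To see this, suppose $\deg_{S^*}(v)<\rho^*$, and compare with the set $S^*\setminus\{v\}$. Its density is
\[
\frac{|E(S^*)|-\deg_{S^*}(v)}{|S^*|-1} > \frac{|E(S^*)|-\rho^*}{|S^*|-1} = \frac{\rho^*|S^*|-\rho^*}{|S^*|-1} = \rho^*,
\]
which contradicts the maximality of $\rho^*$. The case $|S^*|=1$ is trivial, since then $\rho^*=0$ and the bound $\deg_{S^*}(v)\geq 0$ holds automatically. Note that $\rho^*$ here is the density of the original graph; the current graph only has fewer edges, but it still contains all of $S^*$, so the bound $\deg_{S^*}(v)\geq \rho^*$ remains valid.

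Combining the two steps gives $d_G(v)\geq \deg_{S^*}(v)\geq \rho^*(G)\geq k$. So $v$ does not satisfy the peeling condition $d_G(v)<k$ in iteration $i$, and therefore $v\notin A$. This contradicts the choice of $v$.

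We also account for early termination. If the algorithm returns at line 4, vertices that were never peeled keep $\ell=\infty$, so the claim still holds. After the final iteration, any unpeeled vertex likewise keeps $\ell=\infty$. The only step needing care is the exchange argument for minimum internal degree; in particular it requires $|S^*|\geq 2$, which we handle with the trivial case above. Otherwise the argument is routine.
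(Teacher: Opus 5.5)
Your proof is correct and follows the same route as the paper. Both use the exchange argument that every vertex of $S^*$ has internal degree at least $\rho^*(G)$, and then observe that such vertices never drop below the threshold $k$. Your first-peeled-vertex argument and your handling of the $|S^*|=1$ case just make explicit steps the paper leaves implicit.
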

\begin{proof}
Let $G^*$ be the induced subgraph of $S^*$.
First, we prove that if $S^*$ is a densest subgraph, then the degree of all the vertices in $G^*$ must be at least $\rho^*(G)$.
Consider any $v\in S^*$.
Then, because $S^*$ is a densest subgraph, removing $v$ from the subgraph $G^*$ cannot increase the density.
So, we have
\[\frac{|E(G^*)|}{|S^*|} \geq \frac{|E(G^*)| - \deg_{G^*}(v)}{|S^*| - 1} \implies \deg_{G^*}(v) \geq \frac{|E(G^*)|}{|S^*|} = \rho^*(G).\]
Using this, we have that none of the vertices in $S^*$ can be peeled when $k\leq \rho^*(G)$ and, therefore, they all have layer $\infty$.
\end{proof}

Using these properties, we attain the following approximation result about \cref{alg:peeling-DS}.

\begin{lemma}
Let $G$ be a graph, $\eps\in(0, 1)$, and $k\leq \rho^*(G)$. Then, \cref{alg:peeling-DS} returns a subgraph with density at least $\frac{k}{2(1+\eps)}$.
\end{lemma}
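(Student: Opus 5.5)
The plan is to show two things: that \cref{alg:peeling-DS} actually returns at some point, and that whatever set it returns has high density.

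\textbf{The algorithm returns within the loop.} Suppose it never returns during the $\lceil\log_{1+\eps} n\rceil$ iterations. Then every iteration removes a set $A$ with $|A| > \frac{\eps}{1+\eps}|V(G)|$. So the current vertex set shrinks by a factor greater than $1+\eps$ each iteration: if $V$ is the current set, the new size satisfies $|V\setminus A| < |V|/(1+\eps)$. After $\lceil\log_{1+\eps} n\rceil$ iterations the set would be empty, or at least strictly below $1$ vertex.

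On the other hand, \cref{lemma:DS-2} says that, since $k\le\rho^*(G)$, no vertex of a densest subgraph $S^*$ is ever peeled. Hence the current set always contains $S^*\neq\emptyset$. This contradicts the shrinking argument, so the algorithm returns some current set $V(G)$ at an iteration where $|A|\le \frac{\eps}{1+\eps}|V(G)|$. (The nonemptiness of $S^*$ needs $\rho^*(G)>0$, i.e.\ $G$ has an edge. If $\rho^*(G)=0$ the claim is trivial, since then $k\le 0$.)

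\textbf{The returned set $W$ is dense.} Let $W$ be the returned set and $A$ the set of low-degree vertices at that iteration. Each $v\in W\setminus A$ has $\deg_{G[W]}(v)\ge k$, and $|W\setminus A|\ge |W|\,(1-\frac{\eps}{1+\eps})=\frac{|W|}{1+\eps}$. Counting edges via degrees gives
\[
2|E(W)|=\sum_{v\in W}\deg_{G[W]}(v)\ge k\cdot\frac{|W|}{1+\eps}.
\]
Therefore $\rho(W)=|E(W)|/|W|\ge \frac{k}{2(1+\eps)}$, as claimed.

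\textbf{Main obstacle.} The only delicate point is the termination argument. It requires ruling out the loop finishing without a return statement, and this is exactly where \cref{lemma:DS-2} is needed: it guarantees the surviving set never becomes empty. The rest is a simple counting argument.
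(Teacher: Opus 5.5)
Your proposal is correct and follows essentially the same route as the paper. Termination comes from the strict $(1+\eps)$-factor shrinkage in every non-returning iteration, which contradicts \cref{lemma:DS-2} (so $S^*$ always survives), and density comes from the same degree-counting bound $2|E(W)| \ge k|W|/(1+\eps)$ on the returned set, with slightly more care than the paper about the strict inequalities and the degenerate case $\rho^*(G)=0$.
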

\begin{proof}
First, we observe that for every iteration that \cref{alg:peeling-DS} does not return, the vertex set of $G$ decreases by at least a factor of $(1+\eps)$.
So, after more than $\log_{1+\eps} n$ iterations, the graph will become empty.
However, \cref{lemma:DS-2} shows that the graph will not become empty from peeling when $k\leq \rho^*(G)$, so the algorithm will return a subgraph.

Now, to lower bound the density of the subgraph returned, we loosely use the property in \cref{lemma:DS-1}. Let $S$ be the returned subgraph. When $S$ is returned, at least a $1/(1+\eps)$ fraction of its vertices have degree at least $k$. This gives us that
\[\rho(S) \geq \frac{\frac{k}{2(1+\eps)}\cdot |S|}{|S|} = \frac{k}{2(1+\eps)}\]
which is our desired density lower bound.
\end{proof}

Therefore, if $k$ is chosen close enough to $\rho^*(G)$, \cref{alg:peeling-DS} computes a $(2+\eps)$-approximation of the densest subgraph.

\subsection{Preprocessing}
\label{sec:ds-sampling}
In order to properly use our exponentiate and prune algorithm, we need to sample the graph so that an approximate densest subgraph is preserved and the arboricity of the graph is $O(\log n)$.
Our uniform sampling will consist of selecting each edge of the graph independently with probability $p = \Theta\rb{\frac{\log n}{D^*}}$ where $D^*$ is the density of the densest subgraph.
This leads us to the following lemma.
\begin{lemma}
\label{lemma:sampling-DS}
Let $G$ be a graph, $D^* = \rho^*(G)$, and $\eps \in(0, 1/2)$. For a given $c \geq 1$, let $p = \min\rb{1, \frac{20c\log n}{\eps^2 D^*}}$. Let $H$ be a graph obtained from sampling each edge of $G$ independently with probability $p$. Then, whp the following hold:
\begin{itemize}
    \item[(A)] Let $S$ be a subgraph of $G$ with density at least $D^*/c$. Then, the density of $S$ in $H$ is in $[(1-\eps)p\rho_G(S), (1+\eps)p\rho_G(S)]$.

    \item[(B)] Let $S$ be a subgraph of $G$ with density at most $(1-2\eps)D^*/c$. Then, the density of $S$ in $H$ is at most $(1-\eps)pD^*/c$.
\end{itemize}
\end{lemma}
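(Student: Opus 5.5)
The proof is a Chernoff bound for each vertex set, followed by a union bound over all sets. The number of candidate sets is exponential, so each set needs a failure probability of roughly $n^{-\Omega(|S|)}$. This is possible because a dense set has many edges in expectation after sampling. If $p=1$, then $H=G$ and both statements are immediate, since $(1-\eps)pD^*/c\ge (1-2\eps)D^*/c$. We therefore assume $p=\frac{20c\log n}{\eps^2 D^*}<1$.

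\textbf{Part (A).} Fix $S$ with $s=|S|$ and $\rho_G(S)\ge D^*/c$. The number of edges of $S$ that survive in $H$, $X_S=|E_H(S)|$, is a sum of independent Bernoulli variables with mean $\mu=p|E_G(S)|=p\,\rho_G(S)\,s\ge \frac{20\log n}{\eps^2}\,s$. By \cref{lemma:chernoff} (item~\ref{item:delta-at-most-1}) with deviation $\eps$,
\[
\prob{|X_S-\mu|\ge\eps\mu}\le 2\exp(-\eps^2\mu/3)\le 2n^{-6s}
\]
(with natural log, up to constants). There are at most $\binom{n}{s}\le n^s$ sets of size $s$, so the union bound over $s$ and over all sets of that size gives total failure probability $\sum_s 2n^{s}n^{-6s}\le n^{-4}$. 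Dividing $X_S$ by $s$ yields the two-sided density bound.

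\textbf{Part (B).} Here the mean $\mu=p\rho_G(S)s$ may be small, so a relative deviation bound is not appropriate. Instead, we bound the count against the fixed threshold $\tau=(1-\eps)pD^*s/c$, which satisfies $\tau\ge\mu/(1-2\eps)\cdot(1-\eps)\ge(1+\eps)\mu$. We can increase $\mu$ to $\mu'=(1-2\eps)pD^*s/c$ by monotonicity: add dummy independent Bernoulli variables, which only makes the upper tail larger. Then $\tau\ge(1+\eps)\mu'$ and $\mu'\ge\frac{20(1-2\eps)\log n}{\eps^2}s$. Item~\ref{item:delta-at-most-1-ge} gives a failure probability of $\exp(-\eps^2\mu'/3)$.

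\textbf{Main obstacle.} The delicate step is making the constants work in Part (B). The factor $1-2\eps$ shrinks the exponent, so the union bound over $n^s$ sets needs $\eps^2\mu'/3\ge 2s\ln n$. Since $\eps<1/2$ the factor $1-2\eps$ can be arbitrarily small, so the fixed constant $20$ does not suffice uniformly. If necessary, we split $S$ by whether $\mu$ is tiny and use the form $\Pr[X\ge \tau]\le 2^{-\tau}$ for $\tau\ge 6\mu$.

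A cleaner route avoids this entirely. Note $\tau-\mu\ge\eps pD^*s/c$. We apply an additive Chernoff bound with deviation $\tau-\mu$ against the padded mean $pD^*s/c$, which gives an exponent $\ge \eps^2(pD^*s/c)/3\ge \frac{20}{3}s\log n$. This covers the $n^s$ sets, and another union bound over $s$ finishes.
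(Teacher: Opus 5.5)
Your proof is correct and follows the paper's argument: for each fixed vertex set $S$ you get failure probability $n^{-6|S|}$ from a Chernoff bound and then union-bound over the at most $n^{|S|}$ sets of each size, and your final route for (B) (deviation $\eps pD^*|S|/c$ with exponent $\eps^2 pD^*|S|/(3c)$) is exactly the bound the paper states, except that the paper measures the deviation from the mean upper bound $(1-2\eps)pD^*|S|/c$ rather than from the true mean. The one thing to tidy is how you justify that additive bound: literally padding the mean up to $pD^*|S|/c$ cannot give it, because the threshold $(1-\eps)pD^*|S|/c$ lies below that padded mean; instead it follows from the moment-generating-function proof of Chernoff, or from the second item of \cref{lemma:chernoff} applied with mean bound $(1-2\eps)pD^*|S|/c$ and $\delta=\eps/(1-2\eps)$, switching to the standard $\exp(-\delta\mu/3)$ form when $\eps>1/3$ makes $\delta>1$.
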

\begin{proof}
If $p=1$, everything holds. Therefore, we assume $p = \frac{20c\log n}{\eps^2 D^*}$. \\

\noindent \textbf{Proof of (A).} Fix any integer $k$ with $1\leq k\leq n$. Consider any subgraph $S$ of $G$ with exactly $k$ vertices and density at least $D^*/c$.
Then,
\[\E{|E_H(S)|} = pk\rho_G(S) \geq pkD^*/c.\]
Therefore, using the Chernoff bound, we have that
\[\prob{||E_H(S)| - pk\rho_G(S)| \geq \eps pk\rho_G(S)} \leq 2\exp\rb{-\eps^2pkD^*/(3c)} \leq n^{-6k}.\]
This implies that the density of $S$ in $H$ is in $[(1-\eps)p\rho_G(S), (1+\eps)p\rho_G(S)]$ with probability at least $1 - n^{-6k}$.
Taking the union bound over all possible $k$ and subgraphs $S$, we have that (A) holds with probability at least $1 - n^{-4}$. \\

\noindent \textbf{Proof of (B).} Fix any integer $k$ with $1\leq k\leq n$. Consider any subgraph $S$ of $G$ with exactly $k$ vertices and density at most $(1-2\eps)D^*/c$. Then,
\[\E{|E_H(S)|} = pk\rho_G(S) \leq (1-2\eps)pkD^*/c.\]
Therefore, using the Chernoff bound, we have that
\[\prob{|E_H(S)| - (1-2\eps)pkD^*/c \geq \eps pkD^*/c} \leq \exp\rb{-\eps^2pkD^*/(3c)} \leq n^{-6k}.\]
This implies that the density of $S$ in $H$ is at most $(1-\eps)pD^*/c$ with probability at least $1 - n^{-6k}$.
Taking the union bound over all possible $k$ and subgraphs $S$, we have that (B) holds with probability at least $1 - n^{-4}$. 
\end{proof}

In other words, a constant approximation of the densest subgraph in the sampled graph reflects a similar constant approximation of the densest subgraph in the original graph.

\subsection{Combining vertex layers}\label{sec:ds:combining}
Now, we want our exponentiate and prune algorithm to produce approximately the same layers as \cref{alg:peeling-DS} in order to compute an approximate densest subgraph in sub-linear MPC.
For our analysis to work, we take the \textit{maximum} over all layer assignments from various trees for each vertex $v\in V(G)$ instead of the minimum that edge orientation uses.
It is important to note that our resulting function from combining the partial layer assignments of the different trees is \textit{not} necessarily a partial layer assignment because we are taking the maximum instead of the minimum.
We present the algorithm in \cref{alg:mpc:fulllayer-DS}.

\begin{algorithm}
\caption{(\PartialLayerDS) Assigns layers for entire graph}
\label{alg:mpc:fulllayer-DS}

\begin{algorithmic}[1]
\medskip
\Statex\textbf{Input:} graph $G$, $B, k, L, s, t\in\mathbb{N}$

\Statex\textbf{Output:} function $\ell : V(G) \rightarrow [L] \cup \{\infty\}$
\medskip
\hrule
\medskip
\State $(T_v, \map_v) \gets \Expo\rb{G, k, s, t, B}$ for all $v\in V(G)$
\For{each $v\in V(G)$}
    \State $\ell_v \gets \PartialLayerTree\rb{G, T_v, \map_v,(t + 1)k, L}$
    \For{each $x \in V(T_v)$}
        \State $d_{T_v}(x)\gets$ depth of $x$ in $T_v$
        \If{$\ell_v(x) = \infty$ and $2^{s(t+1)} - d_{T_v}(x) \leq L - 1$ and $\map_v(x)$ is active}
            \State $\ell_v(x) = 2^{s(t+1)} - d_{T_v}(x) + 1$
        \EndIf
    \EndFor
\EndFor
\For{each $u\in V(G)$}
    \State $\ell(u) \gets \max\rb{\{\ell_v(x) : \text{active }v\in V(G), x\in V\rb{T_v}, \map_v(x) = u\}}$
\EndFor
\State \Return $\ell$
\end{algorithmic}
\end{algorithm}

To properly implement taking the maximum over all partial layer assignments of various trees, we have to be careful of vertices with layer $\infty$.
If we naively take the maximum, it is possible for many vertices to have layer $\infty$ that should not due to how the local pruned tree views only have height $2^{s(t+1)}$.
To fix this, we decrease the layer of vertices if their height $h$ in the local pruned tree is too low.
This provides a more accurate certification of the layer of the vertex because the height of the vertex is an upper bound on the layer of a vertex, if it is not $\infty$.
Therefore, if the layer is $\infty$, we want to set the layer of the vertex to be $h + 1$.
This is reflected in lines $4$ through $7$ in \cref{alg:mpc:fulllayer-DS}.
Note that we compute the "height" of a vertex using the depth of the vertex instead.
In the analysis below, we want the height of children vertices to be exactly one less than the height of the parent, so we define the height in this way.

We now prove the following lemmas about \PartialLayerDS that reflect the properties in \cref{lemma:DS-1} and \cref{lemma:DS-2} from \cref{alg:peeling-DS}.
\begin{lemma}
\label{lemma:ds:k-lowerbound}
Let $G$ be a graph, $B, k, L, s, t\in \mathbb{N}$, and $\ell: V(G) \rightarrow [L] \cup \{\infty\}$ be the function returned by $\PartialLayerDS(G, B, k, L, s, t)$. Then, for every active $v\in V(G)$ with $1 < \ell(v) \leq L$, it holds that
\[|\{u\in N_G(v) : \ell(u) \geq \ell(v) - 1\}| \geq k,\]
and for every active $v\in V(G)$ with $\ell(v) = \infty$, it holds that
\[|\{u\in N_G(v) : \ell(u) \geq L\}| \geq k.\]
\end{lemma}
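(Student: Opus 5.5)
The plan is to fix an active vertex $v$ and a \emph{witness} for $\ell(v)$: an active root $w$ and a node $x\in V(T_w)$ with $\map_w(x)=v$ and $\ell_w(x)=\ell(v)$, where $\ell_w$ is the assignment after the modification in lines 4--7 of \cref{alg:mpc:fulllayer-DS}. Two facts drive everything. First, the children of $x$ in $T_w$ map to \emph{distinct} neighbors of $v$, because $\map_w$ is a valid mapping. Second, since $\ell$ is a maximum over all occurrences in trees of active roots, and $w$ is such a root, every child $c$ of $x$ satisfies $\ell(\map_w(c))\ge \ell_w(c)$. So it suffices to find at least $k$ children $c$ of $x$ with $\ell_w(c)\ge \ell(v)-1$, and, in the second statement, with $\ell_w(c)\ge L$.

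To produce these children I will use the peeling rule of \PartialLayerTree with threshold $a=(t+1)k$. If $x$ survives iteration $j$, then just before that iteration $x$ has at least $(t+1)k-\Missing(x)$ children still present. If moreover $x$ has depth $<2^{s(t+1)}$, then (P1) of \cref{theorem:missing} applies, because $v$ is active, and gives $\Missing(x)\le tk$. Hence at least $k$ children are still present before iteration $j$, i.e.\ each has pre-modification value $\ge j$, possibly $\infty$. I then split into cases according to how the value $\ell_w(x)$ arose.
\begin{itemize}
\item \emph{Case $\ell(v)=\infty$.} The value was not modified, so either $x$ has depth exactly $2^{s(t+1)}$, or $2^{s(t+1)}-d_{T_w}(x)\ge L$. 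The first possibility must be excluded or handled separately; I discuss this below. In the second, $x$ survives all $L$ iterations and has depth $<2^{s(t+1)}$, so at least $k$ children have pre-modification value $\ge L$. A child that was finite keeps its value. A child that was $\infty$ is either left at $\infty$ or reset to $2^{s(t+1)}-d_{T_w}(x)\ge L$. In every case its final value is $\ge L$.
\item \emph{Case: $x$ had value $\infty$ and was reset to $i=2^{s(t+1)}-d_{T_w}(x)+1\le L$.} Here $x$ again survives all $L$ iterations with depth $<2^{s(t+1)}$, so at least $k$ children were never peeled before iteration $L$. Their final values are either $\ge L\ge i-1$, or, after the reset, exactly $2^{s(t+1)}-d_{T_w}(c)+1=i-1$.
\item \emph{Case: $x$ was genuinely peeled at iteration $i=\ell(v)>1$.} Then $x$ survived iteration $i-1$, so at least $k$ children are still present before iteration $i-1$, i.e.\ have pre-modification value $\ge i-1$. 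Children that were peeled keep their value. Children with value $\infty$ are either left at $\infty$ or reset to $2^{s(t+1)}-d_{T_w}(x)$. That reset value must be compared with $i-1$.
\end{itemize}

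The main obstacle is a height bound that both the first and last cases need. I must show that a tree node peeled at a finite iteration $i$ satisfies $d_{T_w}(x)\le 2^{s(t+1)}-i+1$, so that every reset child value is $\ge i-1$ and $x$ has depth $<2^{s(t+1)}$. I must also rule out, or separately handle, deepest nodes, which have no children and for which (P1) gives nothing, as witnesses. My first attempt is an induction on $j$ from the bottom of the tree: a node at depth $2^{s(t+1)}-h$ that is still present after iteration $h+1$ must have a child still present after iteration $h$. Since leaves (depth $2^{s(t+1)}$) only get peeled in iteration~1 or never, nodes high in the tree cannot reach large finite layers. If deep leaves that are never peeled break this induction, I would instead exploit the freedom in the choice of witness. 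Because $\ell(v)$ is a maximum over all occurrences, I would pick the occurrence of smallest depth attaining the maximum, and argue that any deeper occurrence with a larger finite peel layer would contradict the max being attained at a shallower node. In particular, the occurrence of $v$ as the root of its own tree $T_v$ has depth $0$ and always contributes.
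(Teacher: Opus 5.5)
Your skeleton matches the paper's proof: the witness $(w,x)$ with $\ell_w(x)=\ell(v)$, the split into reset, genuinely peeled, and $\infty$ witnesses, the count of at least $k$ live children via (P1), and the transfer to $\ell$ through the maximum over active roots together with injectivity of $\map_w$ on children. The gap is the step you flag as the main obstacle. You never establish that a node $x$ removed by \PartialLayerTree{} in iteration $i$ satisfies $d_{T_w}(x)\le 2^{s(t+1)}-i+1$, and neither route you sketch works.

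\begin{itemize}
\item Your inductive statement is about survival rather than the time of removal, and it is false as stated. \Expo{} leaves childless nodes at intermediate depths, since leaves mapped to inactive vertices are never expanded. Such a node whose vertex has degree at least $(t+1)k$ is present after every iteration yet has no child at all.
\item Your fallback is circular. Whichever occurrence attaining the maximum you pick, you still need $2^{s(t+1)}-d_{T_w}(x)\ge \ell(v)-1$ for that node. A deeper occurrence with a larger finite value would simply raise the maximum, so there is nothing to contradict.
\end{itemize}

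The missing idea is a short chain argument. In \PartialLayerTree{} the value $\Missing(x)$ is fixed at the start and only the number of live children changes. So if $x$ is removed in iteration $i\ge 2$ but not in iteration $i-1$, some child of $x$ was removed in iteration $i-1$. Iterating gives a path of $i$ nodes descending from $x$. The tree $T_w$ has depth at most $2^{s(t+1)}$, because $T^{(i,j)}$ has depth at most $2^{is+j}$ and pruning never increases depth. Hence $d_{T_w}(x)\le 2^{s(t+1)}-i+1$. For $i\ge 2$ this gives two things:
\begin{itemize}
\item $d_{T_w}(x)<2^{s(t+1)}$, so (P1) of \cref{theorem:missing} applies;
\item every reset child value $2^{s(t+1)}-d_{T_w}(x)$ is at least $i-1$.
\end{itemize}
This is what the paper asserts in one line as ``the height of $x$ must be at least $\ell(v)$''.

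Your worry about a depth-$2^{s(t+1)}$ witness when $\ell(v)=\infty$ comes from misreading line~6 of \cref{alg:mpc:fulllayer-DS}. Such a node has $2^{s(t+1)}-d_{T_w}(x)=0\le L-1$. So if its value is $\infty$ and it maps to the active vertex $v$, it is reset to $1$. Hence an unmodified $\infty$ witness always has $d_{T_w}(x)\le 2^{s(t+1)}-L$. The same rule shows a reset witness with $\ell(v)>1$ is never at depth $2^{s(t+1)}$. With these two facts, all three of your cases go through as you wrote them.
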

\begin{proof}
Consider any active vertex $v\in V(G)$ with $1 < \ell(v) < \infty$.
Then, we observe from line $9$ of \cref{alg:mpc:fulllayer-DS} that there exists some active vertex $w\in V(G)$ and $x\in V\rb{T_w}$ such that $\map_w(x) = v$ and $\ell_w(x) = \ell(v)$.
Now, we have two cases: either $\ell_w(x)$ was originally $\infty$ and was set to $\ell(v)$ or it stayed as $\ell(v)$.
We let $\ell'_w$ be the original layers of vertices in $T_w$ (without lines $4$ through $7$ of \cref{alg:mpc:fulllayer-DS}).

In the first case where $\ell'_w(x)$ is $\infty$ and $\ell_w(x) = \ell(v)$ is not anymore, we must have that $\ell(v) = 2^{s(t+1)} - d_{T_w}(x) + 1$.
Now, since $\ell'_w(x)$ is $\infty$, we see that $|\children_{T_w}(x)| + \Missing(x) \geq (t+1)k$ during iteration $L$ of \PartialLayerTree.
Noting $v$ is active, we know that $\Missing(x) \leq tk$ and so $|\children_{T_w}(x)| \geq k$ using \cref{theorem:missing}.
Therefore, $x$ has at least $k$ children with original layer $L$ or $\infty$.
Let $\mathcal{C}$ be the set of these children.
Consider any child $c\in\mathcal{C}$ with $\ell'_w(c) = \infty$.
We know that its layer will be set to $2^{s(t+1)} - d_{T_w}(c) + 1 = \ell(v) - 1$ in the tree if $\map_w(c)$ is active, or it will stay at $\infty$.
Now, because line $9$ takes the maximum over all the layers of the various trees, we must have that all $c\in\mathcal{C}$ have $\ell(\map_w(c)) \geq \ell(v) - 1$.

In the second case where $\ell'_w(x) = \ell_w(x) = \ell(v)$, we observe the peeling performed by \PartialLayerTree.
We must have that $x$ was removed during iteration $\ell(v)$ and was not removed during iteration $\ell(v) - 1$.
In order for $x$ to not be removed during iteration $\ell(v) - 1$, we must have that $|\children_{T_w}(x)| \geq k$ using \cref{theorem:missing}.
This means that $x$ has at least $k$ children during iteration $\ell(v) - 1$.
Let $\mathcal{C}$ be the set of these children.
For any $c\in\mathcal{C}$, we must have that $\ell_w'(c) \geq \ell(v) - 1$.
Now, we consider any child $c\in\mathcal{C}$ with $\ell'_w(c) = \infty$.
Since $\ell'_w(x) = \ell(v)$, the height of $x$ must be at least $\ell(v)$.
Therefore, the layer of $c$ will be set to at least $\ell(v) - 1$ in the tree if $\map_w(c)$ is active, or it will stay at $\infty$.
Now, because line $9$ takes the maximum over all the layers of the various trees, we must have that all $c\in\mathcal{C}$ have $\ell(\map_w(c)) \geq \ell(v) - 1$.

Therefore, in both cases, we attain our desired result. \\

Now, we consider any active vertex $v\in V(G)$ with $\ell(v) = \infty$.
We observe from line $9$ of \cref{alg:mpc:fulllayer-DS} that there exists some active vertex $w\in V(G)$ and $x\in V(T_w)$ such that $\map_w(x) = v$ and $\ell_w(x) = \infty$.
Once again, we let $\ell'_w$ be the original layers of vertices in $T_w$.
Since $\ell_w(x) = \infty$, we must have that $|\children_{T_w}(x)| \geq k$ during iteration $L$ of \PartialLayerTree using \cref{theorem:missing}.
This means that $x$ has at least $k$ children during iteration $L$.
Let $\mathcal{C}$ be the set of these children.
Then, $\ell'_w(c) = L$ or $\ell'_w(c) = \infty$ for all $c\in\mathcal{C}$.
Now, we consider any $c\in\mathcal{C}$ with $\ell'_w(c) = \infty$ and $\ell_w(c) \neq \infty$.
Since $\ell_w(x) = \infty$, we must have that the height of $x$ was greater than $L - 1$ while the height of $c$ is exactly $L - 1$.
This implies that $\ell_w(c) = L$.
Therefore, because line $9$ takes the maximum over all the layers of the various trees, we must have that all $c\in\mathcal{C}$ have $\ell(\map_w(c)) \geq L$
\end{proof}

\begin{lemma}
\label{lemma:ds:inf-level}
Let $G$ be a graph, $B, k, L, s, t\in \mathbb{N}$. Let $k\leq \frac{\rho^*(G)}{t + 1}$ and $\ell: V(G) \rightarrow [L] \cup \{\infty\}$ be the function returned by $\PartialLayerDS(G, B, k, L, s, t)$.
Let $S^*$ be a densest subgraph of $G$ and $2^{s(t+1)} \geq L$. Then, $\ell(v) = \infty$ for all $v\in S^*$.
\end{lemma}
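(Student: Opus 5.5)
We will show that for every $v\in S^*$, every copy of $v$ in every pruned tree view of an active vertex receives layer $\infty$ from \PartialLayerTree and is not overwritten by lines 4--7 of \PartialLayerDS. Since line 9 takes the maximum, a single such copy already suffices when $v$ occurs in some tree (trivially, $v$ occurs in its own tree as the root if it is active). We also need a convention for vertices that appear in no active tree. The statement presumably intends the maximum over an empty set to count as $\infty$, or at least that $v\in S^*$ occurs somewhere; I would make this explicit and otherwise argue about occurrences.

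\textbf{Step 1: copies of $S^*$-vertices are never peeled in the local simulation.} Fix an active $w$ and its tree $T_w$. Let $X^*=\{x\in V(T_w):\map_w(x)\in S^*\}$. We claim by induction on the iteration $i=1,\dots,L$ of \PartialLayerTree that no $x\in X^*$ is removed. Consider $x\in X^*$ that survives up to iteration $i$. By the proof of \cref{lemma:DS-2}, $\map_w(x)$ has at least $\rho^*(G)\ge (t+1)k$ neighbors inside $S^*$. Each such neighbor $u$ is of one of two kinds. Either $u$ is the image of a child of $x$ that lies in $X^*$; by induction that child is still present. Or $u$ is not the image of any child of $x$, in which case it is counted in $\Missing(x)=|N_G(\map_w(x))|-|\children_{T_w}(x)|$.

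There is one subtlety: $\Missing$ in \PartialLayerTree is computed once, from the original children count, while $|\children_{T_w}(x)|$ refers to the current children, because removals shrink it. So the quantity tested is (current children) $+$ (neighbors that are not original children). Neighbors that correspond to removed children are lost, but by induction none of those lie in $S^*$. Hence the test value is at least $|N_G(\map_w(x))\cap S^*|\ge(t+1)k$, so $x$ is not placed in $A$, and $\ell_w(x)=\infty$ after \PartialLayerTree. Note that this argument does not even need (P1); it uses only the valid-mapping property that distinct children map to distinct vertices.

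\textbf{Step 2: the height fix does not lower these values.} Lines 4--7 overwrite $\ell_w(x)=\infty$ only when $2^{s(t+1)}-d_{T_w}(x)\le L-1$. This is the main obstacle. For the root copy $x=$ root of $T_v$ (with $v\in S^*$ active), the depth is $0$ and $2^{s(t+1)}\ge L>L-1$, so the root copy keeps $\infty$. Since line 9 takes the maximum, $\ell(v)=\infty$ for every active $v\in S^*$. For inactive $v\in S^*$ I would argue that $v$ still appears in some active tree at small depth. The most natural case is as a child of the root of an active neighbor's tree; $v$ has at least $(t+1)k$ neighbors in $S^*$, and one hopes that some of them are active. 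If no neighbor is active, I expect the statement to rely on the empty-maximum convention or on the implicit assumption in the application that only active vertices matter. I would check this against how the lemma is used and, if necessary, restrict the claim to active $v$, or to $v$ appearing at depth at most $2^{s(t+1)}-L$ in some active tree. At any depth $d\le 2^{s(t+1)}-L$ the overwrite condition fails, so $\infty$ is preserved by Step 1.
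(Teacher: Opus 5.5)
Your argument for active $v\in S^*$ is the paper's argument. The root copy of $v$ in $T_v$ is never peeled because $v$ has at least $\rho^*(G)\ge (t+1)k$ neighbors in $S^*$, and depth $0$ together with $2^{s(t+1)}\ge L$ blocks the overwrite in lines 4--7, so the maximum in line 9 is $\infty$. Your Step~1 induction (removed children map injectively into $N_G(\map_w(x))\setminus S^*$) makes rigorous what the paper asserts in a single sentence.

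For inactive $v$ you have misread line 6: it overwrites only when $\map_w(x)$ is active. So by Step~1 every copy of an inactive $v\in S^*$ in any active tree keeps $\infty$ at every depth, and the depth analysis is unnecessary. The only real issue is an empty maximum in line 9. The paper's proof silently skips this: it uses the root of $T_v$, although line 9 ranges over active owners only. The empty maximum can indeed occur for $t\ge 1$. For example, take every vertex of $S^*$ inactive with only pendant neighbors outside $S^*$; the first pruning step collapses those pendants' trees to a single node. Hence the convention $\max\emptyset=\infty$ you propose is the right fix, rather than an argument that $v$ appears at small depth in some active tree.
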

\begin{proof}
Let $G^*$ be the induced subgraph of $S^*$.
Then, we have that $\deg_{G^*}(v)\geq (t+1)k$ for all $v\in S^*$.

Now, for all $v\in S^*$, let us consider the tree $T_v$ with root $r$ where $\map_v(r) = v$.
We know that \PartialLayerTree will only remove vertex $r$ if $|\children_{T_v}(r)| + \Missing(r) < (t + 1)k$.
However, since $\deg_{G^*}(v) \geq (t+1)k$ for all $v\in S^*$, it will never be removed.
Therefore, $\ell_v(r) = \infty$ since its height will be $2^{s(t+1)} \geq L$.
Therefore $\ell(v) = \infty$ for all $v\in S^*$ from line $9$ of \cref{alg:mpc:fulllayer-DS}.
\end{proof}

Finally, we want to upper bound the layers of vertices with $\NumPathsIn_{G,\ell_G, L'}(v) < B^{1/2^s}$ where $L' = 2^{s(t+1)}$ and $\ell_G:V(G) \rightarrow [L] \cup \{\infty\}$ is any partial layer assignment with out-degree $k$.
This is similar to the property of \cref{lemma:orientation:partiallayer} in \cref{sec:orientation} and shows that the layers produced by \PartialLayerDS will be a subset of the layers produced by any partial layer assignment $\ell_G$ for the majority of vertices.

\begin{lemma}
\label{lemma:ds:nested}
Let $G$ be a graph, $k, L\in\mathbb{N}$, and $\ell_G:V(G)\rightarrow [L]\cup \{\infty\}$ be a partial layer assignment with out-degree $k$.
Let $B, s, t\in\mathbb{N}$ be parameters that satisfy $s(t+1) \geq \log L$.
Let $\ell$ be the function returned by $\PartialLayerDS(G, B, k, L, s, t)$.
Consider any $v\in V(G)$ with $\ell_G(v) \neq \infty$ that satisfies $\NumPathsIn_{G,\ell_G, L'}(v) < B^{1/2^s}$ where $L' = 2^{s(t+1)}$.
Then, $\ell(v) \leq \ell_G(v)$.
\end{lemma}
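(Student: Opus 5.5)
The plan is to reduce the statement about the combined function $\ell$ to a statement about every single tree. Since line 9 of \cref{alg:mpc:fulllayer-DS} takes a \emph{maximum}, it suffices to show the following: for every active $w$ and every node $x\in V(T_w)$ with $\map_w(x)=v$, the final value satisfies $\ell_w(x)\le \ell_G(v)$. This is unlike \cref{lemma:orientation:partiallayer}, where only monotonically reachable occurrences mattered. I will therefore prove a stronger claim by induction on $i=\ell_G(\map_w(x))$. The claim is: for every vertex $u$ with $\ell_G(u)=i<\infty$ and $\NumPathsIn_{G,\ell_G,L'}(u)<B^{1/2^s}$, and every occurrence $x$ of $u$ in any active tree $T_w$, we have $\ell_w(x)\le i$ after lines 4--7.

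Two preliminary facts make the induction self-sustaining.

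First, the hypothesis propagates downward. If $u'\in N_G(u)$ with $\ell_G(u')<\ell_G(u)$, then every strictly increasing path ending at $u'$ extends by $u$ to one ending at $u$. Hence $\NumPathsIn_{G,\ell_G,L'}(u')\le \NumPathsIn_{G,\ell_G,L'}(u)<B^{1/2^s}$.

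Second, every vertex satisfying the hypothesis is active. This is the contrapositive of (P2) in \cref{theorem:missing}, using that $\ell_G$ has out-degree $k$. Consequently, by (P1), every occurrence $x$ of such a $u$ at depth $<L'$ has $\Missing(x)\le tk$.

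The main step handles occurrences at depth $d\le L'-i$, that is, occurrences with enough room below them. Split the children of $x$ by the $\ell_G$-value of their images. Valid mappings give distinct images, and $\ell_G$ has out-degree $k$, so fewer than $k$ children map to neighbors with $\ell_G\ge i$. The remaining children map to neighbors $u'$ with $\ell_G(u')=i'<i$. They satisfy the hypothesis by the first fact, and they sit at depth $d+1\le L'-i'$. So by induction \PartialLayerTree removes each of them by iteration $i'\le i-1$; their value is finite, so lines 4--7 do not touch them.

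At iteration $i$, node $x$ therefore has at most $k-1$ remaining children and $\Missing(x)\le tk$. This gives $|\children|+\Missing<(t+1)k$, so $x$ is peeled at iteration $\le i$, and $\ell_w(x)\le i$. The base case $i=1$ is the same argument with no lower children. Here $s(t+1)\ge \lg L$ ensures $L'\ge L\ge i$, so the depth range is nonempty.

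The main obstacle is the occurrences deep in the tree, with depth $d>L'-i$, where the children may be leaves or otherwise lack the missing bound. If such an $x$ stays at $\infty$ in \PartialLayerTree, the guard in lines 4--7 applies: $L'-d<i\le L$ gives $L'-d\le L-1$, and $u$ is active. The value is then reset to $L'-d+1\le i$, which is fine.

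The delicate case is when $x$ is peeled by the tree procedure at some iteration $j$. I would try to show $j\le L'-d+1$ as well, by a secondary induction on height. A node at height $h$ all of whose descendants are eventually peeled should be peeled within $h$ rounds. Failing that, I would verify that the argument of the previous paragraph goes through for such a node whenever its children are themselves peeled no later than iteration $i-1$. If neither approach works, I would restrict attention to occurrences where the induction applies. This would mean checking whether the maximum in line 9 can be realized only by occurrences covered by the main case or by the reset rule. This bookkeeping near the bottom of the tree is the step I expect to require the most care.
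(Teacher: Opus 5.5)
Your decomposition matches the paper's proof. You split the occurrences $x$ of $v$ in the active trees $T_w$ by whether the depth satisfies $d\le L'-\ell_G(v)$. You handle the shallow ones with the peeling induction behind \cref{lemma:orientation:partiallayer}, and the deep ones through the reset in lines 4--7. Your shallow case is correct, and it is more carefully justified than in the paper. The paper simply cites \cref{lemma:orientation:partiallayer}, although that lemma only covers monotonically reachable nodes of the tree rooted at $v$. Your two facts are what make that argument apply to arbitrary occurrences. The first is the downward propagation of the $\NumPathsIn$ bound. It is valid because a strictly increasing path ending at $u$ has at most $\ell_G(u)\le L\le L'$ vertices, so extending by $u$ never exceeds the length cap. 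The second is (P2) combined with (P1). One adjustment: state that induction for the \emph{pre-reset} label of occurrences with $d\le L'-i$, since that is what your main step uses for the children.

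The gap is the case you flag and leave open: a deep occurrence ($d\ge L'-\ell_G(v)+1$) that \PartialLayerTree does remove, at some iteration $j$. Induction on height is the right tool, but your proposed statement does not work. Its hypothesis need not hold here, since a node can be removed while some of its children survive. As stated it is also false, for instance for a leaf of large degree. Your last fallback is not available either, because line 9 maximizes over \emph{all} occurrences.

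The fact you need is unconditional: in \PartialLayerTree, a node of height $h$ is removed in one of the first $h$ iterations or never. The proof is by induction on $h$. For a leaf, $|\children_T(x)|+\Missing(x)$ never changes. For $h>1$, each child is removed by iteration $h-1$ or never, so $x$ has the same remaining children at every iteration $j\ge h$. Hence if $x$ survives iteration $h$, it survives forever.

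Next, $T_w$ has depth at most $L'$: $T_w^{(0,0)}$ has depth $1$, each exponentiation step at most doubles the depth, and \LocalPrune only deletes subtrees. So a deep $x$ has height at most $L'-d+1\le \ell_G(v)$, which gives $j\le \ell_G(v)$. The paper's proof uses this fact without comment (``the only way for $\ell_w(x)$ to be more than $\ell_G(v)$ is if $\ell_w(x)=\infty$''). With it added, your argument is complete.
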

\begin{proof}
Consider any active $w \in V(G)$ and any $x\in V(T_w)$ where $\map_w(x) = v$.
If the height of $x$ in $T_w$ is at least $\ell_G(v)$, then we have that $\ell_w(x) \leq \ell_G(v)$ using \cref{lemma:orientation:partiallayer}.
Otherwise, we have that the height of $x$ in $T_w$ is less than $\ell_G(v)$. 
The only way for $\ell_w(x)$ to be more than $\ell_G(v)$ is if $\ell_w(x) = \infty$.
However, this is impossible because \PartialLayerDS will set $\ell_w(x)$ to be its height plus $1$ which is at most $\ell_G(v)$.
Therefore, in either case, $\ell_w(x) \leq \ell_G(v)$ for any active $w \in V(G)$ and any $x\in V(T_w)$ where $\map_w(x) = v$.
Taking the maximum over all these layers gives us $\ell(v) \leq \ell_G(v)$.
\end{proof}

Therefore, the layers produced by \PartialLayerDS approximate the peeling performed by \dsPeeling. Specifically, we can look at the vertex sets of the layers and apply the stopping condition in \dsPeeling by comparing the sizes of vertex sets.
\cref{lemma:ds:k-lowerbound} and \cref{lemma:ds:inf-level} give us the tools necessary to prove that stopping when the vertex set sizes do not decrease enough results in an approximate densest subgraph.

\subsection{Approximate densest subgraph algorithm for $t \leq 2$}
\label{sec:DS-alg}
We now describe how to compute an approximate densest subgraph from the layers returned by \PartialLayerDS for $t \leq 2$.
Specifically, we compute a $(4+\eps)$-approximate densest subgraph in $\tO\rb{\log^{1/3} n}$ rounds and a $(6+\eps)$-approximate densest subgraph in $\tO\rb{\log^{1/4} n}$ rounds.
In the previous section, we presented tools that allow us to compute an approximate densest subgraph or observe a significant decrease in the vertex set size.
However, this is only true if we are not taking the inactive vertices into account.
The number of inactive vertices can be large, unfortunately, and do not satisfy the properties from the previous section.
This can cause our vertex set size comparisons to be diluted by inactive vertices that are never peeled, resulting in the vertex set size not decreasing by much.
To fix this, we claim that if there are many inactive vertices, we can compute an approximate densest subgraph using a subroutine described below.

Additionally, we have to be careful about the vertices with layer $\infty$.
It is possible that \PartialLayerDS may not assign exactly $L$ layers because of the way it combines the layers across different trees.
As a result, we can only use the stopping conditions from \dsPeeling for the layers it does assign.
Then, there can be a situation where the number of vertices with layer $\infty$ is large but none of the intermediate vertex sets satisfied the stopping conditions.
This does not give us a desired vertex set size decrease.
However, similar to the situation where the number of inactive vertices is large, we also claim that if the number of vertices with layer $\infty$ is large, we can compute an approximate densest subgraph using the subroutine described below.

Our algorithm will use the algorithm \dsSqrtPeeling, which we modify from \cite{mitrovic2026new}, as a subroutine.
The goal of \dsSqrtPeeling is to simulate $L$ iterations of \dsPeeling using $\tO(L/\sqrt{\log n})$ sub-linear MPC rounds.
It takes the following parameters: a graph $G$, approximation parameter $\eps$, memory parameter $\delta$, iteration parameter $c$, and peeling parameter $k$.
For details on \dsSqrtPeeling, we refer the reader to \cref{appendix:sqrt-DS}.
Importantly, it has the following main properties:
\begin{enumerate}
    \item \textbf{Sublinear MPC simulation:} (\cref{lemma:sqrt:memory-round}) The algorithm performs $c$ phases of $\Theta(\sqrt{\log n})$ approximate iterations of peeling, with each phase taking $O(\log \log n)$ rounds. The algorithm requires $O(c\cdot\log\log n)$ total rounds, $O(n^\delta)$ local space, and $O(n^{1+\delta} + m)$ global space.

    \item \textbf{Approximation guarantee:} (\cref{lemma:sqrt:approx}) When the algorithm returns a subgraph, it has density at least $\frac{k}{2(1+\eps)}$.

    \item \textbf{Vertex set decrease:} (\cref{lemma:sqrt:vertex-set-decrease-outer}) After each phase of the algorithm, the vertex set drops by a factor of $(1+\eps)^{\Theta(\sqrt{\log n})}$.

    \item \textbf{$k$-core maintained:} (\cref{lemma:sqrt:nonempty-kcore}) The algorithm never removes any of the vertices in the $k$-core of $G$.
\end{enumerate}

We now present our algorithm that uses both \PartialLayerDS and \dsSqrtPeeling in \cref{alg:mpc:DS}.

\begin{algorithm}
\caption{Finds $(2 + \eps)(t+1)$-approximate densest subgraph for $t\leq 2$}
\label{alg:mpc:DS}

\begin{algorithmic}[1]
\medskip
\Statex\textbf{Input:} graph $G$, $\eps\in(0, 1)$, $\delta\in(0,1)$, $k,t\in\mathbb{N}$

\Statex\textbf{Output:} a subset of vertices of $G$
\medskip
\hrule
\medskip
\State $L \gets \left\lfloor\frac{\delta\log^{\frac{t+1}{t+2}}\rb{n}}{4\log(2k)}\right\rfloor$, $s \gets\left\lceil \frac{\log L}{t + 1} \right\rceil$
\State $\ell \gets \PartialLayerDS\rb{G, n^\delta, k, L, s, t}$
\State $I\gets\{v\in V(G) : v\text{ is inactive}\}$
\State $V_{\infty} \gets \{v\in V(G) : \ell(v) = \infty\}$
\If{$|I| \geq \frac{n}{(1+\eps)^L}$ or $|V_{\infty}| \geq \frac{n}{(1+\eps/4)^L}$}
\State $c \gets \left\lceil\frac{6L}{\sqrt{\delta\log_{1+\eps}n}}\right\rceil$
\State \Return $\dsSqrtPeeling(G, \eps, \delta, c, k)$
\EndIf
\State $V_i \gets \{v\in V(G) : \ell(v) \geq i\}$ for all $i = 1,\ldots, L + 1$
\For{$i = 1\text{ to }L$}
\If{$V_i = V_{\infty}$}
\textbf{break}
\EndIf
\If{$|V_{i+1}| - |I| \geq |V_i|/(1+\eps)$}
\Return $V_i$
\EndIf
\EndFor
\State\Return $V_{\infty}$
\end{algorithmic}
\end{algorithm}

\cref{alg:mpc:DS} uses the parameters for $L, s$ from \cref{lemma:parameters} with $B = n^\delta$.
Additionally, \cref{alg:mpc:DS} defines set $I$ as the set of inactive vertices and $V_{\infty}$ as the set of vertices with layer $\infty$.
Because $I$ could include vertices that should be peeled at iteration $i$ but are not, we remove them from the vertex set $V_{i+1}$ when making a comparison to the size of the vertex set $V_i$.
This is reflected in line $11$ of the algorithm.
Additionally, the situation where the number of inactive vertices or the number of vertices with layer $\infty$ is large is reflected in lines $5$ through $7$.
We claim that running our subroutine \dsSqrtPeeling with $O(L/\sqrt{\log n})$ phases is enough to compute an approximate densest subgraph.
Now, we show in \cref{lemma:DS-converge} that the vertex size still drops significantly, even with the inactive vertices, or we compute an approximate densest subgraph on line $7$, from \dsSqrtPeeling, or line $11$, from our vertex set size comparison.

\begin{lemma}
\label{lemma:DS-converge}
Let $G$ be a graph, $\delta\in(0,1)$, $\eps\in(0,1)$, $k, t\in \mathbb{N}$.
Let $k \leq \frac{\rho^*(G)}{t+1}$. 
Let $S$ be the output of \cref{alg:mpc:DS} ran on these parameters.
Then, it holds that:
\begin{itemize}
    \item \textbf{Good approximation.} $S$ is a subgraph with density at least $\frac{k}{2(1+\eps)}$, or
    \item \textbf{Size reduction.} $S$ contains a densest subgraph of $G$ and $|S| \leq \frac{n}{(1+\eps/4)^L}$.
\end{itemize}
\end{lemma}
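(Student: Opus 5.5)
I will follow the three exit points of \cref{alg:mpc:DS} and show that each return satisfies one of the two alternatives.

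\textbf{Exit through lines 5--7.} The output comes from \dsSqrtPeeling. By its approximation guarantee (\cref{lemma:sqrt:approx}), any returned subgraph has density at least $\frac{k}{2(1+\eps)}$. So it remains to show that \dsSqrtPeeling actually returns within $c=\lceil 6L/\sqrt{\delta\log_{1+\eps}n}\rceil$ phases and does not run out.

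For this I will use two facts. First, since $k\le\rho^*(G)/(t+1)\le\rho^*(G)$, the $k$-core is nonempty, and \cref{lemma:sqrt:nonempty-kcore} says it is never removed. Second, each phase shrinks the vertex set by a factor $(1+\eps)^{\Theta(\sqrt{\delta\log n})}$ (\cref{lemma:sqrt:vertex-set-decrease-outer}). So if the subroutine never returned, then after $c$ phases at most $n/(1+\eps)^{\Omega(L)}$ vertices would remain, with the constant $6$ making this factor at least $(1+\eps)^{3L}$, say. The nonempty $k$-core must survive, so it has to fit in this shrinking set.

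The main obstacle is that a nonempty $k$-core alone does not contradict this shrinkage. I need a lower bound of order $n/(1+\eps)^{L}$ on something that survives. I would get it from the triggering condition, in two cases.

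In the first case, $|V_\infty|\ge n/(1+\eps/4)^L$. I plan to show that every vertex of $V_\infty$ survives the sqrt-peeling. The tool is \cref{lemma:ds:nested} applied to the true \dsPeeling layer assignment $\ell_G$ truncated at $L$, which has out-degree $k$ by \cref{lemma:DS-1}-style arguments. It gives $\ell(v)\le\ell_G(v)$ for every $v$ with few increasing paths. Hence vertices of $V_\infty$ are either genuinely unpeeled in $L$ rounds or have many increasing paths. By \cref{lemma:parameters}, at most $n/2^L$ vertices have many increasing paths, and $n/2^L\ll n/(1+\eps/4)^L$.

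In the second case, $|I|\ge n/(1+\eps)^L$. Here \cref{lemma:parameters}, applied to $\ell_G$, says at most $n/2^L$ inactive vertices have $\ell_G<\infty$. So at least $n/(1+\eps)^L-n/2^L$ vertices survive $L$ true peeling rounds.

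In both cases, a set of size roughly $n/(1+\eps/4)^L$ survives the $L$ true peeling rounds. The sqrt-peeling simulates the true peeling and, with $c$ phases, covers at least $L$ iterations. Therefore, if it has not returned, these vertices remain, which contradicts the shrinkage to at most $n/(1+\eps)^{3L}$. I would check that the sqrt simulation never removes a vertex earlier than true peeling; if that monotonicity fails, I would instead compare against its own stopping rule directly.

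\textbf{Exit through line 11.} Here $|V_{i+1}|-|I|\ge|V_i|/(1+\eps)$, so at least a $1/(1+\eps)$ fraction of $V_i$ consists of active vertices with $\ell\ge i+1$. By \cref{lemma:ds:k-lowerbound}, applied with layer $i+1$, or with the $\infty$ clause when $i=L$, each such vertex has at least $k$ neighbors $u$ with $\ell(u)\ge i$, that is, at least $k$ neighbors inside $V_i$. Counting edges, $|E(V_i)|\ge k|V_i|/(2(1+\eps))$, which gives the good-approximation alternative.

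\textbf{Exit through the break or after the loop.} The algorithm returns $V_\infty$. By \cref{lemma:ds:inf-level}, which applies because $2^{s(t+1)}\ge L$, $V_\infty$ contains a densest subgraph. Since the condition on line 5 failed, $|V_\infty|<n/(1+\eps/4)^L$, which is the size-reduction alternative.
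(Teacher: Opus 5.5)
Your handling of the line-11 exit and of the final return of $V_\infty$ is correct and matches the paper. The gap is in the line 5--7 exit, in the case $|I|\ge n/(1+\eps)^L$. There you apply \cref{lemma:parameters} to a peeling assignment truncated at $L$. All you learn is that roughly $n/(1+\eps)^L-n/2^L$ vertices survive the first $L$ rounds of threshold-$k$ peeling.

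Grant the monotonicity you want. It does hold for plain threshold-$k$ peeling, since \textsc{DS-Sqrt-Peeling} only removes non-frozen vertices of degree $<k$. It can fail for \textsc{DS-Peeling} with its early-return rule. Even so, it only places those vertices in $V^{\text{sqrt}}_{L+1}$. \textsc{DS-Sqrt-Peeling} runs up to about $3L$ peeling steps and may delete them in steps $L+1,\dots,3L$, so ``these vertices remain'' after $c$ phases does not follow. Comparing at step $L$ does not help either: \cref{lemma:sqrt:vertex-sets-decrease} only gives $|V^{\text{sqrt}}_{L+1}|\le n/(1+\eps/2)^L$, which is larger than $n/(1+\eps)^L$. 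Your sentence ``in both cases, a set of size roughly $n/(1+\eps/4)^L$ survives'' hides exactly this: in the $I$-case you only have $n/(1+\eps)^L$.

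The missing idea is that \cref{lemma:parameters} holds for any partial layer assignment of out-degree $k$, with any number of layers. So apply it to threshold-$k$ peeling run to completion, whose $\infty$-set is exactly the $k$-core. Then at most $n/2^L$ inactive vertices lie outside the $k$-core, so the $k$-core has at least $n/(1+\eps)^L-n/2^L\ge n/(2(1+\eps)^L)$ vertices. This is the lower bound on ``something that survives'' you were looking for. By \cref{lemma:sqrt:nonempty-kcore} the $k$-core survives all $c$ phases, which contradicts the bound $n/(1+\eps/2)^{3L}$ from \cref{lemma:sqrt:vertex-set-decrease-outer}. This is the paper's argument.

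In the $|V_\infty|$ case your approach can be repaired. There the lower bound $n/(1+\eps/4)^L-n/2^L$ exceeds $n/(1+\eps/2)^L$, so a comparison at step $L$ suffices; the paper applies \cref{lemma:ds:nested} directly to $\ell_{\text{sqrt}}$ and then uses \cref{lemma:sqrt:vertex-sets-decrease}. You must make that comparison at step $L$ rather than after $c$ phases. You must also separately handle the possibility that breaks on line 10 leave fewer than $L$ peeling steps in total, in which case the $c$-phase shrinkage bound is the one to use.
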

\begin{proof}
\cref{alg:mpc:DS} either returns early on lines $7$ or $11$, or it returns at the end on line $12$. We consider these two situations separately. \\

\noindent\textbf{\cref{alg:mpc:DS} returns early on lines $7$ or $11$:}

\noindent If \cref{alg:mpc:DS} returns early on lines $7$ or $11$, we claim that the returned subgraph has density at least $\frac{k}{2(1+\eps)}$.
Let us first consider when it returns on line $7$.
This only happens when $|I| \geq \frac{n}{(1+\eps)^L}$ or $|V_\infty| \geq \frac{n}{(1+\eps/4)^L}$.

First, we consider $|I| \geq \frac{n}{(1+\eps)^L}$.
Let us consider the partial layer assignment $\ell' : V(G) \rightarrow [L'] \cup \{\infty\}$ which is produced by iteratively peeling all vertices with degree less than $k$ until we reach the $k$-core of $G$ (so $L'$ is the number of iterations of peeling needed to reach the $k$-core).
Then, using \cref{lemma:parameters}, we know that the number of inactive vertices with $\ell'(v) < \infty$ is at most $\frac{n}{2^L}$.
This means that at least $\frac{n}{(1+\eps)^L} - \frac{n}{2^L} \geq \frac{n}{2(1+\eps)^L}$ vertices with $\ell'(v) = \infty$ are inactive.
Therefore, we know the $k$-core of $G$ has at least $\frac{n}{2(1+\eps)^L}$ vertices.
With this, \cref{alg:mpc:DS} runs \dsSqrtPeeling with $c = \left\lceil\frac{6L}{\sqrt{\delta\log_{1+\eps}n}}\right\rceil$.
Let us assume, for the sake of contradiction, that \dsSqrtPeeling does not return a subgraph.
Then, using \cref{lemma:sqrt:vertex-set-decrease-outer}, the size of the vertex set at the end is
\[n\cdot \rb{\frac{1}{(1+\eps/2)^{\sqrt{\delta\log_{1+\eps}n}/2}}}^c \leq \frac{n}{(1+\eps/2)^{3L}} < \frac{n}{2(1+\eps)^L}\]
for sufficiently large $n$.
However, the vertex size is less than the size of the $k$-core of $G$ and using \cref{lemma:sqrt:nonempty-kcore}, we know that \dsSqrtPeeling never removes any of the vertices of the $k$-core.
Therefore, we have a contradiction and \dsSqrtPeeling must return a subgraph which has density at least $\frac{k}{2(1+\eps)}$ using \cref{lemma:sqrt:approx}.

Second, we consider $|V_{\infty}| \geq \frac{n}{(1+\eps/4)^L}$.
Let us look at the partial layer assignment $\ell_{\text{sqrt}} : V(G) \rightarrow [L] \cup \{\infty\}$ produced by the peeling of \dsSqrtPeeling on line $7$ with $c = \left\lceil\frac{6L}{\sqrt{\delta\log_{1+\eps}n}}\right\rceil$.
We define $V_{i}^{\text{sqrt}} = \{v\in V(G): \ell_{\text{sqrt}}(v) \geq i\}$.
Let $L' = 2^{s(t+1)} \geq L$.
From \cref{lemma:parameters}, we know that the number of vertices $v\in V(G)$ with $\ell_{\text{sqrt}}(v) \neq \infty$ and $\NumPathsIn_{G,\ell_{\text{sqrt}}, L'}(v) \geq B^{1/2^s}$ is at most $\frac{n}{2^{L}}$.
Then we have that
\[|V_{\infty}^{\text{sqrt}}| \geq |V_\infty| - \frac{n}{2^L} \geq \frac{n}{(1+\eps/4)^L} - \frac{n}{2^L} > \frac{n}{(1+\eps/2)^L}\]
using \cref{lemma:ds:nested} and the previous upper bound on the number of vertices with large $\NumPathsIn_{G,\ell_{\text{sqrt}}, L'}(v)$, assuming sufficiently large $n$.
Then, running \dsSqrtPeeling with $c = \left\lceil\frac{6L}{\sqrt{\delta\log_{1+\eps}n}}\right\rceil$ must return a subgraph using \cref{lemma:sqrt:vertex-sets-decrease} because $|V_{\infty}^{\text{sqrt}}|$ is larger than the vertex size drop of $L$ layers without returning.
Therefore, the returned subgraph has density at least $\frac{k}{2(1+\eps)}$ using \cref{lemma:sqrt:approx}.

Now, let us consider when \cref{alg:mpc:DS} returns on line $11$.
Using \cref{lemma:ds:k-lowerbound}, line $11$ tells us that at least $|V_i|/(1+\eps)$ vertices in $V_i$ have degree at least $k$ within the induced subgraph of $V_i$.
Therefore, $\rho(V_i) \geq \frac{k}{2(1+\eps)}$. \\

\noindent\textbf{\cref{alg:mpc:DS} returns at the end on line $12$:}

\noindent If \cref{alg:mpc:DS} never returns early, then we return $V_\infty$ at the end.
Because of line $5$, $|V_\infty| < \frac{n}{(1+\eps/4)^L}$.
Also, using \cref{lemma:ds:inf-level}, we have that the densest subgraph will be a subset of $V_\infty$.
\end{proof}

Finally, using \cref{alg:mpc:DS}, we present a full algorithm that computes a $(2+\eps)(t+1)$-approximate densest subgraph for $t\leq 2$.

\begin{theorem}
\label{theorem:DS}
There is an MPC algorithm that, given a graph $G$, positive integer $t\leq 2$ and $\eps\in(0,1)$, computes a $(2+\eps)(t+1)$-approximate densest subgraph with probability at least $1 - n^{-4}$.
The algorithm requires $O\rb{\log^{1/(t+2)} n \cdot \poly\log\log (n)}$ rounds, $O(n^\delta)$ local space, and $\tO(n^{1+\delta} + m)$ global space, where $\delta \in (0,1)$ is any fixed constant.
\end{theorem}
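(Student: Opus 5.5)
We build the algorithm around repeated calls to \cref{alg:mpc:DS}, together with an outer guess of the optimum density and a sampling preprocessing step.

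\textbf{Guessing and sampling.} We run in parallel over all guesses $D' = (1+\eps)^j$ for $j = 0,1,\dots,\log_{1+\eps} n$. For each guess we sample every edge independently with probability $p = \min(1, \frac{20c\log n}{\eps^2 D'})$, where $c = 2(1+\eps)(t+1)$. Consider the guess with $\rho^*(G)/(1+\eps) \le D' \le \rho^*(G)$. By \cref{lemma:sampling-DS}(A), the densest subgraph of $G$ has density about $pD^*$ in the sampled graph $H$, so $\rho^*(H) \ge (1-\eps)pD^*$. Applying (A) once more, now to the subgraphs of $H$ that are dense, gives $\rho^*(H) = \Theta(\log n/\eps^2)$. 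Therefore $\alpha(H) = O(\log n)$, and the peeling parameter $k = \lfloor (1-\eps)pD'/(t+1) \rfloor = O(\log n)$ satisfies $k \le \rho^*(H)/(t+1)$.

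\textbf{Iterating on $H$.} Let $U_1 = V(H)$. In each phase $i$ we run \cref{alg:mpc:DS} on $H[U_i]$ with parameters $k$ and $t$, using $n^\delta$ for the budget as in the algorithm. By \cref{lemma:DS-converge}, each phase ends in one of two ways.
\begin{itemize}
\item It returns a set of density at least $k/(2(1+\eps))$ in $H$. We output this set.
\item It returns a set $U_{i+1}$ that still contains a densest subgraph and satisfies $|U_{i+1}| \le |U_i|/(1+\eps/4)^L$.
\end{itemize}
The hypothesis of \cref{lemma:DS-converge} must hold in every phase. It does, because $U_{i+1}$ contains a densest subgraph $S^*$ of $H[U_i]$, so $\rho^*(H[U_{i+1}]) = \rho^*(H)$ is unchanged. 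The graph cannot become empty, since every $U_i$ contains $S^*$. Hence within $O(\log_{1+\eps/4} n / L) = O(\log^{1/(t+2)} n \cdot \log\log n/\eps)$ phases the first outcome must occur. The number of phases uses $L = \Theta(\delta \log^{(t+1)/(t+2)} n/\log k)$ together with $k = O(\log n)$.

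\textbf{Round count.} Each phase costs $O(s(t+1)) = O(\log\log n)$ rounds for \Expo. It also costs $O(L)$ local work for the layer computation, which needs no extra rounds, plus $O(1)$ aggregation rounds for the sizes $|V_i|$ and $|I|$. The possible call to \dsSqrtPeeling costs $O(c\log\log n)$ rounds with $c = O(L/\sqrt{\log n}) = O(\log^{(t+1)/(t+2) - 1/2} n)$. This is at most $\tO(\log^{1/(t+2)} n)$ exactly when $\frac{t+1}{t+2} - \frac12 \le \frac{1}{t+2}$, that is, when $t \le 1$. For $t = 2$ the exponent is $\frac34 - \frac12 = \frac14 = \frac1{t+2}$, so the bound holds with equality. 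This is precisely where the restriction $t\le 2$ enters. Moreover, \dsSqrtPeeling either returns a set or fails, and in both cases the call is made at most once overall, because any returned set is final.

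\textbf{Main obstacle: converting density back to $G$.} We output a set $S$ with $\rho_H(S) \ge k/(2(1+\eps)) \approx (1-\eps)pD^*/(2(1+\eps)^2(t+1))$. Apply \cref{lemma:sampling-DS}(B) with $c = 2(1+\eps)^2(t+1)$, up to constant adjustments of $\eps$, and take the contrapositive. If $\rho_G(S)$ were below $(1-2\eps)D^*/c$, then $\rho_H(S)$ would be at most $(1-\eps)pD^*/c$, which contradicts the lower bound. Hence $\rho_G(S) \ge (1-O(\eps))D^*/(2(t+1))$, and rescaling $\eps$ gives a $(2+\eps)(t+1)$-approximation.

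\textbf{Selecting the output.} We output the densest set over all guesses and compute its true density in $G$ in $O(1)$ rounds. Guesses that are too large may still return sets, but their densities are measured correctly in $G$, so choosing the maximum is safe.

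\textbf{Memory and failure probability.} Memory follows from \cref{theorem:missing} with $B = n^\delta$, since $2^{2^s} = n^{o(1)}$, and from \cref{lemma:sqrt:memory-round}. Running $O(\log n/\eps)$ guesses multiplies global memory only by a polylogarithmic factor. The failure probability follows from a union bound over the sampling events.
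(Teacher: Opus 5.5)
Your proposal is correct and follows essentially the same route as the paper's proof. That route is: parallel guesses of the optimum density; edge sampling via \cref{lemma:sampling-DS}; repeated invocations of \cref{alg:mpc:DS} on the shrinking induced subgraph, with \cref{lemma:DS-converge} bounding their number by $O(\log_{1+\eps} n/L)$; at most one call to \textsc{DS-Sqrt-Peeling}, costing $O(L\log\log n/\sqrt{\log n})$ rounds; and returning the densest candidate. Your explicit checks fill in steps the paper leaves implicit, namely that $k\le\rho^*/(t+1)$ still holds in every iteration and that density transfers back to $G$ via the contrapositive of \cref{lemma:sampling-DS}(B).

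Two small fixes are needed. First, $\frac{t+1}{t+2}-\frac12\le\frac1{t+2}$ is equivalent to $t\le 2$, not $t\le 1$, as your own $t=2$ check shows. Second, the (B) step needs a little more slack, e.g.\ $c=2(1+\eps)^3(t+1)$ together with a correspondingly larger sampling constant. With $c=2(1+\eps)^2(t+1)$ the lower and upper bounds on $\rho_H(S)$ coincide up to the floor in $k$, so there is no strict contradiction.
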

\begin{proof}
Let $\cAMPC$ be our MPC algorithm. We first describe $\cAMPC$ and then provide its analysis. \\

\noindent \textbf{Algorithm description:} $\cAMPC$ considers all $1 \leq k \leq n$ in parallel using powers of $(1+\eps)$. For each parallel instance of $k$,
\begin{enumerate}
    \item The algorithm first obtains a graph $H$ by sampling each edge of $G$ independently with probability $\min\left(1, \frac{160\log n}{\eps^2 k}\right)$.

    \item The algorithm invokes \cref{alg:mpc:DS} on parameters $H$, $\eps$, $\delta$, $(1-\eps)pk$, and $t$. If \cref{alg:mpc:DS} returns early, we call the returned vertex set $S$ a \textit{potential densest subgraph} and end this parallel instance.

    \item If \cref{alg:mpc:DS} does not return early, we let $H'$ be the induced subgraph of the returned vertex set $S$. Then, we repeat the algorithm starting from step $2$ using graph $H'$ instead of $H$.
\end{enumerate}
We allow the algorithm to invoke \cref{alg:mpc:DS} a maximum of $\frac{4\log_{1+\eps} n}{L}$ times where $L =  \left\lfloor\frac{\delta\log^{\frac{t+1}{t+2}}\rb{n}}{4\log(2k)}\right\rfloor$. 
Then, $\cAMPC$ returns the potential densest subgraph with the largest density over all instances of $k$ as the final approximate densest subgraph. \\

\noindent \textbf{Algorithm round and memory complexity:} $\cAMPC$ invokes \cref{alg:mpc:DS} a maximum of $\frac{4\log_{1+\eps} n}{L}$ times.
For each invocation, \cref{alg:mpc:DS} runs \PartialLayerDS which runs \Expo and performs local computation on each machine with $O(1)$ rounds of communication to combine vertex layers.
Therefore, from \cref{theorem:missing}, the algorithm uses $O(s\cdot (t + 1)) = O(\log \log n)$ rounds from \Expo.
However, it is possible for \cref{alg:mpc:DS} to return early by running \dsSqrtPeeling.
From \cref{lemma:sqrt:memory-round}, \cref{alg:mpc:DS} will use $O\rb{\frac{L}{\sqrt{\log n}}\cdot \log \log n} = O\rb{\log^{1/(t+2)} n \cdot \log \log n}$ rounds for $t \leq 2$ and then $\cAMPC$ will stop.
Since \cref{alg:mpc:DS} is only invoked at most $\frac{4\log_{1+\eps} n}{L} = O\rb{\log^{1/(t+2)}n \cdot \log \log n}$ times, we have a final round complexity of $O\rb{\log^{1/(t+2)} n \cdot \poly\log\log (n)}$.

All of the algorithms mentioned use $O(n^\delta)$ local space and $\tO(n^{1+\delta} + m)$ global space from \cref{theorem:missing} and \cref{lemma:sqrt:memory-round}. \\

\noindent \textbf{Algorithm approximation:} We observe that there exists a value of $k$ such that
\[\frac{\rho^*(G)}{(1+\eps)(t + 1)}\leq k \leq \frac{\rho^*(G)}{t + 1}.\]
Using \cref{lemma:sampling-DS}, we have that
\[\frac{(1-\eps)\rho^*(H)}{(1+\eps)^2(t + 1)} \leq (1-\eps)pk \leq \frac{\rho^*(H)}{t + 1}\]
with probability at least $1 - n^{-4}$.
Therefore, when $\cAMPC$ invokes \cref{alg:mpc:DS}, we either attain a subgraph $S$ with density in $G$ at least $\frac{(1-\eps)\rho^*(G)}{2(1+\eps)^4(t + 1)}$ or $|S| \leq \frac{n}{(1+\eps/2)^L}$ while containing the densest subgraph of $H$, using \cref{lemma:DS-converge}.
We claim that one of the $\frac{4\log_{1+\eps} n}{L}$ invocations of \cref{alg:mpc:DS} will return a subgraph with the density guarantee.
We assume that \cref{alg:mpc:DS} does not return such a subgraph for the sake of contradiction.
Then, the size of the vertex set after the $\frac{4\log_{1+\eps} n}{L}$ invocations will be
\[n\cdot \rb{\frac{1}{(1+\eps/4)^L}}^{\frac{4\log_{1+\eps} n}{L}} < 1\]
which is not possible since the vertex set must contain the densest subgraph.
Therefore, given the existence of the parameter $k$ above, $\cAMPC$ will return a subgraph with density at least $\frac{(1-\eps)\rho^*(G)}{2(1+\eps)^4(t + 1)}$, which can be treated as a $(2+\eps)(t + 1)$-approximate densest subgraph by using a different $\eps$ when invoking \cref{alg:mpc:DS}.
\end{proof}

\section{$k$-core decomposition} \label{sec:k-core}

In this section, we apply our exponentiate and prune algorithm to the $k$-core decomposition problem.
See \cref{sec:basic-definition} for basic definitions and properties of the $k$-core decomposition problem.

\subsection{Base algorithm}
Our algorithm simulates the peeling algorithm from \cite{ghaffari2019improved}, presented in \cref{alg:kcore-peeling}.
The algorithm runs $\lceil \log_{1+\eps} n \rceil$ iterations.
Each iteration peels (removes) all vertices whose current degree is below $(2+2\eps)k$.
We call $k$ the \emph{degree parameter} and $(2+\eps)k$ the \emph{degree threshold}.
The algorithm produces a partial layer assignment $\ell_G$, where $\ell_G(v)$ is the iteration number in which $v$ is peeled, and $\ell_G(v) = \infty$ if $v$ is never peeled.

\begin{algorithm}
\caption{\kcorePeeling: Peeling algorithm for approximating coreness}
\label{alg:kcore-peeling}
\begin{algorithmic}[1]
\medskip
\Statex \textbf{Input:} graph $G$, parameters $\eps$, degree parameter $k$
\Statex \textbf{Output:} a partial layer assignment $\ell_G$
\medskip
\hrule
\medskip
\State $\ell(v) \gets \infty$ for all $v\in V(G)$
\For{$i = 1$ to $\lceil \log_{1+\eps} n \rceil$}
    \State $A\gets$ all vertices $v\in V(G)$ with $\deg_G(v) < (2+2\eps)k$
    \For{each $v\in A$}
        $\ell(v)\gets i$
    \EndFor
    \State Remove vertices in $A$ from $G$ \Comment{peeling step}
\EndFor
\State\Return $\ell_G$ and the labels of all vertices
\end{algorithmic}
\end{algorithm}

To analyze this algorithm, we need the following observation.

\begin{observation}[Folklore]
\label{observation:kcore}
Let $S \subseteq V(G)$ be the set of all vertices that have coreness at most $k$. Then, the number of edges incident to $S$ is at most $k|S|$.
\end{observation}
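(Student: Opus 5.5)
The natural route is the standard degeneracy-ordering argument: repeatedly delete a minimum-degree vertex from $G$, and charge each edge to the endpoint that is deleted first. Let $v_1, v_2, \dots, v_n$ be this removal order, and let $G_i$ be the graph induced by $\{v_i, \dots, v_n\}$. Every edge $\{v_i, v_j\}$ with $i<j$ is charged to $v_i$, so each edge is charged exactly once. The number of edges charged to $v_i$ is exactly $\deg_{G_i}(v_i)$.

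The key step is to show that if $C(v_i) \le k$, then $\deg_{G_i}(v_i) \le k$. For this I would use the folklore characterization of coreness via this ordering:
\[
C(v_i) \;=\; \max_{j \le i} \deg_{G_j}(v_j).
\]
This gives what we need directly, since $\deg_{G_i}(v_i) \le C(v_i) \le k$.

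To justify the characterization with minimal effort, I only need the upper direction, namely $\deg_{G_i}(v_i) \le C(v_i)$. Set $d = \max_{j\le i}\deg_{G_j}(v_j)$ and pick $j^\ast \le i$ attaining it. Since $v_{j^\ast}$ is a minimum-degree vertex of $G_{j^\ast}$, every vertex of $G_{j^\ast}$ has degree at least $d$ in $G_{j^\ast}$. So $G_{j^\ast}$ is a subgraph with minimum degree at least $d$, and it contains $v_i$ because $j^\ast \le i$. By maximality of the $d$-core, $v_i$ lies in the $d$-core, which gives $C(v_i) \ge d \ge \deg_{G_i}(v_i)$.

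Finally, bound the edges incident to $S$. Any such edge $\{u,w\}$ with $u\in S$ is charged either to $u$ or to $w$. The problem is that $w$ may lie outside $S$. I expect this to be the main obstacle, since an edge charged to $w \notin S$ still touches $S$.

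I would handle it by showing that vertices of $S$ can be assumed to come first in the order. Any vertex with coreness at least $k+1$ belongs to the $(k+1)$-core $K$, in which every degree is at least $k+1$. Meanwhile, the graph outside $K$ always contains some vertex of current degree at most $k$: otherwise, the union of $K$ with the remaining vertices would be a larger subgraph of minimum degree at least $k+1$, contradicting maximality of $K$. So minimum-degree peeling can be run with ties broken so that all of $S = V\setminus K$ is removed before any vertex of $K$. Concretely, peel vertices of degree at most $k$ outside $K$ first. This is still a valid min-degree order, or at least a valid order in which every removed vertex has current degree at most $k$, which is all the charging argument needs.

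With that order, every edge incident to $S$ is charged to an $S$-endpoint, and each $v\in S$ receives at most $k$ charges. Hence the number of edges incident to $S$ is at most $k|S|$. In the final write-up I would phrase this directly as: peel $V\setminus K$ one vertex at a time, always removing a vertex whose current degree is at most $k$, which exists by the maximality argument above, and charge each edge to its first-removed endpoint. This avoids the coreness formula altogether.
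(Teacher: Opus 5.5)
Your final argument is correct. Take $K$ to be the $(k+1)$-core, so that $S=V\setminus K$. Maximality of $K$ guarantees that, as long as some vertex of $S$ remains, one of them has current degree at most $k$. Hence all of $S$ can be peeled one vertex at a time without ever touching $K$. Charging each edge that touches $S$ to its first-removed endpoint puts every charge on a vertex of $S$, and each such vertex receives at most $k$ charges. This gives the bound $k|S|$.

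The paper states this observation as folklore and gives no proof, so there is nothing to compare against; what you wrote is the standard argument. Two small simplifications are available. First, the detour through the formula $C(v_i)=\max_{j\le i}\deg_{G_j}(v_j)$ is unnecessary, as you yourself noticed. Second, no special tie-breaking is needed. Every vertex of $K$ keeps degree at least $k+1$ throughout, while some remaining vertex of $S$ has degree at most $k$. So plain minimum-degree peeling already removes all of $S$ before any vertex of $K$.
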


This observation implies the following key properties.

\begin{lemma}[\cite{ghaffari2019improved}]
\label{lemma:kcore-decrease}
Let $S$ be the set of all vertices that have coreness at most $k$. After peeling all vertices with degree below $(2+2\eps)k$, the number of remaining vertices in $S$ is at most $|S|/(1+\eps)$.
\end{lemma}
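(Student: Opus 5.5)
A counting argument on degrees, combined with \cref{observation:kcore}, should suffice. Let $G$ be the current graph at the moment of the peeling step. Let $S$ be the set of vertices of $G$ whose coreness (in $G$) is at most $k$. Let $R \subseteq S$ be the vertices of $S$ that survive, i.e., those with $\deg_G(v) \geq (2+2\eps)k$. The goal is $|R| \leq |S|/(1+\eps)$.

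\textbf{Step 1: upper bound on edges.} By \cref{observation:kcore}, the number of edges with at least one endpoint in $S$ is at most $k|S|$.

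\textbf{Step 2: lower bound on the same edge count.} Every vertex $v \in R$ has at least $(2+2\eps)k$ incident edges, and each such edge has an endpoint in $S$. An edge is counted at most twice when we sum degrees over $R$, namely when both endpoints lie in $R$. Hence
\[
\frac{(2+2\eps)k\,|R|}{2} \;\leq\; \frac{1}{2}\sum_{v\in R}\deg_G(v) \;\leq\; \bigl|\{e : e \cap S \neq \emptyset\}\bigr| \;\leq\; k|S|.
\]

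\textbf{Step 3: conclude.} The chain gives $(1+\eps)k|R| \leq k|S|$, so $|R| \leq |S|/(1+\eps)$, which is the claim.

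The only delicate point is the factor $2$ in Step 2. Summing degrees over $R$ could double-count edges with both ends in $R$, which is exactly why the threshold is $(2+2\eps)k$ rather than $(1+\eps)k$. With that threshold the factor $2$ is absorbed and the bound follows.

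It is also worth checking that \cref{observation:kcore} is applied to the current graph, since the lemma is later used iteratively. Coreness only decreases under vertex removal, but that does not matter here: the observation holds for any graph $G$ and the set $S$ defined from that same $G$.
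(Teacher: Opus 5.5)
Your proposal is correct and matches the paper's argument: bound the degree sum of the surviving vertices of $S$ by twice the number of edges incident to $S$, which \cref{observation:kcore} caps at $k|S|$, giving $|R| \le 2k|S|/((2+2\eps)k) = |S|/(1+\eps)$. Your explicit double-counting justification for the factor $2$ spells out what the paper's one-line count leaves implicit.
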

\begin{proof}
From \cref{observation:kcore}, we have that there are $k|S|$ edges incident to $S$.
Thus, at most
\[\frac{2k|S|}{(2+2\eps)k} = \frac{|S|}{1+\eps}\]
vertices in $S$ have degree $\geq (2+2\eps)k$.
Therefore, after peeling all vertices with degree below $(2+2\eps)k$, the number of remaining vertices in $S$ is at most $|S|/(1+\eps)$. 
\end{proof}

\begin{lemma}[\cite{ghaffari2019improved}] 
\label{lemma:kcore-peeling}
Let $G$ be a graph, $\eps\in (0, 1)$, $k \in \mathbb{N}$. Then, all vertices $v$ with coreness at most $k$ has $\ell_G(v) < \infty$, and all vertices $v$ with coreness at least $(2+2\eps)k$ has $\ell_G(v) = \infty$.
\end{lemma}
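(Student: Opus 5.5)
We prove the two claims of \cref{lemma:kcore-peeling} separately, both by tracking how \kcorePeeling interacts with the set of low-coreness vertices and with the high cores.

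For the first claim, let $S_0$ be the set of vertices of coreness at most $k$. Let $S_i$ be the subset of $S_0$ still unpeeled after $i$ iterations, and let $G_i$ be the remaining graph. The plan is to apply \cref{lemma:kcore-decrease} in each remaining graph $G_i$, not only in $G$. The key observation is that coreness can only decrease when vertices are deleted: every subgraph of $G_i$ is a subgraph of $G$. Hence every vertex of $S_i$ still has coreness at most $k$ in $G_i$, so $S_i$ is contained in the set $S^{(i)}$ of vertices of coreness at most $k$ in $G_i$.

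Applying \cref{observation:kcore} in $G_i$ to $S^{(i)}$ bounds the number of edges incident to $S^{(i)}$ by $k|S^{(i)}|$. This is where I expect the main care is needed. The argument of \cref{lemma:kcore-decrease} only says that few vertices of $S^{(i)}$ survive; it does not directly say that $|S_{i+1}| \le |S_i|/(1+\eps)$, because $S^{(i)}$ may be larger than $S_i$. The cleaner route is to track $S^{(i)}$ itself. Survivors of $S^{(i)}$ still have coreness at most $k$ in $G_{i+1}$, so they lie in $S^{(i+1)}$. However, $S^{(i+1)}$ may also contain new vertices whose coreness dropped. So instead I would define the potential $S^{(i)}$ and argue via the degeneracy ordering.

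Concretely, I would repeat the degree count directly on $S_i$. Order $V(G_i)$ by a degeneracy (min-degree removal) ordering and orient each edge from its earlier endpoint to its later one. Every vertex of coreness at most $k$ in $G_i$ has at most $k$ out-edges. Each edge incident to $S_i$ has an endpoint in $S_i$. Any such edge directed out of a vertex outside $S_i$, i.e.\ of coreness greater than $k$, must point to a vertex later in the order, which has coreness at least as large. Since $S_i$ has coreness at most $k$, such edges are only directed out of vertices of coreness at most $k$ as well, so all edges incident to $S_i$ are out-edges of vertices of coreness at most $k$. If the bound needs to be restricted to $S_i$, one counts more carefully: the edges directed out of $S_i$ number at most $k|S_i|$, and edges into $S_i$ from outside come from vertices of coreness at most $k$. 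I would simply enlarge to $S^{(i)}$ and use the monotone fact $|S^{(i+1)}| \le |S^{(i)}|/(1+\eps)$. This holds because the vertices of $S^{(i+1)}$ that were not in $S^{(i)}$ had coreness greater than $k$ in $G_i$, and deleting only vertices of coreness at most $k$ (those of degree below $(2+2\eps)k$ that are peeled) does not lower their coreness to at most $k$. Verifying that last claim is the delicate point, since peeled vertices can have coreness up to $(2+2\eps)k$. If it fails, I would fall back on the paper's intended potential $S^{(i)}$ with the folklore guarantee. After $\lceil\log_{1+\eps} n\rceil$ iterations, $|S^{(i)}|<1$, so all of $S_0$ is peeled and $\ell_G(v)<\infty$.

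For the second claim, let $C$ be the $(2+2\eps)k$-core. We show by induction on iterations that no vertex of $C$ is ever peeled. While all of $C$ remains, every $v\in C$ has at least $(2+2\eps)k$ neighbors inside $C$, all still present, so $\deg(v)\ge(2+2\eps)k$ and $v$ is not placed in $A$. Hence every vertex of coreness at least $(2+2\eps)k$ gets $\ell_G(v)=\infty$.
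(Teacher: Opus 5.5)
Your argument for the second claim is correct and matches the paper's. The first claim, however, is not established. The fact you finally rely on is $|S^{(i+1)}|\le |S^{(i)}|/(1+\eps)$, where $S^{(i)}$ is the set of coreness-at-most-$k$ vertices of $G_i$, and it is false. The vertices peeled in an iteration are those of degree below $(2+2\eps)k$ in $G_i$. These can have coreness well above $k$, and deleting them can drag survivors down to coreness at most $k$. Concretely, take $G=K_{k+1,D}$ with $D\ge(2+2\eps)k$. Every vertex has coreness $k+1$, so $S^{(0)}=\emptyset$. The first iteration deletes the $D$ vertices of degree $k+1<(2+2\eps)k$ and leaves $k+1$ isolated vertices, so $S^{(1)}\neq\emptyset$. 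Thus $S^{(i)}$ need not shrink at all, and your fallback to ``the paper's intended potential $S^{(i)}$'' is the same broken potential.

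Your degeneracy-ordering attempt fails for a related reason. You use the ordering of $G_i$ and then equate ``outside $S_i$'' with ``coreness greater than $k$.'' But vertices of $V(G_i)\setminus S_i$ have coreness greater than $k$ only in $G$, not necessarily in $G_i$. Edges entering $S_i$ can therefore come from $S^{(i)}\setminus S_i$, and you only get the bound $k|S^{(i)}|$, which is useless since $|S^{(i)}|$ is not controlled.

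What is missing is a bound of $k|S_i|$ on the edges of $G_i$ incident to $S_i$, derived from the original graph. Let $K$ be the $(k+1)$-core of $G$; then $V(G_i)\subseteq S_i\cup K$. In $G[S_i\cup K]$, every vertex of $K$ still has coreness at least $k+1$. Every vertex of $S_i$ has coreness at most $k$, since coreness only drops in subgraphs. So $S_i$ is exactly the coreness-at-most-$k$ set of $G[S_i\cup K]$. \cref{observation:kcore} then bounds the edges incident to $S_i$ in $G[S_i\cup K]$, and hence in $G_i$, by $k|S_i|$.

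Equivalently, orient edges by the degeneracy ordering of $G$ (not of $G_i$) and restrict to $G_i$. Every edge touching $S_i$ then leaves a vertex of $S_i$, and each such vertex has out-degree at most $k$.

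The degree count of \cref{lemma:kcore-decrease} then gives $|S_{i+1}|\le|S_i|/(1+\eps)$ directly, and iterating over the $\lceil\log_{1+\eps}n\rceil$ iterations finishes the proof. The paper's one-line proof simply iterates \cref{lemma:kcore-decrease} and takes exactly this step for granted. You were right that it needs justification; the justification you settled on just does not hold.
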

\begin{proof}
Using \cref{lemma:kcore-decrease}, we know that after $\log_{1+\eps} n$ iterations all vertices with coreness at most $k$ will be peeled.
Hence, all these vertices $v$ will be labeled with $\ell_G(v) < \infty$.
On the other hand, the vertices with coreness at least $(2+2\eps)k$ induce a subgraph of minimum degree $\geq (2+2\eps)k$, and hence they will never be peeled.
\end{proof}

The peeling algorithm can be used to compute approximate coreness as follows.
First, run \cref{alg:kcore-peeling} in parallel for parameters $\eps$ and $k_i = (1+\eps)^i$, where $i = 0, 1, \dots, \lceil \log_{1+\eps} n\rceil$.
If a vertex $v$ is not peeled with degree parameter $k_i$, \cref{lemma:kcore-peeling} ensures that $C(v) \geq k_i$.
Hence, we can compute a $(2+2\eps)$-approximate $k$-core decomposition $\tilde{C}$ as follows:
For each vertex $v$, assign $\tilde{C}(v)$ to be the largest $k_i$ such that $v$ is not peeled when the degree parameter is $k_i$.

\subsection{Preprocessing}
Fix a value $k$, and consider the subproblem of simulating \cref{alg:kcore-peeling} with parameter $k$.
The first step to apply our technique is to construct a sparser subgraph $H$ via random sampling, such that running the peeling algorithm on $G$ with degree parameter $k$ can be simulated by running the algorithm on $H$ with parameter $\Theta_\eps(\log n)$.
This step is formalized as follows.


Consider running \cref{alg:kcore-peeling} on a graph $G$ given some $\eps, k$.
Denote by $\tau = \lceil\log_{1+\eps}(n)\rceil$ the number of iterations.
For $1 \leq i \leq \tau + 1$, define
\[
    V_i(G, \eps, k) = \{v\in V(G) : \ell_G(v) \geq i\}.
\]
That is, $V_i(G, \eps, k)$ is the set of vertices that survives at least $i-1$ iterations.

The following lemma shows that running \cref{alg:kcore-peeling} on a sampled graph will result in all vertices with coreness $<k$ being peeled and all vertices with coreness at least $(2+\Theta(\eps))k$ not being peeled.

\begin{lemma}
\label{lemma:kcore-sampling}
Let $G$ be a graph, $k\in\mathbb{N}$, $\eps\in(0, 1/6)$. Let $p = \min\rb{1, \frac{20\log n}{\eps^2 k}}$. Let $H$ be a graph obtained from sampling each edge of $G$ independently with probability $p$. Then, the following holds with high probability:
\begin{itemize}
    \item (A) For all $1 \leq i \leq \tau + 1$, $V_i(H, \eps, (1+\eps)pk) \subseteq V_i(G, \eps, k)$.
    \item (B) The following holds for all $k' \geq k$.
    Let $K$ be the vertex set of the $k'$-core of $G$.
    In the subgraph of $H$ induced by $K$, the degree of each vertex is at least $pk'(1-\eps)$.
\end{itemize}
\end{lemma}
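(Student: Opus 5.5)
We first dispose of the case $p = 1$: then $H = G$. Part (B) is immediate, since in the $k'$-core every vertex has degree at least $k' \geq (1-\eps)k'$. For (A), the threshold used on $H$ is $(2+2\eps)(1+\eps)k$, which exceeds the threshold $(2+2\eps)k$ used on $G$. So every vertex peeled in $G$ at some iteration is peeled in $H$ no later, and a straightforward induction on $i$ gives the inclusion. From now on assume $p = \frac{20\log n}{\eps^2 k} < 1$.

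\textbf{Part (B).} Fix $k' \geq k$ and let $K$ be the vertex set of the $k'$-core of $G$. Every $v \in K$ has at least $k'$ neighbors inside $K$. Its degree in $H[K]$ is therefore a sum of at least $k'$ independent Bernoulli$(p)$ variables, with mean at least $pk' \geq pk = 20\log n/\eps^2$. The lower tail of \cref{lemma:chernoff} gives
\[
\prob{\deg_{H[K]}(v) < (1-\eps)pk'} \leq 2\exp(-\eps^2 pk'/3) \leq n^{-6}.
\]
Two union bounds finish this part. The first is over the at most $n$ distinct cores, since the cores are nested and there are at most $n$ distinct core vertex sets. The second is over the vertices $v \in K$.

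\textbf{Part (A).} We prove $V_i(H) \subseteq V_i(G)$ by induction on $i$. The base case $V_1 = V$ is trivial. For the inductive step, suppose $V_i(H) \subseteq V_i(G)$. Take $v \in V_i(H)$ with $v \notin V_{i+1}(G)$. Then $v$ is peeled in $G$ at iteration $i$, so $\deg_{G[V_i(G)]}(v) < (2+2\eps)k$. We want $v$ also to be peeled in $H$ at iteration $i$. Since $V_i(H) \subseteq V_i(G)$, we have
\[
\deg_{H[V_i(H)]}(v) \leq \deg_{H[V_i(G)]}(v).
\]
So it suffices to show that $\deg_{H[V_i(G)]}(v) < (2+2\eps)(1+\eps)pk$.

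The main obstacle is that $V_i(G)$ is a deterministic set, determined by $G$ alone, but conditioning on it is harmless only if we union bound correctly. The key observation is that the sets $V_i(G)$ depend only on $G$, not on the sampling. So for each of the at most $(\tau+1)n$ pairs $(i,v)$ with $v \in V_i(G)$ and $\deg_{G[V_i(G)]}(v) < (2+2\eps)k$, the quantity $\deg_{H[V_i(G)]}(v)$ is a sum of fewer than $(2+2\eps)k$ independent Bernoulli$(p)$ variables.

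Pad this sum with dummy variables so that its mean is exactly $\mu = (2+2\eps)pk \geq 40\log n/\eps^2$. The upper tail of \cref{lemma:chernoff} then gives
\[
\prob{\deg_{H[V_i(G)]}(v) \geq (1+\eps)\mu} \leq \exp(-\eps^2\mu/3) \leq n^{-13}.
\]
A union bound over all pairs $(i,v)$ completes the induction with high probability. This works because the statement quantifies over the fixed sets $V_i(G)$, rather than over the random sets $V_i(H)$, which is exactly why a single sample suffices for all iterations.
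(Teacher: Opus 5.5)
Your proof is correct and follows the same route as the paper: induction on $i$ with an upper-tail Chernoff bound on the degrees of vertices peeled in $G$ for (A), and a lower-tail Chernoff bound inside the $k'$-core for (B). Your version is slightly more careful than the paper's in two places: you apply Chernoff to $\deg_{H[V_i(G)]}(v)$ over the deterministic set $V_i(G)$ rather than to the degree in the random subgraph $H_{i-1}$, which avoids any conditioning on the sampling, and you explicitly union bound over $k'$ in (B), which the paper omits.
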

\begin{proof}
If $p = 1$, then $H = G$, and it is straightforward to verify the lemma.
Therefore, we assume $p = \frac{20\log n}{\eps^2 k} < 1$.

\begin{itemize}

\item \textbf{Proof of (A).} For all $1 \leq i \leq \tau + 1$, let $G_i$ be the subgraph of $G$ induced by $V_i(G, \eps, k)$, and let $H_i$ be subgraph of $H$ induced by $V_i(H, \eps, (1+\eps)pk)$.
We prove (A) by induction on $i$.
\begin{itemize}
    \item\textbf{Base case:} For $i = 1$, $V_i(H,\eps,(1+\eps)pk) = V_i(G,\eps,k) = V(G)$.
    \item\textbf{Inductive step:} Consider $i > 1$ and suppose that $V_{i - 1}(H, \eps, (1+\eps)pk) \subseteq V_{i - 1}(G, \eps, k)$.
    Let $T = V_{i - 1}(G, \eps, k) \backslash V_i(G, \eps, k)$ be the set of vertices peeled at iteration $i$ in $G$.
    Then, we want to show that $V_i(H, \eps, (1+\eps)pk) \cap T = \emptyset$.
    
    For all $v \in T$, we know that $\deg_{G_{i-1}}(v) < (2+2\eps)k$ since it is peeled.
    Now, we want to analyze $\deg_{H_{i-1}}(v)$. 
    We have that
    \[\E{\deg_{H_{i-1}}(v)} < (2+2\eps)pk\]
    since $V_{i - 1}(H, \eps, (1+\eps)pk) \subseteq V_{i - 1}(G, \eps, k)$.
    Therefore, using the Chernoff bound (\cref{lemma:chernoff}), we have
    \[\prob{\deg_{H_{i-1}}(v) \geq (2 + 4\eps)pk} \leq \exp\rb{-\eps^2(2 + 4\eps)pk/3} \leq n^{-6}.\]
    Taking the union bound over all $v\in T$, we see that $\deg_{H_{i-1}}(v) < (2 + 4\eps)pk$ with probability at least $1 - n^{-5}$.
    As a result, these vertices will be peeled and $V_i(H, \eps, (1+\eps)pk) \cap T = \emptyset$, implying that $V_i(H, \eps, (1+\eps)pk) \subseteq V_i(G, \eps, k)$.
    This ends our induction.
\end{itemize}

\noindent Taking the union bound over all $1 \leq i \leq \log_{1+\eps} n + 1$, we have that (A) holds with probability at least $1 - n^{-4}$.

\item \textbf{Proof of (B).}
Consider an integer $k' \geq k$.
Let $K$ be the set of vertices with coreness at least $k'$ in $G$.
Then, let $G'$ be the induced subgraph of $K$ in $G$, and let $H'$ be the induced subgraph of $K$ in $H$.

Since $G'$ is the $k'$-core, $\deg_{G'}(v) \geq k'$ for all $v\in K$.
Hence,
\[\E{\deg_{H'}(v)} \geq pk' \text{ for all } v \in K.\]
Using the Chernoff bound, we have that
\[\prob{\left|\deg_{H'}(v) - \E{\deg_{H'}(v)}\right| \geq \eps\E{\deg_{H'}(v)}} \leq 2\exp\rb{-\eps^2 p k' / 3} \leq n^{-6}.\]
Taking the union bound over all $v\in K$, we see that $\deg_{H'}(v) \geq (1-\eps)pk'$ with probability at least $1 - n^{-5}$.
\end{itemize}

\end{proof}

\subsection{The algorithm}

\subsubsection{Algorithm description}
Our algorithm for $k$-core is presented in \cref{alg:kcore-simulation}.
In the following, we refer to \cref{alg:kcore-simulation} as the \emph{simulated algorithm} and \cref{alg:kcore-peeling} as the \emph{base algorithm}.
The simulated algorithm first chooses parameters $L, B, a$ based on the input parameters $t, k, \eps$.
The algorithm is iterative, where each iteration attempts to simulate $L$ iterations of the base algorithm.
For technical reasons, the algorithm requires $10\tau / L$ iterations simulate all $\tau$ iterations of the base algorithm.


\begin{algorithm}
\caption{\kcoreSimulation: A simulation of \cref{alg:kcore-peeling}}
\label{alg:kcore-simulation}
\begin{algorithmic}[1]
\medskip
\Statex \textbf{Input:} graph $G$, parameters $t \geq 0, \eps \in (0, 1/6)$, $k \leq \frac{40\log n}{\eps^2}$
\Statex \textbf{Output:} a subset of vertices of $G$
\medskip
\hrule
\State $L \gets $ smallest power of two that is at least $\frac{\delta}{40 \lg k} \cdot (\lg n)^{^{(t+1)/(t+2)}}$
\State $B = n^{\delta/2}$, $s = \lceil \lg L / (t + 1) \rceil$, $\kappa = (2+\eps)k$, and $a = (t+1) \cdot \kappa$
\For{$i = 1$ to $\lceil 10\tau/L \rceil$}
    \State invoke $\Expo(G, \kappa, s, t, B)$ to obtain $(T_v, \map_v)$ for each $v \in V(G)$ \label{line:kcore-expo} 
    \State $\ell_v \gets \PartialLayerTree(T_v, a, L)$ for $v \in V(G)$ \label{line:kcore-local-assignment} 
    \State compute the label of each vertex $v$ as
    \[
        \ell(v) = \min\{\ell_u(x) \mid u \in V(G), x \in V(T_u), \text{ and } \map_u(x) = v.\}
    \]
    \State remove all vertices with $\ell(v) \neq \infty$. \Comment{peeling step} \label{line:kcore-peel}
\EndFor
\State\Return the set of remaining vertices
\end{algorithmic}
\end{algorithm}

Each iteration of the simulated algorithm first calls \Expo to compute pruned tree views of depth $2^{s(t+1)}$ for each vertex, where the parameters are chosen so that $2^{s(t+1)} \geq 2^{\lg L} \geq L$.
Next, \PartialLayerTree is invoked to label the nodes of each tree.
Then, line~\ref*{line:kcore-peel} peels all vertices that receive a finite label in any of the tree.

The properties of the algorithm are summarized as the following lemma.

\begin{lemma}
\label{lemma:k-core-log-case}
Given a graph $G$, parameters $t \geq 0, \eps \in (0, 1)$, and $k \leq \frac{40 \log n}{\eps^2}$, \cref{alg:kcore-simulation} computes a set of vertices $K$ such that 
\begin{itemize}
    \item all vertices with coreness $< k$ are not in $K$, and
    \item all vertices with coreness $\geq (2+\eps)(t+1)k$ are in $K$.
\end{itemize}
The algorithm requires $O\rb{\log^{1/(t+2)} n \cdot \poly\log\log (n)}$ rounds, $O(n^\delta)$ local space, and $O(n^{1+\delta} + m)$ global space, where $\delta \in (0,1)$ is any fixed constant.
\end{lemma}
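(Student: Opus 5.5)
We split the proof into correctness (the two containment claims) and complexity. Let $\kappa = (2+\eps)k$ and $a = (t+1)\kappa$. Throughout we compare the simulated algorithm against the base algorithm \kcorePeeling run with degree parameter $k$ (threshold $\kappa$ up to the $\eps$ slack). Let $\ell_G$ denote the layer assignment it produces on the current remaining graph. This $\ell_G$ is a partial layer assignment with out-degree $\kappa$.

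\textbf{Second claim.} Vertices of coreness $\geq (t+1)\kappa$ are never removed. Such vertices induce a subgraph of minimum degree $\geq a$. We argue by induction over iterations of the outer loop: if none of them has been removed so far, they still form a set in which each has $\geq a$ neighbors inside the set. Fix a tree $T_u$ and a node $x$ with $\map_u(x)$ in this set. The quantity $|\children_{T_u}(x)| + \Missing(x)$ equals $\deg_G(\map_u(x))$ minus the number of children already peeled. By an inner induction on the \PartialLayerTree iterations, no node mapping into the set is peeled, so this quantity stays $\geq a$. Hence $\ell_u(x)=\infty$ for every such node. Their minimum label is $\infty$, so they are not removed in line~\ref{line:kcore-peel}. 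Note that this argument uses neither activeness nor (P1).

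\textbf{First claim.} Every vertex of coreness $<k$ is eventually removed; this is the main obstacle. One phase is safe: by \cref{lemma:orientation:partiallayer} (with out-degree $\kappa$ and (P1) giving $\Missing\le t\kappa$), every vertex $v$ with $\ell_G(v)\le L$ and small $\NumPathsIn_{G,\ell_G,L'}(v)$ gets a label $\le \ell_G(v)$. It is therefore peeled in the phase, and, as in \cref{lemma:orientation:out-degree}, peeling with the min-combination never removes a vertex the base algorithm keeps. By \cref{lemma:parameters} (with $B=n^{\delta/2}$ and $L$ chosen so that $B\ge(2\kappa)^{L'2^s}$), the exceptions number at most $n/2^L$ among finitely-labeled vertices. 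These exceptions need not be peeled.

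The difficulty is that, unlike orientation, later batches $U_j$ may be large. I would follow the outline: define batches $U_j$ of $L$ consecutive base layers and \emph{critical vertices} (those with a chain of dependency to a leftover vertex of an earlier batch). The key claim is that a non-critical vertex of batch $U_j$ with small path count is peeled in the phase that sees its batch as the front. This yields the recursion $f(i+1,j)\le f(i,j)/q^{9}+\sum_{j'<j} f(i,j')q^{j-j'+1}$ with $q=\kappa^L$. I would then prove by a potential argument, e.g.\ $\Phi_i=\sum_j f(i,j)q^{-cj}$ for a suitable constant $c$, or by showing $f(i,j)\le n\,q^{-(i-2j)}$ inductively, that after $\lceil 10\tau/L\rceil$ phases all $f(i,j)<1$. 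Since \cref{lemma:kcore-decrease} ensures that all coreness-$<k$ vertices are finitely labeled by the base algorithm within $\tau$ layers, they are all gone. Bounding the critical vertices of $U_j$ by counting increasing paths from leftovers of $U_{<j}$ (via \cref{claim:strictly-increasing-paths}, each leftover spawns $\le q$ dependents per batch gap) is where I expect most of the work.

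\textbf{Complexity.} Each phase costs $O(s(t+1))=O(\log\log n)$ rounds by \cref{theorem:missing}, plus $O(1)$ rounds for labeling and taking the minimum. The number of phases is $10\tau/L=O(\log n\cdot\lg k/\log^{(t+1)/(t+2)}n)=O(\log^{1/(t+2)}n\log\log n)$, since $k=O(\log n)$. We need $B>\kappa^{2^s}$, which holds since $2^s=O(\log^{1/(t+2)}n)$. Local space is $B^{1+o(1)}\le n^\delta$ and global space is $O(nB^{1+o(1)}+m)$.
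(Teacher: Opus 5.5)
Your skeleton matches the paper's. Your argument for preserving the $(t+1)\kappa$-core (children plus $\Missing$ equals the current degree minus the peeled children, and peeled children map to distinct vertices outside the core) is exactly the proof of \cref{lemma:kcore-preservation}. The batch, recursion and potential structure and the complexity accounting also match. The gap is in the first claim.

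The statement you call key is this: in phase $i$, a vertex of $U_j^i$ that no vertex of $U_{<j}^i$ reaches by a strictly increasing path in $G_i$, and whose path count is small, is peeled in phase $i$ \emph{even while earlier batches still have leftovers}. This is the heart of the proof (\cref{lemma:kcore-critical}), and you give no argument for it. \cref{lemma:orientation:partiallayer} cannot supply it. That lemma only covers vertices with $\ell_G(v)\le L$, i.e.\ the first batch. It says nothing when leftovers of earlier batches sit below $v$ in its tree view, where they may be inactive or have large $\Missing$. Three things are needed.
(a) Fix $\ell_G$ as the base run on the original $G$ and restrict it to $G_i$. The out-degree stays below $\kappa$, and the batches do not move between phases.
(b) A transitivity step. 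Any $u$ reaching such a $v$ by an increasing path lies in the same batch and itself has no dependency, by concatenating paths. Every increasing path into $v$ starts in $U_j^i$, so it has length at most $L$. Hence the truncated count equals the full count, and $\NumPathsIn_{G_i,\ell_G,L}(u)\le\NumPathsIn_{G_i,\ell_G,L}(v)$. Consequently every monotonically reachable node of $T_v$ maps to an active vertex (by (P2), using $2^{s(t+1)}\ge L$) and has depth at most $L-1$, so (P1) gives $\Missing\le t\kappa$ there.
(c) An induction on $\ell_G(v)-\ell_G(\map_v(x))$, from $L-1$ down to $0$, showing that the local peeling labels every monotonically reachable node. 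Its children of smaller $\ell_G$ are monotonically reachable, hence in the batch and already labeled, and fewer than $\kappa$ other children remain.

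The claim must also hold for all batches in the same phase; ``peeled in the phase that sees its batch as the front'' does not give your recursion. The path counting you flag as the main work is routine once (b) is in place.

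There are also three smaller errors.

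First, your remark that min-combination peeling never removes a vertex the base algorithm keeps is false. The local threshold is $a=(t+1)\kappa$. For $t\ge1$, take a $d$-regular graph with $d$ at least the base threshold but below $a$. The base algorithm peels nothing, yet every root is labeled $1$, so the whole graph is removed in the first phase. \cref{lemma:orientation:out-degree} only bounds out-degree. The remark is harmless, since your preservation argument is independent, but it should be dropped.

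Second, the $q^{-9}$ self-term in your recursion needs $B^{1/2^s}\ge\kappa^{10L}$, because the (X2)-failures within a batch are bounded by $\kappa^{L}|U_j^i|/B^{1/2^s}$. The inequality $B\ge(2\kappa)^{L'2^s}$ from \cref{lemma:parameters} only yields a factor $2^{-L}$. The stronger bound has to be checked for $B=n^{\delta/2}$ and the $k$-core choice of $L$.

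Third, your explicit bound $f(i,j)\le nq^{-(i-2j)}$ does not propagate. The $j'=j-1$ term alone contributes $nq^{2j-i}$, a factor $q$ above the target. With weight $q^{-4j}$ (the maximum as in \cref{claim:kcore-recursion-bound}, or the sum) you get $q^{-9}+q^{-2}/(1-q^{-3})\le q^{-1}$. All batches then vanish after $\lceil\log_q n\rceil+4p+1<10\tau/L$ phases.
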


The proof of \cref{lemma:k-core-log-case} is deferred to \cref{sec:kcore-analysis}.
We first show that the lemma can be combined with our random sampling to compute an approximate $k$-core decomposition. 

\begin{theorem}
\label{theorem:k-core}
There is an MPC algorithm that, given a graph $G$, parameters $t \geq 0$ and $\eps \in (0, 1/6)$, computes an approximate coreness value $\tilde{C}(v)$ for each vertex $v$ such that the following holds with probability $1 - n^{-5}$.
\[
    \tilde{C}(v) \leq C(v) \leq (2+\eps)(t+1)\tilde{C}(v) \text{ for every } v \in V(G).
\]
The algorithm requires $O\rb{\log^{1/(t+2)} n \cdot \poly\log\log (n)}$ rounds, $O(n^\delta)$ local space, and $\tO(n^{1+\delta} + m)$ global space, where $\delta \in (0,1)$ is any fixed constant.
\end{theorem}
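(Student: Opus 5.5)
We will derive \cref{theorem:k-core} from \cref{lemma:k-core-log-case} and the sampling reduction \cref{lemma:kcore-sampling}, with all guesses run in parallel. For each $i = 0, 1, \dots, \lceil \log_{1+\eps} n\rceil$ we fix the guess $k_i = (1+\eps)^i$ and sample $H_i$ by keeping each edge with probability $p_i = \min(1, \frac{20\log n}{\eps^2 k_i})$. We then run \cref{alg:kcore-simulation} on $H_i$ with degree parameter $k'_i = (1+\eps)p_i k_i$, which is at most $\frac{40\log n}{\eps^2}$ as \cref{lemma:k-core-log-case} requires. This returns a set $K_i$, and we set $\tilde{C}(v)$ to be the largest $k_i$ with $v \in K_i$, up to a constant-factor normalization fixed below (and $0$ if there is none). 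Since there are $O(\log_{1+\eps} n)$ instances, the global memory grows by only an $O(\log n)$ factor, giving $\tO(n^{1+\delta}+m)$. The round complexity and local space are inherited from \cref{lemma:k-core-log-case}.

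For correctness, fix $i$. \emph{Lower side:} a vertex with coreness $< k_i$ in $G$ should not lie in $K_i$. \cref{lemma:k-core-log-case} only gives that vertices with coreness $< k'_i$ \emph{in $H_i$} are excluded, so I would argue through peeling instead. Internally, \cref{alg:kcore-simulation} peels with threshold $(2+\eps)k'_i$, and it is guaranteed to peel every vertex that \kcorePeeling on $H_i$ with parameter $k'_i$ peels. I would expose this stronger guarantee from its proof, namely that \kcoreSimulation removes every vertex with finite $\ell_{H_i}$. By \cref{lemma:kcore-sampling}(A), $V_{\tau+1}(H_i,\eps,(1+\eps)p_ik_i) \subseteq V_{\tau+1}(G,\eps,k_i)$. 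By \cref{lemma:kcore-peeling}, the latter set excludes all vertices of $G$-coreness at most $k_i$. Hence $v\in K_i$ implies $C(v) > k_i \ge k_i$.

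\emph{Upper side:} if $C(v) \geq (2+\eps)(t+1)(1+\eps)^{c}k_i$ for a suitable constant $c$, then $v \in K_i$. Take $k' = C(v)$. By \cref{lemma:kcore-sampling}(B), the $k'$-core $K$ of $G$ induces a subgraph of $H_i$ with minimum degree at least $(1-\eps)p_ik'$. So every vertex of $K$ has $H_i$-coreness at least $(1-\eps)p_ik' \ge (2+\eps)(t+1)k'_i$ once $k' \geq \frac{(1+\eps)}{(1-\eps)}(2+\eps)(t+1)k_i$. The second bullet of \cref{lemma:k-core-log-case} then puts $v\in K_i$. Combining both sides with the geometric grid of $k_i$ gives $\tilde C(v) \le C(v) \le (2+O(\eps))(t+1)(1+\eps)^{O(1)}\tilde C(v)$, and rescaling $\eps$ yields the stated bound. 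A union bound over the $O(\log n)$ instances, using the $1-n^{-4}$ guarantees, gives the probability bound after adjusting constants in $p_i$.

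The main obstacle is the lower side. As stated, \cref{lemma:k-core-log-case} speaks about coreness in the graph it is run on, not about the peeling layers of $G$, and the sampling lemma transfers only peeling sets, not coreness. I would first try to strengthen \cref{lemma:k-core-log-case} from its proof in \cref{sec:kcore-analysis} so that the returned set is contained in $V_{\tau+1}$ of \kcorePeeling on the input graph. If that is awkward, I would instead argue directly that sampling approximately preserves coreness below $k_i$. This would be a Chernoff bound on the degeneracy ordering: a vertex of $G$-coreness $<k_i$ has fewer than $k_i$ later neighbors, hence fewer than $(1+\eps)p_ik_i$ sampled later neighbors whp, so its $H_i$-coreness is below $k'_i$.
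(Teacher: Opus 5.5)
Your proposal is correct and follows the paper's route: parallel guesses $k_i=(1+\eps)^i$, edge sampling with $p_i$, running \cref{alg:kcore-simulation} on $H_i$ with parameter $(1+\eps)p_ik_i$, and combining \cref{lemma:kcore-sampling} with \cref{lemma:k-core-log-case}. You are also more careful than the paper on the lower side, which the paper only cites. Your fix is exactly what is needed: the proof of \cref{lemma:k-core-log-case} (via \cref{lemma:kcore-vertex-drop}) shows that every vertex with a finite base label on the input graph is peeled, and chaining this with \cref{lemma:kcore-sampling}(A) at index $\tau+1$ and \cref{lemma:kcore-peeling} gives the claim; your degeneracy-order Chernoff fallback would also work.

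One small correction, a boundary issue the paper shares. Since $k_0=1$, a non-isolated vertex of small coreness (e.g.\ coreness $1$) lies in no $K_i$, so setting $\tilde C(v)=0$ ``if there is none'' violates $C(v)\le(2+\eps)(t+1)\tilde C(v)$. Set $\tilde C(v)=1$ for such vertices instead. This is justified because your upper-side bound at $k_0$ gives $C(v)<\frac{(1+\eps)(2+\eps)(t+1)}{1-\eps}$ for them, which is within the required factor after your rescaling of $\eps$.
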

\begin{proof}
Let $\eps_0 = \eps_0 / 30$.
For each $k_i = (1+\eps_0)^i$, $i \in [0, \log_{1+\eps_0} n]$, we perform the following.
First, obtain a subgraph $H_i$ by sampling each edge independently with probability $p_i = \min\rb{1, \frac{20 \log n}{\eps_0^2 k_i}}$.
If $k_i \geq \frac{20 \log n}{\eps_0^2}$, run \cref{alg:kcore-simulation} on $H_i$ with degree parameter $(1+\eps_0)p_ik_i = (1+\eps_0) \frac{20 \log n}{\eps_0^2}$ on $H_i$; otherwise, simply run \cref{alg:kcore-simulation} with degree parameter $k_i$ on $G$.
By \cref{lemma:kcore-sampling,lemma:k-core-log-case}, all vertices with coreness $< k_i$ in $G$ will be peeled, and all vertices with coreness at least $\frac{(2+\eps_0)}{(1- \eps_0)}(t+1)k \leq (2+7\eps_0)(t+1)k$ in $G$ will not be peeled.
Hence, for each $v$, we assign $\tilde{C}(v)$ as the largest $k_i$ such that $v$ is not peeled in the run for $k_i$.
It is not hard to verify that $\tilde{C}$ satisfies the property in the theorem.

Since the peeling for all $k_i$ can be done in parallel, the round and space complexities follow from \cref{theorem:missing}.
This completes the proof.
\end{proof}

\subsubsection{Overview of the analysis} \label{sec:kcore-analysis}
In the following, we present an overview of the proof of \cref{lemma:k-core-log-case} and introduce key technical lemmas.
Let $t \geq 0, \eps \in (0, 1)$ be approximation parameters and let $k$ be the degree parameter.
Assume that $k \leq \frac{40\log n}{\eps^2} = O_\eps(\log n)$.

Consider an execution of $\kcorePeeling(G, \eps, k)$.
Denote by $\ell_G$ the output labeling.
Let $\kappa = (2 + \eps)k$ be the degree threshold in the base algorithm.
It follows that $\ell_G$ has out-degree at most $\kappa$.

\begin{definition}[Layer sets]
For $i \in [1, \tau] \cup \{\infty\}$, define the \emph{$i$-th layer set} as $A_i \defeq \{v \mid v \in V(G) \text{ and } \ell_G(v) = i\}$.
Let $A_{\leq i} = \bigcup_{j \leq i} A_i$.
Define $A_{<i}$ and $A_{>i}$ similarly.    
\end{definition}

By \cref{lemma:kcore-decrease}, $A_{\leq \tau}$ contains the set of all vertices with coreness $\leq k$.
Hence, our main objective is to design a simulation of \kcorePeeling
that satisfy the following:
\begin{enumerate}
    \item All vertices in $A_{\leq \tau}$ are peeled. That is, the simulated algorithm peels at least as aggressively as the base algorithm.
    \item All vertices with coreness at least $(2+\eps)(t+1)k$ are not peeled. That is, the simulated algorithm at least preserves $(2+\eps)(t+1)k$-core.
\end{enumerate}

The second property is formulated below, and is proven in \cref{sec:kcore-preservation}.

\begin{lemma} \label{lemma:kcore-preservation}
    No vertex in $(2+\eps) \cdot (t+1)\cdot k$-core is removed during the execution of \cref{alg:kcore-simulation}.
\end{lemma}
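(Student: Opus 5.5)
We argue by induction over the outer iterations $i = 1, \dots, \lceil 10\tau/L\rceil$ of \cref{alg:kcore-simulation}, maintaining the invariant that the current graph still contains the entire vertex set $K$ of the $(t+1)\kappa$-core of the original $G$, where $\kappa=(2+\eps)k$ and $a=(t+1)\kappa$. Note $(2+\eps)(t+1)k = a$. Since $K$ is contained in the current graph $G'$, every $v\in K$ satisfies $\deg_{G'[K]}(v)\ge a$. It therefore suffices to show that a single iteration never assigns a finite label $\ell(v)$ to any $v\in K$. By the $\min$ in the label computation, this means showing that $\ell_u(x)=\infty$ for every tree $T_u$ and every node $x\in V(T_u)$ with $\map_u(x)\in K$.

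Fix a tree $T_u$ produced by \Expo, and consider \PartialLayerTree run with threshold $a$. I will show by induction on the peeling round $r=1,\dots,L$ the claim $(\star)$: no node $x$ with $\map_u(x)\in K$ is removed in round $r$. This mirrors the argument of \cref{lemma:ds:inf-level}. A node $x$ is removed in round $r$ only if $|\children_{T_u}(x)|+\Missing(x)<a$, where $|\children_{T_u}(x)|$ denotes the number of currently unremoved children. Here $\Missing(x)=|N_G(\map_u(x))|-|\children_T(x)|$ is computed with respect to the original children. By the valid mapping property, the original children of $x$ map to distinct neighbors of $\map_u(x)$. Hence the current value $|\children(x)|+\Missing(x)$ equals $|N_{G'}(\map_u(x))|$ minus the number of children of $x$ removed in earlier rounds.

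By the inductive hypothesis $(\star)$, every removed child maps outside $K$. So the quantity above is at least the number of neighbors of $\map_u(x)$ that lie in $K$, namely $\deg_{G'[K]}(\map_u(x))\ge a$. Thus $x$ is not removed, proving $(\star)$ for round $r$.

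This argument does not use the missing bound (P1), nor does it depend on whether $\map_u(x)$ is active. It uses only that $\Missing$ is defined relative to the full neighborhood in the current graph, together with the distinctness of children images. It also handles leaves and inactive tree roots uniformly: for such nodes $\children=\emptyset$ and $\Missing=\deg_{G'}$, so they too are never removed if they map into $K$.

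The main point requiring care is the identification of $\Missing(x)+|\children(x)|$ with a neighbor count in the graph $G'$ on which \Expo was run in that iteration. After earlier iterations have peeled vertices, $N_G$ in \PartialLayerTree must refer to the current graph. I need the invariant that all of $K$ survives, so that $\deg_{G'[K]}\ge a$ still holds; this is exactly the outer induction. With that fixed, every node mapping into $K$ retains label $\infty$ in every tree, so $\ell(v)=\infty$ for all $v\in K$. Line~\ref*{line:kcore-peel} therefore never removes $K$, and the invariant propagates to the next iteration.
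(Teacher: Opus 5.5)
Your proof is correct and follows essentially the same route as the paper. There is an outer induction over iterations maintaining $K \subseteq V(G_i)$ (this is \cref{lemma:k-core-preservation-sublemma}), and an inner induction over the peeling rounds of \textsc{Partial-Layer-Tree-Assignment} showing that no tree node mapping into $K$ receives a finite label, so the $\min$ over trees leaves every $v\in K$ at $\infty$.

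The only difference is cosmetic. You bound $|\children_{T}(x)|+\Missing(x) \ge |N_{G_i}(\map(x))\cap K| \ge \kappa(t+1)$ directly, using distinctness of child images. The paper phrases the same step as ``$\map(x)$ has degree $<\kappa(t+1)$ in $G_i\setminus\bigcup_{j'<j}Y_{j'} \supseteq K$'' and then invokes the folklore \cref{claim:kcore-elimination}.
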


We proceed to outline the proof for the first property.
Similar to edge orientation, we can show that all but $O(|A_{\leq L}| / 2^L)$ of the vertices in $A_{\leq L}$ can locally simulate $L$ steps of the peeling algorithm.
This implies that the number of vertices is reduced to $|A_{>L}| + O(|A_{\leq L}| / 2^L)$ after one iteration of \cref{alg:kcore-simulation}.
In edge orientation, such an upper bound already implies that the size of $V(G)$ has decreased by a $2^{\Theta(L)}$ factor.
This is not the case in the $k$-core setting, however, as $A_{>L}$ can be almost as large as the whole vertex set.

To overcome this challenge, we develop a strengthened argument, showing that when $|A_{\leq L}|$ is small, many vertices of $A_{>L}$ can also locally simulate the peeling process, even though we only compute pruned tree views of depth $L$.
To formalize this idea, we first divide layer sets into \emph{batches} of $L$ layers.

\begin{definition}[Layer batches]
For $j \in [1, \lceil \tau / L \rceil]$, define the \emph{$j$-th batch} as $\bigcup_{b = (j-1)\cdot L + 1}^{j \cdot L} A_b$.
Let $U_{\leq j} = \bigcup_{b \leq j} U_b$.
Define $U_{<j}$ and $U_{>j}$ similarly.
Denote by $p = \tau/L$ the number of batches.
For $i \in \{1, 2, \dots, 10\tau/L\}$, denote by $U_i^j$ the subset of vertices in $U_j$ that have not been peeled at the beginning of iteration $i$ in the simulated algorithm.
\end{definition}

By an adaptation of the edge orientation argument, we can show that $|U_1^i| \leq |U_1| \cdot \kappa^{10i \cdot L}$.
Hence, all of the first batch will be peeled in $O(\tau / L)$ iterations.
The size of the second batch is harder to upper bound, as the graph may contain several strictly increasing paths of length $\approx 2L$ ending at $U_2$;
hence, as long as not all vertices of $U_1$ are peeled, many vertices $U_2$ cannot simulate peeling locally by computing only pruned tree views of depth $L$.
However, we show that the number of such vertices can be bounded in terms of $|U_1|$.
Intuitively, if a vertex in $U_2$ is not the destination of any strictly increasing path of length $> L$, then it should be possible to simulate peeling on it with its pruned tree view of depth $L$.
This idea is formalized and generalized to higher batches as follows.

\begin{definition}[Critical vertex] \label{def:kcore-critical}
Consider a fixed iteration $i$.
A vertex $v \in G_i$ is critical with respect to $i$ if it satisfies both of the following.
\begin{itemize}
    \item (X1) No vertex in $U_{< j}^i$ can reach $v$ via a strictly increasing path (with respect to $\ell_G$) in $G_i$.
    \item (X2) $\NumPathsIn_{G_i,\ell_G,L} \leq B^{1/2^s}$.
\end{itemize}
\end{definition}

\begin{lemma} \label{lemma:kcore-critical}
Consider an iteration $i$ of \cref{alg:kcore-simulation}.
Let $v \in G_i$ be a critical vertex with respect to $i$.
Then, \cref{alg:kcore-simulation} must peel $v$ in iteration $i$.
\end{lemma}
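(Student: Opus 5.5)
The plan is to follow the route of \cref{lemma:orientation:partiallayer}, applied to the graph $G_i$ that remains at the start of iteration $i$, with the base labeling $\ell_G$ shifted by the batch offset. Let $v\in U_j^i$ be critical, with $v$ in batch $U_j$. Set $\ell'(u)=\ell_G(u)-(j-1)L$ for $u\in G_i$ with $\ell_G(u)\ge (j-1)L+1$, and $\ell'(u)=\infty$ otherwise. The vertices of $U^i_{<j}$ that are still present get label $\infty$ and are treated as never peeled.

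The first step is to show that $\ell'$, restricted to vertices that can reach $v$ along strictly increasing paths, behaves like a layer assignment of out-degree $\kappa$ in $G_i$. Take $u$ with $\ell_G(u)<\infty$. At the base peeling step $\ell_G(u)$, $u$ had fewer than $\kappa$ neighbors $w$ with $\ell_G(w)\ge \ell_G(u)$. Since $G_i$ is an induced subgraph, this bound persists in $G_i$. Condition (X1) is what makes the relabeling harmless. Any vertex $w$ lying on a strictly increasing path into $v$ satisfies $\ell_G(w)\ge (j-1)L+1$, because otherwise some vertex of $U^i_{<j}$ would reach $v$. Vertices of lower batches that were already peeled are absent from $G_i$, and unpeeled ones are exactly the vertices excluded by (X1). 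So along every monotone chain ending at $v$, $\ell'$ takes values in $[1,L]$ and agrees with a genuine peeling order.

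The second step handles activity and size. By (X2) and \cref{lemma:expo-pathinbound} (P2), applied to a partial layer assignment of out-degree $\kappa$ on $G_i$, $v$ remains active, and every monotonically reachable node in $T_v$ maps to a vertex whose $\NumPathsIn$ is also below the budget. Those vertices are therefore active too, which gives $\Missing\le t\kappa$ by (P1) of \cref{theorem:missing}. With $a=(t+1)\kappa$, we then induct on $\ell'$ over monotonically reachable nodes $x$ of $T_v$, exactly as in Lemma 3.9 of \cite{ghaffari2025density}: at round $\ell'(\map(x))$ of \PartialLayerTree, every surviving child of $x$ maps to a neighbor with $\ell'\ge \ell'(\map(x))$. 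There are fewer than $\kappa$ such neighbors, and adding $\Missing\le t\kappa$ gives a total below $a$, so $x$ is peeled with $\ell_v(x)\le \ell'(\map x)\le L$. Applying this to the root, $v$ receives a finite label, and the min-combination in \cref{alg:kcore-simulation} peels $v$. The tree depth $2^{s(t+1)}\ge L$ ensures that the views are deep enough.

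The main obstacle is the first step. A strictly increasing path in $G_i$ ending at $v$ may pass through vertices whose $\ell_G$ lies in an earlier batch, and these are exactly the vertices whose simulation is incomplete. I would resolve this by invoking (X1) directly: any such vertex is either already removed from $G_i$ or belongs to $U^i_{<j}$ and is excluded. A secondary subtlety is that (X2) is stated with respect to $\ell_G$ and length $L$, while (P2) uses length $2^{s(t+1)}$. Since $L$ is a power of two, $2^{s(t+1)}$ may exceed $L$, so I would restrict paths to those of length at most $L$ and note that these suffice for the simulation of $L$ layers.
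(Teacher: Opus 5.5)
Your proposal is correct and follows essentially the paper's route: (X1) confines every strictly increasing chain into $v$ to batch $j$ (the content of the paper's transitivity lemma), (X2) with (P2) makes the images of monotonically reachable nodes active, so (P1) gives $\Missing \le t\kappa$ at these nodes (which have depth at most $L-1$), and an induction over monotonically reachable nodes---yours on the shifted label $\ell'$, the paper's on the gap $\ell_G(v)-\ell_G(\map_v(x))$---shows that the root receives a finite label. For the length mismatch you flag, the cleaner fix is the paper's rather than restricting (P2): by (X1) every strictly increasing path in $G_i$ ending at $v$ starts in batch $j$ and so has at most $L$ vertices, whence $\NumPathsIn_{G_i,\ell_G,2^{s(t+1)}}(v)=\NumPathsIn_{G_i,\ell_G,L}(v)$; and extending paths along an increasing $u$-to-$v$ path gives $\NumPathsIn_{G_i,\ell_G,2^{s(t+1)}}(u)\le \NumPathsIn_{G_i,\ell_G,2^{s(t+1)}}(v)$ for every $u$ reaching $v$, which is the monotonicity you assert for the descendants.
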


The proof of \cref{lemma:kcore-critical} is presented in \cref{sec:kcore-critical}.
Equipped with this lemma, the analysis of round complexity reduces to upper bounding the number of critical vertices in each iteration.
We provide an upper bound in the following lemma.

\begin{lemma} \label{lemma:kcore-vertex-drop}
After $i^* \leq 10\tau/L$ iterations of \cref{alg:kcore-simulation}, we have $|U_j^{i^*}| = 0$ for all $j \in \{1, 2, \dots, p\}$.
\end{lemma}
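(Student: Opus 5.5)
The plan is to bound, iteration by iteration, how many vertices of each batch survive, following the recursion sketched in the outline. Throughout, let $q \defeq \kappa^{L}$. First I would fix an iteration $i$ and a batch $j$. By \cref{lemma:kcore-critical}, every vertex of $U_j^i$ that is critical with respect to $i$ is peeled in iteration $i$. Hence $U_j^{i+1}$ is contained in the union of two sets: vertices of $U_j^i$ that violate (X1), and vertices of $U_j^i$ that satisfy (X1) but violate (X2).

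\textbf{Vertices violating (X2).} I would reuse the counting from \cref{lemma:parameters}: restricted to $G_i$, the layering $\ell_G$ is still a partial layer assignment of out-degree $\kappa$. Summing $\NumPathsIn_{G_i,\ell_G,L}$ over $U_j^i$ and applying \cref{lemma:mpc-numpaths} (via the path-out count, restricted to paths ending in $U_j^i$) bounds the number of violators by $|U_j^i|\kappa^{L}/B^{1/2^s}$. The choice of $L,s,B$ in \cref{alg:kcore-simulation} gives $B^{1/2^s}\ge \kappa^{10L}$, mirroring the verification of \cref{eqn:verify} with the constant $40$ in place of $4$. So this set has size at most $|U_j^i|/q^{9}$.

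\textbf{Vertices violating (X1).} Such a vertex is reached by a strictly increasing path starting in some $U_{j'}^i$ with $j'<j$. Since layers strictly increase along the path and $v$ lies in batch $j$, the path has length at most $(j-j'+1)L$. By \cref{claim:strictly-increasing-paths}, each start vertex has at most $\kappa^{(j-j'+1)L}=q^{j-j'+1}$ such paths. The number of violators is therefore at most $\sum_{j'<j}|U_{j'}^i|\,q^{j-j'+1}$.

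Combining the two bounds gives the recursion $f(i+1,j)\le f(i,j)/q^{9}+\sum_{j'<j} f(i,j')q^{j-j'+1}$, with $f(1,j)\le |U_j|\le n$.

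\textbf{Solving the recursion.} This is the main obstacle. I would guess and prove by induction on $i$ a bound of the form
\[
|U_j^i| \le n\, q^{\,c(j-1)-(i-1)}
\]
for a suitable constant $c$, say $c=3$. For $j=1$ the claim holds easily, since each iteration divides by $q^9$. For larger $j$, the cross term $\sum_{j'<j} n\,q^{c(j'-1)-(i-1)}q^{j-j'+1}$ has exponent at most $c(j-1)-(i-1)-(c-1)(j-j')+1$. This is at most $c(j-1)-i$ when $c\ge 3$, with room to absorb the geometric sum, since $q\ge 2^{L}$ is large. The self term is divided by $q^9\ge q\cdot 2$.

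\textbf{Concluding.} Since $q\ge 2^{L}$ and $p=\tau/L$, the bound becomes below $1$ once $(i-1)-c(p-1)>\lg n/L$. Using $\tau=\lceil\log_{1+\eps}n\rceil\ge \lg n$, this happens for some $i^*\le (c+1)\tau/L+1\le 10\tau/L$. Integrality then gives $|U_j^{i^*}|=0$ for all $j$.

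I expect the delicate points to be two. The first is verifying that the budget inequality $B^{1/2^s}\ge q^{10}$ genuinely holds with the stated $L$ and $B=n^{\delta/2}$. The second is getting the path-length bound in (X1) right; if the bound needs $q^{j-j'+1}$ with a different constant, I would adjust $c$ accordingly.
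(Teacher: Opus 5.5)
Your proposal follows the paper's proof closely: you split the survivors $U_j^{i+1}$ into (X1)-violators and (X1)-satisfying (X2)-violators, use the same path counts and the same budget inequality $B^{1/2^s}\ge\kappa^{10L}$, and arrive at the same recursion. Your explicit induction hypothesis $|U_j^i|\le n\,q^{c(j-1)-(i-1)}$ is the unrolled form of the paper's potential $M(i)=\max_j f(i,j)\,q^{-4j}$ with $M(i+1)\le M(i)/q$.

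However, the constant you chose is off by one, and with it the induction does not close. Write $d=j-j'$. For $c=3$, the cross term for $j'$ has exponent $3(j-1)-i+2-2d$. So the $d=1$ term alone already equals the target $n\,q^{3(j-1)-i}$. Adding the self term $n\,q^{3(j-1)-i-8}$ and the terms with $d\ge 2$ overshoots the target, so there is no ``room to absorb'' anything. For $j=2$ concretely, the step gives $f(i+1,2)\le n\,q^{3-i}+n\,q^{-i-5}>n\,q^{3-i}$.

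The fix is $c\ge 4$, which is exactly the paper's weight $q^{-4j}$. Then the right-hand side is at most
\[
n\,q^{4(j-1)-i}\bigl(q^{-1}/(1-q^{-3})+q^{-8}\bigr)\le n\,q^{4(j-1)-i}
\]
for $q\ge 2$. Your concluding count becomes $i^*\le 5\tau/L+O(1)$, which is still at most $10\tau/L$ since $\tau\ge\lg n$ and $L<\lg n$.

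One smaller precision concerns the (X2) step. The bound $|U_j^i|\,\kappa^L/B^{1/2^s}$ needs the sum of $\NumPathsIn_{G_i,\ell_G,L}$ to run only over the (X1)-satisfying vertices of $U_j^i$. Only for these do all incoming strictly increasing paths start in $U_j^i$, so the path-out count is over $U_j^i$ alone; this is how the paper argues. If you sum over all of $U_j^i$, paths of length at most $L$ can start in $U_{j-1}^i$. That adds a term $|U_{j-1}^i|/q^{9}$, which is harmless because it is dominated by the $d=1$ cross term.
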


See \cref{sec:kcore-vertex-drop} for a proof of this lemma.
An outline of the proof is as follows.
Fix an iteration $i$ and consider two batches $U_{j}^i, U_{j'}^i$, where $j \leq j'$.
By a counting argument, a vertex $v \in U_{j'}^i$ can reach $U_j^i$ by at most $\kappa^{(j-j'+1)L}$ increasing paths.
Hence, for a fixed $j$, the number of vertices not satisfying (X1) is roughly $\sum_{j' < j} |U_{j'}^i| \cdot \kappa^{(j - j' + 1)L}$.
The analysis for (X2) is similar.
(See \cref{sec:kcore-vertex-drop} for details.)

Using this observation, we upper bound $|U_j^i|$ by a recurrence
\[
    f(i+1, j) \leq f(i,j) / q^{9L} + \sum_{j' < j} f(i,j') \cdot q^{j-j'+1} \text{ for } i \ge 0, j \in [1, p].
\]
The recurrence can be upper bounded via a potential argument, showing that all batches are peeled in $10\tau/L$ iterations.

We conclude this overview with the proof of \cref{lemma:k-core-log-case}, assuming that \cref{lemma:kcore-preservation,lemma:kcore-vertex-drop} are true.
\begin{proof}[Proof of \cref{lemma:k-core-log-case} via \cref{lemma:kcore-preservation,lemma:kcore-vertex-drop}]
By \cref{lemma:kcore-peeling}, all vertices with coreness $< k$ in $G$ are contained in $A_1, A_2, \dots, A_\tau$.
\cref{lemma:kcore-vertex-drop} ensures that our simulation removes all of them in $O(\tau / L)$ iterations.
In addition, \cref{lemma:kcore-preservation} ensures that no vertex in $(2+\eps)(t+1)$-core is peeled.
Hence, the property in the lemma is satisfied.

The overall time and space complexities are determined by the invocations of \Expo. By \cref{theorem:missing}, each invocation takes $O(s \cdot t) = O(\log\log n)$ rounds, and there are $O(\tau / L) = O_\eps(\log^{1/(t+2)} n \cdot \log \log n)$ invocations.
Therefore, the round complexity is $O_\eps(\log^{1/(t+2)} n \cdot \log \log^2 n)$.
The local space is $O(B \cdot 2^{2^s}) = \tilde{O}(n^\delta)$, and the global space is $O(m) + nB^{1+o(1)} = \tilde{O}(n^{1+\delta} + m)$.
\end{proof}

\subsubsection{Proof of \cref{lemma:kcore-critical}} \label{sec:kcore-critical}

We first establish several properties of critical vertices.

\begin{lemma} \label{lemma:kcore-critical-transitivity}
    Consider an iteration $i$ and a critical vertex $v \in V(G_i)$.
    Let $u \in V(G_i)$ be a vertex that can reach $v$ via a strictly increasing path.
    Then, $u$ and $v$ are in the same batch.
    In addition, $u$ is also critical.
\end{lemma}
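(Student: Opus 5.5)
We argue directly from the two conditions in \cref{def:kcore-critical}, where $v \in U_j^i$ is critical. Fix a strictly increasing path $u = w_1, w_2, \dots, w_m = v$ in $G_i$ with respect to $\ell_G$, so $\ell_G(u) < \ell_G(v)$. The first claim is that $u$ lies in the same batch $U_j$ as $v$. Since $\ell_G$ strictly increases along the path and batches are consecutive intervals of layers, $u$ lies in some batch $U_{j'}$ with $j' \leq j$. If $j' < j$, then $u \in U_{<j}^i$, because $u \in V(G_i)$ has not been peeled at the start of iteration $i$. But $u$ reaches $v$ by a strictly increasing path in $G_i$, contradicting (X1) for $v$. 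Hence $j' = j$.

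For the second claim we verify (X1) and (X2) for $u$, whose batch is also $U_j$. For (X1), suppose some $w \in U_{<j}^i$ reaches $u$ by a strictly increasing path $P$. Concatenating $P$ with the path from $u$ to $v$ gives a walk from $w$ to $v$ along which $\ell_G$ strictly increases. A walk with strictly increasing labels never repeats a vertex, so it is a strictly increasing path in $G_i$ from $U_{<j}^i$ to $v$. This contradicts (X1) for $v$.

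For (X2), we map each strictly increasing path $Q$ of length at most $L$ ending at $u$ to the path $Q \circ (w_2,\dots,w_m)$. This image is again strictly increasing and simple, and the map is injective because the suffix is fixed. The obstacle is that the image may have length exceeding $L$, so it need not be counted by $\NumPathsIn_{G_i,\ell_G,L}(v)$.

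To handle this, we use that $u$ and $v$ lie in the same batch, so their labels differ by at most $L-1$. Every strictly increasing path of length at most $L$ ending at $u$ starts at a vertex of label at least $\ell_G(u)-L+1$. We would like to restrict such paths to vertices of batch $U_j$. By the (X1) argument above, no vertex of $U_{<j}^i$ lies on a strictly increasing path into $u$. Hence every vertex of $Q$ has label in $[(j-1)L+1, \ell_G(u)]$, and the same holds for the extended path into $v$. All labels on the extended path therefore lie in the window $[(j-1)L+1, jL]$, which has $L$ values. Since labels strictly increase along the path, it has at most $L$ vertices, i.e.\ length at most $L$.

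So the map sends paths counted by $\NumPathsIn_{G_i,\ell_G,L}(u)$ injectively into paths counted by $\NumPathsIn_{G_i,\ell_G,L}(v)$. This gives
\[
\NumPathsIn_{G_i,\ell_G,L}(u) \leq \NumPathsIn_{G_i,\ell_G,L}(v) \leq B^{1/2^s},
\]
which is (X2) for $u$. The main care point is this length bound: it relies on (X1) forcing every increasing path into $v$ to stay inside a single batch of $L$ layers.
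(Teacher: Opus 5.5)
Your proposal is correct and follows the paper's proof essentially step for step. It has the same three parts: $u$ lies in $v$'s batch by (X1) for $v$; (X1) holds for $u$ by concatenating paths; and (X2) holds for $u$ via the injection that extends each increasing path into $u$ by the fixed $u$--$v$ path. The only cosmetic difference is in the length bound: the paper compares the full sets of increasing paths into $u$ and $v$, noting that every increasing path into $v$ starts in $U_j^i$ and so has at most $L$ vertices, whereas you check the length bound directly on the images. Your remark that a strictly increasing walk is automatically a simple path is a small detail the paper leaves implicit.
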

\begin{proof}
Let $U_j^i$ be the batch containing $v$.
Consider a vertex $u \in V(G_i)$ that can reach $v$ via strictly increasing path $P$.
The proof of this lemma consists of the following three parts.
\begin{itemize}
    \item \textbf{Part 1: $u$ and $v$ are in the same batch:}
    Clearly, $\ell_G(u) < \ell_G(v)$, and hence $u \in U_{\leq j}^i.$
    By (X1) of \cref{def:kcore-critical}, $u$ cannot be in $U_{< j}^i$.
    Hence, $u \in U_j^i$.

    \item \textbf{Part 2: $u$ satisfies (X1):}
    Suppose, by contradiction, that $u$ does not satisfy (X1).
    Since $u$ and $v$ are both in batch $U_{j}^i$, there exists a vertex $w \in U_{<j}^i$ that can reach $u$ via a strictly increasing path $P'$. Clearly, the concatenation of $P'$ and $P$ is a strictly increasing path that reaches $v$.
    This contradicts the assumption that $v$ satisfies (X1). Hence, $u$ must satisfy (X1).
        
    \item \textbf{Part 3: $u$ satisfies (X2):} 
    Let $\cP_u$ and $\cP_v$ be, respectively, the set of all strictly increasing paths ending at $u$ and $v$.
    Since there is a strictly increasing path from $u$ to $v$, any path in $\cP_u$ can be extended to a path in $\cP_v$.
    Hence, $|\cP_v| \geq |\cP_u|$.
    In Part 1 of our proof, we already proved that if a vertex $u'$ can reach $v$ via a strictly increasing path, then $u' \in U_j^i$.
    Therefore, the starting vertex of any path in $\cP_v$ is in $U_j^i$.
    Thus, any path in $\cP_v$ has length at most $L$, which shows that $|\cP_v| = \NumPathsIn_{G_i, \ell_G, L}(v)$.
    Consequently, we have
    \[
        \NumPathsIn_{G_i, \ell_G, L}(u) \leq |\cP_u| \leq |\cP_v| = \NumPathsIn_{G_i, \ell_G, L}(v).
    \]
    By our assumption, $\NumPathsIn_{G_i, \ell_G, L}(v) \leq B^{1/2^s}$, and hence $\NumPathsIn_{G_i, \ell_G, L}(u) \leq B^{1/2^s}$ as well.
    That is, $u$ satisfies (X2).
    This completes the proof of the lemma.
\end{itemize}

\end{proof}

\begin{lemma} \label{lemma:kcore-monotone-node}
    Let $v$ be a critical vertex with respect to $i$.
    Let $T_v$, $\map_v$ be the rooted tree and mapping computed in iteration $i$.
    Every monotonically reachable node $x \in T_v$ satisfies all of the following.
    \begin{enumerate}
        \item $\map_v(x)$ and $v$ are in the same batch.
        \item $\map_v(x)$ is active.
        \item $|\Missing_{G_i,T_v,\map_v}(x)| \leq t \cdot \kappa$.
    \end{enumerate}
\end{lemma}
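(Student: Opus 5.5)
We prove the three items in order, all driven by the observation that a monotonically reachable node $x$ comes with a strictly increasing path in $G_i$ ending at $v$. If $x = x_1, x_2, \dots, x_m = r$ is the tree path from $x$ to the root, then validity of $\map_v$ makes $(\map_v(x_1), \dots, \map_v(x_m))$ a walk in $G$. Monotone reachability makes the labels strictly increase along it, so the vertices are distinct and the walk is a strictly increasing path with respect to $\ell_G$ ending at $\map_v(r) = v$. We also need this path to lie in $G_i$: the trees are built by \Expo on the current graph $G_i$, so $\map_v$ takes values in $V(G_i)$ and tree edges map to edges of $G_i$.

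\textbf{Item 1.} By \cref{lemma:kcore-critical-transitivity}, every vertex that reaches the critical vertex $v$ by a strictly increasing path is in the same batch as $v$ and is itself critical. Applying this to $u = \map_v(x)$ gives item 1. It also shows that $\map_v(x)$ is critical, which we use next.

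\textbf{Item 2.} Here we want to use (P2) of \cref{theorem:missing} in contrapositive form. That requires a partial layer assignment of out-degree at most $\kappa$ on $G_i$ together with a bound on $\NumPathsIn$ for paths of length $L' = 2^{s(t+1)}$.
\begin{itemize}
\item For the assignment, take $\ell_G$ restricted to $V(G_i)$. It has out-degree at most $\kappa$ on $G$, and deleting vertices only lowers the out-degree.
\item $\ell_G(u) < \infty$ for $u = \map_v(x)$, since $u$ lies in a batch.
\item For the path length, \cref{lemma:kcore-critical-transitivity} tells us every strictly increasing path ending at the critical vertex $u$ starts in the same batch. A batch spans only $L$ distinct labels, so such a path has length at most $L$. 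Hence $\NumPathsIn_{G_i,\ell_G,L'}(u) = \NumPathsIn_{G_i,\ell_G,L}(u)$.
\item By (X2) this count is at most $B^{1/2^s}$.
\end{itemize}
(P2) says inactive vertices with finite label have this count $\geq B^{1/2^s}$, so $u$ is active.

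This is the step I expect to be the main obstacle. It has a boundary issue: (X2) is stated with $\leq$, while (P2) only gives $\geq$, so the two inequalities need not conflict at equality. I would first try the clean fix of reading (X2) as a strict inequality, which matches \cref{lemma:orientation:partiallayer}. Failing that, I would note that the inactivity test in line~\ref*{line:mark-inactive-2} and the degree test in line~\ref*{line:init-inactive} both certify $\NumPathsIn$ with slack: the degree case gives at least $B^{1/2^s}$ lower neighbors plus the root, hence strictly more than $B^{1/2^s}$. The pruning case would need the same slack argument or a minor adjustment of the threshold. A second point to check is that (P2) applies when \Expo runs on $G_i$, which holds because the theorem is stated for an arbitrary input graph.

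\textbf{Item 3.} The depth of $x$ is $m - 1$. A strictly increasing path within one batch has at most $L$ vertices, so the depth is at most $L - 1 < 2^{s(t+1)}$. Since $\map_v(x)$ is active by item 2, (P1) of \cref{theorem:missing}, applied to the run of \Expo on $G_i$ with pruning parameter $\kappa$, gives $\Missing_{G_i,T_v,\map_v}(x) \leq t\cdot\kappa$.
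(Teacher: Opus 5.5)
Your proposal is correct and follows the paper's proof: \cref{lemma:kcore-critical-transitivity} for item 1, (X2) with the contrapositive of (P2) for item 2, and the depth bound $L-1 < 2^{s(t+1)}$ with (P1) for item 3. You are more careful than the paper in two places, both warranted: you justify $\NumPathsIn_{G_i,\ell_G,2^{s(t+1)}} = \NumPathsIn_{G_i,\ell_G,L}$ for critical vertices, and you flag the $\leq$ versus $\geq$ boundary, which the paper glosses over and which your strict reading of (X2) repairs harmlessly, since \cref{lemma:kcore-path-bound-2} already counts exactly the vertices with $\NumPathsIn \geq B^{1/2^s}$.
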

\begin{proof}
    Let $r$ be the root of $T_v$.
    By definition of monotonically reachability, the path from $x$ to $r$ maps to a strictly increasing path $P$ from $\map_v(x)$ to $\map_v(r) = v$.
    By \cref{lemma:kcore-critical-transitivity}, we have $\map_v(x)$ is critical, and it is in the same batch as $v$.
    This proves the first property.
    
    Since $x$ satisfies (X2) of \cref{def:kcore-critical}, $\NumPathsIn_{G_i,\ell_G,L}(\map_v(x)) \leq B^{1/2^s}$.
    Hence, $\map_v(x)$ must be active.
    This proves the second property.
    The strictly increasing path has length at most $\ell_G(v) - \ell_G(\map_v(x)) + 1 \leq L$.
    This shows that the depth of $x$ is at most $L - 1$.
    The third property is then ensured by (P1) of \cref{theorem:missing}.
    This completes the proof.
\end{proof}


\begin{proof}[Proof of \cref{lemma:kcore-critical}]
    Consider a critical vertex $v \in G_i$.
    Let $T_v, \map_v$ be the rooted tree and mapping for $v$, and let $r$ be the root of $T_v$.
    Let $\ell_v$ be the partial layer assignment computed in iteration $i$.
    We claim that all monotonically reachable nodes $x \in V(T_v)$ satisfy $\ell_v(x) \leq L - (\ell_G(\map_v(r)) - \ell_G(\map_v(x)))$.
    


    We prove the claim by induction on $d_x \defeq \ell_G(\map_v(r)) - \ell_G(\map_v(x))$.
    Consider a monotonically reachable node $x$.
    By \cref{lemma:kcore-monotone-node}, $\map_v(x)$ is in the same batch as $v$.
    Hence, $d_x \in [0, L - 1]$.
    The basis of our induction is thus $d_x = L - 1$.
    \begin{itemize}
        \item \textbf{Basis: $d_x = L - 1.$}
        First, \cref{lemma:kcore-monotone-node} ensures that
        \[
            |\Missing_{G_i,T_v,\map_v}(x)| \leq t \cdot \kappa.
        \]
        We proceed to upper bound $|\children_{T_v}(x)|$.
        Consider a child $c$ of $x$.
        If $\ell_G(\map_v(c)) < \ell_G(\map_v(x))$, then $c$ is also monotonically reachable, and in addition $\ell_G(\map_v(c)) < \ell_G(\map_v(x)) = \ell_G(v) - (L-1)$.
        This shows that $\map_v(c)$ and $v$ are in different batches, contradicting \cref{lemma:kcore-monotone-node}.
        Therefore, any child of $x$ has $\ell_G(\map_v(c)) \geq \ell_G(\map_v(x))$.
        Since $\ell_G$ has out-degree $< \kappa$, this implies that $|\children_{T_v}(x)| \leq \kappa$.
        Consequently,
        \[
            |\children_{T_v}(x)| + |\Missing_{G_i,T_v,\map_v}(x)| < \kappa + t \cdot \kappa \leq a,
        \]
        and \PartialLayerTree must assign $\ell_v(x) = 1$. This proves the basis.
        
        \item \textbf{Inductive case:} Assume that the claim holds for all nodes $y \in V(T_v)$ with $d_y \in [d', L-1]$ for some $d' \leq L-1$.
        Consider a monotonically reachable node $x$ with $d_x = d' - 1$.
        Since $x$ is monotonically reachable, \cref{lemma:kcore-monotone-node} ensures that $|\Missing_{G_i,T_v,\map_v}(x)| \leq t \cdot \kappa$.
        
        We upper bound the number of children of $x$ as follows.
        Consider a child $c$ of $x$.
        Again, if $d_c > L - 1$, then $c$ is a monotonically reachable node that is not in the same batch as $v$, contradicting \cref{lemma:kcore-monotone-node}.
        If $d_c \in [d', L - 1]$, then by the induction hypothesis, \PartialLayerTree removes $c$ in one of iterations $1, 2, \dots, L - d'$.
        Hence, at the beginning of iteration $L - d' + 1$, every remaining child $c$ has $d_c \geq d'$, or equivalently, $\ell_G(\map_v(c)) \geq \ell_G(\map_v(x))$.
        Therefore, the number of children of $x$ is at most $\kappa$.
        Consequently, in iteration $d_x$
        \[
            |\children_{T_v}(x)| + |\Missing_{G_i,T_v,\map_v}(x)| < \kappa + t \cdot \kappa \leq a,
        \]
        and \PartialLayerTree must assign $\ell_v(x) = d_x$. This proves the inductive case.
    \end{itemize}

    By the claim above, every monotonically reachable node in $T_v$ receives a finite label in \PartialLayerTree.
    Since the root is also monotonically reachable, it also receives a finite label.
    Hence $v$ must be peeled in iteration $i$ of \cref{alg:kcore-simulation}.
    This completes the proof.
\end{proof}

\subsubsection{Proof of \cref{lemma:kcore-vertex-drop}} \label{sec:kcore-vertex-drop}

Consider a fixed iteration $i$ of the simulated algorithm.
\cref{lemma:kcore-critical} ensures that all critical vertices must be removed by the end of iteration $i$.
Hence, we prove this lemma by counting the number of non-critical vertices, i.e., the vertices that do not satisfy property (X1) or (X2) in \cref{def:kcore-critical}.

\begin{lemma} \label{lemma:kcore-path-bound-1}
Consider a fixed iteration $i$ of the simulated algorithm. For all $j \geq 0$, the number of vertices in $U_j^i$ that do not satisfy (X1) of \cref{def:kcore-critical} is at most
\[
    \sum_{j' < j} |U_{j'}^i| \cdot \kappa^{(j - j' + 1)L}.
\]
\end{lemma}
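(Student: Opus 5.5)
We bound the vertices of $U_j^i$ violating (X1) by a charging argument. Each such vertex $v$ has a certificate: a strictly increasing path $P$ in $G_i$, with respect to $\ell_G$, that starts at some $w \in U_{j'}^i$ with $j' < j$ and ends at $v$. We charge $v$ to one such path, chosen minimal, and then count, for each starting vertex $w$, how many charged paths can start at $w$.

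\textbf{Step 1: shorten the witness.} Given any witness path, let $w$ be the last vertex on it that lies in $U_{<j}^i$. Cutting the path at $w$ leaves a strictly increasing path from $w$ to $v$ whose internal vertices all lie in $U_j^i$. Hence only $w$ lies outside batch $j$.

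\textbf{Step 2: bound the length of the witness.} The labels along the path are strictly increasing integers. The start $w$ has $\ell_G(w) \geq (j'-1)L+1$ and the end $v$ has $\ell_G(v) \leq jL$. So the path has at most $jL - (j'-1)L = (j-j'+1)L$ vertices.

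\textbf{Step 3: count paths from each start.} Fix $w \in U_{j'}^i$. By \cref{claim:strictly-increasing-paths}, applied to $\ell_G$ restricted to $G_i$, the number of strictly increasing paths of length $\lambda$ starting at $w$ is at most $\kappa^{\lambda-1}$. The restriction still has out-degree at most $\kappa$, since deleting vertices only decreases the count $|\{u \in N(v) : \ell_G(u) \geq \ell_G(v)\}|$. Summing over $\lambda \leq (j-j'+1)L$ gives at most $\kappa^{(j-j'+1)L}$ paths from $w$, using $\kappa \geq 2$ as in the proof of \cref{lemma:mpc-numpaths}. Every (X1)-violating $v$ is the endpoint of at least one such path from some $w \in U_{<j}^i$. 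The map from violating vertices to pairs (start vertex, path) is therefore covered, and the count is at most $\sum_{j'<j} |U_{j'}^i| \cdot \kappa^{(j-j'+1)L}$.

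\textbf{Main obstacle.} The only delicate point is the length bound in Step 2, and it needs no shortening. Any strictly increasing path from batch $j'$ to batch $j$ has length at most $(j-j'+1)L$ simply because labels are distinct integers in $[(j'-1)L+1, jL]$. One also has to be careful that \cref{claim:strictly-increasing-paths} counts paths of length exactly $\lambda$, so the sum over lengths must be absorbed by the geometric series $\sum_{\lambda \le m} \kappa^{\lambda-1} \le \kappa^m$.
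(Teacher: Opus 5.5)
Your proof is correct and follows the same route as the paper: bound the length of any strictly increasing path from batch $j'$ to batch $j$ by $(j-j'+1)L$, count the paths leaving each start vertex via \cref{claim:strictly-increasing-paths} (absorbing the sum over lengths into $\kappa^{(j-j'+1)L}$), and sum over $j'<j$. Your Step 1 shortening is unnecessary, as you note yourself. Two of your remarks are useful clarifications: that $\ell_G$ restricted to $G_i$ still has out-degree at most $\kappa$, and that the exponent is $(j-j'+1)L$, since the paper's proof writes $(j'-j+1)L$, a typo.
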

\begin{proof}
Fix integers $j' \leq j$.
Since each batch contains $L$ layers, the length of any strictly increasing path from $U_{j'}^i$ to $U_j^i$ is at most $L \cdot (j' - j + 1)$.
By \cref{claim:strictly-increasing-paths}, the number of such paths is at most
\[
    |U_{j'}| \cdot \sum_{b=1}^{(j'-j+1)L} \kappa^{b-1} \leq  |U_{j'}| \cdot \kappa^{(j'-j+1)L}.
\]
Hence, at most $|U_{j'}| \cdot \kappa^{(j'-j+1)L}$ vertices in $U_j^i$ can be reached by $U_{j'}^i$ via a strictly increasing path.

Summing over $j' < j$, the number of vertices that can be reached from $U_{<j}^i$ via a strictly increasing path is at most
\[
    \sum_{j' < j} |U_{j'}| \cdot \kappa^{(j'-j+1)L}.
\]
This proves the lemma.
\end{proof}

\begin{lemma} \label{lemma:kcore-path-bound-2}
Consider a fixed iteration $i$ of the simulated algorithm. For all $j \geq 0$, the number of vertices in $U_j^i$ that satisfy (X1) but not (X2) is at most
\[
    B^{-1/2^s} \cdot \kappa^L \cdot |U_j^i|.
\]
\end{lemma}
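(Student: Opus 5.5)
The bound follows from a Markov-type counting argument on strictly increasing paths, restricted to the single batch $U_j^i$.

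\textbf{Step 1: reduce to paths inside the batch.} Fix an iteration $i$ and a batch index $j$, and let $W \subseteq U_j^i$ be the set of vertices that satisfy (X1) but violate (X2). Take $v \in W$ and any strictly increasing path (with respect to $\ell_G$) in $G_i$ ending at $v$. Its starting vertex has $\ell_G$-value at most $\ell_G(v)$, so it lies in $U_{\le j}^i$. By (X1), no vertex of $U_{<j}^i$ reaches $v$ along a strictly increasing path, so the starting vertex, and in fact every vertex of the path, lies in $U_j^i$. This is the same observation as Part~1 of the proof of \cref{lemma:kcore-critical-transitivity}, and it is the one step that genuinely uses (X1).

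Consequently, every strictly increasing path ending at $v$ lives entirely inside $U_j^i$. Since a batch spans only $L$ consecutive layers, such a path has length at most $L$. Violating (X2) therefore gives
\[
B^{1/2^s} < \NumPathsIn_{G_i,\ell_G,L}(v) = \NumPathsIn_{G_i[U_j^i],\ell_G,L}(v).
\]

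\textbf{Step 2: double counting.} Summing over $v \in W$,
\[
|W|\cdot B^{1/2^s} \;\le\; \sum_{v\in U_j^i}\NumPathsIn_{G_i[U_j^i],\ell_G,L}(v) \;=\; \sum_{u\in U_j^i}\NumPathsOut_{G_i[U_j^i],\ell_G,L}(u).
\]
To bound the right-hand side, observe that $\ell_G$ restricted to $G_i[U_j^i]$ still has out-degree at most $\kappa$: restricting to an induced subgraph can only shrink each set $\{u \in N(v) : \ell_G(u) \ge \ell_G(v)\}$. Applying \cref{claim:strictly-increasing-paths}, the number of outgoing strictly increasing paths of length at most $L$ from any $u$ is at most $\sum_{b=1}^{L}\kappa^{b-1} \le \kappa^L$, provided $\kappa \ge 2$. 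Hence the right-hand side is at most $|U_j^i|\,\kappa^L$. This is exactly the computation in \cref{lemma:mpc-numpaths}, applied to the batch.

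Dividing both sides by $B^{1/2^s}$ gives $|W| \le B^{-1/2^s}\kappa^L|U_j^i|$, which is the claimed bound.

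The only delicate point is Step~1. It has to establish both that every path ending at $v$ stays inside $U_j^i$, which is what allows summing over $U_j^i$ alone rather than over all of $V(G_i)$, and that the path-length cap $L$ in $\NumPathsIn$ is not binding, so that (X2) can be read as a statement about paths inside the batch. Some care is also needed with the (X2) threshold being stated with $\le$: a violation gives a strict inequality, which is all the counting requires.
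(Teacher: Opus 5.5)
Your proof is correct and matches the paper's argument: (X1) forces every strictly increasing path ending at such a vertex to start in (indeed lie entirely within) $U_j^i$. Summing $\NumPathsIn$ over these vertices is then bounded by $\sum_{u\in U_j^i}\NumPathsOut(u)\le \kappa^L|U_j^i|$ via \cref{claim:strictly-increasing-paths}, and dividing by $B^{1/2^s}$ finishes. Restricting to the induced subgraph $G_i[U_j^i]$ and noting that a violation of (X2) gives a strict inequality are only small extra care beyond what the paper writes, not a different route.
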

\begin{proof}
    Let $W$ be the set of vertices in $U_j^i$ that satisfy (X1).
    Every strictly increasing path ending at $U_j^i$ must come from $U_{\leq j}^i$.
    By (X1), any strictly increasing path ending at $W$ cannot come from $U_{< j}$.
    That is, all such paths are from $U_j^i$.
    
    Therefore,
    \begin{align*}
        &\ \sum_{v \in W} \NumPathsIn_{G_i, \ell_G, L}(v) \\ 
        \leq &\ \sum_{v \in U_{j}^i} \NumPathsOut_{G_i, \ell_G, L}(v) \\
        \leq &\ |U^i_{j}| \cdot \kappa^L,
    \end{align*}
    where the last inequality comes from \cref{claim:strictly-increasing-paths}.
    Therefore, the number of vertices with $\NumPathsIn_{G_i, \ell_G, L}(\cdot) \geq B^{1/2^s}$ is at most $\kappa^L \cdot |U_{j}^i| / B^{1/2^s}$.
    This completes the proof.
\end{proof}


We proceed to upper bound $|U_j^i|$ using \cref{lemma:kcore-path-bound-1,lemma:kcore-path-bound-2}.
Consider a fixed iteration $i$ and a batch $j$.
\cref{lemma:kcore-path-bound-1,lemma:kcore-path-bound-2} show that the number of non-critical vertices in $U_j^i$ is at most
\begin{align}
    &~\sum_{j' < j} |U_{j'}^i| \cdot \kappa^{(j-j'+1)L} + |U^i_{j}| \cdot B^{-1/2^s} \cdot \kappa^L \label{eqn:kcore-recursion1}
\end{align}

\noindent Our budget parameter $B$ is chosen so that $B^{1/2^s} \geq \kappa^{10L}$;
to see this, note that
\begin{align*}
\kappa^{10L \cdot 2^s} &\leq \exp_2(10\lg k \cdot L \cdot 2^{\lg(L) / (t+1) + 1}) \\ 
& = \exp_2(20\lg k \cdot L^{1+1/(t+1)}) \\
& \leq \exp_2(20\lg k \cdot \frac{\delta}{20 \lg k} \lg n) = n^{\delta} = B.
\end{align*}
Therefore, the term
\[
    |U^i_{j}| \cdot B^{-1/2^s}  \cdot \kappa^L \leq |U_j^i| / \kappa^{9L}.
\]

By \cref{lemma:kcore-critical}, all critical vertices in $U_j^i$ will be peeled, and hence $|U_j^{i+1}|$ is at most the number of non-critical vertices in $i$.
Denote $|U_j^i|$ by $f(i, j)$, and let $q = \kappa^L$.
\cref{eqn:kcore-recursion1} can be rephrased as the following recursion.

\begin{equation} \label{eqn:kcore-recursion2}
    f(i+1, j) \leq f(i,j) / q^{9} + \sum_{j' < j} f(i,j') \cdot q^{j-j'+1} \text{ for } i \ge 0, j \in [1, p].
\end{equation}

Initially, we have $\sum_{j\in[p]} f(0,j) = n$.
We claim that $\sum_{j\in[p]}f(i, j) \leq 0$ after $i = O(\log_q n + p)$ iterations.
To prove this, consider the potential function $M(i) \defeq \max_{1 \leq j \leq p} f(i, j) \cdot q^{-4j}$.
Note that $M$ is a real-valued function, while $f(\cdot, \cdot)$ is always an integer.
The following claim provides an upper bound for the potential; for completeness, we restate all parameters and function definitions in the claim.

\begin{claim} \label{claim:kcore-recursion-bound}
    Let $q$ be a positive integer, and $f: \mathbb{Z}_{\geq 0} \times [p] \rightarrow \mathbb{Z}_{\geq 0}$ be a function defined recursively by
    \[
        f(i+1, j) \leq f(i,j) / q^{9} + \sum_{j' < j} f(i,j') \cdot q^{j-j'+1} \text{ for } i \ge 0, j \in [1, p],
    \]
    where $f(0, j)$ can be any non-negative integer for $j \in [p]$.
    For $i \geq 0$, define the potential function $M(i)$ as $\max_{1\leq j \leq p} f(i, j) \cdot q^{-4j}$.
    Then, it holds that $M(i+1) \leq M(i) \cdot q^{-1}$ for $i \geq 0$.
\end{claim}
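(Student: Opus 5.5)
The plan is a direct one-step calculation: fix $i \geq 0$ and an index $j \in [p]$, bound $f(i+1,j)\cdot q^{-4j}$ by $M(i)\cdot q^{-1}$, and then take the maximum over $j$. The only input is that, by definition of $M(i)$, we have $f(i,j') \leq M(i)\, q^{4j'}$ for every $j' \in [p]$. Substituting this into each term of the recurrence reduces the claim to summing a geometric series.

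First, for the diagonal term, $f(i,j)/q^{9} \cdot q^{-4j} \leq M(i)\, q^{4j} q^{-9} q^{-4j} = M(i)\, q^{-9}$. Second, for the off-diagonal terms, write $d = j - j' \geq 1$. Each term contributes
\[
f(i,j')\, q^{j-j'+1}\, q^{-4j} \leq M(i)\, q^{4j'}\, q^{d+1}\, q^{-4j} = M(i)\, q^{-4d}\, q^{d+1} = M(i)\, q^{1-3d}.
\]
Summing over $d \geq 1$ gives at most $M(i)\sum_{d\geq 1} q^{1-3d} = M(i)\, q^{-2}/(1-q^{-3})$.

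Combining the two bounds,
\[
f(i+1,j)\, q^{-4j} \leq M(i)\left(q^{-9} + \frac{q^{-2}}{1-q^{-3}}\right),
\]
and it remains to show that the bracket is at most $q^{-1}$. This holds as soon as $q \geq 2$: the bracket is at most $q^{-9} + \tfrac{8}{7} q^{-2}$, and $q^{-9} + \tfrac{8}{7}q^{-2} \leq q^{-1}$ is equivalent to $q^{-8} + \tfrac{8}{7}q^{-1} \leq 1$, which is true for $q \geq 2$ since the left side is then at most $\tfrac{1}{256} + \tfrac{4}{7} < 1$. Taking the maximum over $j$ yields $M(i+1) \leq M(i)\, q^{-1}$.

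The only delicate point is the case $q = 1$, where the stated inequality cannot hold in general. The claim as written allows any positive integer $q$, but in the application $q = \kappa^L$ with $\kappa \geq 2$, so $q \geq 2$. I would therefore either note explicitly that $q \geq 2$ is assumed, as it is in the application, or add that hypothesis to the claim. Otherwise the argument is a routine calculation. The exponent $4$ in the potential is what makes it work: it leaves a net factor of $q^{-3d}$ on each off-diagonal term, which absorbs the $q^{d+1}$ growth and still leaves room for a $q^{-1}$ decay.
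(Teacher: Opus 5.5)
Your proof is correct and follows the paper's argument essentially line by line. Like the paper, you normalize by $q^{4j}$, bound the diagonal term by $q^{-9}M(i)$, rewrite each off-diagonal term as $\frac{f(i,j')}{q^{4j'}}\,q^{1-3d}$, sum the geometric series, and conclude using $q \geq 2$. Your remark about $q = 1$ is also right: the paper's proof invokes $q \geq 2$ even though it is not among the claim's hypotheses, and this assumption does hold in the application, where $q = \kappa^L$ with $\kappa = (2+\eps)k > 2$.
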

\begin{proof}
Fix some $j\in\{1,\ldots,p\}$.
Dividing \cref{eqn:kcore-recursion2} by $q^{4j}$ gives
\[
\frac{f(i+1, j)}{q^{4j}} \le \frac{f(i,j)}{q^{4j+9}} +
\sum_{j'<j} f(i,j') \cdot q^{j-j'+1-4j}.
\]
For the first term on the right hand side, since $\frac{f(i,j)}{q^{4j}}\le M(i)$,
we have
$\frac{f(i,j)}{q^{4j+9}} \le q^{-9}M(i)$.
For the summation term, write $d=j-j'$.
Then $d \ge 1$, and we have
\[
f(i,j') \cdot q^{j-j'+1-4j} = \frac{f(i,j')}{q^{4j'}}
q^{1-3d}.
\]
Therefore,
\[
\sum_{j'<j}
f(i,j')q^{j-j'+1-4j} \le M(i)\sum_{d\ge 1}q^{1-3d}.
\]
The geometric series satisfies $\sum_{d\ge 1}q^{1-3d} = \frac{q^{-2}}{1-q^{-3}}$. Since $q\ge 2$, we have
$q^{-9}+\frac{q^{-2}}{1-q^{-3}} \le q^{-1}$.
Combining the above bounds, we get $\frac{f(i+1,j)}{q^{4j}} \le q^{-1}M(i)$.
Since this holds for every $j$, taking the maximum over $j$ gives $M(i+1)\le q^{-1}M(i)$.
This completes the proof.
\end{proof}

\cref{claim:kcore-recursion-bound} implies that $M(i) \leq M(0) \cdot q^{-i}$.
Take $i^* = \lceil \log_q n\rceil + 4p + 1$, we have
\[
    M(i^*) < M(0) \cdot \frac{1}{n} \cdot q^{-4p} \leq q^{-4p}.
\]
Hence, for all $j \in \{1, 2, \dots, p\}$,
\[
    f(i^*, j) \leq M(i^*) \cdot q^{4j} < 1.
\]
That is, $|U_j^{i^*}| = 0$ for all $j$.
Since $i^* \leq \lceil \frac{\log n}{\log q} \rceil + 4\lceil\frac{\tau}{L}\rceil + 1 < \frac{10\tau}{L}$, \cref{lemma:kcore-vertex-drop} holds.

    

\subsubsection{Proof of \cref{lemma:kcore-preservation}} \label{sec:kcore-preservation}

Let $K$ be the $\kappa(t+1)$-core of $G$.
In the following, we prove that the simulated algorithm never removes a vertex in $K$.
Intuitively, this is because an iteration of \PartialLayerTree only labels a vertex if its current degree is at most $a < \kappa(t+1)$, and such a procedure should never remove any vertex in $\kappa(t+1)$-core.

We need the following folklore property, which can be used to identify the vertices not in $K$.

\begin{claim}[Folklore] \label{claim:kcore-elimination}
    Let $S$ be a vertex subset that contains $V(K)$.
    If a vertex $v$ has degree $< \kappa(t+1)$ in the subgraph induced by $S$, then $v \notin V(K)$.
\end{claim}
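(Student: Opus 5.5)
The claim is the standard elimination property of cores, and I would derive it from the maximality of the $\kappa(t+1)$-core $K$. The plan is a short argument by contradiction.

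Suppose $S \supseteq V(K)$ and $v$ has degree $< \kappa(t+1)$ in $G[S]$, but $v \in V(K)$. Because $K$ is the $\kappa(t+1)$-core, it is by definition an (induced) subgraph of $G$ in which every vertex has degree at least $\kappa(t+1)$. In particular,
\[
\deg_K(v) = |N_G(v) \cap V(K)| \geq \kappa(t+1).
\]

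The key step is monotonicity of degree under enlarging the vertex set. Since $V(K) \subseteq S$, we have $N_G(v) \cap V(K) \subseteq N_G(v) \cap S$. Therefore
\[
\deg_{G[S]}(v) = |N_G(v) \cap S| \geq |N_G(v)\cap V(K)| \geq \kappa(t+1),
\]
which contradicts the hypothesis. Hence $v \notin V(K)$.

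The only point needing care is that the core is the induced subgraph on its vertex set. The paper states that the $k$-core equals the subgraph induced by all vertices of coreness at least $k$, so $\deg_K(v)$ equals the number of $G$-neighbors of $v$ inside $V(K)$; if $K$ were not induced, adding edges would only increase degrees, so the bound holds either way. I do not expect any real obstacle. Note that the claim is used afterwards, inductively over the peeling iterations of \PartialLayerTree and \cref{alg:kcore-simulation}, with $S$ the current surviving set, to show that no vertex of $K$ is ever labeled.
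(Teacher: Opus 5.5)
Your argument is correct: if $v \in V(K)$, then $V(K) \subseteq S$ gives $\deg_{G[S]}(v) \geq \deg_K(v) \geq \kappa(t+1)$, which is the contradiction you need. The paper gives no proof and simply labels the claim folklore, so this is the standard argument it leaves implicit; note that it uses only the minimum-degree property of $K$ and the fact that degrees can only grow when the vertex set is enlarged, not the maximality of the core you mention at the start.
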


\begin{lemma} \label{lemma:k-core-preservation-sublemma}
Consider an iteration $i$ of the simulated algorithm, and assume that $K \subseteq V(G_i)$. Then, it holds that $K \subseteq V(G_{i+1})$.
\end{lemma}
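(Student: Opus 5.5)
We need to show that no vertex of $K$ gets a finite label in any tree in iteration $i$. Then the minimum over trees at line~\ref*{line:kcore-peel} stays $\infty$ for every $v\in K$, and $K\subseteq V(G_{i+1})$ follows. Since $K\subseteq V(G_i)$ by assumption, every $u\in K$ has at least $\kappa(t+1)$ neighbors inside $K$ in $G_i$. Here the $\Missing$ and degree quantities are taken with respect to $G_i$, which is the graph on which \Expo and \PartialLayerTree run in iteration $i$.

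The main claim is as follows. Fix any $v$ and the tree $T_v$ with mapping $\map_v$. Then no node $x\in V(T_v)$ with $\map_v(x)\in K$ is removed by \PartialLayerTree$(T_v,a,L)$ in any of its $L$ iterations. I would prove this by induction on the iteration number of \PartialLayerTree. The hypothesis at the start of iteration $r$ is that every node mapping to $K$ is still present.

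Take such a node $x$ in iteration $r$. By validity of $\map_v$, the children of $x$ map to distinct neighbors of $\map_v(x)$. In line 1, $\Missing(x)$ counts exactly the neighbors of $\map_v(x)$ not represented by an original child. Consequently $|\children_{T_v}(x)|+\Missing(x)$ at iteration $r$ is at least
\begin{itemize}
\item the number of neighbors of $\map_v(x)$ that appear as current children, plus
\item the number of neighbors that never appeared as children.
\end{itemize}
The only neighbors dropped from this count are those whose child node has already been removed. By the induction hypothesis, none of those removed children maps to $K$. Hence every $K$-neighbor of $\map_v(x)$ is still counted, either as a present child or inside $\Missing(x)$. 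This gives
\[
|\children_{T_v}(x)|+\Missing(x)\ \ge\ |N_{G_i}(\map_v(x))\cap K|\ \ge\ \kappa(t+1)=a,
\]
so $x$ is not placed in $A$ in iteration $r$. This is essentially \cref{claim:kcore-elimination} applied to the current tree view, and it also covers nodes whose children were pruned or never expanded, for example leaves or inactive vertices, since then $\Missing$ is large. The base case $r=1$ is identical, because no nodes have been removed yet.

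The step I expect to need the most care is the bookkeeping around $\Missing(x)$. Line 1 fixes it once, as $|N_{G_i}(\map_v(x))|-|\children_{T_v}(x)|$ using the original child count. The argument therefore has to state explicitly that removals of children only decrease the left-hand sum by the number of removed children, which (by induction) map outside $K$. That this count is genuine rather than inflated by repeated neighbors relies on the distinct-children property of valid mappings.

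Concluding the proof: every node mapped to $K$ keeps label $\infty$ in every $\ell_u$. So $\ell(v)=\infty$ for all $v\in K$, no vertex of $K$ is removed at line~\ref*{line:kcore-peel}, and $K\subseteq V(G_{i+1})$. The same reasoning also handles inactive roots, whose single-node trees have $\Missing=\deg$.
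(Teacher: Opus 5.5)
Your proof is correct and follows essentially the same route as the paper. Both argue by induction over the $L$ steps of \PartialLayerTree that no tree node mapped into $K$ is ever removed; you inline the folklore \cref{claim:kcore-elimination} by directly counting the $K$-neighbors of $\map_v(x)$ still represented by a present child or inside $\Missing(x)$, and you usefully make explicit the distinct-children property of valid mappings that the paper's ``by the definition of the missing set'' step relies on (your aside that inactive roots have single-node trees is inaccurate for vertices marked inactive during a pruning step, but harmless, since your main argument covers every tree).
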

\begin{proof}
    Let $v \in V(G_i)$ be a vertex in $G_i$.
    Consider the execution of \cref{line:kcore-local-assignment}, where \PartialLayerTree is invoked to give each node $x \in V(T_v)$ a label.
    We claim that all nodes $x$ that receive a finite label in this step satisfy $\map_v(x) \notin K$.

    Recall that \PartialLayerTree executes $L$ steps, where each step removes the set of nodes $x$ such that
    \[
        |\children_{T_v}(x)| + |\Missing_{G_i,T_v,\map_v}(x)| < \kappa(t+1).
    \]
    Let $X_j \subseteq V(T_v)$ be the set of nodes removed in step $i$.
    Let $Y_j = \{\map_v(x) \mid x \in X_j\}$.
    The set of vertices that receive finite labels is hence $\bigcup_{j \geq 1} Y_j$.
    Let $G_i(j) = G_i \setminus \bigcup_{j' < j} Y_{j'}$.
    Then, by the definition of the missing set, we have the following property.
    \begin{itemize}
        \item Let $j \in \{1, 2, \dots, L\}$ be an index. When a node $x \in X_j$ is removed, the degree of $\map_v(x)$ in $G_i(j)$ is smaller than $\kappa(t+1)$.
    \end{itemize}
    
    The above property and \cref{claim:kcore-elimination} directly imply that no vertex in $Y_1$ is in $K$.
    In addition, by an induction on $j$, we know that no vertex in $Y_j$ is in $K$ for $j = 1, 2, \dots, L$.
    This shows that no node $x \in V(T_v)$ with a finite label maps to a vertex in $K$.

    Since the argument holds for all $v \in V(G_i)$, any vertex removed in iteration $i$ is not in $K$.
    Hence, the lemma holds.
\end{proof}

By repeatedly applying \cref{lemma:k-core-preservation-sublemma}, we know that each iteration of the simulated algorithm only removes vertices not in $K$.
Consequently, \cref{lemma:kcore-preservation} holds.

\section*{Acknowledgments}
AI Disclosure: The authors used ChatGPT 5.5 to prove an upper bound for the recurrence in \cref{eqn:kcore-recursion2}. The upper bound was first conjectured by the authors based on computations of small cases. The tool generated a proof, which can be verified by a straightforward induction argument. More specifically, the tool materially affected the proof of \cref{claim:kcore-recursion-bound}. The authors verified the correctness and originality of all content including references. 

\bibliographystyle{alpha}
\bibliography{ref.bib}

\appendix

\section{$\tO(L/\sqrt{\log n})$-round densest subgraph algorithm} \label{appendix:sqrt-DS}
In this section, we present the modified algorithm \dsSqrtPeeling from \cite{mitrovic2026new}.
The goal of \dsSqrtPeeling is to serve as a subroutine that can compute an approximate densest subgraph in $O(c\cdot \log\log n)$ rounds for some parameter $c$.
Normally, the algorithm would take $\tO(\sqrt{\log n})$ rounds, which is the guarantee that \cite{mitrovic2026new} proves, but the way our algorithm in \cref{sec:DS-alg} uses \dsSqrtPeeling as a subroutine allows a speedup dependent on $c$ when $c = o(\sqrt{\log n})$.
Specifically, depending on the parameter of $L$ used in \cref{sec:DS-alg}, we use $c = \Theta(L/\sqrt{\log n})$ which results in $\tO(L/\sqrt{\log n})$ total rounds.
We present \dsSqrtPeeling in \cref{alg:ds-sqrt-peeling}.

\begin{algorithm}
\caption{(\dsSqrtPeeling) Finds approximate densest subgraph}
\label{alg:ds-sqrt-peeling}
\begin{algorithmic}[1]
\medskip
\Statex\textbf{Input:} graph $G$, $\eps \in (0,1)$, $\delta\in(0,1)$, $c, k\in\mathbb{N}$
\Statex\textbf{Output:} a subset of vertices of $G$
\medskip
\hrule
\medskip
\State $\ell(v) \gets \infty$ for all $v\in V(G)$
\State $i \gets 1$

\For{$j = 1$ to $c$}

\State $\alpha \gets (1+\eps)^{\sqrt{\log_{1+\eps} n}}$

\State Freeze all vertices in $V(G)$ of degree greater than $k\alpha$

\State Mark as frozen each edge with both endpoints frozen

\State $f \gets$ number of frozen vertices in $V(G)$

\If{$f \geq n/\alpha$}
    \Return $V(G)$
\EndIf

\For{$\sqrt{\delta\log_{1+\eps} n}/2$ steps}

    \If{$|V(G)| \leq (2+4\eps)f/\eps$}
        \textbf{break}
    \EndIf
    
    \State $A \gets$ all non-frozen vertices $v \in V(G)$ with $d_{G}(v) < k$

    \If{$|A| \leq \frac{\eps}{1+\eps}|V(G)| - f$}
        \Return $V(G)$
    \EndIf

    \For{each $v\in A$}
        $\ell(v)\gets i$
    \EndFor

    \State $i\gets i + 1$
    
    \State Remove vertices in $A$ from $G$
    
\EndFor

\EndFor
\end{algorithmic}
\end{algorithm}

For the analysis of \dsSqrtPeeling, we assume $k = O(\log n)$.
In other words, we assume we have already applied the sampling scheme described in \cref{sec:ds-sampling}.

\subsection{Simulation in sublinear MPC}

\dsSqrtPeeling uses the same framework from \cite{ghaffari2019sparsifying}, breaking up $O(\log n)$ total iterations of peeling from \dsPeeling into blocks of $\Theta(\sqrt{\log n})$ iterations.
These blocks are reflected in the inner for-loop of \dsSqrtPeeling.
We call each iteration of the outer for-loop a \textit{phase}.
One phase can be simulated by freezing high degree vertices and performing exponentiation on the low degree vertices.
Therefore, simulating each phase takes $O(\log \log n)$ rounds giving \dsSqrtPeeling a total round complexity of $O(c\cdot \log \log n)$ rounds.
For more details on how to simulate the algorithm in sublinear MPC using $O(n^\delta)$ local memory and $O(m + n^{1+\delta})$ total memory, we refer the reader to \cite{ghaffari2019sparsifying, mitrovic2026new}.
\begin{lemma}
\label{lemma:sqrt:memory-round}
Let $G$ be a graph, $\eps,\delta\in(0, 1)$, $c, k\in\mathbb{N}$. Then, $\dsSqrtPeeling(G, \eps, \delta, c, k)$ can be simulated by a MPC algorithm which requires $O(c\cdot\log\log n)$ rounds, $O(n^\delta)$ local space, and $O(n^{1+\delta} + m)$ global space.
\end{lemma}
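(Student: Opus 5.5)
The plan is to follow the framework of \cite{ghaffari2019sparsifying} as adapted in \cite{mitrovic2026new}, and to account for the cost one phase at a time. The key observation is that every phase of \dsSqrtPeeling, i.e., every iteration of the outer loop, is simulated independently with the same resources. The round bound is then $c$ times the per-phase cost. The space bound is the maximum over phases, since memory is reused between phases.

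Fix a phase. The plan has four steps, carried out in this order.

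\textbf{Global aggregates.} I first compute the frozen set, namely the vertices of current degree greater than $k\alpha$, and the count $f$, and evaluate the test $f \ge n/\alpha$. These use degree computations and sums, so they take $O(1)$ rounds with standard MPC primitives such as sorting and prefix sums \cite{andoni2018parallel}.

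\textbf{Simulating the inner loop.} The inner loop runs $R = \sqrt{\delta\log_{1+\eps} n}/2$ steps. Whether a non-frozen vertex $v$ is peeled in step $r$ depends only on its $r$-hop neighbourhood within the non-frozen subgraph. Frozen vertices are never peeled during the phase, so a frozen neighbour contributes a fixed $1$ to the degree and its own neighbourhood need not be expanded. Each non-frozen vertex therefore gathers its $R$-hop neighbourhood in the graph where frozen vertices are leaves. Every non-frozen vertex has degree at most $k\alpha$, so this neighbourhood has size at most $(k\alpha)^{R}$. With $k = O(\log n)$ and $\alpha = (1+\eps)^{\sqrt{\log_{1+\eps} n}}$ we get
$$\log\big((k\alpha)^R\big) = R\big(\log k + \sqrt{\log_{1+\eps} n}\,\log(1+\eps)\big) \le \tfrac{\delta}{2}\log n + o(\log n).$$
Hence each neighbourhood has size $n^{\delta/2+o(1)} \le n^\delta$. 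Graph exponentiation collects these balls in $O(\log R) = O(\log\log n)$ rounds, each machine holding $O(n^\delta)$ words, and total memory is $n\cdot n^{\delta/2+o(1)} + O(m) = O(n^{1+\delta}+m)$.

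\textbf{Handling the global tests.} The inner loop also contains the break test $|V(G)| \le (2+4\eps)f/\eps$ and the early-return test $|A| \le \frac{\eps}{1+\eps}|V(G)| - f$. These depend on global counts, so they cannot be decided locally. Each vertex instead locally computes the step $r$ at which it would be peeled, assuming no early stop. A global aggregation then computes $|A_r|$ for every $r \le R$, which gives $|V(G)|$ after each step. The first step at which a break or return fires is found by a prefix computation in $O(1)$ rounds. Labels after that step are discarded, and the removal or return is applied.

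\textbf{Main obstacle.} The hardest step is justifying that freezing makes the local simulation exact. I need that a frozen vertex's membership in the current graph is fixed throughout the phase, and that the non-frozen subgraph has degree at most $k\alpha$ so the exponentiation stays within the $n^\delta$ budget. I would argue this carefully with the parameter calculation above, and for the remaining implementation details I would defer to \cite{ghaffari2019sparsifying, mitrovic2026new}.
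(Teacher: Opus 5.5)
Your plan is the paper's own proof, and the paper's proof makes the same slip you make in the one quantitative step. The route matches: in each phase you freeze vertices of degree above $k\alpha$, gather $R$-hop balls around the non-frozen vertices (frozen ones as leaves) with $R=\sqrt{\delta\log_{1+\eps}n}/2$ in $O(\log\log n)$ rounds of exponentiation, bound each ball by $(k\alpha)^R$, and multiply by $c$ phases. Your prefix-count treatment of the break and early-return tests is correct, because $f$ is fixed within a phase and the run without stopping agrees with the true run up to the first stopping step. The paper simply defers that detail to \cite{ghaffari2019sparsifying, mitrovic2026new}.

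The problem is your displayed estimate, which is miscomputed:
\[
R\cdot\sqrt{\log_{1+\eps} n}\,\log(1+\eps)=\frac{\sqrt{\delta}}{2}\,\log_{1+\eps} n\cdot\log(1+\eps)=\frac{\sqrt{\delta}}{2}\log n .
\]
So $\alpha^R=n^{\sqrt{\delta}/2}$, and the balls have size $n^{\sqrt{\delta}/2+o(1)}$; the extra factor $k^R=2^{O(\sqrt{\log n}\log\log n)}$ is only $n^{o(1)}$. Since $\sqrt{\delta}>\delta$ on $(0,1)$, your bound $\tfrac{\delta}{2}\log n+o(\log n)$ fails for every $\delta$. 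The conclusion $n^{\sqrt{\delta}/2+o(1)}=O(n^\delta)$ holds only when $\delta>1/4$. For smaller $\delta$, both the per-machine memory and the global total $n\cdot n^{\sqrt{\delta}/2+o(1)}$ exceed what the lemma claims.

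The paper asserts $(k\alpha)^{\sqrt{\delta\log_{1+\eps}n}/2}\in O(n^\delta)$ without computing it. The issue therefore lies in the parameters of \cref{alg:ds-sqrt-peeling}, not in your strategy. One repair is to scale the freezing threshold with $\delta$, taking $\alpha=(1+\eps)^{\sqrt{\delta\log_{1+\eps}n}}=(1+\eps)^{2R}$. This gives $\alpha^R=n^{\delta/2}$, and your calculation then goes through. The change does not disturb the other appendix lemmas:
\begin{itemize}
\item \cref{lemma:sqrt:approx} does not depend on $\alpha$.
\item The break case of \cref{lemma:sqrt:vertex-set-decrease-outer} only needs $\alpha\ge\frac{2+4\eps}{\eps}(1+\eps/2)^{R}$. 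This holds for large $n$ because $\alpha=(1+\eps)^{2R}$ and $(1+\eps)^2>1+\eps/2$.
\end{itemize}
Alternatively, keep $\alpha$ and restrict to $\delta>1/4$, or shorten the phase length. Either way, state the fix explicitly rather than asserting the $n^\delta$ bound.
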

\begin{proof}
For each phase, we do $O(\log \log n)$ iterations of graph exponentiation on every nonfrozen vertex using non-frozen edges. Since there are $c$ phases, this results in $O(c\cdot\log\log n)$ total MPC rounds. For more details on graph exponentiation, we refer the reader to \cite{ghaffari2019sparsifying, mitrovic2026new}.

These iterations of graph exponentiation allow a $\sqrt{\delta\log_{1+\eps} n}/2$-hop neighborhood of a nonfrozen vertex to be stored on a single machine. The machine then uses local space
\[\rb{k\alpha}^{\sqrt{\delta\log_{1+\eps} n}/2} \in O(n^\delta).\]
Additionally, since every vertex stores its neighborhood on a separate machine, this results in a global memory of $O(n^{1+\delta} + m)$.
\end{proof}

\subsection{Approximation guarantee}
We want to show that \dsSqrtPeeling outputs a subgraph with density at least $\frac{k}{2(1+\eps)}$.
As a result, if $k$ is chosen close to $\rho^*(G)$, we compute a $(2+\eps)$-approximate densest subgraph.
We observe that only lines $8$ and $12$ return a subgraph.
From these lines, we present the following lemma.
\begin{lemma}
\label{lemma:sqrt:approx}
Let $G$ be a graph, $\eps,\delta\in(0, 1)$, $c, k\in\mathbb{N}$. Then, if $\dsSqrtPeeling(G, \eps, \delta, c, k)$ returns a subgraph $S$, $S$ satisfies $\rho(S) \geq \frac{k}{2(1+\eps)}$. 
\end{lemma}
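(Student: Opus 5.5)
The plan is a case analysis on the two places where \dsSqrtPeeling can return, namely line~8 (too many frozen vertices) and line~12 (too few peelable vertices). In both cases the returned set is the current vertex set $V(G)$, and $G$ is at that moment exactly the subgraph induced by $V(G)$. In each case I will lower bound $2|E(V(G))| = \sum_{v \in V(G)} \deg_G(v)$ by exhibiting many vertices of large current degree, and then divide by $|V(G)|$.

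\textbf{Return on line 8.} Here $f \geq n/\alpha$ vertices were frozen in this phase, and each frozen vertex has current degree greater than $k\alpha$. Freezing happens immediately before the check, so no removal has occurred in between. Hence
\[\sum_{v\in V(G)} \deg_G(v) > f\cdot k\alpha \geq \frac{n}{\alpha}\cdot k\alpha = kn \geq k|V(G)|.\]
This gives $\rho(V(G)) > k/2 \geq \frac{k}{2(1+\eps)}$.

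\textbf{Return on line 12.} Here $|A| \le \frac{\eps}{1+\eps}|V(G)| - f$, where $A$ is the set of non-frozen vertices of current degree $<k$. The point needing care is that frozen vertices were frozen at the start of the phase. Their degree may since have dropped below $k$ because of removals in earlier steps of the inner loop, so I will not use any degree bound for them.

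I will therefore pessimistically discard all $f$ frozen vertices. Every vertex outside $A$ that is non-frozen has current degree at least $k$. The number of such vertices is at least
\[|V(G)| - |A| - f \geq |V(G)| - \tfrac{\eps}{1+\eps}|V(G)| = \tfrac{|V(G)|}{1+\eps}.\]
Summing degrees gives $2|E(V(G))| \geq k|V(G)|/(1+\eps)$, hence $\rho(V(G)) \geq \frac{k}{2(1+\eps)}$.

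The only real obstacle is the stale-degree issue for frozen vertices in the line~12 case. The subtraction of $f$ in the line~12 condition is exactly what makes the crude ``ignore frozen vertices'' bound go through. I also need to check that the remaining degrees are measured in the current graph. Since removed vertices are deleted from $G$, the degrees in line~11 are degrees in the induced subgraph on $V(G)$, which is what the density bound requires.
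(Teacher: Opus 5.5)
Your proof is correct and follows the same two-case argument as the paper: returns at line~8 and line~12, and in each case you count high-degree vertices in the current induced graph and divide by $|V(G)|$. In the line-12 case, your explicit computation $|V(G)| - |A| - f \geq |V(G)|/(1+\eps)$, which uses only non-frozen vertices, is exactly the count the paper asserts directly when it says at least $|V(G)|/(1+\eps)$ vertices have degree at least $k$.
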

\begin{proof}
If $S$ is returned on line $8$ of \cref{alg:ds-sqrt-peeling}, then the number of vertices with degree greater than $k\alpha$ is at least $n/\alpha$.
This means that the graph $G$ at that moment has at least $nk/2$ edges.
Therefore, returning $S = V(G)$ would have density $\rho(S)\geq k/2$.

If $S$ is returned on line $12$ of \cref{alg:ds-sqrt-peeling}, we have that at least $|V(G)|/(1+\eps)$ vertices have degree at least $k$.
Therefore, returning $S = V(G)$ would have density
\[\rho(S) \geq \frac{\frac{|V(G)|}{2(1+\eps)}\cdot k}{|V(G)|} \geq \frac{k}{2(1+\eps)}.\]
\end{proof}

\subsection{Vertex set analysis}
Finally, we want to prove properties about the change in the vertex sets of the graph as well as the layers assigned to vertices by \dsSqrtPeeling.
We define vertex sets $V^{\text{sqrt}}_i$ as
\[V^{\text{sqrt}}_i = \{v \in V(G) : \ell_{\text{sqrt}}(v) \geq i\}\]
where $\ell_{\text{sqrt}}$ is the layer assignment produced by \dsSqrtPeeling.
So, $V^{\text{sqrt}}_\infty$ contains all vertices with layer $\infty$.
We want the sizes of these vertex sets to decrease by at least a constant factor as $i$ increases.
We present this in \cref{lemma:sqrt:vertex-sets-decrease}.
\begin{lemma}
\label{lemma:sqrt:vertex-sets-decrease}
Let $G$ be a graph, $\eps,\delta\in(0, 1)$, $c, k\in\mathbb{N}$. Let $V^{\text{sqrt}}_i$ be the vertex sets produced by $\dsSqrtPeeling(G, \eps, \delta, c, k)$.
Then, for all $i\geq 1$,
\[|V^{\text{sqrt}}_{i + 1}| \leq \frac{|V^{\text{sqrt}}_{i}|}{1+\eps/2} \text{ or }  V^{\text{sqrt}}_{i} = V^{\text{sqrt}}_\infty.\]
\end{lemma}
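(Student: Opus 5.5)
We prove the statement by analyzing a single peeling step of the inner loop of \dsSqrtPeeling. The counter $i$ is incremented exactly once per executed step. Let $W$ denote the vertex set $V(G)$ at the start of the step that assigns layer $i$. Then $V^{\text{sqrt}}_i = W$, since every vertex still present has either layer $\ge i$ or layer $\infty$. The step removes exactly the set $A$, all of whose vertices receive layer $i$, so $V^{\text{sqrt}}_{i+1} = W\setminus A$.

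\textbf{Case with no assigned layer.} If no step ever assigns layer $i$ (because the algorithm returned, a break occurred, or the phases ended), then no vertex has a finite layer $\ge i$. Hence $V^{\text{sqrt}}_i = V^{\text{sqrt}}_\infty$, and the second alternative of the statement holds. From now on we assume layer $i$ is assigned in some step.

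\textbf{Main case.} In that step the algorithm did not return at line 12, so $|A| > \frac{\eps}{1+\eps}|W| - f$. Here $f$ is the number of frozen vertices at the start of the current phase. The frozen vertices are never in any $A$, so all $f$ of them are still in $W$. The algorithm also did not break at line 10, so $|W| > (2+4\eps)f/\eps$, that is, $f < \frac{\eps}{2+4\eps}|W|$.

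\textbf{The computation.} Combining the two inequalities gives
\[
|A| > \rb{\frac{\eps}{1+\eps} - \frac{\eps}{2+4\eps}}|W|.
\]
It then suffices to check that
\[
\frac{\eps}{1+\eps} - \frac{\eps}{2+4\eps} \geq 1 - \frac{1}{1+\eps/2} = \frac{\eps}{2+\eps}.
\]
Dividing by $\eps$ reduces this to
\[
\frac{1}{1+\eps} - \frac{1}{2+4\eps} \geq \frac{1}{2+\eps}.
\]
The left side equals $\frac{1+3\eps}{2(1+\eps)(1+2\eps)}$. Cross-multiplying, the inequality becomes $(1+3\eps)(2+\eps) \ge 2(1+\eps)(1+2\eps)$, i.e.\ $2+7\eps+3\eps^2 \ge 2+6\eps+4\eps^2$, i.e.\ $\eps \ge \eps^2$. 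This holds for $\eps\in(0,1)$. Therefore $|W\setminus A| \le |W|/(1+\eps/2)$, which is the first alternative.

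\textbf{Main obstacle.} The only delicate point is the bookkeeping. We must check that the layer index $i$ corresponds to a single removal step, so that $V^{\text{sqrt}}_i$ and $V^{\text{sqrt}}_{i+1}$ are exactly the vertex sets before and after that step. We must also check that the frozen count $f$, fixed at the start of the phase, still upper-bounds what is subtracted in the line 12 test. After that the argument is just the inequality above.
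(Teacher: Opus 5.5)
Your proof is correct and follows the paper's argument. You combine the line-12 non-return bound $|A| > \frac{\eps}{1+\eps}|V^{\text{sqrt}}_i| - f$ with the line-10 non-break bound $f < \frac{\eps}{2+4\eps}|V^{\text{sqrt}}_i|$, then compare the result against $1/(1+\eps/2)$. Your write-up is more explicit than the paper's on two points: it spells out the identification $V^{\text{sqrt}}_i = W$, $V^{\text{sqrt}}_{i+1} = W\setminus A$, and it checks the final inequality (reducing it to $\eps \ge \eps^2$), which the paper leaves implicit.
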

\begin{proof}
Let us assume that $V^{\text{sqrt}}_{i} \neq V^{\text{sqrt}}_\infty$.
Then, the algorithm must have done an iteration of peeling that removes vertices in $V^{\text{sqrt}}_{i}$ to produce the vertex set $V^{\text{sqrt}}_{i + 1}$.
The number of vertices that are removed from $V^{\text{sqrt}}_{i}$ is lower bounded by $\frac{\eps}{1+\eps}|V^{\text{sqrt}}_{i}| - f$ from line $12$ of \cref{alg:ds-sqrt-peeling}.
Additionally, from line $10$ of \cref{alg:ds-sqrt-peeling}, we can upper bound $f$ by $\frac{\eps|V^{\text{sqrt}}_{i}|}{2+4\eps}$.
Therefore, we have that
\begin{eqnarray*}
    |V^{\text{sqrt}}_{i + 1}| &\leq & |V^{\text{sqrt}}_{i}| - \rb{\frac{\eps}{1+\eps}|V^{\text{sqrt}}_{i}| - f}\\
    &=& \frac{|V^{\text{sqrt}}_{i}|}{1 + \eps} + f\\
    &\leq& \frac{|V^{\text{sqrt}}_{i}|}{1 + \eps} + \frac{\eps|V^{\text{sqrt}}_{i}|}{2+4\eps}\\
    &\leq& \frac{|V^{\text{sqrt}}_{i}|}{1+\eps/2}.
\end{eqnarray*}
This gives us our desired vertex set size reduction.
\end{proof}

Using \cref{lemma:sqrt:vertex-sets-decrease}, we now show that after one phase, the vertex set of the graph decreases by a significant factor.
This significant decrease is what allows us to shave a factor of $\Theta(\sqrt{\log n})$ from the round complexity when \dsSqrtPeeling is used in \cref{sec:DS-alg}.
We formalize the decrease in vertex set size in \cref{lemma:sqrt:vertex-set-decrease-outer}.
\begin{lemma}
\label{lemma:sqrt:vertex-set-decrease-outer}
Let $G$ be a graph, $\eps,\delta\in(0, 1)$, $c, k\in\mathbb{N}$. Let $V^{\text{sqrt}}_i$ be the vertex sets produced by $\dsSqrtPeeling(G, \eps, \delta, c, k)$. Let $a_0 = 1$ and $a_j > 1$ be an integer where $V^{\text{sqrt}}_{a_j}$ is the vertex set after one phase of \dsSqrtPeeling for $1 \leq j \leq c$.
Then, for all $0 \leq j \leq c - 1$ with $V^{\text{sqrt}}_{a_{j + 1}} \neq V^{\text{sqrt}}_\infty$, we have that
\[|V^{\text{sqrt}}_{a_{j + 1}}| \leq \frac{|V^{\text{sqrt}}_{a_j}|}{(1+\eps/2)^{\sqrt{\delta\log_{1+\eps}n}/2}}.\]
\end{lemma}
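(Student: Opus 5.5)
The plan is to chain \cref{lemma:sqrt:vertex-sets-decrease} across the layers created in a single phase. Fix $j$ with $0 \le j \le c-1$ and assume $V^{\text{sqrt}}_{a_{j+1}} \neq V^{\text{sqrt}}_\infty$. Phase $j+1$ starts from the vertex set $V^{\text{sqrt}}_{a_j}$. Each execution of the inner loop that reaches the removal step assigns exactly one new layer index $i$ and then increments $i$. So the phase produces the layers $a_j, a_j+1, \dots, a_{j+1}-1$, and $a_{j+1}-a_j$ is the number of inner steps actually carried out.

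The first step is to show that every one of these layers shrinks the set. For each $i \in [a_j, a_{j+1})$ we have $V^{\text{sqrt}}_{i} \supseteq V^{\text{sqrt}}_{a_{j+1}} \supsetneq V^{\text{sqrt}}_\infty$. This holds because $V^{\text{sqrt}}_\infty$ is the final remaining set, so $V^{\text{sqrt}}_{a_{j+1}} \neq V^{\text{sqrt}}_\infty$ means the containment is strict. Hence $V^{\text{sqrt}}_i \neq V^{\text{sqrt}}_\infty$, and \cref{lemma:sqrt:vertex-sets-decrease} gives $|V^{\text{sqrt}}_{i+1}| \le |V^{\text{sqrt}}_i|/(1+\eps/2)$. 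Multiplying these bounds over the phase gives $|V^{\text{sqrt}}_{a_{j+1}}| \le |V^{\text{sqrt}}_{a_j}|/(1+\eps/2)^{a_{j+1}-a_j}$.

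The second step, which I expect to be the main obstacle, is to show that $a_{j+1}-a_j = \sqrt{\delta\log_{1+\eps} n}/2$. In other words, the inner loop must run all its steps without stopping early. There are three ways it could stop early.
\begin{itemize}
\item The algorithm returns on line 8 or line 12. Then no later vertex sets are produced, so this case is outside the lemma's scope and can be excluded by convention.
\item The break on line 10 fires, i.e., $|V(G)| \le (2+4\eps)f/\eps$. This is the delicate case. My plan is to argue that once this condition holds it stays true in every later phase. The reason is that the frozen high-degree vertices are counted in $f$ and are never removed, since only non-frozen vertices are peeled. Then no further layers are ever created, which forces $V^{\text{sqrt}}_{a_{j+1}} = V^{\text{sqrt}}_\infty$ and contradicts the hypothesis.
\item The set $A$ is empty at some step. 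Then the test on line 11 would trigger a return, which is already covered by the first case.
\end{itemize}
If the monotonicity of the break condition across phases fails, for example because $f$ is recomputed with respect to a different $n$, the fallback is to read the lemma as stated for phases that complete all their steps. In that reading, "the vertex set after one phase" refers to a full phase, and a phase cut short by the break leads to the $V^{\text{sqrt}}_\infty$ case.

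Combining the two steps gives the claimed factor $(1+\eps/2)^{\sqrt{\delta\log_{1+\eps}n}/2}$.
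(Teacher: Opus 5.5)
Your first step is correct and matches the paper's treatment of a phase that runs all of its inner steps: you chain \cref{lemma:sqrt:vertex-sets-decrease} over the layers $a_j,\dots,a_{j+1}-1$.

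The gap is in your handling of the break on line 10. You claim that once $|V(G)|\le(2+4\eps)f/\eps$ holds, it holds in every later phase. That is false. The freezing step (lines 4--7) sits inside the outer loop, so $f$ is recomputed at the start of every phase. A vertex frozen in one phase can lose degree when its non-frozen neighbors are peeled, become unfrozen in the next phase, and then be peeled.

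Concretely, take $k\ge 2$, $c\ge 2$, and a disjoint union of stars, each with more than $k\alpha$ leaves. Then $f<n/\alpha$, so line 8 does not return. In phase 1 all leaves are peeled in the first inner step. Only the centers remain, and the break fires. In phase 2 the centers have degree $0$, so $f=0$ and they are all peeled. Thus $V^{\text{sqrt}}_{a_1}$ (the centers) differs from $V^{\text{sqrt}}_\infty=\emptyset$, yet $a_1-a_0=1$. So your claim $a_{j+1}-a_j=\sqrt{\delta\log_{1+\eps}n}/2$ fails, and your case analysis misses this situation.

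Your fallback of reading the lemma only for full phases changes the statement. It is also not enough downstream: \cref{lemma:DS-converge} applies the lemma to every phase, including ones cut short.

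The missing idea is that an early break need not be excluded, because it yields the bound directly. The break test is made before any removal in that inner step. So when it fires, the current set is exactly $V^{\text{sqrt}}_{a_{j+1}}$, and $|V^{\text{sqrt}}_{a_{j+1}}|\le(2+4\eps)f/\eps$. Since line 8 did not return at the start of the phase, $f<|V^{\text{sqrt}}_{a_j}|/\alpha$; this reads $n$ on line 8 as the current vertex count, as the paper's proof does. Hence
\[
|V^{\text{sqrt}}_{a_{j+1}}| \le \frac{2+4\eps}{\eps}\,(1+\eps)^{-\sqrt{\log_{1+\eps}n}}\,|V^{\text{sqrt}}_{a_j}| \le \frac{|V^{\text{sqrt}}_{a_j}|}{(1+\eps/2)^{\sqrt{\delta\log_{1+\eps}n}/2}}
\]
for sufficiently large $n$. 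The last inequality holds because $(1+\eps)^{\sqrt{\log_{1+\eps}n}}$ grows faster than $\frac{2+4\eps}{\eps}(1+\eps/2)^{\sqrt{\delta\log_{1+\eps}n}/2}$.

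Replacing your second step with this argument, the two cases (full phase and early break) give the lemma. This is exactly the paper's proof.
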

\begin{proof}
Let us consider any $0 \leq j \leq c - 1$ with $V^{\text{sqrt}}_{a_{j + 1}} \neq V^{\text{sqrt}}_\infty$.
Since $V^{\text{sqrt}}_{a_{j + 1}} \neq V^{\text{sqrt}}_\infty$, we know that \cref{alg:ds-sqrt-peeling} does not return on line $8$ or $12$.
Now, we have two situations: either the inner for-loop goes through all $\sqrt{\delta\log_{1+\eps}n}/2$ steps, or it breaks early on line $10$.

If the inner for-loop goes through all $\sqrt{\delta\log_{1+\eps}n}/2$ steps, then we have that $a_{j + 1} - a_j = \sqrt{\delta\log_{1+\eps}n}/2$.
This gives us
\[|V^{\text{sqrt}}_{a_{j + 1}}| \leq \frac{|V^{\text{sqrt}}_{a_j}|}{(1+\eps/2)^{a_{j + 1} - a_j}} = \frac{|V^{\text{sqrt}}_{a_j}|}{(1+\eps/2)^{\sqrt{\delta\log_{1+\eps}n}/2}}\]
using \cref{lemma:sqrt:vertex-sets-decrease}.

Otherwise, the inner for-loop breaks early on line $10$.
This means that $|V^{\text{sqrt}}_{a_{j + 1}}| \leq (2+4\eps)f/\eps$.
Combining this with $f < |V^{\text{sqrt}}_{a_j}|/\alpha$ from line $8$, we have that
\[|V^{\text{sqrt}}_{a_{j + 1}}| \leq (2+4\eps)f/\eps \leq \frac{(2+4\eps)|V^{\text{sqrt}}_{a_j}|}{\eps (1+\eps)^{\sqrt{\log_{1+\eps}n}}} \leq \frac{|V^{\text{sqrt}}_{a_j}|}{(1+\eps/2)^{\sqrt{\delta\log_{1+\eps}n}/2}}\]
for sufficiently large $n$.
Therefore, we have our desired upper bound.
\end{proof}

Finally, we want to confirm that \cref{alg:ds-sqrt-peeling} will never remove all the vertices of the graph for $k\leq \rho^*(G)$.
Specifically, we show that the vertices in the $k$-core of the graph will never be removed, and the $k$-core is nonempty for $k\leq \rho^*(G)$.

\begin{lemma}
\label{lemma:sqrt:nonempty-kcore}
Let $G$ be a graph, $\eps,\delta\in(0, 1)$, $c, k\in\mathbb{N}$ with $k\leq \rho^*(G)$. Let $K$ be a nonempty set of vertices in the $k$-core of $G$. Then, $\dsSqrtPeeling(G,\eps, \delta, c, k)$ never removes any vertex in $K$.
\end{lemma}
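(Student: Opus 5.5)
We argue by induction over the removal steps of \dsSqrtPeeling, keeping the invariant that every vertex of $K$ is still present in the current graph. We may take $K$ to be the full vertex set of the $k$-core. This only strengthens the claim, since any nonempty subset of the $k$-core is then covered as well. The $k$-core is nonempty because \cref{lemma:DS-2}'s argument shows that the densest subgraph $S^*$ has minimum induced degree at least $\rho^*(G)\ge k$, so $S^*$ lies in the $k$-core.

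The invariant holds initially, since no vertex has been removed. Vertices leave the graph only on the line that removes the set $A$ from $G$. The return statements do not remove anything; they only end the algorithm. Freezing does not remove vertices either, and it can only shrink $A$, because frozen vertices are never placed in $A$. So it suffices to show that at every peeling step $A\cap K=\emptyset$, given that $K\subseteq V(G)$ before the step.

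Fix such a step, with current graph $G'$ satisfying $K\subseteq V(G')$. Take any $v\in K$. The induced subgraph on $K$ is a subgraph of $G'$ in which $v$ has degree at least $k$, by the definition of the $k$-core. Hence $d_{G'}(v)\ge \deg_{G'[K]}(v)\ge k$. The set $A$ contains only non-frozen vertices with $d_{G'}(v)<k$, so $v\notin A$. Here $d_G$ is the degree in the current, shrinking graph, including edges to frozen vertices, which only helps.

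Therefore no vertex of $K$ is removed at this step, and the invariant persists. Induction over all steps of all $c$ phases completes the argument. There is essentially no obstacle. The only point needing care is to confirm that degrees in \dsSqrtPeeling are measured in the current graph including frozen neighbors, so that the lower bound of $k$ coming from $G'[K]$ applies. This is the same monotonicity reasoning as in \cref{claim:kcore-elimination}.
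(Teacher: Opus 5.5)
Your proof is correct and follows the paper's argument. While the $k$-core is intact in the current graph, each of its vertices has at least $k$ neighbors there and so never enters $A$; nonemptiness follows because the densest subgraph has minimum induced degree at least $\rho^*(G) \geq k$. Your explicit invariant over peeling steps, and your reduction to the full $k$-core (an arbitrary subset $K$ need not induce minimum degree $k$), make rigorous what the paper's two-line proof leaves implicit.
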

\begin{proof}
We know that the vertices in the $k$-core of $G$ all have degree at least $k$ within the induced subgraph of $K$.
Because \dsSqrtPeeling only removes vertices when their degree is less than $k$, none of the vertices in $K$ will be removed.
Additionally, $K$ is nonempty since the densest subgraph is a subset of the $\rho^*(G)$-core of $G$ and $k\leq \rho^*(G)$.
\end{proof}

We now have all the necessary properties of the vertex sets produced by \dsSqrtPeeling.
They are used to prove the approximation guarantees of our densest subgraph algorithm in \cref{sec:DS-alg}.

\end{document}